\documentclass[journal,onecolumn]{IEEEtran}

\usepackage{amsmath,amssymb,amsthm,mathtools}
\usepackage{cite}
\usepackage{url}
\usepackage{booktabs}
\usepackage{graphicx} 
\usepackage{placeins}
\usepackage{pgfplots}
\usepgfplotslibrary{groupplots}
\pgfplotsset{compat=1.17}

\makeatletter
\newcommand{\includefigure}[2][]{%
  \IfFileExists{#2}{\includegraphics[#1]{#2}}{%
    \filename@parse{#2}%
    \IfFileExists{\filename@base.\filename@ext}{\includegraphics[#1]{\filename@base.\filename@ext}}{%
      \fbox{\parbox[c][0.22\textheight][c]{0.9\linewidth}{\centering Missing figure file:\\[0.3em]\texttt{\detokenize{#2}}}}%
    }%
  }%
}
\makeatother

\usepackage{tikz}
\usepackage{xcolor}
\usetikzlibrary{positioning,arrows.meta}

\theoremstyle{plain}
\newtheorem{theorem}{Theorem}[section]
\newtheorem{lemma}[theorem]{Lemma}
\newtheorem{proposition}[theorem]{Proposition}
\newtheorem{corollary}[theorem]{Corollary}

\theoremstyle{definition}
\newtheorem{definition}[theorem]{Definition}
\newtheorem{example}[theorem]{Example}

\newtheorem{assumption}[theorem]{Assumption}
\theoremstyle{remark}
\newtheorem{remark}[theorem]{Remark}

\newenvironment{discussion}[1]{\par\medskip\noindent\textit{#1.}\ }{\par\medskip}

\newcommand{\hbin}[1]{h_2\!\left(#1\right)}
\newcommand{\Cch}{C_{\mathrm{ch}}}
\newcommand{\Cgate}{C_{\mathrm{gate}}}
\newcommand{\Cgateeff}{C_{\mathrm{gate,eff}}}
\newcommand{\Rsup}{R_{\mathrm{sup}}}
\newcommand{\Rsupb}{R_{\mathrm{sup}}^{(b)}}
\newcommand{\Rsuphard}{R_{\mathrm{sup}}^{(\mathrm{hard})}}
\newcommand{\Lif}{L_{\mathrm{if}}}

\newcommand{\condch}{n\Cch \le m\Cgate}
\newcommand{\condcomp}{m\Cgate \le n\Cch}
\newcommand{\regch}{channel-limited regime ($\condch$)}
\newcommand{\regcomp}{compute-limited regime ($\condcomp$)}

\tikzset{
  vtx/.style={circle,draw,thick,minimum size=5.2mm,inner sep=0pt,font=\small},
  elbl/.style={draw=none,fill=white,inner sep=1pt,font=\scriptsize},
}

\title{Reliable Remote Inference\\ from Unreliable Components:\\ Joint Communication and Computation Limits}

\author{Zhenyu~Liu, Yi~Ma, and Rahim~Tafazolli%
\thanks{Z. Liu, Y. Ma, and R. Tafazolli are with the 6GIC, Institute for Communication Systems, University of Surrey, Guildford, United Kingdom, GU2 7XH (e-mail: \{zhenyu.liu, y.ma, r.tafazolli\}@surrey.ac.uk).}}

\begin{document}
\maketitle


\begin{abstract}
Classical information theory typically assumes reliable receiver-side processing. We study remote inference when communication is noisy and the receiver itself is built from unreliable components under a finite redundancy budget. Under a committed/no-bypass receiver closure, task-relevant information can affect the final estimate only by passing through a budgeted collection of vulnerable primitives unless an explicit protected bypass is modeled. Modeling each vulnerable primitive as a memoryless noisy channel yields a baseline supply--demand converse: the task-relevant information needed to attain a target distortion cannot exceed the smaller of the total information supplied by the communication channel and the total information supplied by the vulnerable compute budget. Our main converse shows that committed intermediate interfaces create additional first-order serial cuts and receiver-internal computation-graph cuts, captured in general by a receiver-internal compute min-cut converse. In particular, the twofold loss in the symmetric two-stage hard-separation special case is not inherent to unreliable receiver computation but induced by hard-separation under the committed/no-bypass closure. This extra first-order tax is therefore closure-dependent rather than universal. On the converse side, if downstream modules retain soft visibility to the raw channel output, the converse reduces to the single-bottleneck supply, up to any explicitly reserved soft-path budget. Under a separate stronger protected-support closure with reliable decoder and control support, we establish achievability results for task-direct and serial hard-separation constructions. For the fully noisy-logic regime, we obtain only a conservative depth-dependent converse, and matched achievability remains open.
\end{abstract}

\begin{IEEEkeywords}
Remote inference, indirect rate--distortion, unreliable receiver computation, hard-separation.
\end{IEEEkeywords}

\section{Introduction}
\label{sec:intro}

\IEEEPARstart{C}{lassical} information theory~\cite{Shannon48} typically treats receiver-side decoding and downstream processing as reliable.
This paper studies when remote inference remains possible with noisy communication and an unreliable receiver under a finite redundancy budget.
That classical abstraction is increasingly strained in two distinct but converging regimes.
First, energy-driven hardware trends such as near-threshold voltage operation and approximate
computing can introduce timing violations, transient faults, and stochastic bit flips within the
receiver pipeline~\cite{Dreslinski2010NTC,HanOrshansky2013Approx,Ernst2003Razor,Salami2018Underscaling,Agiakatsikas2023SoftErrors}.
Second, harsh-environment deployments such as \emph{in-orbit} inference can induce radiation-driven soft errors that corrupt accelerator state in addition to channel noise.
Recent measurement and emulation studies report SEU-style single-bit flips and spatially correlated multi-bit upsets (MCUs) in accelerator memories and radiation-prone inference settings~\cite{XuREMUDAC25,WangRedNetArxiv24}.
Complementary systems work develops software protection mechanisms that convert many silent corruptions into detected failures or erasures on practical accelerators~\cite{ZhengSAVEATC25,WangRadshieldASPLOS26}.
In such regimes, the reliability of remote inference is no longer governed by the channel alone, but by a three-way tradeoff among
(i) channel resources,
(ii) compute resources, and
(iii) the end task (e.g., estimation or inference) that consumes the received information.

\paragraph{The missing benchmark}
Information-theoretic treatments of unreliable communication and unreliable computation have largely evolved along separate tracks.
The noisy computation literature asks how to compute reliably using faulty components~\cite{vonNeumann56,GacsGal94,EvansSchulman99,Simon11}, while the communication literature asks how to convey information reliably over noisy channels.
Neither directly answers the \emph{task-centric} question here:
\emph{what \textbf{joint} communication and compute budget (equivalently, redundancy budget) is required to achieve a target task level (e.g., a distortion threshold or a distortion tail) despite receiver-side unreliability?}
We address this through a \emph{supply--demand} lens:
the demand is task-relevant information (captured by indirect/remote rate--distortion), and the supply is jointly bottlenecked by channel capacity and a compute information budget.

\paragraph{Problem setup and questions}
We consider remote inference where an encoder observes $Y$ correlated with a latent source $X$ and the receiver produces $\hat X$ under a distortion constraint.
Communication uses a discrete memoryless channel of capacity $\Cch$ for $n$ channel uses per source sample.
Receiver-side processing is performed on an unreliable digital substrate with a budget of $m$ noisy primitives per sample.
We model each primitive as a discrete memoryless channel with per-use capacity $\Cgate$ (a canonical baseline is a binary symmetric channel (BSC) with crossover probability $\varepsilon$, so $\Cgate = 1-\hbin{\varepsilon}$),
so that the receiver has an aggregate compute information supply of $m\Cgate$ bits per sample (Definition~\ref{def:gpu_bitflip_model}).
Within this model we ask:
(i) what task distortion levels are fundamentally achievable under joint communication and computation constraints?
(ii) when decoding itself is noisy, does enforcing a hard digital interface between ``decoding'' and ``task computation'' impose additional unavoidable penalties compared to task-direct processing?

\paragraph{Scope clarification}
The additional first-order penalty studied here is not a universal consequence of receiver noise.
It appears when task-relevant information must cross committed vulnerable interfaces that remove downstream access to the raw channel output.
When downstream modules retain access to $R^{nT}$ within the committed/no-bypass closure, the converse reduces to the single-bottleneck supply.
An explicit protected side path relaxes the active cut, but if a committed vulnerable interface remains, residual serial cuts may still persist; see Definition~\ref{def:hard_separation}, Propositions~\ref{prop:no_tax_soft_interface} and~\ref{prop:soft_interface_tradeoff}, and Corollaries~\ref{cor:supply_demand_bypass} and~\ref{cor:interface_tax_bypass}.

\subsection{Related Work and Positioning}
\paragraph{Noisy computation: from circuit complexity to information supply}
Noisy computation has a long history, from von Neumann's seminal work on reliable computation from unreliable components~\cite{vonNeumann56} to information-decay and redundancy lower bounds for noisy circuits~\cite{GacsGal94,EvansSchulman94,EvansSchulman99}.
Complementing circuit-level viewpoints, Simon~\cite{Simon11} introduced information-theoretic notions of \emph{computation capacity} for noisy function evaluation.
Those works primarily study reliable computation of a specified function.
Here the objective is instead \emph{task-centric}: approximate reconstruction under a distortion criterion, with receiver computation coupled to communication through an explicit per-instance budget~$m$.

\paragraph{Faulty decoding: algorithm-specific analyses vs.\ algorithm-agnostic converses}
Computation noise inside iterative decoders has been studied extensively, for example for low-density parity-check (LDPC) decoding under faulty updates~\cite{VarshneyTIT11} and belief propagation on noisy hardware~\cite{HuangLiDolecek15}.
Those are \emph{algorithm-specific} bit-recovery analyses.
Our results are complementary: they give \emph{algorithm-agnostic}, \emph{task-centric} converses and corresponding achievability statements under an explicit compute budget, with the converse side upper bounding the task-relevant mutual information deliverable by \emph{any} receiver algorithm.
In particular, Theorem~\ref{thm:interface_tax} shows that when decoding and task execution share the same unreliable substrate, a committed hard interface creates an additional first-order bottleneck.
Unlike classical source--channel separation, where an intermediate digital description does not create a new first-order bottleneck below channel capacity~\cite{Shannon48,CoverThomas}, a committed interface here must itself be materialized and re-consumed through noisy primitives, so task-relevant information crosses two serial compute cuts.

\paragraph{Indirect rate--distortion and task-oriented communication}
The indirect (remote) rate--distortion problem~\cite{BergerBook71,Witsenhausen80} characterizes the mutual-information demand needed to reconstruct $X$ from an observation $Y$ under a distortion constraint.
More broadly, task-oriented or semantic communication formulations optimize task performance rather than signal reconstruction~\cite{Deniz23,Jialong23}.
These models typically assume reliable receiver-side processing.
Our contribution is to make receiver computation noise and a compute budget explicit first-order information bottlenecks and to show that they can change both the task limit and the validity of hard-separation.

\paragraph{Additional-noise and noisy-receiver models}
Receiver-side impairments have also been studied directly in classical information theory, including additional-noise models~\cite{DobrushinTsybakov62}, source encoding from a randomly disturbed observation~\cite{Sakrison68}, and minimum-distortion transmission to a noisy receiver~\cite{WolfZiv70}.
Those formulations typically append a noisy observation or reproduction map to an otherwise reliable receiver.
Our bottleneck is different in kind: it is induced by receiver-internal committed materialization and re-consumption under an explicit compute budget.
At a formal level, Theorem~\ref{thm:dag_cutset} is cut-set-like, but it should not be read as a classical communication-network cut-set with exogenous link capacities: the cut edges are receiver-internal committed interfaces whose capacities are induced compute-supply limits, and the demand side is indirect rate--distortion rather than reliable message throughput; see Remark~\ref{rem:classical_noisy_receiver_relation}.

\paragraph{System motivation and finite-blocklength context}
Practical fault mechanisms motivate the abstraction, but the main theorems depend on the substrate only through information-supply parameters such as $m\Cgate$ or $\sum_i C_i$ and, for tail benchmarks, coarse undetected-error/detected-erasure (UE--ER) parameters; see Subsection~\ref{subsec:practical_noise_models}.
Classical finite-blocklength lossy joint source--channel coding (JSCC) shows that with reliable receiver-side processing, separation incurs only a dispersion-level penalty at short blocklengths/latencies~\cite{KostinaVerdu13_JSCC}.
By contrast, the committed-interface penalty studied here is a first-order compute-supply loss caused by serial vulnerable cuts, and it persists in Shannon asymptotics.

\subsection{Contributions}
\label{subsec:contributions}
Task demand is quantified by the indirect (remote) rate--distortion function $R_{X|Y}(D)$. Unless explicitly stated otherwise, the baseline and serial-interface converse results below are stated under the committed/no-bypass closure. All achievability statements in this subsection are instead under the stronger protected-support closure of Assumption~\ref{assump:achievability_benchmark}; they should not be read as a matched characterization under a single common closure.

\paragraph{Baseline supply--demand converse}
For task-direct processing, Theorem~\ref{thm:supply_demand_converse} proves
\begin{equation}
  R_{X|Y}(D)\le \min\{n\Cch,\; m\Cgate\},
  \label{eq:glance_task_direct}
\end{equation}
which serves as the paper's baseline supply--demand converse.
It isolates the two basic first-order supplies: the physical-channel supply $n\Cch$ and the vulnerable-compute supply $m\Cgate$.
Corollary~\ref{cor:supply_demand_bypass} shows how an explicit protected side path of rate $b$ bits per task instance relaxes the task-direct compute cut to $b+m\Cgate$.
Under Assumption~\ref{assump:achievability_benchmark}, Theorem~\ref{thm:achievability_general} gives the task-direct achievability result, while Theorem~\ref{thm:interface_tax_bypass_achievability} gives the protected-bypass achievability theorem for the two-stage architecture.

\paragraph{Architecture theorem chain}
The main architectural point is that, once remote inference is implemented through committed vulnerable interfaces, the first-order converse depends on receiver architecture.
Theorem~\ref{thm:dag_cutset} gives the master receiver-internal compute min-cut converse
\begin{equation}
  R_{X|Y}(D)
  \le \min\Big\{n\Cch,\; \min_{\Omega:\,s_{\mathcal G}\in\Omega,\,t_{\mathcal G}\notin\Omega}
      \sum_{e\in\partial^+(\Omega)} m_e\Cgate\Big\},
  \label{eq:glance_dag_cut}
\end{equation}
Equation~\eqref{eq:glance_dag_cut} should therefore be interpreted under the committed/no-bypass closure: its cut values are induced compute-supply limits, not exogenous communication-link capacities.
Theorems~\ref{thm:interface_tax} and~\ref{thm:k_stage_tax} are its two-stage and $K$-stage serial specializations.
In particular, the familiar twofold hard-separation loss is not inherent to unreliable receiver computation; it arises only in the symmetric two-stage hard-separation special case of this more general receiver-internal compute-cut phenomenon.
Proposition~\ref{prop:no_tax_soft_interface} shows, on the converse side, that if downstream modules retain soft visibility to the raw channel output within that closure, the converse reduces to the single-bottleneck supply, up to any explicitly reserved soft-path budget; see also Proposition~\ref{prop:soft_interface_tradeoff}.
Theorem~\ref{thm:interface_tax_achievability} and Corollary~\ref{cor:k_stage_tax_achievability} give the corresponding serial achievability results.

\paragraph{Variants and separate noisy-logic closure}
Protected bypasses, reliable islands, soft interfaces, and heterogeneous primitive capacities enter by modifying the active cut or its effective capacity, yielding a unified picture of when the hard-separation penalty is relaxed, eliminated, or replaced by another bottleneck; the protected-bypass and reliable-island cases also receive achievability results.
Section~\ref{sec:noisy_logic_closure} then studies a stricter closure in which the entire receiver logic, including control and bookkeeping, is noisy.
In that regime, Theorem~\ref{thm:supply_demand_noisy_gate} gives a conservative depth-dependent converse through the effective per-gate supply $C_{\mathrm{logic}}(d_{\mathrm{logic}},\delta,K_{\mathrm{fan}})$ defined in~\eqref{eq:C_logic_def}; a matched achievability characterization remains open.

\subsection{Paper Outline}
Sections~\ref{sec:model}--\ref{sec:converse} formalize the committed/no-bypass model, the task-demand notion, and the reference supply plane.
Section~\ref{sec:interface_tax_section} develops the receiver-internal cut converses, their serial specializations, and the main architectural variants.
Section~\ref{sec:noisy_logic_closure} studies the separate noisy-logic closure.
Section~\ref{sec:examples} gives compact discrete and scalar-Gaussian examples.
The appendices collect finite-blocklength and tail benchmarks, UE--ER refinements, throughput mappings, and supplementary variants and extensions.
\section{Model and Architectural Taxonomy}
\label{sec:model}

\indent This section fixes the baseline committed/no-bypass receiver model, the task-demand notion, and the architectural taxonomy used by the converse results. The emphasis is on which information paths are permitted to reach the task output, while finite-$T$ tail criteria and benchmark refinements are deferred to Appendix~\ref{sec:finite_blocklength}.

\subsection{Notation and Conventions}
We index \emph{task instances} (samples) within a coding block by $t\in\{1,\dots,T\}$, where $T$ is the \emph{task-blocklength} (the number of task instances coded jointly).
We denote length-$T$ task blocks by superscripts, e.g., $X^T=(X_1,\dots,X_T)$ and $Y^T=(Y_1,\dots,Y_T)$.
For each task instance $t$, the transmitter uses $n$ physical channel uses, producing channel input/output vectors $S_t^n$ and $R_t^n$.
Over a task block of length $T$, the physical channel is therefore used $N_{\mathrm{ch}}\triangleq nT$ times (we often write sequences as $S^{nT}$ and $R^{nT}$ to keep the dependence on $(n,T)$ explicit).
Likewise, the receiver is allowed at most $m$ uses of an unreliable binary primitive per task instance (Definition~\ref{def:gpu_bitflip_model}), i.e., at most $L\triangleq mT$ primitive uses per task block; in hard-separation this budget is allocated across decoding and downstream task inference with $m_{\mathrm{dec}}+m_{\mathrm{task}}\le m$.

Here $t$ indexes task instances rather than physical seconds; individual physical channel uses within a block are indexed by $\ell\in\{1,\dots,nT\}$ (equivalently $\ell\in\{1,\dots,N_{\mathrm{ch}}\}$), unreliable primitive uses by $i$, and serial stages by $k$. Physical time re-enters only when mapping per-second budgets or hardware error rates into per-instance parameters, e.g., via the throughput $\lambda$ in Appendix~\ref{sec:throughput} or via a dwell time $\tau$ in a soft-error-rate mapping.

We reserve the superscript $^T$ for length-$T$ task blocks (sequences). If matrix/vector transposes appear (e.g., in linear-Gaussian extensions), we denote transpose by $(\cdot)^{\top}$ (or $(\cdot)^{\mathsf T}$) to avoid overloading $^T$.

We use $\varepsilon\in[0,\tfrac12)$ for the BSC crossover probability of the unreliable compute primitive, while $\epsilon$ (with subscripts) denotes reliability targets such as excess-distortion probabilities and block error probabilities in finite-blocklength benchmarks.
The physical channel capacity is denoted by $\Cch$ (bits/use) and the primitive ``compute capacity'' by $\Cgate$ (bits/primitive use); the canonical BSC$(\varepsilon)$ specialization is introduced in Subsection~\ref{subsec:compute_model}.
Unless otherwise noted, mutual information and rate quantities are measured in bits (base-2 logarithms), and ``bits/sample'' means per task instance.
Table~\ref{tab:symbols} summarizes the frequently used symbols and conventions for reference.

\begin{table*}[t]
\centering
\caption{Frequently used symbols and conventions.}
\label{tab:symbols}
{\renewcommand{\arraystretch}{1.1}%
\begin{tabular}{p{0.18\textwidth} p{0.78\textwidth}}
\hline
Symbol & Meaning \\
\hline
$t$;\ $T$ & Task-instance index; task-blocklength (number of task instances coded jointly). \\
$X_t,\,Y_t,\,\hat X_t$ & Latent task variable, observation, and reconstruction for instance $t$. \\
$d(x,\hat x)$ & Distortion function. \\
$X^T,\,Y^T,\,\hat X^T$ & Length-$T$ blocks. \\
$n$ & Physical channel uses per task instance. \\
$N_{\mathrm{ch}}$ & Total physical channel uses per task block, $N_{\mathrm{ch}}\triangleq nT$. \\
$S_t^n,\,R_t^n$ & Channel input/output vectors for instance $t$ (length $n$). \\
$\Cch$ & Physical channel capacity (bits/channel use). \\
$m$ & Unreliable primitive uses per task instance (bit materializations). \\
$w$ & Word length (bits) for word-level primitives. \\
$L$ & Total unreliable primitive uses per task block, $L\triangleq mT$. \\
$m_{\mathrm{dec}},\,m_{\mathrm{task}}$ & Compute budget split across decoding and task stages under hard-separation. \\
$\varepsilon$ & Crossover probability of the unreliable primitive (BSC$(\varepsilon)$). \\
$\delta$ & Output-flip probability of a noisy logic gate (BSC$(\delta)$) in the noisy-gate circuit model (Subsection~\ref{subsec:noisy_gate_connections}). \\
$K_{\mathrm{fan}}$ & Maximum fan-in of a noisy logic gate in the noisy-gate circuit model (Subsection~\ref{subsec:noisy_gate_connections}). \\
$d_{\mathrm{logic}}$ & Minimum noisy-gate depth from channel-dependent inputs to the task output in the noisy-logic converse (Theorem~\ref{thm:supply_demand_noisy_gate}). \\
$\beta$ & Evans--Schulman propagation factor $\beta\triangleq K_{\mathrm{fan}}(1-2\delta)^2$. \\
$C_{\mathrm{logic}}(d_{\mathrm{logic}},\delta,K_{\mathrm{fan}})$ & Effective per-gate information supply under noisy receiver logic, defined in \eqref{eq:C_logic_def}. \\

$\Cgate$ & Compute primitive capacity (bits/primitive use); in the BSC$(\varepsilon)$ baseline, $\Cgate=1-\hbin{\varepsilon}$. \\
$\Cgateeff$ & Per-bit effective compute capacity for word-level additive upsets, $\Cgateeff=1-\frac{H(E)}{w}$. \\
$\Rsup$ & Per-instance information supply $\min\{n\Cch,\;m\Cgate\}$. \\
$\Rsuphard$ & Per-instance hard-separation supply in the symmetric two-stage special case, $\Rsuphard\triangleq \min\{n\Cch,\;\tfrac{m}{2}\Cgate\}$. \\
$b$ & Reliable bypass bits per task instance in the $b$-bit bypass extension (Remark~\ref{rem:bypass_model}). \\
$\Rsupb$ & Per-instance supply with bypass, $\Rsupb\triangleq\min\{n\Cch,\;b+m\Cgate\}$ (Corollary~\ref{cor:supply_demand_bypass}). \\
$D$ & Target average distortion (mean task loss). \\
$\epsilon_D^{\mathrm{blk}}(D)$ & Block excess-distortion probability in \eqref{eq:excess_dist_prob_def}. \\
$\epsilon_D^{(1)}(D)$ & Per-instance excess-distortion probability in \eqref{eq:excess_dist_prob_one}. \\
$\epsilon$ & Overall tail-reliability target (typically an upper bound on $\epsilon_D^{\mathrm{blk}}(D)$). \\
$\epsilon_{\mathrm{src}},\,\epsilon_{\mathrm{ch}},\,\epsilon_{\mathrm{comp}}$ & Error budgets for the finite-$T$ benchmark: source/task coding, physical channel coding, and compute-side redundancy. \\
$\epsilon_{\mathrm{dec}},\,\epsilon_{\mathrm{task}}$ & Error budgets for decoding-stage and task-stage computations under hard-separation. \\
$p_{\mathrm{ue}},\,p_{\mathrm{er}}$ & Undetected-error and detected-erasure probabilities in the correct-output/detected-erasure/undetected-error ($\mathsf{OK}/\mathsf{ER}/\mathsf{UE}$) outcome model (Appendix~\ref{sec:erasures}). \\
$\mathcal{B},\,\mathcal{G}$ & Per-second physical-channel-use and compute-primitive budgets (Appendix~\ref{sec:throughput}). \\
$\lambda$ & Throughput (task instances per second). \\
$\Lif$ & Length of an explicit digital interface message (bits per task instance) when such an interface is modeled. \\
$\ell$ & Physical channel-use index within a task block ($\ell\in\{1,\dots,N_{\mathrm{ch}}\}$). \\
$k$ & Stage index in a $K$-stage serial receiver pipeline ($k\in\{1,\dots,K\}$). \\
$K$ & Number of serial stages in a receiver pipeline (Subsection~\ref{subsec:k_stage_tax}). \\
$G=(\mathcal{V},\mathcal{E})$ & Receiver computation graph (DAG) used for receiver-internal compute min-cut converses (Subsection~\ref{subsec:dag_cutset}). \\
$m_e$ & Primitive budget assigned to interface edge $e\in\mathcal{E}$ in the computation graph (Subsection~\ref{subsec:dag_cutset}). \\
$\partial^+(\Omega)$ & Outgoing edge boundary of a vertex set $\Omega$ in the computation graph, defined in \eqref{eq:delta_plus_def}. \\

$i$ & Unreliable primitive-use (materialization) index; within a task block $i\in\{1,\dots,L\}$. \\
\hline
\end{tabular}}
\end{table*}

\subsection{Source, Observation, and Task Distortion}
At each task instance indexed by $t$, a latent variable $X_t$ is observed through $Y_t$ with a known joint law $P_{X,Y}$.
The receiver produces an estimate $\hat X_t$ and incurs a single-letter distortion $d(X_t,\hat X_t)$.
We focus on average distortion
\begin{equation}
  D \triangleq \mathbb{E}\big[d(X_t,\hat X_t)\big].
\end{equation}

The mean-distortion criterion above is the primary first-order task metric for the core converse chain. Tail/excess-distortion criteria and finite-$T$ refinements are deferred to Appendix~\ref{sec:finite_blocklength}.

\subsection{Channel Model and Per-Instance Communication Budget}
Per task instance, the communication budget is $n$ channel uses.
For the formal information-theoretic statements below, the encoder may be stochastic and may map an entire observation block $Y^T$ to $S^{nT}$; the single-instance picture $Y_t\mapsto S_t^n$ is recovered when $T=1$.
The channel is memoryless with transition law $P_{R|S}$ (e.g., additive white Gaussian noise (AWGN)) and a standard input constraint (e.g., average power).
Let $\mathcal{P}$ denote the set of admissible input distributions under this constraint.
The per-use capacity is
\begin{equation}
  \Cch \triangleq \sup_{P_S\in\mathcal{P}} I(S;R).
  \label{eq:Cch_def}
\end{equation}
Thus, the physical channel can supply at most $n\Cch$ bits of mutual information per task instance, or equivalently $nT\Cch$ bits over a task block of length $T$.

\subsection{Blocklength-$T$ Formulation and Standing Assumptions}
\label{subsec:blocklength}
For information-theoretic statements (converse and achievability), it is useful to consider a block of $T$ task instances.
Let $(X^T,Y^T)=\{(X_t,Y_t)\}_{t=1}^T$ denote the source--observation pairs.
Unless otherwise stated, we adopt the following standing assumptions \textbf{(S1)--(S3)}.

\begin{itemize}
  \item \textbf{(S1) Memorylessness across task instances:} $(X_t,Y_t)$ are i.i.d.\ according to $P_{X,Y}$.
  \item \textbf{(S2) Memoryless channel:} conditioned on inputs $S^{nT}$, the channel outputs satisfy
  $P_{R^{nT}|S^{nT}}=\prod_{\ell=1}^{nT} P_{R|S}$.
  \item \textbf{(S3) Per-instance resource budgets:} the encoder uses $n$ channel uses per task instance, so the block uses $N_{\mathrm{ch}}\triangleq nT$ channel uses in total.
  The receiver is allowed at most $m$ unreliable primitives per task instance, i.e., at most $L\triangleq mT$ unreliable primitives in the block.
\end{itemize}

The receiver may be stochastic and interacts with the noisy-computation substrate (Definition~\ref{def:gpu_bitflip_model}) for at most $L=mT$ unreliable primitives before outputting an estimate $\hat X^T$.
The average distortion constraint is
\begin{equation}
  \frac{1}{T}\sum_{t=1}^T \mathbb{E}\big[d(X_t,\hat X_t)\big] \le D.
\end{equation}

The per-instance budgets $(n,m)$ are the natural objects once communication bandwidth and compute energy/latency scale linearly with throughput; Appendix~\ref{sec:throughput} makes this explicit by translating per-second budgets into per-instance budgets via the sampling rate $\lambda$.

\subsection{Unreliable Computation Substrate and Per-Instance Compute Budget}
\label{subsec:compute_model}
We model receiver-side processing as an interactive algorithm that, in order to persist intermediate digital state, must repeatedly \emph{commit} values to vulnerable storage (or other unreliable primitives) and later retrieve/use them.
Each such vulnerable materialization is abstracted as one use of a discrete memoryless channel (DMC) $W_{Z|U}$ from an intended input $U$ to an available output $Z$.
The corresponding per-use \emph{compute information supply} is the Shannon capacity
\begin{equation}
  \Cgate \triangleq \max_{P_U} I(U;Z) \quad \text{bits/primitive}.
\end{equation}

\paragraph{Canonical baseline: BSC$(\varepsilon)$ (independent bit flips)}
A canonical and analytically convenient specialization is BSC$(\varepsilon)$, which models independent bit flips during a store--retrieve (materialization) operation.\footnote{We adopt an i.i.d.\ BSC$(\varepsilon)$ bit-flip model as an analytically convenient baseline.
Architecture-level resilience studies commonly abstract transient faults as random bit corruptions, often implemented as randomly selected single-bit flips in architectural state; see, e.g.,~\cite{FangPRG16,FangPRG14,HariTSKE17,TsaiNVBitFI21}.}

In this baseline, each vulnerable bit materialization is modeled as
\begin{equation}
  Z = U \oplus N,\qquad N\sim\mathrm{Bern}(\varepsilon),\quad 0\le \varepsilon < \tfrac12,
\end{equation}
where $U$ is the intended stored bit and $Z$ is the bit value subsequently available to the algorithm after potential flips.
The per-bit information capacity of this bit-flip mechanism is
\begin{equation}
  \Cgate \triangleq 1-\hbin{\varepsilon}\quad \text{bits/primitive},
\end{equation}
where $h_2(p)\triangleq -p\log_2 p-(1-p)\log_2(1-p)$ denotes the binary entropy function (base-2).

All converses in this paper depend on the compute substrate only through an upper bound on the per-use mutual information that can cross each vulnerable primitive use.
Therefore, the results extend directly to any DMC $W_{Z|U}$ by plugging in an appropriate effective $\Cgate$ (or, under heterogeneity, a sum of class-dependent supplies as in Lemma~\ref{lem:compute_info_bound_hetero}); Subsection~\ref{subsec:practical_noise_models} records the corresponding plug-in and calibration view.

We measure receiver compute/energy/latency constraints via a \emph{bit-materialization budget}: per task instance, the receiver may cause at most $m$ such vulnerable bit materializations.
If the graphics processing unit (GPU) operates on $w$-bit words, then one word-level store corresponds to $w$ parallel bit materializations; throughout, $m$ counts bits rather than words.

To make the compute bottleneck information-theoretically explicit, we adopt the following interactive model.

\begin{definition}[Committed/no-bypass noisy-materialization receiver model]
\label{def:gpu_bitflip_model}
Given the channel output $R_t^n$, a (possibly randomized) receiver algorithm interacts with an unreliable digital substrate (e.g., a GPU/accelerator) whose intermediate \emph{stored bits}---including both task-dependent intermediates (e.g., activations/messages) and task-independent parameters (e.g., quantized deep neural network (DNN) weights) when stored in vulnerable memory---are subject to independent flips.
At materialization $i\in\{1,\dots,m\}$, the algorithm selects an intermediate bit $U_i$ as a (measurable) function of $(R_t^n,Z^{i-1})$ and stores it in a vulnerable state element.
The value that is later available to subsequent computation is
$Z_i=U_i\oplus N_i$ with $N_i\stackrel{\text{i.i.d.}}{\sim}\mathrm{Bern}(\varepsilon)$.
After $m$ materializations, the final task output satisfies $\hat X_t = g(Z^m)$ for some measurable function $g$.
Thus any task-relevant information from $R_t^n$ that influences $\hat X_t$ must be \emph{committed} to vulnerable state at least once and traverse a noisy primitive.
Adaptive access to $R_t^n$ when choosing the materializations is allowed, but no additional reliable path from $R_t^n$ to $\hat X_t$ is present unless it is modeled explicitly through a protected bypass or related extension.
\end{definition}

\begin{remark}[Why commitment matters]
\label{rem:no_bypass}
This theorem-bearing information-structure assumption still permits fully adaptive read--compute--store strategies; it is precisely what turns Lemma~\ref{lem:compute_info_bound} into a genuine compute cut.
\end{remark}

\begin{remark}[$b$-bit reliable bypass / accounting]
\label{rem:bypass_model}
Some receiver implementations retain a protected path from the raw channel output to the final task output~\cite{ZhengSAVEATC25,WangRadshieldASPLOS26}. We model this by an auxiliary variable $V_t$ satisfying $H(V_t)\le b$ bits and $\hat X_t=g(Z^m,V_t)$, so the bypass contributes a reliable side path of rate $b$ bits/sample.
Protected output-relevant artifacts are charged to $b$, whereas vulnerable materializations count toward $m$.
Resident or reliably stored weights contribute little or no per-instance budget, whereas weights re-materialized through vulnerable memory on each inference must be charged to $m$.
Detailed accelerator-specific mappings, including weight loading and the learned-inference proxy in \eqref{eq:m_mapping_nn}, are deferred to Appendix~\ref{app:learned_inference_supp}.
\end{remark}

Figure~\ref{fig:bypass_accounting} summarizes this protected-bypass versus vulnerable-materialization accounting.

\begin{figure}[t]
\centering
\scalebox{0.81}{%
\begin{tikzpicture}[node distance=2.0cm,>=Latex]
  \node[draw,rounded corners,align=center,inner sep=3pt] (R) {$R_t^n$\\(channel output)};
  \node[draw,rounded corners,align=center,inner sep=4pt,right=2.2cm of R] (A) {receiver\\computation};
  \node[draw,rounded corners,align=center,inner sep=4pt,above right=1.2cm and 1.6cm of A] (V) {protected\\bypass $V_t$\\($H(V_t)\le b$)};
  \node[draw,rounded corners,align=center,inner sep=4pt,below right=1.2cm and 1.6cm of A] (Z) {vulnerable\\materializations $Z^m$\\($m$ uses)};
  \node[draw,rounded corners,align=center,inner sep=3pt,right=3.0cm of A] (X) {$\hat X_t$\\(task output)};

  \draw[->] (R) -- (A);
  \draw[->] (A) -- (V);
  \draw[->] (A) -- (Z);
  \draw[->] (V) -- (X);
  \draw[->] (Z) -- (X);
\end{tikzpicture}%
}
\caption{Protected-side-information extension of the committed/no-bypass receiver model. The protected side variable $V_t$ is available directly at the final task output in parallel with the vulnerable materialization path $Z^m$, so the compute cut relaxes from $m\Cgate$ to $b+m\Cgate$ when $H(V_t)\le b$; setting $b=0$ recovers the baseline no-bypass model.}
\label{fig:bypass_accounting}
\end{figure}
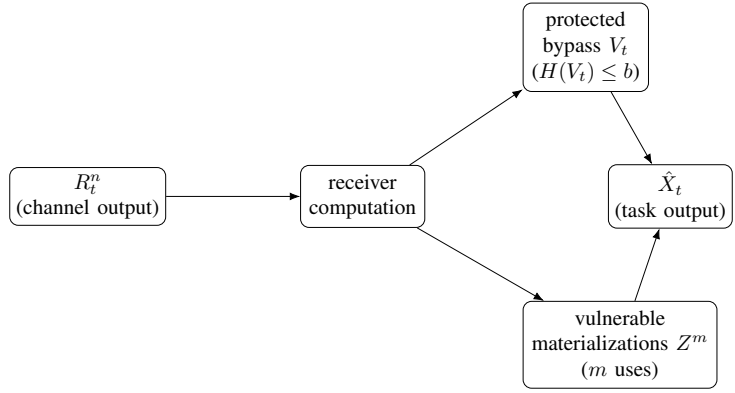

\subsection{Task Demand: Indirect (Remote) Rate--Distortion}\label{subsec:task_demand}
The \emph{task demand} for achieving distortion $D$ is quantified by the indirect rate--distortion function~\cite{BergerBook71,Witsenhausen80}
\begin{equation}
  R_{X|Y}(D)
  \triangleq
  \inf_{P_{\hat X|Y}:\ \mathbb{E}[d(X,\hat X)]\le D} I(Y;\hat X).
  \label{eq:indirect_rd_def}
\end{equation}

\begin{remark}[Induced-distortion view]
Define the induced distortion $\bar d(y,\hat x)\triangleq \mathbb{E}[d(X,\hat x)\mid Y=y]$.
Then $\mathbb{E}[d(X,\hat X)]\le D$ is equivalent to $\mathbb{E}[\bar d(Y,\hat X)]\le D$, and hence $R_{X|Y}(D)$ is the (direct) rate--distortion function of source $Y$ under distortion measure $\bar d$; see Lemma~\ref{lem:induced_distortion_equivalence}.
This viewpoint is reused for the finite-blocklength/dispersion refinements in Appendix~\ref{sec:finite_blocklength}.
\end{remark}

\begin{proposition}[Operational indirect rate--distortion theorem]
\label{prop:operational_indirect_rd}
For the memoryless block model introduced in Subsection~\ref{subsec:blocklength}, and under the standard regularity conditions for rate--distortion theory, $R_{X|Y}(D)$ equals the minimum asymptotically achievable fixed-length rate for reproducing $X^T$ from $Y^T$ under average distortion $D$.
Equivalently, for every $R>R_{X|Y}(D)$ there exist fixed-length codes $(f_T,\psi_T)$ with $|\mathcal W_T|\le 2^{TR}$ and
\[
  \hat X^T=\psi_T(f_T(Y^T)),
  \qquad
  \limsup_{T\to\infty}\frac{1}{T}\sum_{t=1}^T \mathbb{E}[d(X_t,\hat X_t)] \le D,
\]
while any such sequence must satisfy $R\ge R_{X|Y}(D)$.
\end{proposition}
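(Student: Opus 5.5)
The plan is to reduce the statement to the classical (direct) rate--distortion coding theorem through the induced-distortion view, and to treat the converse and the achievability separately. With $\bar d(y,\hat x)\triangleq\mathbb{E}[d(X,\hat x)\mid Y=y]$, any code satisfies $\mathbb{E}[d(X_t,\hat X_t)]=\mathbb{E}[\bar d(Y_t,\hat X_t)]$. This holds because $\hat X^T=\psi_T(f_T(Y^T))$ (possibly randomized independently of $X^T$) gives the Markov chain $X_t - Y^T - \hat X_t$. Under (S1), $X_t$ is conditionally independent of $(Y_{t'})_{t'\neq t}$ given $Y_t$, so $\mathbb{E}[d(X_t,\hat x)\mid Y^T]=\bar d(Y_t,\hat x)$. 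The operational indirect problem is therefore exactly the direct problem for the i.i.d.\ source $Y$ under the single-letter distortion $\bar d$. By Lemma~\ref{lem:induced_distortion_equivalence}, its informational rate--distortion function is $R_{X|Y}(D)$ as defined in \eqref{eq:indirect_rd_def}.

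For the converse, we argue directly. Let $W=f_T(Y^T)$ with $|\mathcal W_T|\le 2^{TR}$. The chain is
\[
TR\ge H(W)\ge I(Y^T;\hat X^T)\ge\sum_{t=1}^T I(Y_t;\hat X_t)\ge\sum_{t=1}^T R_{X|Y}(D_t)\ge T\,R_{X|Y}\Big(\tfrac1T\textstyle\sum_t D_t\Big),
\]
where $D_t\triangleq\mathbb{E}[\bar d(Y_t,\hat X_t)]$. The third inequality uses independence of the $Y_t$ together with the chain rule. The fourth is the definition of $R_{X|Y}$, since $P_{\hat X_t|Y_t}$ is feasible for $D_t$. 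The last step uses convexity of $R_{X|Y}(\cdot)$, which follows from convexity of mutual information in $P_{\hat X|Y}$ and linearity of the constraint. Monotonicity then gives $R\ge R_{X|Y}(\tfrac1T\sum_t D_t)$. For the $\limsup$ criterion, we pass to the limit using continuity of $R_{X|Y}$ on the interior of its domain, a convex nonincreasing function. At the boundary point $D_{\min}$ we instead use lower semicontinuity, which is where the regularity conditions enter.

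For achievability, we apply the standard random-coding (joint-typicality or Shannon) lossy source coding theorem to source $Y$ with distortion $\bar d$. This requires $\bar d$ bounded, or more generally the existence of a reference letter $\hat x_0$ with $\mathbb{E}[\bar d(Y,\hat x_0)]<\infty$; we take this, along with finite or standard Borel alphabets, as the "standard regularity conditions." The theorem yields codes of rate $R>R_{X|Y}(D)$ with average $\bar d$-distortion at most $D+\delta$ for any $\delta>0$. To reach $\limsup\le D$ exactly, we use $R>R_{X|Y}(D)$ and continuity to choose $D'<D$ with $R_{X|Y}(D')<R$ and run the code at $D'$. If $D=D_{\min}$, we instead appeal to the fact that the operational function is right-continuous, then diagonalize. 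Translating back via the identity $\mathbb{E}[d]=\mathbb{E}[\bar d]$ completes the argument.

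The main obstacle is the achievability boundary handling and the regularity conditions for general (e.g., continuous) alphabets, where $\bar d$ may be unbounded. There I would rely on the Berger/Witsenhausen reference-letter condition and cite the general coding theorem rather than rederive it. The converse is routine.
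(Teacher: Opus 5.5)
Your proposal is correct and follows the same route as the paper. The paper uses Lemma~\ref{lem:induced_distortion_equivalence} to identify $R_{X|Y}(D)$ with the ordinary rate--distortion function of $Y$ under $\bar d$, and then cites Shannon's fixed-length lossy source coding theorem; your extra steps are valid details that the paper leaves to the cited references, namely the block-level identity $\mathbb{E}[d(X_t,\hat X_t)]=\mathbb{E}[\bar d(Y_t,\hat X_t)]$ for arbitrary block codes (which the paper's single-letter lemma does not literally state), the explicit converse chain mirroring Lemma~\ref{lem:ird_block_lower_bound}, and the $\limsup$/boundary handling.
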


\begin{proof}
By Lemma~\ref{lem:induced_distortion_equivalence}, $R_{X|Y}(D)$ is exactly the ordinary rate--distortion function of the source $Y$ under the induced distortion measure $\bar d$.
The claim therefore follows from Shannon's classical fixed-length lossy source coding theorem applied to the memoryless source $Y^T$ under distortion $\bar d$; see, e.g.,~\cite{BergerBook71,CoverThomas}.
\end{proof}

\subsection{Two Receiver Architectures}
\label{subsec:architectures}
We distinguish two baseline receiver organizations: Architecture~(A) is a \emph{task-direct} receiver and Architecture~(B) is a \emph{hard-separation} receiver. Soft-interface relaxations are formalized after the baseline hard-separation model.

\paragraph{Task-direct processing}
The receiver maps the channel output directly to $\hat X_t$ using the full compute budget of $m$ noisy bit materializations:
\begin{equation}
  Y_t \xrightarrow{\mathrm{Enc}} S_t^n \xrightarrow{\text{channel}} R_t^n
  \xrightarrow[\text{$m$ noisy bit materializations}]{\text{task-direct}} \hat X_t.
  \label{eq:task_direct_arch}
\end{equation}

\paragraph{Hard-separation with an intermediate digital interface}
The receiver first forms an intermediate digital representation $\hat B_t$ and then performs task computation using only $\hat B_t$ as input:
\begin{equation}
  Y_t \xrightarrow{\mathrm{Enc}} S_t^n \xrightarrow{\text{channel}} R_t^n
  \xrightarrow[\text{$m_{\mathrm{dec}}$ noisy bit materializations}]{\text{noisy decode}} \hat B_t
  \xrightarrow[\text{$m_{\mathrm{task}}$ noisy bit materializations}]{\text{noisy task}} \hat X_t,
  \qquad m_{\mathrm{dec}}+m_{\mathrm{task}}\le m.
  \label{eq:separation_arch}
\end{equation}
This downstream-visibility constraint is formalized in Definition~\ref{def:hard_separation}.

\begin{definition}[Hard-separation with a committed digital interface]
\label{def:hard_separation}
A $T$-block receiver architecture satisfies \emph{hard-separation} if there exists an intermediate \emph{digital interface} random variable $\hat B^T$ (taking values in a finite alphabet, e.g., $\{0,1\}^{\Lif T}$ for some interface length $\Lif$) such that the induced joint law obeys the Markov chain
$Y^T \to S^{nT} \to R^{nT} \to \hat B^T \to \hat X^T$,
and such that (i) $\hat B^T$ is computed from $R^{nT}$ on the noisy-computation substrate using at most $m_{\mathrm{dec}}T$ vulnerable primitive uses, (ii) $\hat X^T$ is computed from $\hat B^T$ (and not from $R^{nT}$) on the same substrate using at most $m_{\mathrm{task}}T$ primitive uses, with $m_{\mathrm{dec}}+m_{\mathrm{task}}\le m$.
If the downstream stage retains access to $R^{nT}$ or protected side information derived from it in addition to $\hat B^T$, the architecture becomes a \emph{soft-interface} relaxation and leaves pure hard-separation.
In that case the extra serial tax studied below is no longer mandatory; see Proposition~\ref{prop:no_tax_soft_interface}, Proposition~\ref{prop:soft_interface_tradeoff}, and Corollary~\ref{cor:supply_demand_bypass}.
The hard/soft qualifier refers to downstream information visibility, not to whether $\hat B^T$ itself contains multi-bit ``soft'' statistics.
If the task stage observes only $\hat B^T$, the Markov chain still holds and the serial hard-separation converse continues to apply.
\end{definition}

Figure~\ref{fig:arch} summarizes both organizations, while Table~\ref{tab:arch_taxonomy} places them in the broader architecture/closure taxonomy used later.

\begin{figure*}[t]
\centering
\begin{tikzpicture}[
  node distance=9mm,
  box/.style={draw, rounded corners, align=center, minimum height=7mm, minimum width=18mm},
  arrow/.style={-Latex, thick}
]
\node[box] (y) {$Y_t$};
\node[box, right=of y] (enc) {Enc};
\node[box, right=of enc] (ch) {Channel};
\node[box, right=of ch] (td) {Noisy\\task-direct\\{\small $m$}};
\node[box, right=of td] (xhat) {$\hat X_t$};

\draw[arrow] (y) -- (enc);
\draw[arrow] (enc) -- node[above, font=\small] {$S_t^n$} (ch);
\draw[arrow] (ch) -- node[above, font=\small] {$R_t^n$} (td);
\draw[arrow] (td) -- (xhat);

\node[below=5mm of ch, font=\small] {(a) Task-direct};

\node[box, below=16mm of y] (y2) {$Y_t$};
\node[box, right=of y2] (enc2) {Enc};
\node[box, right=of enc2] (ch2) {Channel};
\node[box, right=of ch2] (dec) {Noisy\\decode\\{\small $m_{\mathrm{dec}}$}};
\node[box, right=of dec] (task) {Noisy\\task\\{\small $m_{\mathrm{task}}$}};
\node[box, right=of task] (xhat2) {$\hat X_t$};

\draw[arrow] (y2) -- (enc2);
\draw[arrow] (enc2) -- node[above, font=\small] {$S_t^n$} (ch2);
\draw[arrow] (ch2) -- node[above, font=\small] {$R_t^n$} (dec);
\draw[arrow] (dec) -- node[above, font=\small] {$\hat B_t$} (task);
\draw[arrow] (task) -- (xhat2);

\node[below=6mm of ch2, align=center, font=\small] {{(b) Hard-separation via committed interface}};
\end{tikzpicture}
\caption{Two baseline receiver organizations under the same vulnerable-compute budget $m$. In task-direct processing, the task output is formed directly from the channel output using at most $m$ vulnerable primitive uses. In hard-separation, the receiver first materializes a committed interface $\hat B_t$ using $m_{\mathrm{dec}}$ vulnerable uses and then performs the downstream task using only $\hat B_t$ with $m_{\mathrm{task}}$ vulnerable uses, where $m_{\mathrm{dec}}+m_{\mathrm{task}}\le m$.}
\label{fig:arch}
\end{figure*}
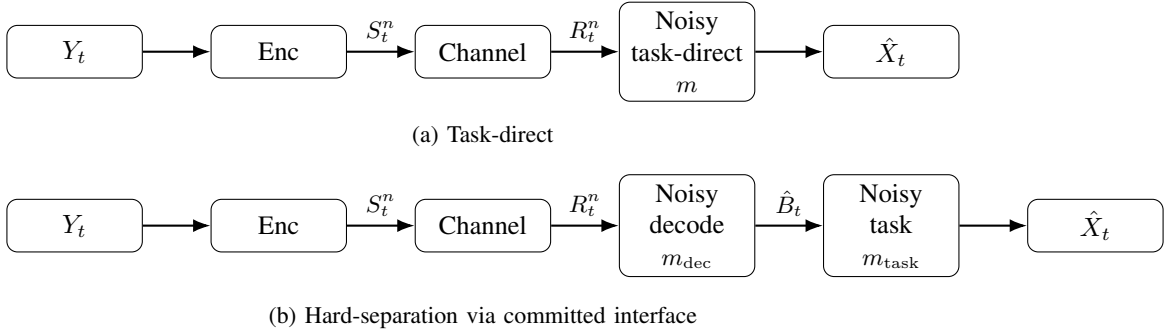

\begin{table*}[t]
\centering
\caption{Architecture taxonomy, active converses, and closure-dependent tax status.}
\label{tab:arch_taxonomy}
{\renewcommand{\arraystretch}{1.12}%
\scriptsize
\begin{tabular}{@{}p{0.15\textwidth}p{0.10\textwidth}p{0.11\textwidth}p{0.22\textwidth}p{0.14\textwidth}p{0.16\textwidth}@{}}
\toprule
Architecture / closure & Raw $R^{nT}$ downstream? & Committed / no-bypass? & Main converse & Interface-tax status & Achievability status \\
\midrule
Task-direct & Yes (direct use) & Yes & Theorem~\ref{thm:supply_demand_converse} & No extra serial cut & Achievability theorem under Assumption~\ref{assump:achievability_benchmark} \\
Hard-separation & No & Yes & Theorem~\ref{thm:interface_tax}; general form: Theorem~\ref{thm:dag_cutset} & Serial / receiver-internal compute cut present & Converse under the committed/no-bypass closure; achievability result under Assumption~\ref{assump:achievability_benchmark} \\
Soft-interface & Yes & Broken at the downstream interface & Task-direct cut; see Propositions~\ref{prop:no_tax_soft_interface}, \ref{prop:soft_interface_tradeoff} & Single-bottleneck form, up to reserved soft-path budget & No separate forward theorem; treated through the active cut structure \\
Protected bypass / reliable side path & Partially, via protected side information & Explicitly relaxed by a protected side path & Corollary~\ref{cor:supply_demand_bypass} for task-direct processing; Corollary~\ref{cor:interface_tax_bypass} when a committed interface remains & Mitigated / removable when $b$ is large & Converse + achievability theorem (Theorem~\ref{thm:interface_tax_bypass_achievability}) under Assumption~\ref{assump:achievability_benchmark} \\
Reliable island & Not necessarily & Commitment remains on vulnerable state & Corollary~\ref{cor:interface_tax_reliable_island} & Mitigated by enlarged task-side cut & Converse + achievability corollary (Corollary~\ref{cor:interface_tax_reliable_island_achievability}, homogeneous case) under Assumption~\ref{assump:achievability_benchmark} \\
Heterogeneous primitive capacities & Not necessarily & Commitment remains on vulnerable state & Lemma~\ref{lem:compute_info_bound_hetero}; Corollary~\ref{cor:interface_tax_unequal} & Rescaled stage capacities / mitigated cut & Converse / cut-capacity plug-in only \\
Noisy-logic closure & Architecture-dependent & Separate closure & Theorem~\ref{thm:supply_demand_noisy_gate} & Not a committed-interface tax & Conservative converse only \\
\bottomrule
\end{tabular}}
\end{table*}

\section{Baseline Reference Bound and Achievability Under a Protected-Support Closure}
\label{sec:converse}

\indent This section records the baseline reference bound for the committed/no-bypass closure and the corresponding achievability result under Assumption~\ref{assump:achievability_benchmark}.

\subsection{A Per-Instance Compute Information Bound}
Our converses rely on a single abstract ``compute cut'': each time the receiver commits information to a vulnerable primitive, at most $\Cgate$ bits of mutual information can cross that primitive use, even under adaptive operation. Thus any module using at most $m$ such primitives can convey at most $m\Cgate$ bits from its input to its final output. We first state this principle in a general conditional-capacity form and then record the baseline bit-flip specialization together with the reliable-bypass plug-in used later. Additional heterogeneous and word-level plug-ins are deferred to Appendix~\ref{app:section3_supp}.

\begin{lemma}[General compute information bound under a conditional-capacity constraint]
\label{lem:compute_info_bound}
Consider an interactive receiver module that observes an input $U$ and uses $m$ unreliable primitives sequentially.
At use $i$, it selects an intended primitive input $U_i$ as a (possibly randomized) function of $(U,Z^{i-1})$ and then obtains an output $Z_i$ according to some conditional law
$P_{Z_i|U_i,Z^{i-1}}(\cdot\mid \cdot, z^{i-1})$.
Assume that, for every history $z^{i-1}$, the induced conditional channel has capacity at most $\Cgate$, i.e.,
\begin{equation}
  \max_{P_{U_i}} I(U_i;Z_i \mid Z^{i-1}=z^{i-1}) \le \Cgate.
  \label{eq:cond_capacity_bound}
\end{equation}
Then the module output $\hat U=g(Z^m)$ satisfies
\begin{equation}
  I(U;\hat U)\le I(U;Z^m)\le m\,\Cgate.
\end{equation}
\end{lemma}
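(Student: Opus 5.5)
The plan is the standard chain-rule argument from feedback channel coding, adapted to the interactive module. First, the left inequality $I(U;\hat U)\le I(U;Z^m)$ follows from the data-processing inequality, because $\hat U=g(Z^m)$ gives the Markov chain $U\to Z^m\to\hat U$. The substance is the bound $I(U;Z^m)\le m\Cgate$. For it I will expand
\begin{equation*}
  I(U;Z^m)=\sum_{i=1}^m I(U;Z_i\mid Z^{i-1})
\end{equation*}
by the chain rule, and then bound each summand by $\Cgate$.

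For the $i$-th term, the key structural fact is that $Z_i$ depends on $(U,Z^{i-1})$ only through $U_i$ (together with the history that is already conditioned on). That is, $(U,\text{private randomness})\to U_i\to Z_i$ is a Markov chain conditionally on $Z^{i-1}=z^{i-1}$. This is exactly what the model specifies: the output law is $P_{Z_i|U_i,Z^{i-1}}$. Conditional data processing then gives
\begin{equation*}
  I(U;Z_i\mid Z^{i-1})\le I(U,U_i;Z_i\mid Z^{i-1})=I(U_i;Z_i\mid Z^{i-1}).
\end{equation*}
Next I write $I(U_i;Z_i\mid Z^{i-1})=\sum_{z^{i-1}}P(z^{i-1})\,I(U_i;Z_i\mid Z^{i-1}=z^{i-1})$, where the sum becomes an integral in the general measurable case. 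For each fixed history, the conditional input law $P_{U_i|Z^{i-1}=z^{i-1}}$ is \emph{some} input distribution on the channel $P_{Z_i|U_i,Z^{i-1}=z^{i-1}}$. Hence that term is at most the maximum in \eqref{eq:cond_capacity_bound}, which is at most $\Cgate$. Averaging over histories and summing over $i$ gives $m\Cgate$.

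The main point needing care is the Markov step. I have to make sure the adaptive selection of $U_i$, which may use fresh randomness and the whole input $U$, does not open a path from $U$ to $Z_i$ that bypasses $U_i$. I will handle this by stating explicitly that the primitive noise at use $i$ is drawn conditionally independently of $(U,\text{randomness},U^{i-1})$ given $(U_i,Z^{i-1})$. This is the committed/no-bypass information structure of Definition~\ref{def:gpu_bitflip_model}. In the BSC baseline it holds because $N_i$ is i.i.d.\ and independent of everything else.

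A second minor point is measurability and well-definedness of the conditional mutual information for general alphabets. I will treat finite alphabets for $Z_i$ first, then note that the general case follows by the usual supremum-over-quantizations definition of mutual information. Finally, in the BSC$(\varepsilon)$ specialization the conditional channel is BSC$(\varepsilon)$ for every history, so \eqref{eq:cond_capacity_bound} holds with $\Cgate=1-\hbin{\varepsilon}$.
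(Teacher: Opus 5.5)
Your proposal is correct and follows the paper's proof essentially step for step. Both use data processing for $I(U;\hat U)\le I(U;Z^m)$, the chain rule over the $m$ uses, and the conditional Markov chain $U\to U_i\to Z_i$ given $Z^{i-1}$ to pass to $I(U_i;Z_i\mid Z^{i-1})$, followed by the per-history capacity bound averaged over histories and summed. Your explicit no-bypass condition on the primitive noise is the same fact the paper invokes when it writes $P_{Z_i|U,U_i,Z^{i-1}}=P_{Z_i|U_i,Z^{i-1}}$.
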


\begin{proof}
By data processing, $I(U;\hat U)\le I(U;Z^m)$ since $\hat U=g(Z^m)$.
For the second inequality, apply the chain rule:
\begin{equation}
  I(U;Z^m)=\sum_{i=1}^m I(U;Z_i\mid Z^{i-1}).
\end{equation}
Fix $i$ and condition on $Z^{i-1}=z^{i-1}$.
By construction, the intended primitive input $U_i$ is generated as a (possibly randomized) function of $(U,z^{i-1})$.
Moreover, the primitive output $Z_i$ is drawn according to the conditional law $P_{Z_i|U_i,Z^{i-1}}(\cdot\mid\cdot,z^{i-1})$, which depends on $U$ only through $U_i$.
Equivalently,
\begin{equation*}
  P_{Z_i|U,U_i,Z^{i-1}}(\cdot\mid u,u_i,z^{i-1})
  = P_{Z_i|U_i,Z^{i-1}}(\cdot\mid u_i,z^{i-1}),
\end{equation*}
so $U\to U_i\to Z_i$ forms a Markov chain conditional on $Z^{i-1}=z^{i-1}$.
Hence
\begin{equation}
  I(U;Z_i\mid Z^{i-1}=z^{i-1})
  \le I(U_i;Z_i\mid Z^{i-1}=z^{i-1})
  \le \Cgate
\end{equation}
by \eqref{eq:cond_capacity_bound}.
Averaging over $Z^{i-1}$ yields $I(U;Z_i\mid Z^{i-1})\le \Cgate$, and summing over $i=1,\dots,m$ gives $I(U;Z^m)\le m\Cgate$.
\end{proof}

\begin{corollary}[BSC$(\varepsilon)$ bit-flip specialization]
Under Definition~\ref{def:gpu_bitflip_model}, the conditional-capacity bound \eqref{eq:cond_capacity_bound} holds with $\Cgate=1-\hbin{\varepsilon}$. Hence
\begin{equation}
  I(R_t^n;\hat X_t)\ \le\ I(R_t^n;Z^m)\ \le\ m\Cgate.
  \label{eq:compute_info_bound}
\end{equation}
\end{corollary}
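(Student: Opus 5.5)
The plan is to derive the corollary from Lemma~\ref{lem:compute_info_bound} with $U\triangleq R_t^n$. Definition~\ref{def:gpu_bitflip_model} has the same interaction structure the lemma assumes: at materialization $i$ the intended bit $U_i$ is a (possibly randomized) measurable function of $(R_t^n,Z^{i-1})$, the output is $Z_i=U_i\oplus N_i$, and the final estimate is $\hat X_t=g(Z^m)$. So the work reduces to verifying the conditional-capacity hypothesis \eqref{eq:cond_capacity_bound} with $\Cgate=1-\hbin{\varepsilon}$; both displayed inequalities then follow, the first by data processing and the second from the lemma.

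\textbf{Step 1: the primitive is a fixed BSC conditionally on the past.} For each history $z^{i-1}$, I would show that
\[
P_{Z_i\mid U_i,Z^{i-1}}(\cdot\mid u_i,z^{i-1})=\mathrm{BSC}(\varepsilon)(\cdot\mid u_i).
\]
The noise $N_i$ is i.i.d.\ and independent of $(R_t^n,N^{i-1})$ and of any internal randomization. Together with $R_t^n$, those variables determine $(U_i,Z^{i-1})$. Hence $N_i$ is independent of $(U_i,Z^{i-1})$, and given $Z^{i-1}=z^{i-1}$ the map from $U_i$ to $Z_i$ is exactly BSC$(\varepsilon)$. This also gives the lemma's requirement that $Z_i$ depend on $U=R_t^n$ only through $U_i$.

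\textbf{Step 2: capacity bound.} For any conditional input law $P_{U_i\mid Z^{i-1}=z^{i-1}}$ I would write
\[
I(U_i;Z_i\mid Z^{i-1}=z^{i-1})=H(Z_i\mid z^{i-1})-\hbin{\varepsilon}\le 1-\hbin{\varepsilon}.
\]
This holds because $H(Z_i\mid U_i,z^{i-1})=H(N_i)=\hbin{\varepsilon}$ and the binary output entropy is at most $1$. Applying Lemma~\ref{lem:compute_info_bound} then gives $I(R_t^n;Z^m)\le m\Cgate$, and data processing with $\hat X_t=g(Z^m)$ gives $I(R_t^n;\hat X_t)\le I(R_t^n;Z^m)$.

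\textbf{Main obstacle.} The only delicate point is the independence argument in Step~1 under adaptivity and internal randomization. I would handle it by including the randomization seed in the measurable description of $U_i$ and noting that $N_i$ is drawn fresh, independently of everything generated before the $i$th commitment. If a concern remains about weights stored in vulnerable memory, those stored bits are simply additional materializations counted in $m$, so they fit the same argument.
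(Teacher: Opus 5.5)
Your proposal is correct and follows the paper's route: the paper's proof is just ``apply Lemma~\ref{lem:compute_info_bound} with $U=R_t^n$,'' and your Steps~1--2 fill in the hypothesis check that the paper leaves implicit. That check is that fresh noise $N_i$ independent of the history makes the conditional primitive a BSC$(\varepsilon)$, so $I(U_i;Z_i\mid Z^{i-1}=z^{i-1})\le 1-\hbin{\varepsilon}$.
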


\begin{proof}
Immediate from Lemma~\ref{lem:compute_info_bound} with $U=R_t^n$.
\end{proof}

\begin{lemma}[Compute information bound with a $b$-bit reliable bypass]
\label{lem:compute_info_bound_bypass}
Consider the GPU substrate model in Definition~\ref{def:gpu_bitflip_model}, but allow the receiver to additionally generate an auxiliary variable $V_t$ through a reliable path as an arbitrary (possibly randomized) function of $R_t^n$, where $H(V_t)\le b$ bits.
If the receiver output has the form $\hat X_t=g(Z^m,V_t)$ for some measurable function $g$, then
\begin{equation}
  I(R_t^n;\hat X_t)\le b+m\Cgate.
  \label{eq:compute_info_bound_bypass}
\end{equation}
\end{lemma}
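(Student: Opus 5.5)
The plan is to apply data processing first and then split the mutual information between the pair $(Z^m,V_t)$ and the raw channel output with the chain rule. Since $\hat X_t=g(Z^m,V_t)$, we have $I(R_t^n;\hat X_t)\le I(R_t^n;Z^m,V_t)$. Expanding in the order that places the bypass first gives
\begin{equation*}
  I(R_t^n;Z^m,V_t)=I(R_t^n;V_t)+I(R_t^n;Z^m\mid V_t).
\end{equation*}
The first term is handled at once: $I(R_t^n;V_t)\le H(V_t)\le b$. The whole task therefore reduces to showing $I(R_t^n;Z^m\mid V_t)\le m\Cgate$. This is a conditional version of Lemma~\ref{lem:compute_info_bound}.

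For the conditional term, I expand again by the chain rule:
\begin{equation*}
  I(R_t^n;Z^m\mid V_t)=\sum_{i=1}^m I(R_t^n;Z_i\mid Z^{i-1},V_t).
\end{equation*}
The plan is to reuse the per-use argument from the proof of Lemma~\ref{lem:compute_info_bound}, with the conditioning history enlarged from $Z^{i-1}$ to $(Z^{i-1},V_t)$. Fix a realization $(z^{i-1},v)$. In Definition~\ref{def:gpu_bitflip_model}, the intended bit $U_i$ is generated from $(R_t^n,Z^{i-1})$, possibly with private randomness, and then $Z_i=U_i\oplus N_i$ with $N_i$ i.i.d.\ and independent of everything else.

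The step I expect to be the main obstacle is the conditional Markov structure $(R_t^n,V_t)\to U_i\to Z_i$ given $Z^{i-1}=z^{i-1}$. To make it hold, I will require the bypass generation to use randomness independent of the flip noise $N^m$. This is implicit in "reliable path as a function of $R_t^n$". If $V_t$ is allowed to depend adaptively on $Z^m$, I treat its randomness in the same way and order the chain rule with $V_t$ last instead.

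Given that independence, $N_i$ is independent of $(R_t^n,V_t,Z^{i-1},U_i)$. Hence
\begin{equation*}
  I(R_t^n;Z_i\mid Z^{i-1}=z^{i-1},V_t=v)\le I(U_i;Z_i\mid Z^{i-1}=z^{i-1},V_t=v)\le 1-\hbin{\varepsilon}=\Cgate.
\end{equation*}
The last inequality holds because, under this conditioning, the channel from $U_i$ to $Z_i$ is still exactly a BSC$(\varepsilon)$, and its capacity does not depend on the input law.

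Averaging over $(z^{i-1},v)$ and summing over $i=1,\dots,m$ gives $I(R_t^n;Z^m\mid V_t)\le m\Cgate$. Combining this with the first term yields \eqref{eq:compute_info_bound_bypass}.

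A fallback ordering avoids conditioning on $V_t$ inside the primitive uses altogether:
\begin{equation*}
  I(R_t^n;Z^m,V_t)=I(R_t^n;Z^m)+I(R_t^n;V_t\mid Z^m)\le m\Cgate+H(V_t)\le b+m\Cgate,
\end{equation*}
where the first term is bounded by \eqref{eq:compute_info_bound} and the second by $I(R_t^n;V_t\mid Z^m)\le H(V_t\mid Z^m)\le H(V_t)$. This version is actually cleaner: it needs no independence assumption on how $V_t$ is generated. I would therefore present it as the main proof and keep the first ordering only as a remark.
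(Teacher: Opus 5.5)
Your main proof is correct and is exactly the paper's argument: data processing to $I(R_t^n;Z^m,V_t)$, the chain rule with $Z^m$ first, $I(R_t^n;Z^m)\le m\Cgate$ by Lemma~\ref{lem:compute_info_bound}, and $I(R_t^n;V_t\mid Z^m)\le H(V_t)\le b$. Your alternative ordering with $V_t$ first is also valid under the independence of the bypass randomness from the flip noise that you identify, but, as you note, that assumption is unnecessary for the ordering you chose.
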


\begin{proof}
By data processing, $I(R_t^n;\hat X_t)\le I(R_t^n;Z^m,V_t)$.
Applying the chain rule,
\begin{equation}
\begin{aligned}
  I(R_t^n;Z^m,V_t)
  &= I(R_t^n;Z^m) + I(R_t^n;V_t\mid Z^m) 
  \le m\Cgate + H(V_t) \le m\Cgate + b,
\end{aligned}
\end{equation}
where $I(R_t^n;Z^m)\le m\Cgate$ follows from Lemma~\ref{lem:compute_info_bound} and $I(R_t^n;V_t\mid Z^m)\le H(V_t)\le b$.
\end{proof}

\subsection{Supply--Demand Converse for Task Distortion}
Our main converses compare a \emph{task demand} function, $R_{X|Y}(D)$, against a \emph{supply} limited by communication and computation.
To do so rigorously for block codes (which may couple many task instances), we will use the following standard ``rate--distortion lower bound'':
any reconstruction meeting an average distortion constraint must induce at least $R_{X|Y}(D)$ bits/sample of mutual information between $Y^T$ and $\hat X^T$.

\begin{lemma}[Indirect rate--distortion lower bound on block mutual information]
\label{lem:ird_block_lower_bound}
Under the standing assumptions in Subsection~\ref{subsec:blocklength}, let $\hat X^T$ be \emph{any} (possibly stochastic) reconstruction of $X^T$ based on $Y^T$.
If
\begin{equation}
  \frac{1}{T}\sum_{t=1}^T \mathbb{E}\big[d(X_t,\hat X_t)\big] \le D,
\end{equation}
then
\begin{equation}
  \frac{1}{T} I(Y^T;\hat X^T)\ \ge\ R_{X|Y}(D).
  \label{eq:ird_block_lower_bound}
\end{equation}
\end{lemma}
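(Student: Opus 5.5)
The argument is the standard single-letterization converse for rate--distortion, applied to the source $Y$ with the reconstruction $\hat X^T$. It uses three ingredients: the chain rule with memorylessness of $Y^T$ to reduce to per-instance mutual informations, the definition of $R_{X|Y}(\cdot)$ for each instance, and convexity of $R_{X|Y}(\cdot)$ to pass to the averaged distortion.

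First we reduce to per-instance terms. By (S1) the $Y_t$ are independent, so
\begin{equation*}
  I(Y^T;\hat X^T)=\sum_{t=1}^T I(Y_t;\hat X^T\mid Y^{t-1})
  =\sum_{t=1}^T I(Y_t;\hat X^T,Y^{t-1})\ge \sum_{t=1}^T I(Y_t;\hat X_t).
\end{equation*}
The middle equality uses $I(Y_t;Y^{t-1})=0$. The inequality drops variables and applies the chain rule.

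Next we check the Markov chain needed to apply the definition of $R_{X|Y}$ per instance. Because $\hat X^T$ is generated from $Y^T$ alone, possibly with independent randomness, we have $X^T\to Y^T\to\hat X^T$. Combined with (S1), this gives $X_t\to Y_t\to\hat X_t$. This step is where the hypothesis ``based on $Y^T$'' is used.

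Then we apply the definition of the indirect rate--distortion function instance by instance. Set $D_t\triangleq\mathbb{E}[d(X_t,\hat X_t)]$. The conditional law $P_{\hat X_t|Y_t}$ is feasible in \eqref{eq:indirect_rd_def} at level $D_t$, because the joint law of $(X_t,Y_t,\hat X_t)$ equals $P_{X,Y}P_{\hat X_t|Y_t}$. Hence $I(Y_t;\hat X_t)\ge R_{X|Y}(D_t)$.

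The last step handles the averaging. We must show that $R_{X|Y}(\cdot)$ is convex and nonincreasing in $D$; this is the only step needing an argument beyond bookkeeping. Convexity follows from the usual time-sharing argument. Take a mixture $\lambda P_1+(1-\lambda)P_2$ of conditional laws $P_{\hat X|Y}$. Its distortion is the corresponding mixture of distortions, because the induced distortion $\bar d$ is linear in $P_{\hat X|Y}$ (Lemma~\ref{lem:induced_distortion_equivalence}). Its mutual information is at most the mixture of mutual informations, because $I(Y;\hat X)$ is convex in $P_{\hat X|Y}$ for fixed $P_Y$. Monotonicity is immediate, since the feasible set grows with $D$. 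With these in hand,
\begin{equation*}
  \frac1T I(Y^T;\hat X^T)\ge \frac1T\sum_t R_{X|Y}(D_t)\ge R_{X|Y}\Bigl(\frac1T\sum_t D_t\Bigr)\ge R_{X|Y}(D),
\end{equation*}
where the middle inequality is Jensen's inequality and the last uses monotonicity. Finally, note that when the infimum in \eqref{eq:indirect_rd_def} is taken over an empty set it equals $+\infty$, so the conventions must be kept consistent. This causes no problem, because each $D_t$ is attained by an actual conditional law.
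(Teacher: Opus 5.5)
Your proof is correct and follows essentially the same route as the paper's. Both use the chain rule with independence of the $Y_t$ to reduce to $\sum_t I(Y_t;\hat X_t)$, then the per-instance definition giving $R_{X|Y}(D_t)$, then convexity and monotonicity with Jensen. You additionally spell out two points the paper leaves implicit or only cites: the per-instance Markov chain $X_t\to Y_t\to\hat X_t$, which is what guarantees that $P_{\hat X_t|Y_t}$ actually attains distortion $D_t$ under $P_{X,Y}P_{\hat X_t|Y_t}$, and a time-sharing proof of convexity.
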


\begin{proof}
Define the per-instance distortions $D_t\triangleq \mathbb{E}[d(X_t,\hat X_t)]$.
Using the chain rule and noting that $\hat X_t$ is a component of $\hat X^T$,
\begin{equation}
\begin{aligned}
  I(Y^T;\hat X^T)
  &= \sum_{t=1}^T I(Y_t;\hat X^T \mid Y^{t-1}) 
  \ge \sum_{t=1}^T I(Y_t;\hat X_t \mid Y^{t-1}) \\
  &= \sum_{t=1}^T \big(I(Y_t;\hat X_t, Y^{t-1}) - I(Y_t;Y^{t-1})\big)
   = \sum_{t=1}^T I(Y_t;\hat X_t, Y^{t-1}) 
  \ge \sum_{t=1}^T I(Y_t;\hat X_t),
\end{aligned}
\label{eq:ird_chain_lower}
\end{equation}
where $I(Y_t;Y^{t-1})=0$ since $\{Y_t\}_{t=1}^T$ are i.i.d., and adding variables cannot decrease mutual information.

For each $t$, the joint law of $(Y_t,\hat X_t)$ induces a conditional distribution $P_{\hat X_t\mid Y_t}$ that attains distortion $D_t$.
By the definition of $R_{X|Y}(\cdot)$ in \eqref{eq:indirect_rd_def}, this implies
\begin{equation}
  I(Y_t;\hat X_t) \ge R_{X|Y}(D_t)\qquad \text{for every }t.
\end{equation}
Finally, $R_{X|Y}(D)$ is convex and nonincreasing in $D$ (as a rate--distortion function; see~\cite{BergerBook71,CoverThomas}), so Jensen's inequality gives
\begin{equation}
  \frac{1}{T}\sum_{t=1}^T R_{X|Y}(D_t)
  \ge R_{X|Y}\Big(\frac{1}{T}\sum_{t=1}^T D_t\Big)
  \ge R_{X|Y}(D).
\end{equation}
Combining with \eqref{eq:ird_chain_lower} and dividing by $T$ yields \eqref{eq:ird_block_lower_bound}.
\end{proof}

Define the per-instance information supply for task-direct processing as
\begin{equation}
  \Rsup \triangleq \min\{n\Cch,\ m\Cgate\}\quad \text{bits/sample}.
  \label{eq:Rsupply_def}
\end{equation}
We will repeatedly invoke Lemma~\ref{lem:compute_info_bound} by choosing the output variable to be an intermediate interface variable, thereby bounding the mutual information crossing that interface by the corresponding primitive budget times $\Cgate$.

\begin{theorem}[Supply--demand converse]
\label{thm:supply_demand_converse}
Assume \textbf{(S1)--(S3)} in Subsection~\ref{subsec:blocklength}, and consider the task-direct architecture in \eqref{eq:task_direct_arch}.
If there exists an encoding and noisy-compute decoding scheme such that the (block) average distortion satisfies
\[
  \frac{1}{T}\sum_{t=1}^T \mathbb{E}\big[d(X_t,\hat X_t)\big]\le D,
\]
then
\begin{equation}
  R_{X|Y}(D)\ \le\ \Rsup \;=\; \min\{n\Cch,\ m\Cgate\}.
  \label{eq:main_converse}
\end{equation}
\end{theorem}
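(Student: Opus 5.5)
The plan is to combine the block indirect rate--distortion lower bound (Lemma~\ref{lem:ird_block_lower_bound}) with two separate upper bounds on $I(Y^T;\hat X^T)$: one through the physical channel and one through the vulnerable compute substrate. Fix any scheme meeting the block distortion constraint. Lemma~\ref{lem:ird_block_lower_bound} then gives $T R_{X|Y}(D)\le I(Y^T;\hat X^T)$. The architecture induces the Markov chain $Y^T\to S^{nT}\to R^{nT}\to Z^{L}\to\hat X^T$, where $Z^L$ collects all $L=mT$ primitive outputs in the block. This chain holds because the encoder sees only $Y^T$, the channel acts only on $S^{nT}$, and each $U_i$ depends on $(R^{nT},Z^{i-1})$ plus receiver randomness independent of everything else. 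The committed/no-bypass structure of Definition~\ref{def:gpu_bitflip_model} then gives $\hat X^T=g(Z^L)$.

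\emph{Channel cut.} By data processing, $I(Y^T;\hat X^T)\le I(S^{nT};R^{nT})$. Under (S2), the standard argument gives $I(S^{nT};R^{nT})\le\sum_{\ell=1}^{nT} I(S_\ell;R_\ell)\le nT\Cch$. This uses memorylessness without feedback, so that $H(R^{nT}\mid S^{nT})=\sum_\ell H(R_\ell\mid S_\ell)$ and $H(R^{nT})\le\sum_\ell H(R_\ell)$. With an average-power constraint, I would bound $\sum_\ell I(S_\ell;R_\ell)$ by $nT$ times the capacity evaluated at the averaged input law. That averaged law is admissible because the constraint set $\mathcal P$ is convex, and concavity of $I(S;R)$ in $P_S$ then yields the bound $nT\Cch$.

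\emph{Compute cut.} By data processing, $I(Y^T;\hat X^T)\le I(R^{nT};\hat X^T)$. I then apply Lemma~\ref{lem:compute_info_bound} at block level with module input $U=R^{nT}$ and $L=mT$ primitive uses. The conditional-capacity hypothesis \eqref{eq:cond_capacity_bound} holds with $\Cgate=1-\hbin{\varepsilon}$: the noise $N_i$ is i.i.d.\ and independent of $(R^{nT},Z^{i-1})$, so given any history the $i$-th use is a BSC$(\varepsilon)$. This gives $I(R^{nT};\hat X^T)\le mT\Cgate$. The step needing the most care is ensuring that receiver randomness and the block coupling across task instances do not break the per-use Markov property $R^{nT}\to U_i\to Z_i$ given $Z^{i-1}$. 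If the receiver uses private randomness $\Theta$, I would either fold $\Theta$ into $U$, since $I(R^{nT};Z^L)\le I(R^{nT},\Theta;Z^L)\le L\Cgate$, or condition on $\Theta=\theta$ and use convexity. The per-instance budget (S3) only needs to total $mT$ uses, so any allocation across instances is covered by the block-level application.

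Combining the two cuts, $T R_{X|Y}(D)\le\min\{nT\Cch,\,mT\Cgate\}$. Dividing by $T$ yields \eqref{eq:main_converse}. I expect the main obstacle to be the bookkeeping in the compute cut: the full block Markov chain, including stochastic encoder and receiver randomness, has to be stated cleanly so that Lemma~\ref{lem:compute_info_bound} applies verbatim. The remaining steps are routine data processing and Lemma~\ref{lem:ird_block_lower_bound}.
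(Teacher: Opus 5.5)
Your proposal is correct and follows the paper's proof essentially step for step. It uses Lemma~\ref{lem:ird_block_lower_bound} for the demand side, a data-processing communication cut bounded by $nT\Cch$, and a block-level application of Lemma~\ref{lem:compute_info_bound} with input $R^{nT}$ over $L=mT$ uses for the compute cut; your extra bookkeeping (the averaged-input concavity argument and folding independent receiver randomness into the module input) only fills in details the paper treats as standard.
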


\begin{proof}
 Let the overall $T$-block scheme induce the joint law of $(X^T,Y^T,\hat X^T)$.
 By Lemma~\ref{lem:ird_block_lower_bound}, achieving average distortion at most $D$ implies
 \begin{equation}
  R_{X|Y}(D)\ \le\ \frac{1}{T} I(Y^T;\hat X^T).
  \label{eq:ird_le_block_mutual}
\end{equation}

We now upper bound $\frac{1}{T}I(Y^T;\hat X^T)$ by two cuts.

 \emph{(Communication cut)} Since $Y^T\to S^{nT}\to R^{nT}\to \hat X^T$, data processing gives
 \[
   I(Y^T;\hat X^T)\le I(S^{nT};R^{nT}).
 \]
 By the standard memoryless-channel converse argument under the imposed input constraint,
 \[
   I(S^{nT};R^{nT})\le nT\,\Cch,
 \]
 and hence $\frac{1}{T}I(Y^T;\hat X^T)\le n\Cch$.

\emph{(Computation cut)} Let $U^{L}$ and $Z^{L}$ denote the intended and retrieved bits over the $L=mT$ vulnerable materializations.
Under the no-bypass (commitment) assumption in Definition~\ref{def:gpu_bitflip_model}, $\hat X^T$ is a (measurable) function of $Z^{L}$, so $R^{nT}\to Z^{L}\to \hat X^T$ is a Markov chain and therefore
\[
  I(Y^T;\hat X^T)\le I(R^{nT};\hat X^T)\le I(R^{nT};Z^{L}).
\]
Applying Lemma~\ref{lem:compute_info_bound} to the $L=mT$ primitive uses over the block gives
$I(R^{nT};Z^{L})\le mT\,\Cgate = L\,\Cgate$, and hence $\frac{1}{T}I(Y^T;\hat X^T)\le m\Cgate$.

Combining these two cuts with \eqref{eq:ird_le_block_mutual} yields \eqref{eq:main_converse}.
\end{proof}

\begin{discussion}{Interpretation}
Theorem~\ref{thm:supply_demand_converse} says that task performance is governed by the smaller of the physical channel supply $n\Cch$ and the receiver compute supply $m\Cgate$: when $n\Cch\le m\Cgate$, the benchmark reduces to the classical channel-limited remote rate--distortion constraint, whereas when $m\Cgate\le n\Cch$, unreliable computation becomes the active bottleneck and hard-separation can further reduce the usable compute supply to $\tfrac{m}{2}\Cgate$ (Corollary~\ref{cor:hard_sep_supply_demand}), yielding a strict separation loss under Corollary~\ref{cor:strict_sep_loss}.
\end{discussion}

As a plug-in variation of the same task-direct converse, an explicit reliable side path simply adds its protected entropy budget to the compute cut.

\begin{corollary}[Supply--demand converse with a $b$-bit reliable bypass]
\label{cor:supply_demand_bypass}
Assume the setting of Theorem~\ref{thm:supply_demand_converse}, but replace the no-bypass compute model in Definition~\ref{def:gpu_bitflip_model} by the $b$-bit bypass extension in Remark~\ref{rem:bypass_model}.
Then any scheme achieving average distortion at most $D$ must satisfy
\begin{equation}
  R_{X|Y}(D)
  \le \Rsupb
  \triangleq \min\{n\Cch,\ b+m\Cgate\}.
  \label{eq:supply_demand_bypass}
\end{equation}
In particular, the baseline converse bound in Theorem~\ref{thm:supply_demand_converse} is recovered by setting $b=0$.
\end{corollary}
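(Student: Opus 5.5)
The plan is to rerun the two-cut argument of Theorem~\ref{thm:supply_demand_converse}, leaving the communication cut unchanged and replacing the compute cut by its bypass analogue from Lemma~\ref{lem:compute_info_bound_bypass}, lifted from one instance to the whole block. First, Lemma~\ref{lem:ird_block_lower_bound} applies as before, because it only needs $\hat X^T$ to be some (possibly stochastic) reconstruction from $Y^T$. This gives $R_{X|Y}(D)\le \frac1T I(Y^T;\hat X^T)$. It therefore suffices to show that $\frac1T I(Y^T;\hat X^T)$ is at most both $n\Cch$ and $b+m\Cgate$.

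\emph{Communication cut.} The bypass variable $V^T$ is generated at the receiver from $R^{nT}$, possibly with extra randomness independent of the rest. Hence the chain $Y^T\to S^{nT}\to R^{nT}\to (Z^L,V^T)\to\hat X^T$ still holds. The argument of Theorem~\ref{thm:supply_demand_converse} then gives $I(Y^T;\hat X^T)\le I(S^{nT};R^{nT})\le nT\Cch$ verbatim.

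\emph{Compute cut with bypass.} Over the block, the output is $\hat X^T=g(Z^L,V^T)$ with $L=mT$, where $V^T=(V_1,\dots,V_T)$ and each $H(V_t)\le b$.
\begin{enumerate}
\item Apply data processing: $I(Y^T;\hat X^T)\le I(R^{nT};Z^L,V^T)$.
\item Split by the chain rule: $I(R^{nT};Z^L,V^T)=I(R^{nT};Z^L)+I(R^{nT};V^T\mid Z^L)$.
\item Bound the first term by Lemma~\ref{lem:compute_info_bound} with $U=R^{nT}$ over the $L$ uses, giving $mT\Cgate$.
\item Bound the second term by $H(V^T)\le\sum_t H(V_t)\le bT$, using subadditivity of entropy.
\end{enumerate}
Dividing by $T$ gives $b+m\Cgate$. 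This is exactly the proof of Lemma~\ref{lem:compute_info_bound_bypass} applied at block level rather than per instance. Taking the minimum of the two cuts yields \eqref{eq:supply_demand_bypass}, and $b=0$ forces $V^T$ to be constant, which recovers \eqref{eq:main_converse}.

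The main obstacle is to state the block-level bypass model correctly, since Remark~\ref{rem:bypass_model} is phrased per instance. The bypass may be formed jointly from $R^{nT}$ and may interleave with the adaptive materializations; for instance, the $U_i$ might depend on $V^T$. The fix is to treat $V^T$ as generated from $R^{nT}$ and private randomness, so that conditioning on the history still leaves each primitive a BSC from $U_i$ to $Z_i$. Then \eqref{eq:cond_capacity_bound} holds and Lemma~\ref{lem:compute_info_bound} applies to $I(R^{nT};Z^L)$, because the $U_i$ are randomized functions of $(R^{nT},Z^{i-1})$. The conditional term $I(R^{nT};V^T\mid Z^L)$ is then controlled only through $H(V^T)$, so the order in which $V^T$ and $Z^L$ are formed does not matter.
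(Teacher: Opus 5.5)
Your proposal is correct and matches the paper's proof. The paper also leaves the indirect rate--distortion lower bound and the communication cut of Theorem~\ref{thm:supply_demand_converse} unchanged, and bounds $I(R^{nT};\hat X^T)\le I(R^{nT};Z^{mT},V^T)\le mT\Cgate+bT$, which is the block-level form of Lemma~\ref{lem:compute_info_bound_bypass}. Your explicit subadditivity step $H(V^T)\le\sum_t H(V_t)$ and your check that Lemma~\ref{lem:compute_info_bound} still applies when the $U_i$ depend on $V^T$ fill in details the paper leaves implicit.
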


\begin{proof}
The indirect rate--distortion lower bound and the communication cut are unchanged from Theorem~\ref{thm:supply_demand_converse}. Under the $b$-bit bypass extension,
\[
  I(R^{nT};\hat X^T)\le I(R^{nT};Z^{mT},V^T)\le mT\Cgate+bT,
\]
so combining with Lemma~\ref{lem:ird_block_lower_bound} yields \eqref{eq:supply_demand_bypass}.
\end{proof}

\paragraph{Supplementary plug-ins and task variants} Additional heterogeneous/word-level compute-supply replacements and a Fano-style classification benchmark are recorded in Appendix~\ref{app:section3_supp}.

\subsection{Achievability Under a Protected-Support Closure}
\label{subsec:achievability_general}
Theorem~\ref{thm:supply_demand_converse} also admits a Shannon-style achievability construction built from remote source coding, capacity-approaching channel coding, and internal coding across the memoryless compute primitive.
The closure used by these forward statements is formalized next, and the internal-coding ingredient is recorded abstractly here and explicitly in Subsection~\ref{subsec:benchmark_internal_coding}.

\paragraph{Scope note for achievability statements}
The converse results in Sections~\ref{sec:converse}--\ref{sec:interface_tax_section} use the committed/no-bypass closure, whereas the achievability statements below adopt the stronger protected-support closure of Assumption~\ref{assump:achievability_benchmark}; the forward and converse results therefore apply to different closures.

\paragraph{Standing regularity assumptions} Throughout the paper, source and reconstruction alphabets are understood as standard Borel spaces, the distortion measure is measurable and integrable under the relevant test channels and code-induced laws, and the indirect rate--distortion function is taken to be well defined under these standing conditions. We do not dwell further on these standard regularity assumptions.

\begin{assumption}[Protected-support closure for achievability]
\label{assump:achievability_benchmark}
The achievability statements are stated under the following stronger model-closure assumptions.
\begin{enumerate}
  \item[(A1)] \textbf{Budgeted memoryless primitive accounting.}
  Receiver-side vulnerability is captured by at most $L=mT$ uses of a DMC compute primitive $W_{Z|U}$ with capacity $\Cgate$ (Definition~\ref{def:gpu_bitflip_model}),
  independent across uses given their inputs.

  \item[(A2)] \textbf{Protected decoder/control support.}
  The control, addressing, stagewise recoding, codebook lookup, and decoder/post-processing support machinery invoked in the achievability proofs is assumed reliable,
  or equivalently is implemented within a protected ``reliable island'' whose resource cost is not charged to the vulnerable-materialization budget $m$.
  This stronger closure supports the Shannon-style achievability theorems stated here; if the same bookkeeping and recoding support must also be implemented on the vulnerable substrate and charged to $m$, then Section~\ref{sec:noisy_logic_closure} provides the stricter conservative model.

  \item[(A3)] \textbf{Shannon-theoretic coding availability.}
  The physical channel admits capacity-approaching codes of rate up to $n\Cch$ bits/sample with vanishing block error probability,
  and, more generally, for any allocated sub-budget $m'\ge 0$, the compute primitive admits internal redundancy codes of rate up to $m'\Cgate$ bits/sample with vanishing block error probability (Proposition~\ref{prop:compute_code_bsc}).
  Likewise, by Proposition~\ref{prop:operational_indirect_rd}, for any $D$ and any $R>R_{X|Y}(D)$, fixed-length indirect rate--distortion codes of rate $R$ exist in the Shannon asymptotic.
\end{enumerate}
\end{assumption}

Table~\ref{tab:assumption_scope} summarizes the assumption and closure scope of representative results in the paper, making explicit the converse/achievability split across closures.

\begin{table}[htbp]
\centering
\caption{Scope of representative results. Converse results use only the committed/no-bypass closure; achievability results use Assumption~\ref{assump:achievability_benchmark}; Theorem~\ref{thm:supply_demand_noisy_gate} belongs to a separate noisy-logic closure.}
\label{tab:assumption_scope}
\small
\begin{tabular}{p{0.33\linewidth}p{0.16\linewidth}p{0.08\linewidth}p{0.08\linewidth}p{0.21\linewidth}}
\toprule
Result (representative) & A1-style accounting & Needs (A2) & Needs (A3) & Achievability status \\
\midrule
Supply--demand / cut-set converses & Yes & No & No & Converse \\
Bypass / reliable-side converses & Yes+$b$ & No & No & Converse \\
Task-direct achievability & Yes & Yes & Yes & Achievability \\
Serial achievability & Yes & Yes & Yes & Achievability \\
Bypass / island achievability results & Yes + protected bits & Yes & Yes & Achievability \\
Noisy-logic depth-dependent converse & Circuit & No & No & Separate converse \\
\bottomrule
\end{tabular}
\end{table}

\begin{theorem}[Task-direct achievability under the protected-support closure]
\label{thm:achievability_general}
Assume Assumption~\ref{assump:achievability_benchmark}.
Then, for any distortion level $D$ such that
\begin{equation}
  R_{X|Y}(D) \;<\; \Rsup,
  \label{eq:achievability_condition}
\end{equation}
there exists a task-direct scheme (Architecture~(A)) whose average distortion satisfies
$\frac{1}{T}\sum_{t=1}^T\mathbb{E}[d(X_t,\hat X_t)]\le D$ in the Shannon asymptotic regime ($T\to\infty$).
\end{theorem}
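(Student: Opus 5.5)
The plan is a three-layer separation construction: remote source coding, then channel coding, then internal coding across the vulnerable primitives. Since $R_{X|Y}(D)<\Rsup=\min\{n\Cch,m\Cgate\}$, fix a rate $R$ with $R_{X|Y}(D)<R<\Rsup$ and a small slack $\eta>0$. Two details of the target need care. First, the statement requires distortion at most $D$ rather than $D+\eta$. Second, residual decoding failures could add unbounded distortion. To absorb the slack, I will use continuity of $R_{X|Y}(\cdot)$ on the interior of its finite region, which follows from convexity (Lemma~\ref{lem:ird_block_lower_bound} uses the same convexity). This lets me pick $D'<D$ with $R_{X|Y}(D')<R$. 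To control failures, I will assume, as is standard under the regularity conditions, a reference reconstruction $\hat x_0$ with $\mathbb{E}[d(X,\hat x_0)]<\infty$, or a bounded distortion.

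\emph{Step 1 (source layer).} By Proposition~\ref{prop:operational_indirect_rd} (invoked through (A3)), there exist fixed-length codes $(f_T,\psi_T)$ with index $W=f_T(Y^T)\in\mathcal W_T$, $|\mathcal W_T|\le 2^{TR}$, and $\frac1T\sum_t\mathbb{E}[d(X_t,\hat X_t)]\le D'+o(1)$ when $\hat X^T=\psi_T(W)$ is computed from the correct index.

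\emph{Step 2 (channel layer).} Since $R<n\Cch$, (A3) provides a capacity-approaching code over the $nT$ channel uses. It carries $W$ with block error probability $\epsilon_{\mathrm{ch},T}\to0$. The encoder is $S^{nT}=\mathrm{Enc}_{\mathrm{ch}}(f_T(Y^T))$.

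\emph{Step 3 (compute layer).} Since $R<m\Cgate$, Proposition~\ref{prop:compute_code_bsc}, available through (A3), gives an internal redundancy code for the $L=mT$ uses of $W_{Z|U}$ at rate $R$ with error probability $\epsilon_{\mathrm{comp},T}\to0$. The receiver runs the following pipeline. It channel-decodes $R^{nT}$ to $\tilde W$, using protected control support under (A2). It encodes $\tilde W$ into $U^{L}$ and commits these bits to the vulnerable primitives, obtaining $Z^{L}$. It decodes $Z^{L}$ to $\check W$ and outputs $\hat X^T=\psi_T(\check W)$. This respects Definition~\ref{def:gpu_bitflip_model}, because $\hat X^T$ is a function of $Z^L$ and the decoding/lookup maps are reliable by (A2). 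It is therefore task-direct in the sense of \eqref{eq:task_direct_arch}. The per-instance budget is exactly $m$, so (S3) holds.

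\emph{Step 4 (distortion accounting).} Let $\mathcal E=\{\check W\ne W\}$. By the union bound, $\Pr(\mathcal E)\le\epsilon_{\mathrm{ch},T}+\epsilon_{\mathrm{comp},T}\to0$. Split
\[
\frac1T\sum_t\mathbb{E}[d(X_t,\hat X_t)]\le D'+o(1)+\frac1T\sum_t\mathbb{E}[d(X_t,\hat X_t)\mathbf 1_{\mathcal E}].
\]

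\textbf{Main obstacle.} The error term must vanish, and this is the step I expect to be hardest. If $d\le d_{\max}$, the term is at most $d_{\max}\Pr(\mathcal E)\to0$. In general I will modify the decoder to output $\hat x_0^T$ whenever an error is detected. I will also handle undetected errors the standard way: randomize the index-to-codeword assignment, which is protected by (A2). Then, conditioned on $\mathcal E$, the output is a codeword nearly independent of $X^T$. Its expected distortion can be bounded using uniform integrability of $d(X,\cdot)$ over the codebook, the same condition used in the lossy JSCC converse/achievability arguments of~\cite{BergerBook71,CoverThomas}. With this in place, the limsup of the average distortion is at most $D'<D$. Hence the target $\le D$ holds for all sufficiently large $T$, which completes the argument.
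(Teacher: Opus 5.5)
Your skeleton is exactly the paper's proof: a rate $R$ strictly between $R_{X|Y}(D)$ and $\Rsup$, a remote source code, a capacity-approaching channel code decoded on protected support, an internal code over the $mT$ primitives, a union bound, and the $d_{\max}\,\mathbb{P}(\mathcal E)$ estimate. For bounded distortion it is complete.

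\textbf{The gap is in your unbounded-distortion step.} Take the relabeling in its best form, a shared uniformly random permutation applied before channel encoding. It does make the error event independent of $(X^T,Y^T)$, and it makes the erroneous index uniform on $\mathcal W_T\setminus\{W\}$. The error term is then exactly $\mathbb{P}(\mathcal E)$ times $\mathbb{E}\bigl[\frac{1}{|\mathcal W_T|-1}\sum_{w'\neq W}\frac1T\sum_t d(X_t,\hat x_t(w'))\bigr]$, where $\hat x^T(w')\triangleq\psi_T(w')$. Nothing you invoke controls this codebook-averaged cross distortion. Proposition~\ref{prop:operational_indirect_rd} hands you a black-box code whose unused or rarely used codewords may be arbitrary (for squared error, of arbitrarily large norm). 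So this average can grow faster than $1/\mathbb{P}(\mathcal E)$.

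``Uniform integrability of $d(X,\cdot)$ over the codebook'' is therefore an extra hypothesis, not a consequence. To justify it you would have to reopen the random-coding proof: draw codewords i.i.d.\ from the optimal output law, assume $\mathbb{E}[d(X,\hat X')]<\infty$ for an independent $\hat X'$, and expurgate.

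The paper avoids this for squared error with $\mathbb{E}[X^2]<\infty$ via Lemma~\ref{lem:clip_closes_mse_achievability}. Clipping every output coordinate to $[-A_T,A_T]$ costs at most $8\,\mathbb{E}[X^2\mathbf 1\{|X|>A_T\}]+4A_T^2\,\mathbb{P}(\mathsf E_T)$ for \emph{any} code. Choosing $A_T\to\infty$ with $A_T^2\mathbb{P}(\mathsf E_T)\to0$ finishes the argument. Your route, once completed, would cover general distortions with a reference letter; the paper's is narrower but needs no codebook structure.

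A smaller point: if your fallback to $\hat x_0^T$ is triggered by a channel-decoder failure flag, that flag reaches $\hat X^T$ without passing through $Z^L$. That contradicts your own claim of compliance with Definition~\ref{def:gpu_bitflip_model}. Carry the flag through the internal code as one extra message index.

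\textbf{Your slack step is unnecessary and not valid in general.} The theorem only asks for distortion at most $D$ in the Shannon asymptotic regime. The paper gets this directly by letting $T\to\infty$ in $D+\xi_T+d_{\max}(\epsilon_{\mathrm{ch}}^{(T)}+\epsilon_{\mathrm{comp}}^{(T)})$. Your stronger ``at most $D$ for all large $T$'' needs some $D'<D$ with $R_{X|Y}(D')<R$. No such $D'$ exists when $D$ is the left endpoint $D_{\min}$ of the region where $R_{X|Y}$ is finite. For example, Hamming distortion with $X=Y$ uniform binary has $R_{X|Y}(0)=1$ and $R_{X|Y}(D')=\infty$ for $D'<0$. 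There, a positive decoding-error probability generally makes the finite-$T$ distortion strictly positive. Take $D'=D$ and state the asymptotic conclusion.
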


\paragraph{Mean-squared error (MSE)/unbounded-distortion note} Under squared-error distortion and finite second moment, Lemma~\ref{lem:clip_closes_mse_achievability} closes the unbounded-distortion case; see Appendix~\ref{app:clipping_achievability_lemma}.

\begin{proof}
Fix any rate $R$ such that
$R_{X|Y}(D) < R < \Rsup$.
For each task-blocklength $T$, let $M_T\triangleq 2^{\lfloor TR\rfloor}$.

\emph{(Remote source coding)}
By Proposition~\ref{prop:operational_indirect_rd},
there exists a sequence of fixed-length indirect rate--distortion codes
\[(f_T,\psi_T):\ \mathcal{Y}^T\to \{1,\dots,M_T\}\to \hat{\mathcal X}^T\]
whose induced reconstruction $\hat X_{\mathrm{RD}}^T\triangleq \psi_T(f_T(Y^T))$ satisfies
\begin{equation}
  \frac{1}{T}\sum_{t=1}^T\mathbb{E}\big[d(X_t,\hat X_{\mathrm{RD},t})\big]
  \le D+\xi_T,
  \qquad \xi_T\to 0\ \text{as }T\to\infty.
  \label{eq:achievability_remote_code}
\end{equation}
Let $W\triangleq f_T(Y^T)\in\{1,\dots,M_T\}$ denote the remote description index.

\emph{(Physical channel coding)}
Since $\log_2 M_T\le TR$ and $R<n\Cch$, Shannon's channel coding theorem implies that there exists a sequence of length-$nT$ channel codes
with a (reliable) decoder (e.g., implemented on a protected control plane as in Assumption~\ref{assump:achievability_benchmark}-(A2)) producing an estimate $\tilde W$ from $R^{nT}$ such that
\begin{equation}
  \epsilon_{\mathrm{ch}}^{(T)}\triangleq \mathbb{P}(\tilde W\neq W)\ \longrightarrow\ 0\qquad\text{as }T\to\infty.
  \label{eq:achievability_channel_error}
\end{equation}

\emph{(Internal compute-side redundancy)}
Since $\log_2 M_T\le TR$ and $R<m\Cgate$, Proposition~\ref{prop:compute_code_bsc} guarantees the existence of a sequence of internal codes
(encoder $\alpha_T$ and decoder $\beta_T$) over the $L=mT$ primitive uses such that, if the receiver materializes
$U^{L}=\alpha_T(\tilde W)$ through the primitive to obtain $Z^{L}$ and forms $\hat W\triangleq \beta_T(Z^{L})$, then
\begin{equation}
  \epsilon_{\mathrm{comp}}^{(T)}\triangleq \mathbb{P}(\hat W\neq \tilde W)\ \longrightarrow\ 0\qquad\text{as }T\to\infty.
  \label{eq:achievability_compute_error}
\end{equation}
(See Appendix~\ref{app:internal_redundancy_instantiation} for a concrete correcting-network instantiation.)

\emph{(Overall task-direct receiver)}
The receiver algorithm is: decode $\tilde W$ from $R^{nT}$; materialize $U^L=\alpha_T(\tilde W)$ through the $L=mT$ primitive uses; decode $\hat W=\beta_T(Z^L)$;
and form the preliminary reconstruction
$\tilde X^T\triangleq \psi_T(\hat W)$.
We set $\hat X^T\triangleq \tilde X^T$ as the final receiver output.
This is a valid task-direct scheme of the form \eqref{eq:task_direct_arch}, since the final output depends on $R^{nT}$ only through the chosen materializations and the resulting $Z^L$.

Let $\mathsf{E}_T\triangleq\{\hat W\neq W\}$.
By construction, $\mathsf{E}_T\subseteq \{\tilde W\neq W\}\cup\{\hat W\neq \tilde W\}$, hence by a union bound,
\begin{equation}
  \mathbb{P}(\mathsf{E}_T)\ \le\ \epsilon_{\mathrm{ch}}^{(T)}+\epsilon_{\mathrm{comp}}^{(T)}.
  \label{eq:achievability_union_error}
\end{equation}
On the event $\mathsf{E}_T^c$, we have $\hat W=W$ and therefore $\tilde X^T=\psi_T(W)=\hat X_{\mathrm{RD}}^T$.

If the distortion measure is bounded, $0\le d(X,\hat X)\le d_{\max}$ almost surely, then combining \eqref{eq:achievability_remote_code} and \eqref{eq:achievability_union_error} yields
\begin{align}
  \frac{1}{T}\sum_{t=1}^T\mathbb{E}[d(X_t,\hat X_t)]
  &\le (D+\xi_T)\,\mathbb{P}(\mathsf{E}_T^c) + d_{\max}\,\mathbb{P}(\mathsf{E}_T)\\
  &\le D+\xi_T + d_{\max}\big(\epsilon_{\mathrm{ch}}^{(T)}+\epsilon_{\mathrm{comp}}^{(T)}\big).
\end{align}
Letting $T\to\infty$ gives the claimed achievability of $D$.

For squared-error distortion with $\mathbb{E}[X^2]<\infty$, apply Lemma~\ref{lem:clip_closes_mse_achievability} with error event $\mathsf{E}_T$ and \eqref{eq:achievability_union_error}; this closes mean-MSE achievability of $D$.
\end{proof}

\section{Architecture-Induced First-Order Compute Cuts Under Committed Interfaces}
\label{sec:interface_tax_section}

\indent We first develop the two-stage and serial hard-separation specializations before stating the general receiver-internal compute min-cut converse in Subsection~\ref{subsec:dag_cutset}; Proposition~\ref{prop:no_tax_soft_interface} then records the no-tax soft-interface relaxation.

\subsection{Two-Stage Hard-Separation Specialization: Interface Tax}
Recall the hard-separation constraint in Definition~\ref{def:hard_separation}.
In hard-separation, channel decoding is itself computed on the noisy substrate and its output is forced through an intermediate \emph{committed digital interface} before task inference, creating the two serial mutual-information cuts formalized below.
For a $T$-block scheme, this induces the Markov chain
\begin{equation}
  Y^T \to S^{nT} \to R^{nT} \to \hat B^T \to \hat X^T,
  \label{eq:block_markov_hard_sep}
\end{equation}
where $\hat B^T$ is the decoded (digital) representation and $\hat X^T$ is the task output.

We model the decoder stage as using at most $m_{\mathrm{dec}}T$ unreliable primitives to map $R^{nT}\mapsto \hat B^T$, and the downstream task stage as using at most $m_{\mathrm{task}}T$ unreliable primitives to map $\hat B^T\mapsto \hat X^T$, with
$m_{\mathrm{dec}}+m_{\mathrm{task}}\le m$.

\begin{lemma}[Information bound for each stage]
\label{lem:two_stage_bounds}
Under the $T$-block hard-separation architecture \eqref{eq:block_markov_hard_sep},
\begin{equation}
  I(R^{nT};\hat B^T)\le m_{\mathrm{dec}}T\,\Cgate,
  \qquad
  I(\hat B^T;\hat X^T)\le m_{\mathrm{task}}T\,\Cgate.
  \label{eq:stage_bounds}
\end{equation}
\end{lemma}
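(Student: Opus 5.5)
The plan is to apply Lemma~\ref{lem:compute_info_bound} twice, once to each stage of the hard-separation pipeline. Each time we identify the module input $U$, the primitive-use sequence, and the module output $\hat U$. The only real work is checking that each stage matches the hypotheses of that lemma under Definition~\ref{def:hard_separation} and the committed/no-bypass model of Definition~\ref{def:gpu_bitflip_model}.

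\emph{Decoding stage.} Take $U=R^{nT}$ and the $m_{\mathrm{dec}}T$ vulnerable uses of stage~(i). Call their intended and retrieved values $U_{\mathrm{dec}}^{m_{\mathrm{dec}}T}$ and $Z_{\mathrm{dec}}^{m_{\mathrm{dec}}T}$, and set the output to $\hat U=\hat B^T$.
\begin{itemize}
\item By Definition~\ref{def:hard_separation}(i), $\hat B^T$ is computed on the substrate from $R^{nT}$. Under commitment it must be a measurable function of $Z_{\mathrm{dec}}^{m_{\mathrm{dec}}T}$.
\item Each intended input is chosen adaptively from $(R^{nT},Z_{\mathrm{dec}}^{i-1})$.
\item The BSC$(\varepsilon)$ noise $N_i$ is i.i.d. and independent of everything else. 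So for every history, the conditional channel $U_i\mapsto Z_i$ is a BSC$(\varepsilon)$ with capacity $\Cgate$, and \eqref{eq:cond_capacity_bound} holds.
\end{itemize}
Lemma~\ref{lem:compute_info_bound} then gives $I(R^{nT};\hat B^T)\le m_{\mathrm{dec}}T\,\Cgate$.

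\emph{Task stage.} Take $U=\hat B^T$ and the $m_{\mathrm{task}}T$ task-stage uses, with output $\hat U=\hat X^T$. By Definition~\ref{def:hard_separation}(ii), $\hat X^T$ is computed from $\hat B^T$ and not from $R^{nT}$. So the task-stage intended inputs depend only on $(\hat B^T, Z_{\mathrm{task}}^{i-1})$, plus private randomness, and $\hat X^T$ is a function of $Z_{\mathrm{task}}^{m_{\mathrm{task}}T}$. The same argument then gives $I(\hat B^T;\hat X^T)\le m_{\mathrm{task}}T\,\Cgate$.

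\emph{Main obstacle: the conditional-capacity hypothesis in the task stage.} The subtle point is whether the task-stage noise is independent of its input $\hat B^T$. $\hat B^T$ was generated using the earlier decoding-stage noise, so we need the task-stage flips to be independent of that. This holds because the $N_i$ are i.i.d. across all $L=mT$ uses of the block. Hence, conditioned on any task-stage history, $Z_i$ depends on $(\hat B^T, Z_{\mathrm{task}}^{i-1})$ only through $U_i$, which is exactly the Markov property used in the proof of Lemma~\ref{lem:compute_info_bound}.

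Two further details:
\begin{itemize}
\item Private randomness of the receiver should be independent of the channel and of the substrate noise. It can be absorbed into the ``possibly randomized'' selection rule allowed by the lemma.
\item If the stages were interleaved, one would still partition uses by the stage that consumes them. The Markov chain \eqref{eq:block_markov_hard_sep} guarantees that the task-stage uses see $R^{nT}$ only through $\hat B^T$.
\end{itemize}
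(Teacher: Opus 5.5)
Your proposal is correct and follows the paper's proof: the paper simply applies Lemma~\ref{lem:compute_info_bound} to each stage's bit-materialization process over the full block. The input--output pairs are $R^{nT}\to\hat B^T$ for the decoding stage and $\hat B^T\to\hat X^T$ for the task stage. Your extra checks are exactly the hypotheses the paper leaves implicit: that the task-stage flips are independent of $\hat B^T$ and of earlier noise, and that the receiver's private randomness is independent of the substrate noise.
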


\begin{proof}
Each inequality follows by applying Lemma~\ref{lem:compute_info_bound} to the corresponding stage-specific bit-materialization process over the full block of length $T$.
\end{proof}

\begin{theorem}[Two-stage hard-separation converse with interface-tax specialization]
\label{thm:interface_tax}
Under the $T$-block hard-separation architecture \eqref{eq:block_markov_hard_sep}, any scheme satisfies
\begin{equation}
  \frac{1}{T}I(Y^T;\hat X^T)\ \le\ \min\big\{\, n\Cch,\ m_{\mathrm{dec}}\Cgate,\ m_{\mathrm{task}}\Cgate \big\}.
  \label{eq:interface_tax_bound}
\end{equation}
Consequently,
\begin{equation}
  \frac{1}{T}I(Y^T;\hat X^T)\ \le\ \min\Big\{\, n\Cch,\ \frac{m}{2}\Cgate \Big\}
  \label{eq:interface_tax_m_over_2}
\end{equation}
under the optimal budget split $m_{\mathrm{dec}}=m_{\mathrm{task}}=\frac{m}{2}$.
\end{theorem}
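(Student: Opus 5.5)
The plan is to bound $\frac{1}{T}I(Y^T;\hat X^T)$ separately by each of the three terms in \eqref{eq:interface_tax_bound}. Each bound comes from data processing along the Markov chain \eqref{eq:block_markov_hard_sep}, combined with the appropriate stage-wise capacity bound. The minimum then follows because all three hold at once. The second display \eqref{eq:interface_tax_m_over_2} will follow by optimizing the budget split.

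\emph{Communication cut.} This step is exactly as in the proof of Theorem~\ref{thm:supply_demand_converse}. From $Y^T\to S^{nT}\to R^{nT}\to\hat X^T$ (a sub-chain of \eqref{eq:block_markov_hard_sep}), data processing gives $I(Y^T;\hat X^T)\le I(S^{nT};R^{nT})$. The standard memoryless-channel converse under (S2) and the input constraint then gives $I(S^{nT};R^{nT})\le nT\Cch$.

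\emph{Decoder cut.} Along $Y^T\to R^{nT}\to\hat B^T\to\hat X^T$, data processing yields $I(Y^T;\hat X^T)\le I(R^{nT};\hat B^T)$. This is at most $m_{\mathrm{dec}}T\Cgate$ by the first inequality of Lemma~\ref{lem:two_stage_bounds}.

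\emph{Task cut.} The same chain gives $I(Y^T;\hat X^T)\le I(\hat B^T;\hat X^T)$, which is at most $m_{\mathrm{task}}T\Cgate$ by the second inequality of Lemma~\ref{lem:two_stage_bounds}. The one point needing care is that Lemma~\ref{lem:compute_info_bound} must be applicable with input $U=\hat B^T$. This is where Definition~\ref{def:hard_separation}(ii) is essential. Because the task stage computes from $\hat B^T$ and not from $R^{nT}$, its materialization inputs are functions of $(\hat B^T,Z^{i-1})$ plus private randomness, and $\hat X^T$ is a function of its own $Z$'s. The conditional Markov structure $\hat B^T\to U_i\to Z_i$ given $Z^{i-1}$ therefore holds.

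Dividing each cut by $T$ and taking the minimum gives \eqref{eq:interface_tax_bound}. For \eqref{eq:interface_tax_m_over_2}, I would note that $\min\{m_{\mathrm{dec}},m_{\mathrm{task}}\}\le \frac{m_{\mathrm{dec}}+m_{\mathrm{task}}}{2}\le\frac m2$ for every admissible split. Hence the bound is at most $\min\{n\Cch,\frac m2\Cgate\}$ uniformly, with equality of the compute terms exactly at $m_{\mathrm{dec}}=m_{\mathrm{task}}=\frac m2$; this is the sense in which that split is "optimal."

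I expect the main obstacle to be the subtlety in the task cut. Lemma~\ref{lem:two_stage_bounds} is only valid if the task stage has no hidden reliable path from $R^{nT}$ and no shared adaptive state with the decoder stage. I would justify this explicitly via the hard-separation Markov chain, noting that any shared randomness independent of $R^{nT}$ can be absorbed by conditioning and does not break the bound.
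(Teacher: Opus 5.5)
Your proposal is correct and matches the paper's proof: the communication cut, plus the two data-processing cuts $I(Y^T;\hat X^T)\le I(R^{nT};\hat B^T)$ and $I(Y^T;\hat X^T)\le I(\hat B^T;\hat X^T)$ bounded via Lemma~\ref{lem:two_stage_bounds}, followed by maximizing $\min\{m_{\mathrm{dec}},m_{\mathrm{task}}\}$ subject to $m_{\mathrm{dec}}+m_{\mathrm{task}}\le m$. Your extra remarks only make explicit what the paper leaves implicit: that the task stage fits Lemma~\ref{lem:compute_info_bound} because of Definition~\ref{def:hard_separation}(ii), and that $\min\{m_{\mathrm{dec}},m_{\mathrm{task}}\}\le m/2$ for every admissible split.
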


\begin{proof}
The communication cut gives $I(Y^T;\hat X^T)\le I(S^{nT};R^{nT})\le nT\Cch$.
Under hard-separation \eqref{eq:block_markov_hard_sep}, we have the Markov chain
$Y^T \to R^{nT} \to \hat B^T \to \hat X^T$,
so $I(Y^T;\hat X^T)\le I(Y^T;\hat B^T)\le I(R^{nT};\hat B^T)$ and also $I(Y^T;\hat X^T)\le I(\hat B^T;\hat X^T)$ by data processing.
Applying Lemma~\ref{lem:two_stage_bounds} yields \eqref{eq:interface_tax_bound}.
Maximizing the minimum of $(m_{\mathrm{dec}},m_{\mathrm{task}})$ subject to $m_{\mathrm{dec}}+m_{\mathrm{task}}\le m$
gives $m/2$, yielding \eqref{eq:interface_tax_m_over_2}.
\end{proof}

\begin{remark}[Committed-interface tax versus classical receiver models]
\label{rem:interface_tax_not_sep}
\label{rem:classical_noisy_receiver_relation}
Theorem~\ref{thm:interface_tax} is not a failure of Shannon source--channel separation: with reliable receiver processing, an intermediate digital description can be made asymptotically error-free and does not create an additional first-order bottleneck~\cite{Shannon48,CoverThomas}.
The additional penalty studied here arises only when task-relevant information must cross a receiver-internal committed vulnerable interface under an explicit compute budget; classical appended noisy-receiver/additional-noise models do not create such an internal cut~\cite{DobrushinTsybakov62,Sakrison68,WolfZiv70}.
Letting $m\to\infty$ only makes the compute-side cut nonbinding; it does not by itself recover the classical noisy-receiver/additional-noise closure.
In the special case with no communication bottleneck and exact-function recovery, the compute-supply term $m\Cgate$ may also be read as a converse-side ceiling analogous to Simon's computation capacity~\cite{Simon11}, but our setting keeps an explicit communication cut and indirect-distortion demand.
Equation~\eqref{eq:interface_tax_m_over_2} is therefore a structural hard-separation effect.
\end{remark}

\begin{theorem}[Two-stage achievability under the protected-support closure]
\label{thm:interface_tax_achievability}
Assume the conditions of Theorem~\ref{thm:achievability_general} and fix any split $m_{\mathrm{dec}},m_{\mathrm{task}}\ge 0$ with $m_{\mathrm{dec}}+m_{\mathrm{task}}\le m$. Then, for any distortion level $D$ such that
\begin{equation}
  R_{X|Y}(D)\ <\ \min\{n\Cch,\ m_{\mathrm{dec}}\Cgate,\ m_{\mathrm{task}}\Cgate\},
  \label{eq:interface_tax_achievability_condition}
\end{equation}
there exists a two-stage construction under Assumption~\ref{assump:achievability_benchmark} whose average distortion satisfies $\frac{1}{T}\sum_{t=1}^T\mathbb{E}[d(X_t,\hat X_t)]\le D$ in the Shannon asymptotic regime ($T\to\infty$).
\end{theorem}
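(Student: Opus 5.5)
The plan is to mirror the proof of Theorem~\ref{thm:achievability_general}, but insert a second, serial internal-coding layer so that the task-relevant description crosses two committed vulnerable cuts: first the decoding stage with $m_{\mathrm{dec}}T$ primitive uses, then the task stage with $m_{\mathrm{task}}T$ uses. Fix a rate $R$ with
\[
R_{X|Y}(D)<R<\min\{n\Cch,\ m_{\mathrm{dec}}\Cgate,\ m_{\mathrm{task}}\Cgate\},
\]
and set $M_T=2^{\lfloor TR\rfloor}$. As before, Proposition~\ref{prop:operational_indirect_rd} supplies remote source codes $(f_T,\psi_T)$ with $W=f_T(Y^T)$ and distortion at most $D+\xi_T$. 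Since $R<n\Cch$, (A3) gives a channel code whose protected decoder under (A2) outputs $\tilde W$ with $\mathbb{P}(\tilde W\neq W)=\epsilon_{\mathrm{ch}}^{(T)}\to0$.

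Next comes the decoding stage. Because $R<m_{\mathrm{dec}}\Cgate$, Proposition~\ref{prop:compute_code_bsc} gives an internal code $(\alpha^{\mathrm{dec}}_T,\beta^{\mathrm{dec}}_T)$ on $m_{\mathrm{dec}}T$ primitive uses. The decoding stage materializes $\alpha^{\mathrm{dec}}_T(\tilde W)$, and the committed interface is defined as $\hat B^T\triangleq\beta^{\mathrm{dec}}_T(Z^{m_{\mathrm{dec}}T})\in\{1,\dots,M_T\}$, written in binary as $\Lif=\lceil R\rceil$-ish bits per instance. This gives $\mathbb{P}(\hat B^T\neq\tilde W)=\epsilon_{\mathrm{dec}}^{(T)}\to0$.

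The task stage then sees only $\hat B^T$. Since $R<m_{\mathrm{task}}\Cgate$, a second internal code $(\alpha^{\mathrm{task}}_T,\beta^{\mathrm{task}}_T)$ on $m_{\mathrm{task}}T$ uses yields $\hat W=\beta^{\mathrm{task}}_T(Z'^{\,m_{\mathrm{task}}T})$ with $\mathbb{P}(\hat W\neq\hat B^T)=\epsilon_{\mathrm{task}}^{(T)}\to0$. The output is $\hat X^T=\psi_T(\hat W)$. Both internal codes are taken from (A3) with independent primitive noise per (A1), so each code's error guarantee holds for any fixed input message and hence also conditionally on the upstream randomness.

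I would then verify that the construction is admissible under Definition~\ref{def:hard_separation}. The chain $Y^T\to S^{nT}\to R^{nT}\to\hat B^T\to\hat X^T$ holds because the task stage is a function of $\hat B^T$ and of primitive noise that is independent of everything upstream. The budget is respected because $m_{\mathrm{dec}}+m_{\mathrm{task}}\le m$. The reliable decoding and recoding machinery is not charged to $m$ by (A2).

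The distortion analysis uses the error event $\mathsf{E}_T=\{\hat W\neq W\}$. It is contained in the union of the three stage-error events, so a union bound gives $\mathbb{P}(\mathsf{E}_T)\le\epsilon_{\mathrm{ch}}^{(T)}+\epsilon_{\mathrm{dec}}^{(T)}+\epsilon_{\mathrm{task}}^{(T)}\to0$. For bounded distortion this yields average distortion at most $D+\xi_T+d_{\max}\mathbb{P}(\mathsf{E}_T)\to D$. For MSE, Lemma~\ref{lem:clip_closes_mse_achievability} applies exactly as in Theorem~\ref{thm:achievability_general}.

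The main obstacle is the middle stage. I need to make sure the interface $\hat B^T$ is a genuine committed variable produced on the vulnerable substrate, so that the two cuts are truly serial and the construction is not simply task-direct relabeled. I also need the per-stage internal codes to remain valid when their input is itself a random, possibly erroneous message. I would handle this by invoking the maximal- or average-error form of Proposition~\ref{prop:compute_code_bsc} conditionally on the input message. Degenerate splits with $m_{\mathrm{dec}}=0$ or $m_{\mathrm{task}}=0$ need no separate treatment, since the hypothesis then cannot hold for any $D$ with $R_{X|Y}(D)\ge0$.
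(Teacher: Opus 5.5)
Your proposal is correct and follows essentially the same route as the paper's proof. Both use reliable channel decoding of the common remote-description index, then two serial internal codes from Proposition~\ref{prop:compute_code_bsc} over the $m_{\mathrm{dec}}T$ and $m_{\mathrm{task}}T$ budgets yielding the committed interface $\hat B^T$ and the final index, a union bound over the three error events, and the bounded-distortion/clipping closure of Theorem~\ref{thm:achievability_general}. Your additional checks (the hard-separation Markov chain, the zero-budget splits, and conditioning on the stage input) are details the paper leaves implicit. There, it is specifically the maximal-error form of the internal codes that you need, since $\tilde W$ and $\hat B^T$ need not be uniformly distributed; the average-error form would not suffice.
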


\begin{proof}
Use the achievability template of Theorem~\ref{thm:achievability_general}. Choose
\[
  R_{X|Y}(D) < R < \min\{n\Cch,\ m_{\mathrm{dec}}\Cgate,\ m_{\mathrm{task}}\Cgate\},
\]
and let $W$ be the common remote-description index, reliably decoded from the physical channel as $\tilde W$.
Since $R<m_{\mathrm{dec}}\Cgate$ and $R<m_{\mathrm{task}}\Cgate$, Proposition~\ref{prop:compute_code_bsc} yields two internal codes over the stage-$1$ and stage-$2$ primitive budgets, producing a committed interface $\hat B^T\equiv \hat W_{\mathrm{dec}}$ and final index $\hat W_{\mathrm{task}}$ with vanishing stage errors.
A union bound gives $\mathbb{P}(\hat W_{\mathrm{task}}\neq W)\to0$, so on the success event $\hat X^T=\psi_T(\hat W_{\mathrm{task}})=\hat X_{\mathrm{RD}}^T$. The bounded-distortion case then follows exactly as in Theorem~\ref{thm:achievability_general}, and the squared-error case follows from Lemma~\ref{lem:clip_closes_mse_achievability}.
\end{proof}

\subsection{Serial $K$-Stage Extensions}\label{subsec:k_stage_tax}
We now extend the two-stage hard-separation converse to serial $K$-stage pipelines, where each additional committed interface creates another compute cut.

Practical receivers often contain \emph{multiple} serial digital interfaces beyond the basic ``decode--then--task'' split.
Examples include pipelines of the form
\emph{decode $\rightarrow$ decrypt $\rightarrow$ decompress $\rightarrow$ infer},
or multi-module baseband stacks in which intermediate representations are stored and reloaded across the memory hierarchy.
On an unreliable substrate, each such serial interface can become a mandatory information bottleneck.

We capture this phenomenon by considering a $K$-stage serial organization with intermediate interfaces $U_1,\dots,U_{K-1}$.
For a $T$-block scheme, let $U_0\triangleq R^{nT}$ denote the channel output block and $U_K\triangleq \hat X^T$ denote the final task-output block.
Stage $k\in\{1,\dots,K\}$ maps $U_{k-1}$ to $U_k$ using at most $m_kT$ unreliable primitives (bit materializations),
with a total budget $\sum_{k=1}^K m_k\le m$.

\begin{theorem}[Serial $K$-stage compute-cut converse]
\label{thm:k_stage_tax}
Consider any receiver architecture that induces the Markov chain
\begin{equation}
  Y^T \to S^{nT} \to U_0(=R^{nT}) \to U_1 \to \cdots \to U_{K-1} \to U_K(=\hat X^T),
\end{equation}
and suppose each stage $U_{k-1}\to U_k$ is implemented on the noisy-computation substrate using at most $m_kT$ unreliable primitives.
Then
\begin{equation}
  \frac{1}{T}I(Y^T;\hat X^T)\ \le\ \min\Big\{\, n\Cch,\ m_1\Cgate,\ m_2\Cgate,\ \dots,\ m_K\Cgate \Big\}.
  \label{eq:k_stage_tax_bound}
\end{equation}
In particular, under a fixed total compute budget $\sum_{k=1}^K m_k\le m$, the right-hand side is maximized by equal splitting
$m_k=m/K$ and yields the simplified bound
\begin{equation}
  \frac{1}{T}I(Y^T;\hat X^T)\ \le\ \min\Big\{\, n\Cch,\ \frac{m}{K}\Cgate \Big\}.
  \label{eq:k_stage_tax_m_over_K}
\end{equation}
\end{theorem}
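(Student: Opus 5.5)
The plan mirrors the proof of Theorem~\ref{thm:interface_tax}, replacing the single committed interface with the chain $U_1,\dots,U_{K-1}$. First I bound the communication cut exactly as in Theorem~\ref{thm:supply_demand_converse}. Since $Y^T\to S^{nT}\to R^{nT}\to\hat X^T$ holds as a sub-chain of the assumed Markov chain, data processing and the memoryless-channel converse give $I(Y^T;\hat X^T)\le I(S^{nT};R^{nT})\le nT\Cch$.

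Next I establish the $K$ compute cuts. Fix $k\in\{1,\dots,K\}$. The assumed chain gives $Y^T\to U_{k-1}\to U_k\to \hat X^T$, so two applications of data processing yield
\[
I(Y^T;\hat X^T)\le I(Y^T;U_k)\le I(U_{k-1};U_k).
\]
For $k=K$ the middle step is trivial, and for $k=1$ the input is $U_0=R^{nT}$. Stage $k$ is an interactive module that observes $U_{k-1}$ and makes at most $m_kT$ vulnerable materializations, and its output $U_k$ is a function of the retrieved bits. Lemma~\ref{lem:compute_info_bound}, applied with $U=U_{k-1}$ exactly as in Lemma~\ref{lem:two_stage_bounds}, then gives $I(U_{k-1};U_k)\le m_kT\Cgate$. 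Dividing by $T$ and taking the minimum over all $K+1$ cuts yields \eqref{eq:k_stage_tax_bound}.

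The only delicate point, and the step I expect to need care, is justifying the use of Lemma~\ref{lem:compute_info_bound} stage by stage. Its hypothesis requires that stage $k$'s primitive inputs depend only on $(U_{k-1},Z_{(k)}^{i-1})$ and possibly private randomness, and that $U_k=g_k(Z_{(k)}^{m_kT})$. This is exactly the committed-interface reading of the hypothesis that stage $k$ maps $U_{k-1}$ to $U_k$ on the substrate. Private randomness independent of everything else can be absorbed into the randomized choice of $U_i$, and conditioning on the history keeps the per-use capacity bound $\Cgate$ because the BSC noise is i.i.d. If a stage's randomness were correlated with earlier stages, the Markov chain assumption already excludes that, so I will state explicitly that stage $k$ depends only on its own input and its own primitives' noise.

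Finally, for \eqref{eq:k_stage_tax_m_over_K}, I maximize $\min_k m_k$ subject to $\sum_k m_k\le m$ and $m_k\ge0$. Since $K\min_k m_k\le\sum_k m_k\le m$, we have $\min_k m_k\le m/K$, with equality when $m_k=m/K$. Hence $\min\{n\Cch,m_1\Cgate,\dots,m_K\Cgate\}\le\min\{n\Cch,\tfrac{m}{K}\Cgate\}$ for every feasible split, which proves the simplified bound.
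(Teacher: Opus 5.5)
Your proposal is correct and follows essentially the same route as the paper's proof. Both use the communication cut, the stagewise chain $I(Y^T;\hat X^T)\le I(Y^T;U_k)\le I(U_{k-1};U_k)\le m_kT\Cgate$ via Lemma~\ref{lem:compute_info_bound}, and the bound $\min_k m_k\le m/K$ under $\sum_k m_k\le m$; your added remarks on fresh primitive noise and private randomness just make explicit the stagewise applicability of the lemma that the paper takes for granted.
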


\begin{proof}
The communication cut gives $I(Y^T;\hat X^T)\le I(S^{nT};R^{nT})\le nT\Cch$.
Moreover, for each $k=1,\dots,K$, the Markov chains $Y^T\to U_k\to \hat X^T$ and $Y^T\to U_{k-1}\to U_k$ imply, by data processing, that
\begin{equation}
  I(Y^T;\hat X^T)\ \le\ I(Y^T;U_k)\ \le\ I(U_{k-1};U_k).
\end{equation}
Applying Lemma~\ref{lem:compute_info_bound}
to each stage over the full block (with input $U_{k-1}$ and output $U_k$) yields
$I(U_{k-1};U_k)\le m_kT\Cgate$.
Taking the minimum over $k=1,\dots,K$ yields \eqref{eq:k_stage_tax_bound}.
Finally, maximizing $\min_k m_k$ subject to $\sum_k m_k\le m$ is achieved by equal splitting, yielding \eqref{eq:k_stage_tax_m_over_K}.
\end{proof}

\begin{corollary}[Serial $K$-stage achievability under the protected-support closure]
\label{cor:k_stage_tax_achievability}
Assume the conditions of Theorem~\ref{thm:achievability_general}. Fix stage budgets $m_1,\dots,m_K\ge 0$ with
\[
  \sum_{k=1}^K m_k \le m.
\]
Then, for any distortion level $D$ such that
\begin{equation}
  R_{X|Y}(D)\ <\ \min\{n\Cch,\ m_1\Cgate,\ \dots,\ m_K\Cgate\},
  \label{eq:k_stage_tax_achievability_condition}
\end{equation}
there exists a serial $K$-stage construction under Assumption~\ref{assump:achievability_benchmark} whose average distortion satisfies $\frac{1}{T}\sum_{t=1}^T\mathbb{E}[d(X_t,\hat X_t)]\le D$ in the Shannon asymptotic regime ($T\to\infty$). In particular, under equal splitting $m_k=m/K$, any $D$ satisfying
\[
  R_{X|Y}(D) < \min\{n\Cch,\ (m/K)\Cgate\}
\]
is achievable under Assumption~\ref{assump:achievability_benchmark}.
\end{corollary}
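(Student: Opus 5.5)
The plan is to follow the template of Theorem~\ref{thm:interface_tax_achievability}, extended by induction over the stages. First, fix a rate $R$ with $R_{X|Y}(D) < R < \min\{n\Cch, m_1\Cgate,\dots,m_K\Cgate\}$. Set $M_T = 2^{\lfloor TR\rfloor}$.

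By Proposition~\ref{prop:operational_indirect_rd}, take a fixed-length indirect rate--distortion code $(f_T,\psi_T)$ of rate $R$. Its reconstruction $\hat X^T_{\mathrm{RD}}=\psi_T(W)$, with $W=f_T(Y^T)$, has average distortion at most $D+\xi_T$, where $\xi_T\to 0$. Transmit $W$ over the physical channel with a capacity-approaching code of rate $R<n\Cch$, using (A3). The protected decoder of (A2) then produces $\tilde W$ with $\mathbb{P}(\tilde W\ne W)=\epsilon_{\mathrm{ch}}^{(T)}\to 0$.

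The serial pipeline is built as follows. Set $\hat W_0\triangleq\tilde W$. For each $k=1,\dots,K$, stage $k$ applies the internal encoder $\alpha_T^{(k)}$ from Proposition~\ref{prop:compute_code_bsc} to $\hat W_{k-1}$. It materializes the resulting codeword through its $m_kT$ primitive uses, which is possible because $R<m_k\Cgate$ for every $k$. The protected decoding $\beta_T^{(k)}$ then yields $\hat W_k$, with $\mathbb{P}(\hat W_k\ne \hat W_{k-1})=\epsilon_k^{(T)}\to 0$. The interfaces are $U_k\triangleq \hat W_k$ for $k<K$, and the final output is $U_K\triangleq\hat X^T=\psi_T(\hat W_K)$. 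Each stage reads only $U_{k-1}$, so the construction induces the serial Markov chain of Theorem~\ref{thm:k_stage_tax} and uses a total of $\sum_k m_kT\le mT$ vulnerable uses.

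A stagewise union bound gives $\mathbb{P}(\hat W_K\ne W)\le \epsilon_{\mathrm{ch}}^{(T)}+\sum_{k=1}^K\epsilon_k^{(T)}$. Since $K$ is fixed, this tends to $0$. On the success event, $\hat X^T=\hat X^T_{\mathrm{RD}}$. For bounded distortion, the average distortion is at most $D+\xi_T+d_{\max}\mathbb{P}(\hat W_K\ne W)$, which tends to $D$. For squared error, invoke Lemma~\ref{lem:clip_closes_mse_achievability} with this error event.

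The equal-split statement follows by substituting $m_k=m/K$. The main subtlety is a degenerate stage with $m_k=0$. In that case the strict inequality forces $R<0$, which is impossible, so that case is vacuous. The only other check is that the internal codes of Proposition~\ref{prop:compute_code_bsc} apply for each stage's budget with independent primitive noise across stages, which (A1) provides.
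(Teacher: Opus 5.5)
Your proposal is correct and follows essentially the same route as the paper: a common remote-description index decoded reliably from the channel, $K$ stagewise internal codes from Proposition~\ref{prop:compute_code_bsc} over the budgets $m_kT$, a union bound over the channel and $K$ stage error events, and the bounded-distortion and clipping (Lemma~\ref{lem:clip_closes_mse_achievability}) conclusions exactly as in Theorem~\ref{thm:achievability_general}. Your explicit note that a stage with $m_k=0$ makes the hypothesis vacuous (since $R_{X|Y}(D)\ge 0$) is a small detail the paper leaves implicit.
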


\begin{proof}
Repeat the common-index construction of Theorem~\ref{thm:interface_tax_achievability}, replacing the two vulnerable rematerialization stages by $K$ stages with budgets $m_1T,\dots,m_KT$.
For any
\[
  R_{X|Y}(D) < R < \min\{n\Cch,\ m_1\Cgate,\ \dots,\ m_K\Cgate\},
\]
the physical channel decodes the common index reliably, and Proposition~\ref{prop:compute_code_bsc} yields stagewise internal codes for all $K$ vulnerable stages.
A union bound gives vanishing overall index error, so the bounded-distortion and squared-error conclusions follow exactly as in Theorem~\ref{thm:achievability_general}. The equal-splitting specialization is immediate.
\end{proof}

\subsection{General DAG Receiver-Internal Compute Min-Cut Theorem}\label{subsec:dag_cutset}
Theorem~\ref{thm:k_stage_tax} covers purely serial organizations. We now move to a specified DAG interface decomposition of the receiver computation, where each $s_{\mathcal G}$--$t_{\mathcal G}$ cut becomes a candidate compute bottleneck.

The serial model of Subsection~\ref{subsec:k_stage_tax} captures pipelines in which task-relevant information must traverse a \emph{single} path of committed digital interfaces. Many practical receiver stacks, however, contain parallel branches (e.g., multi-thread baseband processing, cached side information, or architectural ``skip'' connections) so that information can traverse \emph{multiple} interface paths. The next theorem extends the same interface-based bound to general \emph{receiver computation graphs}.

Consider a directed acyclic graph (DAG) $G=(\mathcal{V},\mathcal{E})$ with a designated \emph{source} node $s_{\mathcal G}\in\mathcal{V}$ and \emph{sink} node $t_{\mathcal G}\in\mathcal{V}$.
Over a task block of length $T$, node $s_{\mathcal G}$ holds the channel output block $U_{s_{\mathcal G}} \triangleq R^{nT}$ and node $t_{\mathcal G}$ outputs the final task block $U_{t_{\mathcal G}} \triangleq \hat X^T$.
Each edge $e=(u,v)\in\mathcal{E}$ represents a committed interface variable $W_e$ that is generated from the tail-node state $U_u$ on the noisy substrate using at most $m_eT$ unreliable primitives, and is then available to the head node $v$.\footnote{Feedback and iterative processing can be accommodated by unfolding iterations into a larger DAG over time; the same cut-set argument then applies to the unfolded graph.}
We assume that the induced joint law is consistent with this DAG in the sense that all downstream computations depend on $R^{nT}$ only through the collection of committed interface variables $\{W_e\}_{e\in\mathcal{E}}$.

\begin{definition}[DAG interface representation]
\label{def:dag_interface_rep}
A $T$-block receiver scheme is said to admit a \emph{DAG interface representation} on $G=(\mathcal{V},\mathcal{E})$ with source $s_{\mathcal G}$ and sink $t_{\mathcal G}$ if there exist node-state random variables $\{U_v\}_{v\in\mathcal{V}}$, committed interface variables $\{W_e\}_{e\in\mathcal{E}}$, node-private randomness variables $\{\zeta_v\}_{v\neq s_{\mathcal G}}$, and for each edge $e=(u,v)$ an edge-private randomness variable $\xi_e$, fresh primitive-noise variables $N_{e,1},\dots,N_{e,M_e}$, and a deterministic integer $M_e\le m_eT$, such that $U_{s_{\mathcal G}}=R^{nT}$ and $U_{t_{\mathcal G}}=\hat X^T$, and for every node $v\neq s_{\mathcal G}$,
\[
  U_v = g_v\!\big(W_{\mathrm{In}(v)},\,\zeta_v\big),
\]
where $W_{\mathrm{In}(v)}\triangleq\{W_e: e=(u,v)\in\mathcal{E}\}$ are the incoming interface variables.
Moreover, for each edge $e=(u,v)$ there is an edge-local noisy subroutine of length $M_e$ with transcript $Z_e^{M_e}\triangleq (Z_{e,1},\dots,Z_{e,M_e})$ such that
\[
  A_{e,i}=\alpha_{e,i}\!\big(U_u,Z_e^{i-1},\xi_e\big),
  \qquad
  Z_{e,i}\sim W_{Z|U}(\cdot\mid A_{e,i}),
  \qquad i=1,\dots,M_e,
\]
and
\[
  W_e=\beta_e\!\big(Z_e^{M_e},\xi_e\big).
\]
The variables $N_{e,1},\dots,N_{e,M_e}$ are the fresh primitive noises that realize the conditional draws $Z_{e,i}\sim W_{Z|U}(\cdot\mid A_{e,i})$.
Equivalently, there exists a measurable function $\phi$ such that
\[
  Z_{e,i}=\phi(A_{e,i},N_{e,i}),
\]
where the distribution of $N_{e,i}$ is chosen so that $Z_{e,i}\mid A_{e,i}=a \sim W_{Z|U}(\cdot\mid a)$.
The variable $W_e$ is the only information from $u$ made available to the head node $v$.
Moreover, the family
\[
  \{\zeta_v\}_{v\neq s_{\mathcal G}}
  \;\cup\;
  \{\xi_e\}_{e\in\mathcal{E}}
  \;\cup\;
  \{N_{e,i}: e\in\mathcal{E},\ 1\le i\le M_e\}
\]
is mutually independent and jointly independent of the upstream variables $(Y^T,R^{nT})$, reflecting the operational model in which these are fresh receiver-internal randomness variables generated independently of the source/channel realization.
\end{definition}

The graph $G$ represents a specified interface decomposition of the receiver architecture.
It does not assert that arbitrary node fusion or node splitting is free; such transformations are admissible only when the corresponding intermediate state is not treated as a committed vulnerable interface, or when its cost is accounted for elsewhere.

For a vertex subset $\Omega\subseteq\mathcal{V}$, define the outgoing edge boundary
\begin{equation}
  \partial^+(\Omega)\ \triangleq\ \big\{(u,v)\in\mathcal{E}: u\in\Omega,\ v\in\Omega^c\big\}.
  \label{eq:delta_plus_def}
\end{equation}

\begin{lemma}[Cut-interface separation]
\label{lem:cut_interface_sep}
Under Definition~\ref{def:dag_interface_rep}, fix any cut $\Omega\subseteq\mathcal{V}$ with $s_{\mathcal G}\in\Omega$ and $t_{\mathcal G}\notin\Omega$, and let
\[
  W_{\partial^+(\Omega)} \triangleq \{W_e : e\in\partial^+(\Omega)\}.
\]
Also define
\[
  E_{\Omega^c}^{\mathrm{int}}
  \triangleq
  \{(u,v)\in\mathcal{E}: u,v\in\Omega^c\},
\]
and
\[
  S_{\Omega^c}
  \triangleq
  \Bigl(
    \{\zeta_v:v\in\Omega^c\},
    \{\xi_e:e\in E_{\Omega^c}^{\mathrm{int}}\},
    \{N_{e,i}: e\in E_{\Omega^c}^{\mathrm{int}},\ 1\le i\le M_e\}
  \Bigr).
\]
Then the following two properties hold:
\emph{(i)} there exists a measurable function $\psi_\Omega$ such that
\[
  \hat X^T
  =
  U_{t_{\mathcal G}}
  =
  \psi_\Omega\!\big(W_{\partial^+(\Omega)},S_{\Omega^c}\big);
\]
\emph{(ii)} $S_{\Omega^c}\perp R^{nT}$.
Consequently,
\[
  I(R^{nT};\hat X^T)
  \;\le\;
  I\!\big(R^{nT};W_{\partial^+(\Omega)},S_{\Omega^c}\big)
  \;=\;
  I\!\big(R^{nT};W_{\partial^+(\Omega)}\,\big|\,S_{\Omega^c}\big).
\]
\end{lemma}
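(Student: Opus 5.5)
Part (i) will be proved by induction along a topological order of the subgraph induced on $\Omega^c$. Because $G$ is a DAG, we can list the vertices of $\Omega^c$ as $v_1,\dots,v_r$ so that every edge with both endpoints in $\Omega^c$ points forward in the list. The inductive claim is that, for each $j$, the node state $U_{v_j}$ is a measurable function of $\bigl(W_{\partial^+(\Omega)},S_{\Omega^c}\bigr)$.

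For the inductive step, take $v=v_j$. By Definition~\ref{def:dag_interface_rep}, $U_v=g_v(W_{\mathrm{In}(v)},\zeta_v)$, and $\zeta_v$ is a component of $S_{\Omega^c}$. Each incoming edge $e=(u,v)$ falls into one of two cases.
\begin{enumerate}
\item[(a)] If $u\in\Omega$, then $e\in\partial^+(\Omega)$, so $W_e$ is directly an argument of the target function.
\item[(b)] If $u\in\Omega^c$, then $e\in E^{\mathrm{int}}_{\Omega^c}$ and $u$ precedes $v$ in the order. By the induction hypothesis, $U_u$ is a function of $(W_{\partial^+(\Omega)},S_{\Omega^c})$. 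Unrolling the edge-local subroutine gives
$A_{e,i}=\alpha_{e,i}(U_u,Z_e^{i-1},\xi_e)$, $Z_{e,i}=\phi(A_{e,i},N_{e,i})$ for $i=1,\dots,M_e$, and $W_e=\beta_e(Z_e^{M_e},\xi_e)$. A finite inner induction over $i$ then shows $W_e$ is a measurable function of $(U_u,\xi_e,N_{e,1},\dots,N_{e,M_e})$. All of $\xi_e$ and the $N_{e,i}$ lie in $S_{\Omega^c}$, since $M_e$ is deterministic.
\end{enumerate}
Compositions of measurable maps are measurable, so the step closes. Since $t_{\mathcal G}\in\Omega^c$, this yields $\psi_\Omega$. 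Two points need care: the source node has no $g_v$, but it lies in $\Omega$ and never needs to be reached; and a node in $\Omega^c$ with no incoming edges is covered trivially.

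For part (ii), observe that $S_{\Omega^c}$ is a subfamily of the mutually independent receiver-internal randomness family. That family is jointly independent of $(Y^T,R^{nT})$, so $S_{\Omega^c}\perp R^{nT}$ follows immediately by marginalization.

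The consequence then follows in two steps.
\begin{itemize}
\item By (i), $\hat X^T$ is a function of $(W_{\partial^+(\Omega)},S_{\Omega^c})$, so data processing gives $I(R^{nT};\hat X^T)\le I(R^{nT};W_{\partial^+(\Omega)},S_{\Omega^c})$.
\item The chain rule gives $I(R^{nT};W_{\partial^+(\Omega)},S_{\Omega^c})=I(R^{nT};S_{\Omega^c})+I(R^{nT};W_{\partial^+(\Omega)}\mid S_{\Omega^c})$, and the first term is $0$ by (ii).
\end{itemize}

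The main obstacle is the bookkeeping in step (b). We must make sure that no variable from $\Omega$ other than the crossing interfaces $W_{\partial^+(\Omega)}$ enters the computation. In particular, $\xi_e$ and $N_{e,i}$ for the crossing edges themselves must not be needed. They are not, because the crossing $W_e$ enters as a whole. Edges from $\Omega^c$ back into $\Omega$ are irrelevant, since nodes in $\Omega$ are never evaluated. Measurability of the composed maps is standard, so I would state it briefly rather than prove it.
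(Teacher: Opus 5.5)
Your proposal is correct and follows essentially the same route as the paper: induction along a topological order of $\Omega^c$, unrolling the edge-local recursion for internal edges, $S_{\Omega^c}\perp R^{nT}$ because it is a subfamily of the independent randomness, and data processing plus the chain rule for the consequence. The only difference is bookkeeping: the paper puts the internal outgoing interfaces $W_e$ of $v_r$ into its inductive claim, while you derive each such $W_e$ when you reach its head node. The two are equivalent.
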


\begin{proof}
\emph{Proof of (i).} Because $G$ is acyclic, the nodes in $\Omega^c$ can be ordered topologically as $v_1,\dots,v_m$. We claim by induction over this order that, for every $r\in\{1,\dots,m\}$, the node state $U_{v_r}$ and every interface variable $W_e$ on an internal edge $e=(v_r,w)$ with $w\in\Omega^c$ are measurable functions of $\bigl(W_{\partial^+(\Omega)},S_{\Omega^c}\bigr)$.

For $r=1$, every incoming edge to $v_1$ either comes from $\Omega$ and therefore belongs to $\partial^+(\Omega)$, or does not exist. Hence
\[
  U_{v_1}
  =
  g_{v_1}\!\big(W_{\mathrm{In}(v_1)},\,\zeta_{v_1}\big)
\]
is a measurable function of $\bigl(W_{\partial^+(\Omega)},S_{\Omega^c}\bigr)$. If $e=(v_1,w)$ is any edge with $w\in\Omega^c$, then the edge-local recursion
\[
  A_{e,i}=\alpha_{e,i}\!\big(U_{v_1},Z_e^{i-1},\xi_e\big),
  \qquad
  Z_{e,i}=\phi(A_{e,i},N_{e,i}),
  \qquad
  W_e=\beta_e\!\big(Z_e^{M_e},\xi_e\big)
\]
shows that $W_e$ is also a measurable function of $\bigl(W_{\partial^+(\Omega)},S_{\Omega^c}\bigr)$.

Assume the claim holds up to $r-1$. Consider node $v_r$. Every incoming edge from $\Omega$ belongs to $\partial^+(\Omega)$, while every incoming edge from $\Omega^c$ originates at one of $v_1,\dots,v_{r-1}$. By the induction hypothesis, all such internal incoming interface variables are measurable functions of $\bigl(W_{\partial^+(\Omega)},S_{\Omega^c}\bigr)$. Therefore
\[
  U_{v_r}
  =
  g_{v_r}\!\big(W_{\mathrm{In}(v_r)},\,\zeta_{v_r}\big)
\]
is a measurable function of $\bigl(W_{\partial^+(\Omega)},S_{\Omega^c}\bigr)$. The same edge-local recursion then shows that every internal outgoing edge $e=(v_r,w)$ with $w\in\Omega^c$ has interface variable $W_e$ measurable with respect to $\bigl(W_{\partial^+(\Omega)},S_{\Omega^c}\bigr)$. This completes the induction.

Since $t_{\mathcal G}\in\Omega^c$, we obtain
\[
  \hat X^T
  =
  U_{t_{\mathcal G}}
  =
  \psi_\Omega\!\big(W_{\partial^+(\Omega)},S_{\Omega^c}\big)
\]
for some measurable function $\psi_\Omega$.

\emph{Proof of (ii).} $S_{\Omega^c}$ is a subcollection of the family $\{\zeta_v\}_{v\neq s_{\mathcal G}}\cup\{\xi_e\}_{e\in\mathcal{E}}\cup\{N_{e,i}\}_{e,i}$, which by Definition~\ref{def:dag_interface_rep} is jointly independent of $(Y^T,R^{nT})$; in particular, $S_{\Omega^c}\perp R^{nT}$.

\emph{Consequence.} From (i) and data processing, $I(R^{nT};\hat X^T)\le I\!\big(R^{nT};W_{\partial^+(\Omega)},S_{\Omega^c}\big)$. By the chain rule and (ii),
\[
  I\!\big(R^{nT};W_{\partial^+(\Omega)},S_{\Omega^c}\big)
  = I(R^{nT};S_{\Omega^c})
  + I\!\big(R^{nT};W_{\partial^+(\Omega)}\,\big|\,S_{\Omega^c}\big)
  = I\!\big(R^{nT};W_{\partial^+(\Omega)}\,\big|\,S_{\Omega^c}\big),
\]
which proves the displayed inequality.
\end{proof}

\begin{theorem}[Receiver-internal compute min-cut converse under committed interfaces]
\label{thm:dag_cutset}
Consider any receiver architecture that admits a DAG interface representation in the sense of Definition~\ref{def:dag_interface_rep}, and suppose that each interface edge $e\in\mathcal{E}$ is implemented using at most $m_eT$ unreliable primitives on the compute substrate (Definition~\ref{def:gpu_bitflip_model}).
Then, for every cut $\Omega\subseteq\mathcal{V}$ with $s_{\mathcal G}\in\Omega$ and $t_{\mathcal G}\notin\Omega$,
\begin{equation}
  \frac{1}{T}I(Y^T;\hat X^T)
  \ \le\
  \min\Big\{\, n\Cch,\ \sum_{e\in\partial^+(\Omega)} m_e\,\Cgate \Big\}.
  \label{eq:dag_cutset_bound_one}
\end{equation}
Equivalently,
\begin{equation}
  \frac{1}{T}I(Y^T;\hat X^T)
  \ \le\
  \min\Big\{\, n\Cch,\ \min_{\Omega:\,s_{\mathcal G}\in\Omega,\,t_{\mathcal G}\notin\Omega}\ \sum_{e\in\partial^+(\Omega)} m_e\,\Cgate \Big\}.
  \label{eq:dag_cutset_bound}
\end{equation}
\end{theorem}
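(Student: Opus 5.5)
The proof follows the template of Theorems~\ref{thm:supply_demand_converse} and~\ref{thm:k_stage_tax}, with Lemma~\ref{lem:cut_interface_sep} supplying the cut structure. The communication cut is unchanged. Since $Y^T\to S^{nT}\to R^{nT}\to\hat X^T$ (all receiver-internal randomness is independent of $(Y^T,R^{nT})$), data processing and the memoryless-channel converse give $\frac1T I(Y^T;\hat X^T)\le n\Cch$. The remaining work is to fix an arbitrary cut $\Omega$ with $s_{\mathcal G}\in\Omega$, $t_{\mathcal G}\notin\Omega$, and show
\[
  I(Y^T;\hat X^T)\le \sum_{e\in\partial^+(\Omega)} m_eT\,\Cgate .
\]
Minimizing over $\Omega$ then yields \eqref{eq:dag_cutset_bound}.

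For the compute cut, first use $Y^T\to R^{nT}\to\hat X^T$ to get $I(Y^T;\hat X^T)\le I(R^{nT};\hat X^T)$. Lemma~\ref{lem:cut_interface_sep} then gives
\[
  I(R^{nT};\hat X^T)\le I\big(R^{nT};W_{\partial^+(\Omega)}\mid S_{\Omega^c}\big).
\]
Because $S_{\Omega^c}$ is independent of $(R^{nT},W_{\partial^+(\Omega)})$, this conditional term equals $I(R^{nT};W_{\partial^+(\Omega)})$. Independence holds because the cut-edge variables are functions of $R^{nT}$, the states in $\Omega$, and randomness $\{\zeta_v\}_{v\in\Omega}$, $\{\xi_e,N_{e,i}\}$ attached to edges with tails in $\Omega$. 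These are disjoint from the components of $S_{\Omega^c}$, and the whole family is mutually independent. Next apply data processing, $\beta_e(Z_e^{M_e},\xi_e)$, to bound this by $I(R^{nT};\{Z_e^{M_e},\xi_e\}_{e\in\partial^+(\Omega)})$. Let $\Xi$ collect all randomness upstream of the cut, namely $\{\zeta_v\}_{v\in\Omega}$, the $\xi_e$ and primitive noises on edges internal to $\Omega$, and the $\xi_e$ on cut edges. Since $\Xi\perp R^{nT}$, it suffices to bound $I(R^{nT};\{Z_e^{M_e}\}_{e\in\partial^+(\Omega)}\mid \Xi)$.

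The main obstacle is this last term, which gathers noisy transcripts produced on several cut edges in parallel. Conditioned on $\Xi=\xi$, every node state $U_u$ with $u\in\Omega$ is a deterministic function of $R^{nT}$. I plan to order all primitive uses on cut edges into a single sequence, edge by edge, and treat the result as one interactive module with input $R^{nT}$ and $L_\Omega=\sum_{e\in\partial^+(\Omega)}M_e\le\sum_e m_eT$ uses. At each use, the primitive input $A_{e,i}$ is a function of $(R^{nT},Z_e^{i-1},\xi_e)$. Since the noise $N_{e,i}$ is fresh and independent of everything earlier in the sequence, the conditional law of the output given the input and the full history is $W_{Z|U}$, with capacity $\Cgate$. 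The hypothesis of Lemma~\ref{lem:compute_info_bound} therefore holds with $U=R^{nT}$ under the conditional law given $\Xi=\xi$, giving the bound $L_\Omega\Cgate$. Averaging over $\xi$ gives $I(R^{nT};W_{\partial^+(\Omega)})\le\sum_{e\in\partial^+(\Omega)}m_eT\,\Cgate$.

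The delicate points are conditioning on $\Xi$ without breaking the independence of $R^{nT}$ from that randomness, and verifying that the edges' primitive noises stay independent once interleaved. Both follow from the mutual-independence clause of Definition~\ref{def:dag_interface_rep}. Dividing by $T$ and combining the compute cut with the communication cut gives \eqref{eq:dag_cutset_bound_one}.
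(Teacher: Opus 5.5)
Your argument works only for cuts $\Omega$ that no edge enters from $\Omega^c$. The theorem, and in particular the min-cut form \eqref{eq:dag_cutset_bound}, ranges over \emph{all} cuts with $s_{\mathcal G}\in\Omega$ and $t_{\mathcal G}\notin\Omega$.

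Suppose some edge $(a,b)$ has $a\in\Omega^c$ and $b\in\Omega$. Then $U_b$, and hence every cut variable $W_e$ with $e=(b,w)\in\partial^+(\Omega)$, depends on three things outside your accounting: on $\zeta_a$, which is a component of $S_{\Omega^c}$; on $\xi_{(a,b)}$ and $N_{(a,b),i}$, which are not in your $\Xi$; and on transcripts of cut edges upstream of $a$. Three of your steps fail as a result:
\begin{enumerate}
\item $S_{\Omega^c}$ is not independent of $W_{\partial^+(\Omega)}$, so $I(R^{nT};W_{\partial^+(\Omega)}\mid S_{\Omega^c})$ cannot be replaced by $I(R^{nT};W_{\partial^+(\Omega)})$.
\item Given $\Xi$, $U_b$ is not a deterministic function of $R^{nT}$.
\item $A_{e,i}$ is not a function of $(R^{nT},Z_e^{i-1},\xi_e)$ alone. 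It depends on another cut edge's transcript, so an arbitrary edge-by-edge ordering would break Lemma~\ref{lem:compute_info_bound}.
\end{enumerate}

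Such cuts cannot be discarded. Take edges $(s_{\mathcal G},x)$ and $(y,t_{\mathcal G})$ with budget $1$, and edges $(s_{\mathcal G},y)$, $(x,y)$, $(x,t_{\mathcal G})$ with budget $100$. The unique minimizing cut is $\Omega=\{s_{\mathcal G},y\}$, with value $2$, and it is entered by $(x,y)$. Every cut without entering edges has value at least $101$. So as written your proof does not reach \eqref{eq:dag_cutset_bound}.

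The repair is to enlarge the conditioning rather than drop it.
\begin{itemize}
\item Let $\Lambda$ collect all $\zeta_v$, all $\xi_e$, and the primitive noises of all non-cut edges, and let $Z_{\mathrm{cut}}$ denote the cut-edge transcripts. Then $\Lambda\perp R^{nT}$, $S_{\Omega^c}\subseteq\Lambda$, and $W_{\partial^+(\Omega)}$ is a function of $(Z_{\mathrm{cut}},\Lambda)$. Hence $I(R^{nT};W_{\partial^+(\Omega)}\mid S_{\Omega^c})\le I(R^{nT};Z_{\mathrm{cut}}\mid\Lambda)$.
\item Order the cut edges $e_1,\dots,e_q$ by a topological order of their tails. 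Given $\Lambda=\lambda$, each $A_{e_j,i}$ is a deterministic function of $R^{nT}$, the transcripts of $e_1,\dots,e_{j-1}$, and $Z_{e_j}^{i-1}$. The cut-edge noises remain fresh.
\item Lemma~\ref{lem:compute_info_bound} therefore applies to the interleaved sequence and yields $\sum_j M_{e_j}\Cgate$.
\end{itemize}

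With this repair, your single-module aggregation is a valid alternative to the paper's proof. The paper keeps $S_{\Omega^c}$ in the conditioning and applies the chain rule edge by edge in the same topological order. It uses the Markov chain $R^{nT}\to(U_{u_j},W_{<j},S_{\Omega^c})\to W_{e_j}$ and invokes the lemma once per cut edge with input $U_{u_j}$. Dependence on earlier cut edges is thus absorbed into the conditioning on $W_{<j}$, rather than into an interleaved history as in your version.
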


\begin{proof}
The physical-channel cut gives $I(Y^T;\hat X^T)\le nT\Cch$ as in the proof of Theorem~\ref{thm:k_stage_tax}.
Fix a cut $\Omega$ with $s_{\mathcal G}\in\Omega$ and $t_{\mathcal G}\notin\Omega$, and collect the interface variables across the cut as
$W_{\partial^+(\Omega)}\triangleq \{W_e: e\in\partial^+(\Omega)\}$.
By Definition~\ref{def:dag_interface_rep}, $\hat X^T$ is a measurable function of $R^{nT}(=U_{s_{\mathcal G}})$ and of receiver-private randomness that is jointly independent of $(Y^T,R^{nT})$; hence
\[
  Y^T \to R^{nT} \to \hat X^T
\]
forms a Markov chain. Combining this with Lemma~\ref{lem:cut_interface_sep} and data processing yields
\begin{equation}
  I(Y^T;\hat X^T)
  \;\le\; I(R^{nT};\hat X^T)
  \;\le\; I\!\big(R^{nT};W_{\partial^+(\Omega)}\,\big|\,S_{\Omega^c}\big).
  \label{eq:dag_dp}
\end{equation}

Order the cut edges as $e_1,\dots,e_q$ so that their tail nodes appear in a topological order of $G$, breaking ties arbitrarily, and define
\[
  e_j=(u_j,v_j),
  \qquad
  W_{<j}\triangleq (W_{e_1},\dots,W_{e_{j-1}}).
\]
The chain rule then gives
\[
  I\!\big(R^{nT};W_{\partial^+(\Omega)}\,\big|\,S_{\Omega^c}\big)
  =\sum_{j=1}^q I\!\big(R^{nT};W_{e_j}\,\big|\,W_{<j},S_{\Omega^c}\big).
\]

Fix $j$. Under this ordering, for every $i\le j$ the vertex $u_i$ is not a descendant of $v_j$ in $G$, so $(U_{u_j},W_{<j})$ is a measurable function of $R^{nT}$ and of local randomness attached to nodes and edges other than $e_j$. Moreover, $S_{\Omega^c}$ contains no local randomness attached to $e_j$, since $e_j\in\partial^+(\Omega)$ is not an internal $\Omega^c$-edge. By Definition~\ref{def:dag_interface_rep}, the pair $(\xi_{e_j},N_{e_j,1},\dots,N_{e_j,M_{e_j}})$ is jointly independent of $R^{nT}$ and of every other member of the local-randomness family, and is therefore independent of $\bigl(R^{nT},U_{u_j},W_{<j},S_{\Omega^c}\bigr)$. Since $W_{e_j}=\beta_{e_j}(Z_{e_j}^{M_{e_j}},\xi_{e_j})$ depends on the conditioning variables only through $(U_{u_j},\xi_{e_j},N_{e_j,1:M_{e_j}})$, we have the Markov chain
\[
  R^{nT} \to (U_{u_j},W_{<j},S_{\Omega^c}) \to W_{e_j},
\]
and hence
\[
  I\!\big(R^{nT};W_{e_j}\,\big|\,W_{<j},S_{\Omega^c}\big)
  \le
  I\!\big(U_{u_j};W_{e_j}\,\big|\,W_{<j},S_{\Omega^c}\big).
\]
Because $\xi_{e_j}\perp\bigl(U_{u_j},W_{<j},S_{\Omega^c}\bigr)$,
\[
  I\!\big(U_{u_j};W_{e_j}\,\big|\,W_{<j},S_{\Omega^c}\big)
  \le
  I\!\big(U_{u_j};W_{e_j},\xi_{e_j}\,\big|\,W_{<j},S_{\Omega^c}\big)
  =
  I\!\big(U_{u_j};W_{e_j}\,\big|\,W_{<j},S_{\Omega^c},\xi_{e_j}\big).
\]

For each fixed realization $(W_{<j}=w_{<j},S_{\Omega^c}=s,\xi_{e_j}=\xi)$, the edge-local subroutine generating $W_{e_j}$ is an $M_{e_j}$-use interactive noisy-primitive module with input $U_{u_j}$; after fixing $\xi$, the output
\[
  W_{e_j}=\beta_{e_j}\!\big(Z_{e_j}^{M_{e_j}},\xi\big)
\]
is a measurable function of the transcript $Z_{e_j}^{M_{e_j}}$. We do \emph{not} condition on the primitive noises $N_{e_j,1},\dots,N_{e_j,M_{e_j}}$ (doing so would collapse each primitive use to a deterministic map and void the $\Cgate$ capacity bound); by Definition~\ref{def:dag_interface_rep}, they remain independent of $\bigl(U_{u_j},W_{<j},S_{\Omega^c},\xi_{e_j}\bigr)$, so the conditional-capacity hypothesis of Lemma~\ref{lem:compute_info_bound} is preserved and
\[
  I\!\big(U_{u_j};W_{e_j}\,\big|\,W_{<j}=w_{<j},S_{\Omega^c}=s,\xi_{e_j}=\xi\big)
  \le
  M_{e_j}\Cgate
  \le
  m_{e_j}T\,\Cgate.
\]
Averaging over $(W_{<j},S_{\Omega^c},\xi_{e_j})$ and combining with the preceding inequalities gives
\[
  I\!\big(R^{nT};W_{e_j}\,\big|\,W_{<j},S_{\Omega^c}\big)
  \le
  m_{e_j}T\,\Cgate.
\]

Summing over $j$ yields
\[
  I\!\big(R^{nT};W_{\partial^+(\Omega)}\,\big|\,S_{\Omega^c}\big)
  \le
  T\sum_{e\in\partial^+(\Omega)} m_e\Cgate.
\]
Combining this with \eqref{eq:dag_dp} and the physical-channel cut, and dividing by $T$, gives \eqref{eq:dag_cutset_bound_one}. Minimizing over cuts yields \eqref{eq:dag_cutset_bound}.
\end{proof}

\begin{example}[Worked toy example: serial depth tax vs.\ parallel/skip min-cuts]
\label{ex:serial_parallel_mincut}
Assume the \regcomp{} and fix a total interface budget $\sum_{e\in\mathcal{E}} m_e = m$ per task instance.
Figure~\ref{fig:toy_mincut_graphs} depicts the three corresponding receiver computation graphs.
The min-cut term in \eqref{eq:dag_cutset_bound} then yields the following one-line comparisons:
\begin{align}
  \text{(serial $K$-edge chain)}\qquad
  \min_{\Omega}\sum_{e\in\partial^+(\Omega)} m_e
  &= \min_{k=1,\dots,K} m_k \ \le\ \frac{m}{K},
  \label{eq:toy_serial_m_over_k}\\
  \text{(two parallel edges $s_{\mathcal G}\!\to\! t_{\mathcal G}$)}\qquad
  \min_{\Omega}\sum_{e\in\partial^+(\Omega)} m_e
  &= m_1+m_2\ =\ m,
  \label{eq:toy_parallel_m}\\
  \text{(serial chain + skip edge)}\qquad
  \min_{\Omega}\sum_{e\in\partial^+(\Omega)} m_e
  &= m_{\mathrm{skip}}+\min_{k} m_k
   = \Big(1-\rho_{\mathrm{skip}}+\frac{\rho_{\mathrm{skip}}}{K}\Big)m \xrightarrow[\rho_{\mathrm{skip}}\to 0]{} m,
  \label{eq:toy_skip_near_m}
\end{align}
where \eqref{eq:toy_serial_m_over_k} is maximized by equal splitting $m_k=m/K$,
and \eqref{eq:toy_skip_near_m} uses the allocation $m_{\mathrm{skip}}=(1-\rho_{\mathrm{skip}})m$ and $m_k=\rho_{\mathrm{skip}} m/K$.
Multiplying by $\Cgate$ gives the corresponding compute supply terms in \eqref{eq:dag_cutset_bound}.
\end{example}

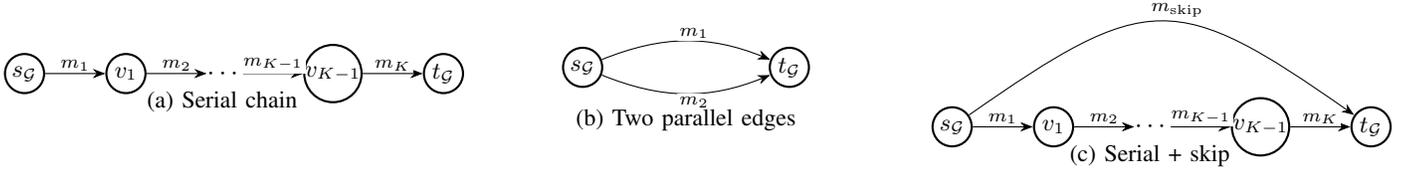
\begin{figure}[t]
  \centering

  \begin{minipage}{0.32\linewidth}
    \centering
    \begin{tikzpicture}[>=Stealth, node distance=0.8cm, baseline=(current bounding box.center)]
      \node[vtx] (s) {$s_{\mathcal G}$};
      \node[vtx] (v1) [right=of s] {$v_1$};
      \node[draw=none,inner sep=0pt] (dots) [right=of v1] {$\cdots$};
      \node[vtx] (vK) [right=of dots] {$v_{K-1}$};
      \node[vtx] (t)  [right=of vK] {$t_{\mathcal G}$};

      \draw[->] (s)  -- node[elbl,above] {$m_1$}     (v1);
      \draw[->] (v1) -- node[elbl,above] {$m_2$}     (dots);
      \draw[->] (dots)-- node[elbl,above] {$m_{K-1}$} (vK);
      \draw[->] (vK) -- node[elbl,above] {$m_K$}     (t);
    \end{tikzpicture}\\[-1ex]
    \small (a) Serial chain
  \end{minipage}
  \hfill
  \begin{minipage}{0.32\linewidth}
    \centering
    \begin{tikzpicture}[>=Stealth, node distance=2.2cm, baseline=(current bounding box.center)]
      \node[vtx] (s) {$s_{\mathcal G}$};
      \node[vtx] (t) [right=of s] {$t_{\mathcal G}$};

      \draw[->] (s) to[bend left=22]
        node[elbl,above,pos=0.55] {$m_1$} (t);
      \draw[->] (s) to[bend right=22]
        node[elbl,below,pos=0.55] {$m_2$} (t);
    \end{tikzpicture}\\[-0.03ex]
    \small (b) Two parallel edges
  \end{minipage}
  \hfill
  \begin{minipage}{0.32\linewidth}
    \centering
    \begin{tikzpicture}[>=Stealth, node distance=0.8cm, baseline=(current bounding box.center)]
      \node[vtx] (s) {$s_{\mathcal G}$};
      \node[vtx] (v1) [right=of s] {$v_1$};
      \node[draw=none,inner sep=0pt] (dots) [right=of v1] {$\cdots$};
      \node[vtx] (vK) [right=of dots] {$v_{K-1}$};
      \node[vtx] (t)  [right=of vK] {$t_{\mathcal G}$};

      \draw[->] (s)  -- node[elbl,above] {$m_1$}      (v1);
      \draw[->] (v1) -- node[elbl,above] {$m_2$}      (dots);
      \draw[->] (dots)-- node[elbl,above] {$m_{K-1}$} (vK);
      \draw[->] (vK) -- node[elbl,above] {$m_K$}      (t);

      \draw[->] (s) to[out=35,in=145,looseness=1.45]
        node[elbl,above,pos=0.55,yshift=1pt] {$m_{\mathrm{skip}}$} (t);
    \end{tikzpicture}\\[-1ex]
    \small (c) Serial + skip
  \end{minipage}

  \caption{Toy computation graphs for Example~\ref{ex:serial_parallel_mincut}.
  Under a fixed total interface budget $m$, the compute supply term in Theorem~\ref{thm:dag_cutset} is governed by the min-cut across the receiver computation graph:
  (a) a serial chain yields $\min_k m_k \le m/K$,
  (b) two parallel edges yield $m_1+m_2=m$, and
  (c) adding a skip edge yields $m_{\mathrm{skip}}+\min_k m_k$, which can approach $m$ when most budget is allocated to the skip.}
  \label{fig:toy_mincut_graphs}
\end{figure}

\begin{remark}[Serial chains and skip connections as special cases]
If $G$ is a serial chain with $K$ edges, every $s_{\mathcal G}$--$t_{\mathcal G}$ cut crosses exactly one edge and \eqref{eq:dag_cutset_bound} reduces to Theorem~\ref{thm:k_stage_tax}. Example~\ref{ex:serial_parallel_mincut} quantifies this depth-induced loss under a fixed total budget.
More generally, adding a ``skip'' edge that carries side information from an earlier node to a later node creates \emph{parallel} paths and can strictly increase the min-cut term in \eqref{eq:dag_cutset_bound}, thereby mitigating depth-induced bottlenecks.
This provides an information-theoretic rationale for hybrid interfaces and architectural bypasses beyond the purely serial model.
\end{remark}

\begin{remark}[Heterogeneous primitives and reliable bypasses]
The cut-set term in \eqref{eq:dag_cutset_bound} depends on the compute substrate only through per-use information bounds.
If an interface edge $e$ is implemented using heterogeneous primitives with per-class budgets $(m_{e,1},\dots,m_{e,J})$ per task instance, then the contribution of $e$ becomes
$\sum_{j=1}^J m_{e,j}\Cgate^{(j)}$ by Lemma~\ref{lem:compute_info_bound_hetero}.
Likewise, a $b$-bit reliable bypass (Remark~\ref{rem:bypass_model}) can be represented as an additional reliable edge of capacity $b$ (bits per task instance), which simply adds to the corresponding cut capacity and recovers the bypass relaxation in Corollary~\ref{cor:supply_demand_bypass}.
\end{remark}

\begin{discussion}{Compute supply as a min-cut of the receiver computation graph}
Theorem~\ref{thm:dag_cutset} also suggests a max--min allocation problem over interface budgets $m_e$: equal splitting is optimal in serial architectures (Theorem~\ref{thm:k_stage_tax}), whereas multi-branch architectures can exploit parallel paths to relax depth-induced bottlenecks.
\end{discussion}

\begin{remark}[Relation of Theorem~\ref{thm:dag_cutset} to classical network cut-sets]
Theorem~\ref{thm:dag_cutset} is formally cut-set-like, but its content differs from a classical communication-network cut-set in two respects. First, the cut edges are receiver-internal committed interfaces rather than external links with prescribed capacities, and each contribution $m_e\Cgate$ is derived from Lemma~\ref{lem:compute_info_bound} for the noisy materialization of $W_e$, not postulated as an exogenous link throughput. Second, together with Lemma~\ref{lem:ird_block_lower_bound}, the theorem limits the task-relevant mutual information needed to attain distortion $D$ through $R_{X|Y}(D)$, rather than the rate of reliably communicating an independent message. The theorem thus turns committed materialization into induced compute cut values and couples those values to indirect rate--distortion demand.
\end{remark}

\subsection{Plug-Ins and Architectural Variants}\label{subsec:hetero_interface_tax}
Theorems~\ref{thm:interface_tax}, \ref{thm:k_stage_tax}, and~\ref{thm:dag_cutset} remain the baseline references throughout this subsection: the variants below act only by changing the active cut or its effective capacity.

\begin{corollary}[Hard-separation interface tax with a $b$-bit reliable bypass]
\label{cor:interface_tax_bypass}
Consider the hard-separation architecture \eqref{eq:block_markov_hard_sep}, but suppose the downstream task stage also has access to a protected bypass variable $V^T$ satisfying $H(V^T)\le bT$ and produced from $R^{nT}$ through a reliable path.
That is, $\hat X^T=g(\hat B^T,V^T)$, where $\hat B^T$ is generated from $R^{nT}$ using $m_{\mathrm{dec}}T$ vulnerable primitive uses and the task stage uses $m_{\mathrm{task}}T$ vulnerable primitive uses.
Then any scheme satisfies
\begin{equation}
  \frac{1}{T}I(Y^T;\hat X^T)
  \le
  \min\big\{\, n\Cch,\ b+m_{\mathrm{dec}}\Cgate,\ b+m_{\mathrm{task}}\Cgate \big\}.
  \label{eq:interface_tax_bypass_bound}
\end{equation}
In particular, under the total vulnerable-budget constraint $m_{\mathrm{dec}}+m_{\mathrm{task}}\le m$,
\begin{equation}
  \frac{1}{T}I(Y^T;\hat X^T)
  \le
  \min\Big\{\, n\Cch,\ b+\frac{m}{2}\Cgate \Big\}.
  \label{eq:interface_tax_bypass_m_over_2}
\end{equation}
Thus even a modest reliable bypass increases the usable supply under hard-separation, and a sufficiently large bypass can substantially mitigate the symmetric two-stage serial tax.
\end{corollary}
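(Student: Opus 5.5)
The plan is to follow the proof of Theorem~\ref{thm:interface_tax} and establish three separate cuts, then take the minimum. The communication cut does not change: the Markov chain $Y^T\to S^{nT}\to R^{nT}\to(\hat B^T,V^T)\to\hat X^T$ holds, so data processing and the memoryless-channel converse give $I(Y^T;\hat X^T)\le I(S^{nT};R^{nT})\le nT\Cch$. The two compute cuts each have to be enlarged by the protected side path $V^T$, and we handle this with the same chain-rule step used in Lemma~\ref{lem:compute_info_bound_bypass}.

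\emph{Decoder-side cut.} We have $\hat X^T=g(\hat B^T,V^T)$, and $(\hat B^T,V^T)$ is produced from $R^{nT}$ using receiver-internal randomness that is independent of $Y^T$. Data processing therefore gives $I(Y^T;\hat X^T)\le I(R^{nT};\hat B^T,V^T)=I(R^{nT};\hat B^T)+I(R^{nT};V^T\mid\hat B^T)$. The first term is at most $m_{\mathrm{dec}}T\Cgate$ by Lemma~\ref{lem:two_stage_bounds}, which is Lemma~\ref{lem:compute_info_bound} applied to the decoding stage. The second term is at most $H(V^T)\le bT$. Adding these gives the bound $b+m_{\mathrm{dec}}\Cgate$ per instance.

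\emph{Task-side cut.} Here the variable entering the task stage's vulnerable primitives is the pair $(\hat B^T,V^T)$. There are two readings of the model. If $V^T$ enters only at the final reliable combination $g$, as in Remark~\ref{rem:bypass_model}, then $\hat X^T$ is a function of $(Z_{\mathrm{task}}^{m_{\mathrm{task}}T},V^T)$. In that case $I(R^{nT};\hat X^T)\le I(R^{nT};Z_{\mathrm{task}})+H(V^T)$, and since $R^{nT}\to\hat B^T\to Z_{\mathrm{task}}$, Lemma~\ref{lem:compute_info_bound} with input $\hat B^T$ gives $I(R^{nT};Z_{\mathrm{task}})\le I(\hat B^T;Z_{\mathrm{task}})\le m_{\mathrm{task}}T\Cgate$. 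If instead the task stage may adapt its materializations to $V^T$ as well, we apply Lemma~\ref{lem:compute_info_bound} to the module with input $U=(\hat B^T,V^T)$ and use $I(R^{nT};Z_{\mathrm{task}},V^T)=I(R^{nT};V^T)+I(R^{nT};Z_{\mathrm{task}}\mid V^T)$. The first term is at most $bT$. For the second, conditioning on $V^T=v$ does not affect the conditional-capacity hypothesis, because the primitive noises stay independent of everything upstream, so the term is at most $m_{\mathrm{task}}T\Cgate$. Either way the task-side bound is $b+m_{\mathrm{task}}\Cgate$. I expect the main obstacle to be this step: choosing the right conditioning so that adaptivity to $V^T$ does not break the per-use capacity argument. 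The remedy is to expand the chain rule with $V^T$ first and apply Lemma~\ref{lem:compute_info_bound} conditionally, which is the same device as the $\xi_{e_j}$ step in Theorem~\ref{thm:dag_cutset}.

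Taking the minimum of the three cuts and dividing by $T$ gives \eqref{eq:interface_tax_bypass_bound}. For \eqref{eq:interface_tax_bypass_m_over_2}, note that $\min\{b+m_{\mathrm{dec}}\Cgate,\,b+m_{\mathrm{task}}\Cgate\}=b+\min\{m_{\mathrm{dec}},m_{\mathrm{task}}\}\Cgate$. Under the constraint $m_{\mathrm{dec}}+m_{\mathrm{task}}\le m$ we have $\min\{m_{\mathrm{dec}},m_{\mathrm{task}}\}\le m/2$, which yields the stated bound with $b+\frac{m}{2}\Cgate$.
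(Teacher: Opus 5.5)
Your proposal is correct and follows essentially the paper's argument: the communication cut and the decoder-side chain rule $I(R^{nT};\hat B^T)+I(R^{nT};V^T\mid \hat B^T)\le m_{\mathrm{dec}}T\Cgate+bT$ are identical, and your task-side reading (b) uses the same device as the paper (expand the chain rule with $V^T$ first, then apply Lemma~\ref{lem:compute_info_bound} for each fixed $V^T=v^T$ and average). The only difference is cosmetic: the paper bounds $I(\hat B^T,V^T;\hat X^T)=I(V^T;\hat X^T)+I(\hat B^T;\hat X^T\mid V^T)$ instead of going through $I(R^{nT};Z_{\mathrm{task}},V^T)$, and this covers both of your readings at once, so the case split is not needed.
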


\begin{proof}
The communication cut gives $I(Y^T;\hat X^T)\le I(S^{nT};R^{nT})\le nT\Cch$.
Moreover, $Y^T\to R^{nT}\to (\hat B^T,V^T)\to \hat X^T$ forms a Markov chain, so data processing gives
$I(Y^T;\hat X^T)\le I(R^{nT};\hat B^T,V^T)$ and $I(Y^T;\hat X^T)\le I(\hat B^T,V^T;\hat X^T)$.
For the first cut, apply the chain rule in the order $\hat B^T$ then $V^T$:
\begin{equation}
I(R^{nT};\hat B^T,V^T)
  = I(R^{nT};\hat B^T) + I(R^{nT};V^T\mid \hat B^T) 
  \le I(R^{nT};\hat B^T) + H(V^T) 
  \le m_{\mathrm{dec}}T\,\Cgate + bT,
\end{equation}
where $I(R^{nT};\hat B^T)\le m_{\mathrm{dec}}T\Cgate$ follows from Lemma~\ref{lem:two_stage_bounds}.

For the second cut, similarly,
\begin{equation}
  I(\hat B^T,V^T;\hat X^T) = I(V^T;\hat X^T) + I(\hat B^T;\hat X^T\mid V^T) 
  \le H(V^T) + I(\hat B^T;\hat X^T\mid V^T) 
  \le bT + m_{\mathrm{task}}T\,\Cgate,
\end{equation}
where the last inequality is obtained by applying Lemma~\ref{lem:two_stage_bounds} to the stage-$2$ computation for each fixed $V^T=v^T$ (so the computation remains a valid $m_{\mathrm{task}}T$-primitive process) and then averaging over $V^T$.
Combining these bounds yields \eqref{eq:interface_tax_bypass_bound}.
Maximizing $\min(m_{\mathrm{dec}},m_{\mathrm{task}})$ subject to $m_{\mathrm{dec}}+m_{\mathrm{task}}\le m$ gives $m/2$, yielding \eqref{eq:interface_tax_bypass_m_over_2}.
\end{proof}

\begin{theorem}[Two-stage achievability with a protected bypass]
\label{thm:interface_tax_bypass_achievability}
Assume the conditions of Theorem~\ref{thm:achievability_general}.
Consider the two-stage architecture with a protected bypass of Corollary~\ref{cor:interface_tax_bypass}, i.e., a committed interface $\hat B^T$ produced using at most $m_{\mathrm{dec}}T$ vulnerable primitive uses, a downstream task stage using at most $m_{\mathrm{task}}T$ vulnerable primitive uses, and a protected bypass variable $V^T$ satisfying $H(V^T)\le bT$.
Then, for any distortion level $D$ such that
\begin{equation}
  R_{X|Y}(D)
  <
  \min\{\, n\Cch,\; b+m_{\mathrm{dec}}\Cgate,\; b+m_{\mathrm{task}}\Cgate \,\},
  \label{eq:interface_tax_bypass_achievability_condition}
\end{equation}
there exists a two-stage construction with a protected bypass whose average distortion satisfies
\[
  \frac{1}{T}\sum_{t=1}^T \mathbb{E}[d(X_t,\hat X_t)] \le D
\]
in the Shannon asymptotic regime $(T\to\infty)$.
\end{theorem}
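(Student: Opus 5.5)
The plan is to reuse the common-index template of Theorem~\ref{thm:interface_tax_achievability}, but to split the remote-description index into a protected part and a vulnerable part. Fix a rate $R$ with
\[
  R_{X|Y}(D)<R<\min\{n\Cch,\ b+m_{\mathrm{dec}}\Cgate,\ b+m_{\mathrm{task}}\Cgate\},
\]
and set $M_T=2^{\lfloor TR\rfloor}$. Let $W=f_T(Y^T)$ be the fixed-length indirect rate--distortion code index supplied by Proposition~\ref{prop:operational_indirect_rd}. Choose a split $R=R_b+R_v$ with $0\le R_b\le b$ and
\[
  R_v<\min\{m_{\mathrm{dec}},m_{\mathrm{task}}\}\Cgate .
\]
This is possible because $R-b<\min\{m_{\mathrm{dec}},m_{\mathrm{task}}\}\Cgate$; take $R_b=\min\{b,R\}$, shaved slightly if needed, and $R_v=R-R_b$. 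Then write $W=(W_b,W_v)$, with $W_b$ taking at most $2^{\lfloor TR_b\rfloor}$ values and $W_v$ the remaining bits. When $R_v=0$, the vulnerable stages carry a constant and the argument is trivial.

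The receiver operates in four steps.
\begin{itemize}
  \item[(i)] It decodes $\tilde W=(\tilde W_b,\tilde W_v)$ from $R^{nT}$ with a capacity-approaching channel code under (A2)--(A3). This is possible since $R<n\Cch$, and the decoding error vanishes as in \eqref{eq:achievability_channel_error}.
  \item[(ii)] It routes $V^T\triangleq\tilde W_b$ through the protected path. Since $V^T$ has at most $2^{TR_b}$ values, $H(V^T)\le TR_b\le bT$.
  \item[(iii)] It materializes $\alpha^{(1)}_T(\tilde W_v)$ over the $m_{\mathrm{dec}}T$ stage-1 primitive uses, using an internal code from Proposition~\ref{prop:compute_code_bsc} of rate $R_v<m_{\mathrm{dec}}\Cgate$. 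It decodes the committed interface $\hat B^T\triangleq\hat W_{v,\mathrm{dec}}$. Note that $\hat B^T$ depends on $R^{nT}$ only through the materializations.
  \item[(iv)] The task stage re-encodes $\hat B^T$ with a second internal code of rate $R_v<m_{\mathrm{task}}\Cgate$ over $m_{\mathrm{task}}T$ uses and decodes $\hat W_{v,\mathrm{task}}$. It then outputs $\hat X^T=\psi_T(V^T,\hat W_{v,\mathrm{task}})$, which is of the form $g(\hat B^T,V^T)$ together with internal noise. The re-encoding and decoding machinery is protected under (A2).
\end{itemize}

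For the error analysis, a union bound gives
\[
  \mathbb{P}\bigl((V^T,\hat W_{v,\mathrm{task}})\neq W\bigr)\le \epsilon_{\mathrm{ch}}^{(T)}+\epsilon_{\mathrm{dec}}^{(T)}+\epsilon_{\mathrm{task}}^{(T)}\to0.
\]
On the success event, $\hat X^T=\hat X^T_{\mathrm{RD}}$. The bounded-distortion bound then follows exactly as in the proof of Theorem~\ref{thm:achievability_general}. For squared error, Lemma~\ref{lem:clip_closes_mse_achievability} applies with this error event.

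The main point needing care is the rate split together with architectural compliance. First, the bypass must carry a deterministic function of $R^{nT}$ whose entropy respects $bT$; the cardinality bound handles this. Second, the vulnerable sub-index must fit the weaker of the two stage budgets simultaneously. The strict inequalities in \eqref{eq:interface_tax_bypass_achievability_condition} are exactly what allow a single $R_v$ to satisfy both stage constraints. The remaining edge cases, $b\ge R$ or a zero stage budget, reduce to choosing $R_v=0$.
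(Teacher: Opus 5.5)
Your proposal is correct and follows essentially the same route as the paper: relabel the indirect rate--distortion index as a protected part of rate at most $b$ carried on $V^T$ and a vulnerable part of rate below $\min\{m_{\mathrm{dec}},m_{\mathrm{task}}\}\Cgate$ pushed through two stagewise internal codes, then close with a union bound and the bounded-distortion/clipping arguments of Theorem~\ref{thm:achievability_general}. The only cosmetic difference is that you fix the total rate first and allow $R_b=b$, which the cardinality bound $H(V^T)\le\lfloor TR_b\rfloor$ justifies, whereas the paper picks $R_{\mathrm{p}}<b$ and $R_{\mathrm{u}}$ directly.
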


\begin{proof}
Choose nonnegative rates $R_{\mathrm{p}}<b$ and
$R_{\mathrm{u}}<\min\{m_{\mathrm{dec}}\Cgate,m_{\mathrm{task}}\Cgate\}$ (with the obvious zero-rate conventions when one of these budgets vanishes) such that
\[
  R_{X|Y}(D)<R_{\mathrm{p}}+R_{\mathrm{u}}<\min\{n\Cch,\ b+m_{\mathrm{dec}}\Cgate,\ b+m_{\mathrm{task}}\Cgate\}.
\]
By Proposition~\ref{prop:operational_indirect_rd}, for all sufficiently large $T$ there is a fixed-length indirect rate--distortion code whose index can be relabeled as
$W=(W_{\mathrm{p}},W_{\mathrm{u}})$ with per-block rates arbitrarily close to $(R_{\mathrm{p}},R_{\mathrm{u}})$ and whose decoder can be written as
$\hat X_{\mathrm{RD}}^T=\psi_T(W_{\mathrm{p}},W_{\mathrm{u}})$. No successive-refinement structure is required: this split is only a relabeling of the remote-description index.
Reliable channel coding over the $nT$ channel uses recovers $(W_{\mathrm{p}},W_{\mathrm{u}})$ with vanishing error probability; set the protected bypass variable to $V^T=W_{\mathrm{p}}$, so $H(V^T)\le bT$.

The vulnerable serial part carries only $W_{\mathrm{u}}$: Proposition~\ref{prop:compute_code_bsc} gives a stage-1 internal code over the $m_{\mathrm{dec}}T$ primitive uses and a stage-2 internal code over the $m_{\mathrm{task}}T$ primitive uses, each with vanishing block error, so the reconstruction has the form
\[
  \hat X^T=\psi_T(V^T,\hat W_{\mathrm{task}}).
\]
On the complement of the union of the channel-decoding, stage-1, and stage-2 error events, we have
$V^T=W_{\mathrm{p}}$ and $\hat W_{\mathrm{task}}=W_{\mathrm{u}}$, hence $\hat X^T=\hat X_{\mathrm{RD}}^T$.
The bounded-distortion conclusion then follows exactly as in Theorem~\ref{thm:achievability_general}; under squared error with $\mathbb{E}[X^2]<\infty$, apply Lemma~\ref{lem:clip_closes_mse_achievability} to the same union-bound error event.

Under the total vulnerable-budget constraint $m_{\mathrm{dec}}+m_{\mathrm{task}}\le m$, maximizing $\min\{m_{\mathrm{dec}},m_{\mathrm{task}}\}$ gives the symmetric split $m_{\mathrm{dec}}=m_{\mathrm{task}}=m/2$, so every distortion level satisfying
\begin{equation}
  R_{X|Y}(D)
  <
  \min\Big\{\, n\Cch,\; b+\frac{m}{2}\Cgate \Big\}
  \label{eq:interface_tax_bypass_achievability_m_over_2}
\end{equation}
is achievable under Assumption~\ref{assump:achievability_benchmark}.
\end{proof}

The next corollary allows different effective primitive capacities in the two vulnerable stages (e.g., due to different voltages, ECC protection, or numerical precision).

\begin{corollary}[Interface tax with unequal stage primitive capacities]
\label{cor:interface_tax_unequal}
Consider the hard-separation architecture in which the decoder stage uses primitives with per-use information capacity
$\Cgate^{(\mathrm{dec})}$ and the downstream task stage uses primitives with per-use information capacity $\Cgate^{(\mathrm{task})}$.
If the decoder and task stages use at most $m_{\mathrm{dec}}T$ and $m_{\mathrm{task}}T$ primitives, respectively, then any scheme satisfies
\begin{equation}
  \frac{1}{T}I(Y^T;\hat X^T)
  \le
  \min\big\{\, n\Cch,\; m_{\mathrm{dec}}\,\Cgate^{(\mathrm{dec})},\; m_{\mathrm{task}}\,\Cgate^{(\mathrm{task})} \big\}.
  \label{eq:interface_tax_unequal_bound}
\end{equation}
Under a total compute-budget constraint $m_{\mathrm{dec}}+m_{\mathrm{task}}\le m$, the best split equalizes the two stage bottlenecks,
$m_{\mathrm{dec}}\Cgate^{(\mathrm{dec})}=m_{\mathrm{task}}\Cgate^{(\mathrm{task})}$, and yields
\begin{equation}
  \frac{1}{T}I(Y^T;\hat X^T)
  \le
  \min\Big\{\, n\Cch,\; m\,\frac{\Cgate^{(\mathrm{dec})}\,\Cgate^{(\mathrm{task})}}{\Cgate^{(\mathrm{dec})}+\Cgate^{(\mathrm{task})}} \Big\}.
  \label{eq:interface_tax_harmonic}
\end{equation}
In the symmetric case $\Cgate^{(\mathrm{dec})}=\Cgate^{(\mathrm{task})}=\Cgate$, \eqref{eq:interface_tax_harmonic} reduces to \eqref{eq:interface_tax_m_over_2}.
\end{corollary}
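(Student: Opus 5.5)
The argument mirrors the proof of Theorem~\ref{thm:interface_tax}, with the homogeneous stage bounds replaced by their heterogeneous versions. There are two parts: the general three-term bound \eqref{eq:interface_tax_unequal_bound}, and then an optimization over the budget split.

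\textbf{Three-term bound.} The communication cut gives $I(Y^T;\hat X^T)\le I(S^{nT};R^{nT})\le nT\Cch$, as before. The hard-separation Markov chain $Y^T\to R^{nT}\to\hat B^T\to\hat X^T$ and data processing then give
\[
I(Y^T;\hat X^T)\le I(R^{nT};\hat B^T)
\quad\text{and}\quad
I(Y^T;\hat X^T)\le I(\hat B^T;\hat X^T).
\]
For the decoder stage I will apply Lemma~\ref{lem:compute_info_bound} with input $R^{nT}$, output $\hat B^T$, $m_{\mathrm{dec}}T$ primitive uses, and per-history conditional capacity at most $\Cgate^{(\mathrm{dec})}$. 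This gives
\[
I(R^{nT};\hat B^T)\le m_{\mathrm{dec}}T\,\Cgate^{(\mathrm{dec})}.
\]
The task stage is handled the same way with $\Cgate^{(\mathrm{task})}$. The only thing to check is that the conditional-capacity hypothesis \eqref{eq:cond_capacity_bound} holds stage by stage with the stage-specific constant. It does, because each stage's primitives are memoryless channels of that capacity, fed fresh noise independent of the history. Dividing by $T$ and taking the minimum yields \eqref{eq:interface_tax_unequal_bound}.

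\textbf{Optimizing the split.} Write $a=\Cgate^{(\mathrm{dec})}$ and $c=\Cgate^{(\mathrm{task})}$, and consider the maximum of $\min\{m_{\mathrm{dec}}a,\,m_{\mathrm{task}}c\}$ subject to $m_{\mathrm{dec}}+m_{\mathrm{task}}\le m$. The first term is increasing in $m_{\mathrm{dec}}$ and the second is decreasing in it when $m_{\mathrm{task}}=m-m_{\mathrm{dec}}$, and the full budget should be used. So the maximum is attained where the two terms are equal, $m_{\mathrm{dec}}a=(m-m_{\mathrm{dec}})c$, which gives
\[
m_{\mathrm{dec}}=\frac{mc}{a+c}
\quad\text{and common value}\quad
\frac{mac}{a+c}.
\]
Concretely, if $m_{\mathrm{dec}}a>\frac{mac}{a+c}$ then $m_{\mathrm{dec}}>\frac{mc}{a+c}$, so $m_{\mathrm{task}}c<\frac{mac}{a+c}$, and symmetrically. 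Hence for every admissible split,
\[
\min\{m_{\mathrm{dec}}a,\,m_{\mathrm{task}}c\}\le \frac{mac}{a+c},
\]
which is \eqref{eq:interface_tax_harmonic}. Setting $a=c=\Cgate$ gives $m\Cgate/2$, which is \eqref{eq:interface_tax_m_over_2}.

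\textbf{Obstacles.} The only point needing care is the stage-specific capacity hypothesis; this is an invocation of Lemma~\ref{lem:compute_info_bound_hetero} (or Lemma~\ref{lem:compute_info_bound} with the appropriate constant). Degenerate cases such as $a=0$ or $c=0$ give the bound $0$ and follow trivially.
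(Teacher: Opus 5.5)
Your proposal is correct and follows essentially the same route as the paper: the communication cut plus the two data-processing cuts from Theorem~\ref{thm:interface_tax}, with stage bounds $I(R^{nT};\hat B^T)\le m_{\mathrm{dec}}T\Cgate^{(\mathrm{dec})}$ and $I(\hat B^T;\hat X^T)\le m_{\mathrm{task}}T\Cgate^{(\mathrm{task})}$ obtained from Lemma~\ref{lem:compute_info_bound_hetero} (or Lemma~\ref{lem:compute_info_bound} with the stage-specific constant), and then optimization of $\min\{m_{\mathrm{dec}}\Cgate^{(\mathrm{dec})},(m-m_{\mathrm{dec}})\Cgate^{(\mathrm{task})}\}$ over the split. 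Your explicit check that every admissible split satisfies the harmonic-mean bound spells out the optimization step the paper states in one line.
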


\begin{proof}
The proof is identical to Theorem~\ref{thm:interface_tax} except that the stage bounds in Lemma~\ref{lem:two_stage_bounds}
become $I(R^{nT};\hat B^T)\le m_{\mathrm{dec}}T\Cgate^{(\mathrm{dec})}$ and $I(\hat B^T;\hat X^T)\le m_{\mathrm{task}}T\Cgate^{(\mathrm{task})}$,
which follow from Lemma~\ref{lem:compute_info_bound_hetero} (or by repeating its argument).
Optimizing the bottleneck
\[
  \min\big\{\, m_{\mathrm{dec}}\Cgate^{(\mathrm{dec})},\ (m-m_{\mathrm{dec}})\Cgate^{(\mathrm{task})} \big\}
\]
over $m_{\mathrm{dec}}\in[0,m]$ yields
the equalization rule and the harmonic-mean value in \eqref{eq:interface_tax_harmonic}.
\end{proof}

Figure~\ref{fig:interface_tax_supply_split} summarizes this equalization picture and also previews how a reliable island lifts the task-stage cut.

\begin{figure}[t]
\centering
\includefigure[width=0.66\linewidth]{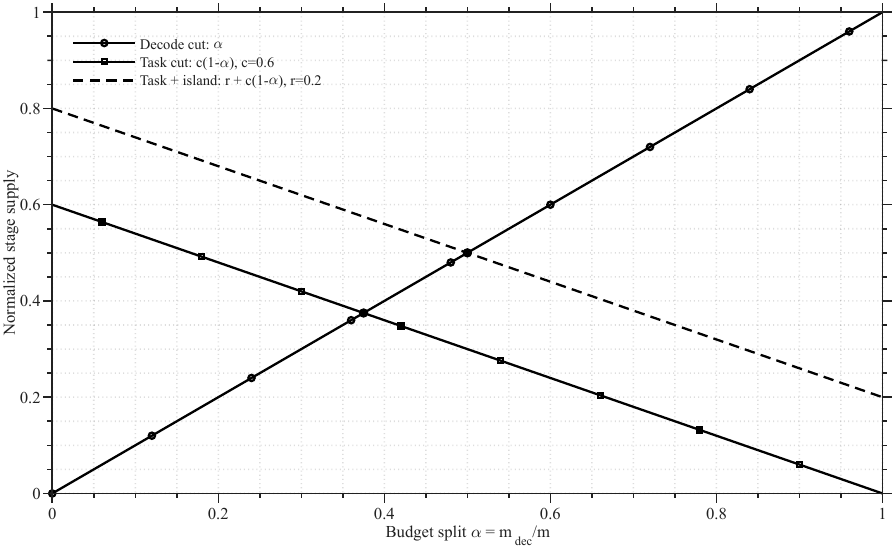}
\caption{Schematic ``equivalent supply'' picture for hard-separation with unequal stage reliabilities.
The optimal split in Corollary~\ref{cor:interface_tax_unequal} equalizes the active decode and task cuts (solid intersection), while a reliable island shifts the task-stage cut upward as in Corollary~\ref{cor:interface_tax_reliable_island} (dashed) and increases the bottleneck.}
\label{fig:interface_tax_supply_split}
\end{figure}

\begin{corollary}[Mitigating (and possibly eliminating) the interface tax using a reliable island]
\label{cor:interface_tax_reliable_island}
Consider hard-separation with stage capacities $\Cgate^{(\mathrm{dec})}$ and $\Cgate^{(\mathrm{task})}$ as in Corollary~\ref{cor:interface_tax_unequal}.
Assume the task stage can additionally use $m_{\mathrm{rel}}T$ \emph{reliable} primitive invocations (or protected memory slots) per block, each contributing at most $1$ bit/use,
and that these reliable invocations are not counted in the vulnerable budgets $(m_{\mathrm{dec}},m_{\mathrm{task}})$.
Then any scheme satisfies
\begin{equation}
  \frac{1}{T}I(Y^T;\hat X^T)
  \le
  \min\big\{\, n\Cch,\; m_{\mathrm{dec}}\,\Cgate^{(\mathrm{dec})},\; m_{\mathrm{rel}} + m_{\mathrm{task}}\,\Cgate^{(\mathrm{task})} \big\}.
  \label{eq:interface_tax_reliable_island_bound}
\end{equation}
Under the total vulnerable-budget constraint $m_{\mathrm{dec}}+m_{\mathrm{task}}\le m$, the optimal split maximizes the stage bottleneck and yields
\begin{equation}
  \frac{1}{T}I(Y^T;\hat X^T)
  \le
  \min\!\left\{\, n\Cch,\;
  \min\!\left(m\,\Cgate^{(\mathrm{dec})},\;
  \frac{\Cgate^{(\mathrm{dec})}\big(m_{\mathrm{rel}}+m\,\Cgate^{(\mathrm{task})}\big)}{\Cgate^{(\mathrm{dec})}+\Cgate^{(\mathrm{task})}}\right)\right\}.
  \label{eq:interface_tax_reliable_island_opt}
\end{equation}
In the symmetric case $\Cgate^{(\mathrm{dec})}=\Cgate^{(\mathrm{task})}=\Cgate$, this becomes
\begin{equation}
  \frac{1}{T}I(Y^T;\hat X^T)
  \le
  \min\!\left\{\, n\Cch,\; \min\!\left(m\Cgate,\; \frac{m\Cgate+m_{\mathrm{rel}}}{2}\right)\right\}
  \label{eq:interface_tax_reliable_island_symmetric}
\end{equation}
so even a small reliable island can partially or fully offset the serial tax.
\end{corollary}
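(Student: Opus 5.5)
The plan is to reuse the proof of Theorem~\ref{thm:interface_tax} and Corollary~\ref{cor:interface_tax_unequal}, changing only the task-stage cut. The communication cut $I(Y^T;\hat X^T)\le I(S^{nT};R^{nT})\le nT\Cch$ carries over unchanged. So does the decode cut: by the Markov chain $Y^T\to R^{nT}\to\hat B^T\to\hat X^T$ and data processing, $I(Y^T;\hat X^T)\le I(R^{nT};\hat B^T)\le m_{\mathrm{dec}}T\,\Cgate^{(\mathrm{dec})}$. This follows from Lemma~\ref{lem:compute_info_bound_hetero}, or from Lemma~\ref{lem:compute_info_bound} applied with capacity $\Cgate^{(\mathrm{dec})}$, since the reliable island is used only by the task stage.

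For the task cut I will write $I(Y^T;\hat X^T)\le I(\hat B^T;\hat X^T)$ and bound the right-hand side by viewing the task stage as one interactive module with input $\hat B^T$. This module makes $m_{\mathrm{task}}T+m_{\mathrm{rel}}T$ sequential primitive uses, which may be adaptively interleaved in any order. Each reliable use is a noiseless binary channel: its output equals its input bit, so its conditional capacity given any history is at most $1$ bit. Each vulnerable use has conditional capacity at most $\Cgate^{(\mathrm{task})}$.

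The step needing care is exactly this adaptive interleaving, because the class of use $i$ may itself depend on the history. I will handle it by repeating the chain-rule argument of Lemma~\ref{lem:compute_info_bound} directly. I write $I(\hat B^T;Z^{\mathrm{all}})=\sum_i I(\hat B^T;Z_i\mid Z^{i-1})$ and condition on $Z^{i-1}=z^{i-1}$. Conditioning on the history fixes which class use $i$ belongs to, and the Markov chain $\hat B^T\to U_i\to Z_i$ then gives a bound of either $1$ or $\Cgate^{(\mathrm{task})}$ for that term. If the total number of uses of each class is only bounded rather than fixed, I pad with dummy uses so that each class has exactly its budgeted count; padding cannot decrease the mutual information. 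Summing over all uses then yields $I(\hat B^T;\hat X^T)\le T\,(m_{\mathrm{rel}}+m_{\mathrm{task}}\Cgate^{(\mathrm{task})})$. Combining the three cuts and dividing by $T$ gives \eqref{eq:interface_tax_reliable_island_bound}.

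For \eqref{eq:interface_tax_reliable_island_opt}, I substitute $m_{\mathrm{task}}=m-m_{\mathrm{dec}}$; taking equality is without loss, since the bound is nondecreasing in $m_{\mathrm{task}}$. I then maximize
\[
\min\{f(m_{\mathrm{dec}}),\,g(m_{\mathrm{dec}})\},\qquad f(m_{\mathrm{dec}})=m_{\mathrm{dec}}\Cgate^{(\mathrm{dec})},\quad g(m_{\mathrm{dec}})=m_{\mathrm{rel}}+(m-m_{\mathrm{dec}})\Cgate^{(\mathrm{task})},
\]
over $m_{\mathrm{dec}}\in[0,m]$. Here $f$ is increasing and $g$ is decreasing, so the maximum of the minimum occurs at the crossing point $m_{\mathrm{dec}}^\star=(m_{\mathrm{rel}}+m\Cgate^{(\mathrm{task})})/(\Cgate^{(\mathrm{dec})}+\Cgate^{(\mathrm{task})})$ when $m_{\mathrm{dec}}^\star\le m$. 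In that case the value is $\Cgate^{(\mathrm{dec})}(m_{\mathrm{rel}}+m\Cgate^{(\mathrm{task})})/(\Cgate^{(\mathrm{dec})}+\Cgate^{(\mathrm{task})})$. Otherwise the crossing lies beyond the feasible interval, so I take $m_{\mathrm{dec}}=m$ and the value is $m\Cgate^{(\mathrm{dec})}$. One checks that $m_{\mathrm{dec}}^\star\le m$ holds exactly when the crossing value is at most $m\Cgate^{(\mathrm{dec})}$, so both cases are captured by the inner minimum in \eqref{eq:interface_tax_reliable_island_opt}. Setting $\Cgate^{(\mathrm{dec})}=\Cgate^{(\mathrm{task})}=\Cgate$ gives \eqref{eq:interface_tax_reliable_island_symmetric}.
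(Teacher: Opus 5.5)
Your proposal is correct and follows the paper's proof essentially verbatim. It uses the same communication, decode, and task cuts via data processing. It applies the same two-class compute bound on the task stage (reliable capacity $1$, vulnerable capacity $\Cgate^{(\mathrm{task})}$), which you re-derive by the chain rule with padding instead of citing Lemma~\ref{lem:compute_info_bound_hetero}. It ends with the same intersection-versus-boundary maximization over $m_{\mathrm{dec}}\in[0,m]$.

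Two points that the paper leaves tacit are made explicit in your version, and both are fine. First, you check that the crossing point satisfies $m_{\mathrm{dec}}^\star\le m$ exactly when the crossing value is at most $m\Cgate^{(\mathrm{dec})}$. Second, you assume the class of each use is fixed by the history, which is the same implicit convention as in Lemma~\ref{lem:compute_info_bound_hetero}.
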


\begin{proof}
The bound \eqref{eq:interface_tax_reliable_island_bound} follows from the same Markov-chain/data-processing argument as Theorem~\ref{thm:interface_tax} together with Lemma~\ref{lem:compute_info_bound_hetero} applied to the task stage with two primitive classes (reliable capacity $1$ and unreliable capacity $\Cgate^{(\mathrm{task})}$).
To optimize the bottleneck under $m_{\mathrm{task}}=m-m_{\mathrm{dec}}$, maximize
$\min\{m_{\mathrm{dec}}\Cgate^{(\mathrm{dec})},\ m_{\mathrm{rel}}+(m-m_{\mathrm{dec}})\Cgate^{(\mathrm{task})}\}$,
whose maximizer is either the intersection point or the boundary $m_{\mathrm{dec}}=m$.
Setting $\Cgate^{(\mathrm{dec})}=\Cgate^{(\mathrm{task})}=\Cgate$ yields \eqref{eq:interface_tax_reliable_island_symmetric}.
\end{proof}

\begin{corollary}[Reliable-island achievability]
\label{cor:interface_tax_reliable_island_achievability}
Assume the conditions of Theorem~\ref{thm:achievability_general}.
Consider the architecture of Corollary~\ref{cor:interface_tax_reliable_island} in the homogeneous case
\[
  \Cgate^{(\mathrm{dec})}=\Cgate^{(\mathrm{task})}=\Cgate.
\]
Suppose the task stage additionally has access to a reliable island of $m_{\mathrm{rel}}T$ protected bits per block (equivalently, $m_{\mathrm{rel}}T$ reliable primitive invocations of capacity $1$ bit/use), not charged to the vulnerable budgets $m_{\mathrm{dec}}T$ and $m_{\mathrm{task}}T$.
Fix any split $m_{\mathrm{dec}},m_{\mathrm{task}}\ge 0$ with $m_{\mathrm{dec}}+m_{\mathrm{task}}\le m$.
If
\begin{equation}
  R_{X|Y}(D)
  <
  \min\{\, n\Cch,\; m_{\mathrm{dec}}\Cgate,\; m_{\mathrm{rel}}+m_{\mathrm{task}}\Cgate \,\},
  \label{eq:interface_tax_reliable_island_achievability_condition}
\end{equation}
then there exists a two-stage construction with a reliable island whose average distortion satisfies
\[
  \frac{1}{T}\sum_{t=1}^T \mathbb{E}[d(X_t,\hat X_t)] \le D
\]
in the Shannon asymptotic regime $(T\to\infty)$.
\end{corollary}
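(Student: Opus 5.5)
The construction follows the common-index template of Theorem~\ref{thm:interface_tax_achievability}, with the protected-bit splitting of Theorem~\ref{thm:interface_tax_bypass_achievability} applied only on the task side.

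\textbf{Rate choice.} Fix rates with $R_{X|Y}(D)<R<\min\{n\Cch,\ m_{\mathrm{dec}}\Cgate,\ m_{\mathrm{rel}}+m_{\mathrm{task}}\Cgate\}$. Split $R=R_{\mathrm{p}}+R_{\mathrm{u}}$ with $0\le R_{\mathrm{p}}<m_{\mathrm{rel}}$ and $0\le R_{\mathrm{u}}<m_{\mathrm{task}}\Cgate$, adopting the zero-rate convention when $m_{\mathrm{rel}}=0$ or $m_{\mathrm{task}}=0$. Such a split exists because $R<m_{\mathrm{rel}}+m_{\mathrm{task}}\Cgate$ and both pieces can be taken strictly inside their budgets.

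\textbf{Source and channel coding.} By Proposition~\ref{prop:operational_indirect_rd}, take a fixed-length indirect rate--distortion code of rate $R$ and relabel its index as $W=(W_{\mathrm{p}},W_{\mathrm{u}})$ with rates near $(R_{\mathrm{p}},R_{\mathrm{u}})$. As in Theorem~\ref{thm:interface_tax_bypass_achievability}, this is pure relabeling and needs no successive-refinement structure. A capacity-approaching physical code from (A3), decoded on the protected control plane (A2), recovers $\tilde W$ with vanishing error probability.

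\textbf{Decoder stage.} Since $R<m_{\mathrm{dec}}\Cgate$, Proposition~\ref{prop:compute_code_bsc} supplies an internal code over the $m_{\mathrm{dec}}T$ vulnerable uses. It materializes the \emph{entire} index $\tilde W$ as the committed interface $\hat B^T=\hat W_{\mathrm{dec}}=(\hat W_{\mathrm{dec,p}},\hat W_{\mathrm{dec,u}})$, with vanishing error. This keeps the hard-separation structure: the task stage sees only $\hat B^T$, not $R^{nT}$.

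\textbf{Task stage.} The task stage reads $\hat B^T$ and does two things:
\begin{enumerate}
  \item It writes $\hat W_{\mathrm{dec,p}}$ into the $m_{\mathrm{rel}}T$ protected bits, which is possible since $\lfloor TR_{\mathrm{p}}\rfloor\le m_{\mathrm{rel}}T$ for large $T$. These bits are recovered error-free.
  \item It re-materializes $\hat W_{\mathrm{dec,u}}$ through a second internal code over the $m_{\mathrm{task}}T$ vulnerable uses, valid because $R_{\mathrm{u}}<m_{\mathrm{task}}\Cgate$ by Proposition~\ref{prop:compute_code_bsc}. This yields $\hat W_{\mathrm{task,u}}$.
\end{enumerate}
The output is $\hat X^T=\psi_T(\hat W_{\mathrm{dec,p}},\hat W_{\mathrm{task,u}})$.

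\textbf{Error analysis.} By a union bound over the channel, stage-1 and stage-2 error events, $\mathbb P(\hat X^T\neq\hat X^T_{\mathrm{RD}})\to0$. The bounded-distortion conclusion then follows exactly as in Theorem~\ref{thm:achievability_general}, and the squared-error case follows from Lemma~\ref{lem:clip_closes_mse_achievability} applied to the same error event.

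\textbf{Main obstacle: model bookkeeping.} The delicate point is not the coding but confirming that this construction is admissible in the reliable-island model of Corollary~\ref{cor:interface_tax_reliable_island}. Two things must be checked:
\begin{itemize}
  \item The protected bits are a resource of the task stage only and draw information solely from $\hat B^T$. They are not a bypass from $R^{nT}$, which would instead be the model of Corollary~\ref{cor:interface_tax_bypass}.
  \item The decoder stage must therefore carry the full rate $R$, which is why the condition is $R<m_{\mathrm{dec}}\Cgate$ with no $m_{\mathrm{rel}}$ term.
\end{itemize}
I would make this explicit by noting that the reliable slots count as capacity-$1$ primitives in Lemma~\ref{lem:compute_info_bound_hetero}. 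The construction then mirrors the converse cut structure, and the stated condition matches \eqref{eq:interface_tax_reliable_island_bound} with homogeneous capacities.
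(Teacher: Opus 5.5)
Your proposal is correct and follows essentially the same route as the paper: split the indirect rate--distortion index into a protected part of rate below $m_{\mathrm{rel}}$ and a vulnerable part of rate below $m_{\mathrm{task}}\Cgate$, and carry the full index through the $m_{\mathrm{dec}}T$ stage-1 primitives. Then store the protected part on the island, re-materialize only the vulnerable part through the stage-2 primitives, and conclude with a union bound as in Theorem~\ref{thm:achievability_general} and Lemma~\ref{lem:clip_closes_mse_achievability}.
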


\begin{proof}
Repeat the rate-splitting construction of Theorem~\ref{thm:interface_tax_bypass_achievability}, but replace the protected bypass part by a reliable-island part that is first rematerialized through the stage-1 vulnerable interface and then stored without error on the reliable island before the second stage. Choose rates $R_{\mathrm{rel}}<m_{\mathrm{rel}}$ and $R_{\mathrm{v}}<m_{\mathrm{task}}\Cgate$ (again with the obvious zero-rate conventions) such that
\[
  R_{X|Y}(D)<R_{\mathrm{rel}}+R_{\mathrm{v}}<\min\{n\Cch,\ m_{\mathrm{dec}}\Cgate,\ m_{\mathrm{rel}}+m_{\mathrm{task}}\Cgate\}.
\]
Relabel the indirect rate--distortion index as $W=(W_{\mathrm{rel}},W_{\mathrm{v}})$, decode this pair reliably from the channel, and use Proposition~\ref{prop:compute_code_bsc} to rematerialize it through the $m_{\mathrm{dec}}T$ stage-1 vulnerable primitives. The reliable-island portion stores $W_{\mathrm{rel}}$ without error, while only $W_{\mathrm{v}}$ is rematerialized through the $m_{\mathrm{task}}T$ stage-2 vulnerable primitives.
Let $\mathcal{E}_{\mathrm{ch}}$, $\mathcal{E}_{1}$, and $\mathcal{E}_{2}$ denote the channel-decoding, stage-1, and stage-2 error events. On the complement of $\mathcal{E}_{\mathrm{ch}}\cup \mathcal{E}_{1}\cup \mathcal{E}_{2}$, the reconstruction equals the nominal decoder output $\hat X_{\mathrm{RD}}^T$, so the bounded-distortion and squared-error conclusions follow exactly as in Theorems~\ref{thm:achievability_general}, \ref{thm:interface_tax_achievability}, and \ref{thm:interface_tax_bypass_achievability}.
Optimizing over $m_{\mathrm{dec}}+m_{\mathrm{task}}\le m$ gives
\begin{equation}
  R_{X|Y}(D)
  <
  \min\!\left\{\, n\Cch,\; \min\!\left(m\Cgate,\; \frac{m\Cgate+m_{\mathrm{rel}}}{2}\right)\right\}
  \label{eq:interface_tax_reliable_island_achievability_symmetric}
\end{equation}
as a symmetric achievability condition under Assumption~\ref{assump:achievability_benchmark}.
\end{proof}

\begin{proposition}[Single-bottleneck converse under a soft-interface relaxation]
\label{prop:no_tax_soft_interface}
Consider a soft-interface relaxation of the two-stage architecture in which the receiver first produces an intermediate representation $\hat B^T$ from $R^{nT}$ using at most $m_{\mathrm{dec}}T$ unreliable primitives,
and then outputs $\hat X^T$ using at most $m_{\mathrm{task}}T$ unreliable primitives, where the second stage has access to \emph{both} $\hat B^T$ and the raw channel output $R^{nT}$.
Here, access to $R^{nT}$ means adaptive consultation when choosing vulnerable primitive inputs; unless a protected bypass is modeled explicitly, no additional protected/direct path from $R^{nT}$ to $\hat X^T$ is present.
Then
\begin{equation}
  \frac{1}{T}I(Y^T;\hat X^T)\ \le\ \min\big\{\, n\Cch,\ (m_{\mathrm{dec}}+m_{\mathrm{task}})\Cgate \big\}.
  \label{eq:no_tax_soft_interface}
\end{equation}
In particular, no mandatory extra serial compute cut appears; when the full vulnerable budget is used ($m_{\mathrm{dec}}+m_{\mathrm{task}}=m$), the right-hand side coincides with the task-direct supply $\Rsup$ of Theorem~\ref{thm:supply_demand_converse}.
\end{proposition}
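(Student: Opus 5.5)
The plan is to treat the whole soft-interface receiver as one interactive noisy-primitive module with input $U=R^{nT}$ and $L'=(m_{\mathrm{dec}}+m_{\mathrm{task}})T$ primitive uses, and then invoke Lemma~\ref{lem:compute_info_bound}. The communication cut is handled exactly as in Theorem~\ref{thm:supply_demand_converse}: since $Y^T\to S^{nT}\to R^{nT}\to\hat X^T$, data processing and the memoryless-channel converse give $I(Y^T;\hat X^T)\le I(S^{nT};R^{nT})\le nT\Cch$. The substance is the computation cut, and the point is that the soft relaxation removes the Markov chain through $\hat B^T$ alone. The only valid cut is then the total one.

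For the computation cut, concatenate the two stages' transcripts into a single sequence $Z^{L'}=(Z_{\mathrm{dec}}^{m_{\mathrm{dec}}T},Z_{\mathrm{task}}^{m_{\mathrm{task}}T})$. In stage~1, each intended input is a (possibly randomized) function of $(R^{nT},Z_{\mathrm{dec}}^{i-1})$. In stage~2, each intended input is a function of $(R^{nT},\hat B^T,Z_{\mathrm{task}}^{i-1})$. Because $\hat B^T$ is itself a function of $Z_{\mathrm{dec}}$ and decoder-private randomness, every intended input in stage~2 is a randomized function of $R^{nT}$ and the full past transcript. This is exactly the adaptive form allowed in Lemma~\ref{lem:compute_info_bound}. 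The private randomness should be drawn up front, independently of $(Y^T,R^{nT})$, and absorbed into the randomized selection rules, as in Definition~\ref{def:dag_interface_rep}.

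Under the committed/no-bypass closure stated in the proposition, there is no protected path from $R^{nT}$ to $\hat X^T$. Hence $\hat X^T=g(Z^{L'},\zeta)$ with $\zeta$ independent of $R^{nT}$. Each primitive output depends on the past only through its own intended input and fresh noise, so the conditional-capacity hypothesis \eqref{eq:cond_capacity_bound} holds. The lemma, applied conditionally on $\zeta$ and then averaged, gives
\[
I(Y^T;\hat X^T)\le I(R^{nT};\hat X^T)\le I(R^{nT};Z^{L'}\mid\zeta)\le (m_{\mathrm{dec}}+m_{\mathrm{task}})T\,\Cgate .
\]
Dividing by $T$ and combining with the communication cut yields \eqref{eq:no_tax_soft_interface}. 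Setting $m_{\mathrm{dec}}+m_{\mathrm{task}}=m$ recovers $\Rsup$.

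The main obstacle is the bookkeeping in the second step. I need to check that stage~2's access to $R^{nT}$ is only used to choose primitive inputs, and never passed directly to the output; the proposition's clause on adaptive consultation is exactly what licenses this. I also need to check that conditioning on private randomness does not break the per-use capacity bound. If that conditioning is delicate, the fallback is the argument from the proof of Theorem~\ref{thm:dag_cutset}: condition only on $\xi$-type randomness and never on the primitive noises.

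Finally, the ``no mandatory extra cut'' remark needs a separate justification. The individual-stage bounds $m_{\mathrm{dec}}\Cgate$ and $m_{\mathrm{task}}\Cgate$ no longer follow, because $\hat X^T$ is not Markov through $\hat B^T$. Only the aggregate bound survives.
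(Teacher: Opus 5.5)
Your proof is correct, but it groups the argument differently from the paper.

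\textbf{The paper's route.} The paper keeps the two stages separate and routes the chain rule through the interface: $I(R^{nT};\hat X^T)\le I(R^{nT};\hat B^T)+I(R^{nT};\hat X^T\mid\hat B^T)$. Lemma~\ref{lem:compute_info_bound} bounds the first term by $m_{\mathrm{dec}}T\Cgate$. The second term is bounded by $m_{\mathrm{task}}T\Cgate$ by applying the same lemma to stage~2 for each frozen value $\hat B^T=b$ and then averaging.

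\textbf{Your route.} You note that $\hat B^T$ is a randomized function of the stage-1 transcript. The whole soft-interface receiver is therefore a single adaptive module with input $R^{nT}$ and $(m_{\mathrm{dec}}+m_{\mathrm{task}})T$ uses. One application of the lemma then finishes, conditioned on the up-front private randomness $\zeta\perp R^{nT}$ and never on the primitive noises.

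\textbf{Comparison.} Both are the same per-use chain rule, grouped differently. Your grouping makes it clear that a soft-interface receiver is literally a task-direct receiver with a pooled budget. It also avoids conditioning on $\hat B^T$, which is correlated with $R^{nT}$; the paper leaves implicit why the lemma survives that conditioning. You also treat private randomness more carefully than the paper does. Finally, your argument would still hold if the output could read stage-1 state beyond $\hat B^T$. The paper's grouping instead shows the bound as stage-1 supply plus conditional stage-2 supply. That is the same chain-rule template reused for the bypass variant (Corollary~\ref{cor:interface_tax_bypass}), and it shows exactly where the serial $\min$ of Theorem~\ref{thm:interface_tax} becomes a sum.

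\textbf{Your closing paragraph.} The ``no mandatory extra cut'' clause is only a reading of the displayed bound, which has no $\min\{m_{\mathrm{dec}},m_{\mathrm{task}}\}$ term. It needs no separate argument. Your claim that ``only the aggregate bound survives'' depends on letting $\hat X^T$ read stage-1 state. If $\hat X^T$ had to be a function of the stage-2 transcript alone, the lemma with input $(R^{nT},\hat B^T)$ would still give $m_{\mathrm{task}}\Cgate$. That bound is not a serial-split penalty, so the proposition's conclusion is unaffected.
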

\begin{proof}
Let $L_1\triangleq m_{\mathrm{dec}}T$ and $L_2\triangleq m_{\mathrm{task}}T$.
The communication cut gives
\[
  I(Y^T;\hat X^T)\le I(S^{nT};R^{nT})\le nT\Cch.
\]
For the compute side, the first stage is a committed/no-bypass module with input $R^{nT}$ and output $\hat B^T$, so Lemma~\ref{lem:compute_info_bound} gives
\[
  I(R^{nT};\hat B^T)\le L_1\Cgate.
\]
Now condition on $\hat B^T=b$.
For each fixed $b$, the second stage remains a committed/no-bypass module: it may adaptively inspect $R^{nT}$ and the fixed side information $b$ when choosing at most $L_2$ vulnerable materializations, but any influence of $R^{nT}$ on $\hat X^T$ must still traverse those primitive uses.
Applying Lemma~\ref{lem:compute_info_bound} conditionally therefore yields
\[
  I(R^{nT};\hat X^T\mid \hat B^T=b)\le L_2\Cgate
\]
for every $b$, and averaging over $b$ gives
\[
  I(R^{nT};\hat X^T\mid \hat B^T)\le L_2\Cgate.
\]
Therefore
\[
  I(R^{nT};\hat X^T)
  \le I(R^{nT};\hat B^T,\hat X^T)
  = I(R^{nT};\hat B^T)+I(R^{nT};\hat X^T\mid \hat B^T)
  \le (L_1+L_2)\Cgate.
\]
Combining this with the communication cut and data processing proves \eqref{eq:no_tax_soft_interface}.
\end{proof}

\begin{proposition}[Soft-interface precision--budget tradeoff]
\label{prop:soft_interface_tradeoff}
Let $m_{\mathrm{int}}$ denote the per-instance budget reserved for the soft side-information path in a soft-interface design, including cached soft statistics and any implementation-dependent transport/routing support that must be charged against the vulnerable budget.
In the storage-dominated specialization where the receiver stores $L_{\mathrm{soft}}$ soft symbols using $w$ bits/symbol and each stored bit counts as one vulnerable primitive use, this reserved budget becomes
\begin{equation}
  m_{\mathrm{int}} = w\,L_{\mathrm{soft}}
  \qquad \text{(primitive-equivalent uses per task instance)}
  \label{eq:m_int_def}
\end{equation}
and more generally $m_{\mathrm{int}}$ is deducted from the nominal per-instance budget $m$ before applying Proposition~\ref{prop:no_tax_soft_interface}.
Thus any soft-interface design using budgets $(m_{\mathrm{dec}},m_{\mathrm{task}},m_{\mathrm{int}})$ must satisfy
\begin{equation}
  m_{\mathrm{int}} + m_{\mathrm{dec}} + m_{\mathrm{task}} \ \le\ m.
  \label{eq:soft_interface_feasible}
\end{equation}
Under \eqref{eq:soft_interface_feasible},
\begin{equation}
  \frac{1}{T}I(Y^T;\hat X^T)\ \le\ \min\big\{\, n\Cch,\ (m-m_{\mathrm{int}})\Cgate \big\}.
  \label{eq:soft_interface_tradeoff_bound}
\end{equation}
\end{proposition}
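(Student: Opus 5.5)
The plan is to reduce Proposition~\ref{prop:soft_interface_tradeoff} to Proposition~\ref{prop:no_tax_soft_interface}. The reduction rests on one observation: the reserved soft-path budget consumes vulnerable primitive uses that are not available to carry task-relevant information along the $\hat B^T$/$\hat X^T$ route.

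First, I will justify the feasibility constraint \eqref{eq:soft_interface_feasible}. This is pure accounting under Definition~\ref{def:gpu_bitflip_model}. Every stored soft symbol, together with any routing or transport support charged to the substrate, is a vulnerable materialization. These $m_{\mathrm{int}}T$ uses are therefore disjoint from the decode-stage and task-stage uses, and all three must fit within the block budget $mT$. In the storage-dominated specialization, $L_{\mathrm{soft}}$ symbols of $w$ bits give $wL_{\mathrm{soft}}$ bit materializations per instance, which yields \eqref{eq:m_int_def}.

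Second, I will prove the information bound. Rather than simply deducting $m_{\mathrm{int}}$ from $m$, I will view the soft side path as a third committed module. The receiver computes $\hat B^T$ from $R^{nT}$ using $L_1=m_{\mathrm{dec}}T$ uses, and computes a soft-statistic record $Q^T$ from $R^{nT}$ using $L_0=m_{\mathrm{int}}T$ uses. The final stage then outputs $\hat X^T$ using $L_2=m_{\mathrm{task}}T$ uses, consulting $(\hat B^T,Q^T)$ and, adaptively, $R^{nT}$ as in Proposition~\ref{prop:no_tax_soft_interface}. The key step is the chain rule
\[
I(R^{nT};\hat X^T)\le I(R^{nT};Q^T)+I(R^{nT};\hat B^T\mid Q^T)+I(R^{nT};\hat X^T\mid Q^T,\hat B^T).
\]
I will bound each term by applying Lemma~\ref{lem:compute_info_bound} conditionally on fixed values of the preceding variables, exactly as in the proof of Proposition~\ref{prop:no_tax_soft_interface}. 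This gives the bounds $L_0\Cgate$, $L_1\Cgate$ and $L_2\Cgate$. The total is $(m_{\mathrm{int}}+m_{\mathrm{dec}}+m_{\mathrm{task}})T\Cgate\le mT\Cgate$.

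That total only gives a bound of $m\Cgate$, not the claimed $(m-m_{\mathrm{int}})\Cgate$. The sharper statement must instead use the reading that $m_{\mathrm{int}}$ is the reserved overhead deducted from $m$ \emph{before} applying Proposition~\ref{prop:no_tax_soft_interface}. Under that reading the soft path carries no independent task information through the vulnerable substrate: it is transport support for the task stage's adaptive consultation of $R^{nT}$, and its uses are counted only as overhead. Proposition~\ref{prop:no_tax_soft_interface} then applies with $m_{\mathrm{dec}}+m_{\mathrm{task}}\le m-m_{\mathrm{int}}$ and gives
\[
\tfrac1T I(Y^T;\hat X^T)\le\min\{n\Cch,(m_{\mathrm{dec}}+m_{\mathrm{task}})\Cgate\}\le\min\{n\Cch,(m-m_{\mathrm{int}})\Cgate\}.
\]
The communication cut carries over unchanged. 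Data processing along $Y^T\to R^{nT}\to\hat X^T$ then finishes the proof.

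The main obstacle is making this modeling step precise, so that the reserved uses truly do not enlarge the compute cut. I would state explicitly that the soft path's influence on $\hat X^T$ is captured by the task stage's adaptive dependence on $R^{nT}$, so that its materializations add no cut capacity. If that reading cannot be defended, the honest fallback is the three-module chain-rule bound $m\Cgate$, with $m_{\mathrm{int}}$ entering only through feasibility.
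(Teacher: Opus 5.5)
Your final argument is the paper's proof: the feasibility constraint gives $m_{\mathrm{dec}}+m_{\mathrm{task}}\le m-m_{\mathrm{int}}$, and Proposition~\ref{prop:no_tax_soft_interface} then yields the bound directly. The modeling point you flag is built into the statement, which deducts the reserved budget before that proposition is applied, and it agrees with Lemma~\ref{lem:compute_info_bound}: a soft record that only steers the task stage's choice of primitive inputs, rather than entering the output map directly, adds no cut capacity, so your three-module chain-rule detour is unnecessary.
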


\begin{proof}
The feasibility constraint \eqref{eq:soft_interface_feasible} implies $m_{\mathrm{dec}}+m_{\mathrm{task}}\le m-m_{\mathrm{int}}$.
Applying Proposition~\ref{prop:no_tax_soft_interface} yields
\[
  \frac{1}{T}I(Y^T;\hat X^T)
  \le
  \min\big\{\, n\Cch,\ (m_{\mathrm{dec}}+m_{\mathrm{task}})\Cgate \big\}
  \le
  \min\big\{\, n\Cch,\ (m-m_{\mathrm{int}})\Cgate \big\},
\]
which is \eqref{eq:soft_interface_tradeoff_bound}.
\end{proof}

\begin{remark}[Streaming and low-dimensional soft state]
Streaming access or low-dimensional soft summaries can make the stored-state contribution to $m_{\mathrm{int}}$ much smaller than $n$.
This does not mean that the soft path is physically free: non-negligible transport, buffering, routing, or protection overhead should still be absorbed into $m_{\mathrm{int}}$.
\end{remark}

Writing the converse family in the generic form $\min\{n\Cch,S_{\mathrm{comp}}\}$ highlights the architecture dependence of the compute-side supply.
For task-direct processing and the ideal low-overhead soft-interface limit, $S_{\mathrm{comp}}=m\Cgate$; for homogeneous hard-separation, $S_{\mathrm{comp}}=\frac{m}{2}\Cgate$ under the optimal split.
With a reserved soft-support budget, the usable vulnerable supply becomes $S_{\mathrm{comp}}=(m-m_{\mathrm{int}})\Cgate$, while a reliable island lifts the task-side cut to $S_{\mathrm{comp}}=\min\!\left(m\Cgate,\frac{m\Cgate+m_{\mathrm{rel}}}{2}\right)$ in the symmetric case.
Figure~\ref{fig:arch_phase_supply_plane} collects these boundaries in a single supply-plane view; for a scalar Gaussian distortion witness of the same compute-limited tax, see Figure~\ref{fig:interface_tax_gaussian}.

\begin{figure}[t]
\centering
\includefigure[width=0.7\linewidth]{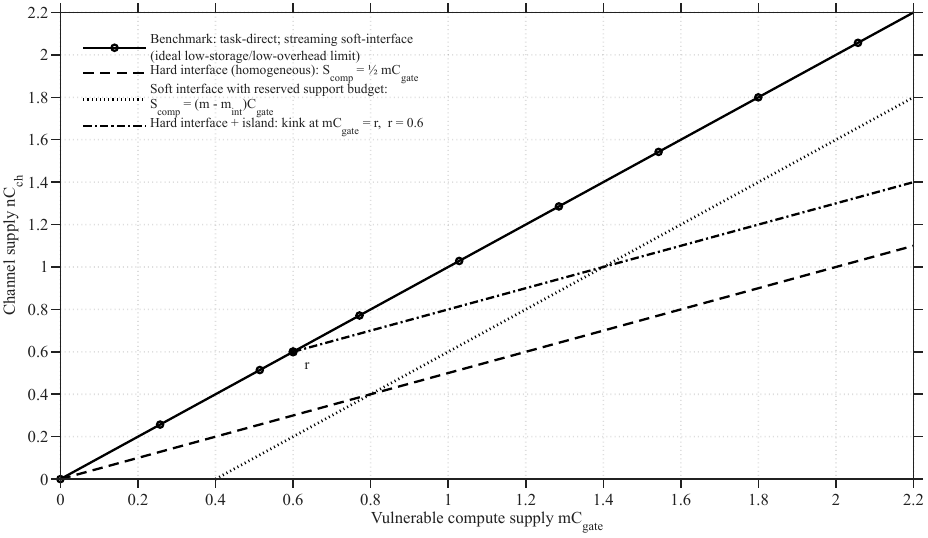}
\vspace{-3mm}
\caption{Unified ``architecture phase diagram'' in the supply plane.
Each curve separates the channel-limited and compute-limited regions for one interface organization.
The solid diagonal is the task-direct boundary (and the ideal low-overhead soft-interface limit), the dashed line is the homogeneous hard-separation boundary from Theorem~\ref{thm:interface_tax}, the dotted line is the reserved-budget soft-interface boundary from Proposition~\ref{prop:soft_interface_tradeoff}, and the dash-dotted kinked curve is the reliable-island boundary from Corollaries~\ref{cor:interface_tax_reliable_island} and~\ref{cor:interface_tax_reliable_island_achievability}.}
\label{fig:arch_phase_supply_plane}
\end{figure}

\subsection{Distortion Consequences of Serial Compute Cuts}\label{subsec:hard_sep_consequences}
The serial compute-cut converses translate immediately into distortion lower bounds by combining them with the indirect rate--distortion demand.

\begin{corollary}[Hard-separation supply--demand bound with interface-tax specialization]
\label{cor:hard_sep_supply_demand}
Under the $T$-block hard-separation architecture of Definition~\ref{def:hard_separation}, equivalently \eqref{eq:block_markov_hard_sep}, any scheme achieving average distortion at most $D$ must satisfy
\begin{equation}
  R_{X|Y}(D)\ \le\ \min\big\{\, n\Cch,\ m_{\mathrm{dec}}\Cgate,\ m_{\mathrm{task}}\Cgate \big\}.
  \label{eq:hard_sep_supply_demand}
\end{equation}
In particular, under the optimal budget split $m_{\mathrm{dec}}=m_{\mathrm{task}}=\frac{m}{2}$,
\begin{equation}
  R_{X|Y}(D)\ \le\ \min\Big\{\, n\Cch,\ \frac{m}{2}\Cgate \Big\}.
  \label{eq:hard_sep_supply_demand_m_over_2}
\end{equation}
\end{corollary}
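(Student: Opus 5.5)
The plan is to chain the block-level indirect rate--distortion lower bound with the two-stage serial converse already established. The statement is Corollary~\ref{cor:hard_sep_supply_demand}, and it involves no new information-theoretic ingredient; the only work is to check that the hypotheses of the two earlier results line up.

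First, fix any $T$-block hard-separation scheme in the sense of Definition~\ref{def:hard_separation} that achieves $\frac{1}{T}\sum_{t}\mathbb{E}[d(X_t,\hat X_t)]\le D$. The reconstruction $\hat X^T$ is a (stochastic) function of $Y^T$ through the Markov chain \eqref{eq:block_markov_hard_sep}, and the receiver-internal randomness and primitive noise are independent of the source. Hence Lemma~\ref{lem:ird_block_lower_bound} applies and gives
\[
R_{X|Y}(D)\le \tfrac{1}{T}I(Y^T;\hat X^T).
\]
Second, invoke Theorem~\ref{thm:interface_tax}, whose bound \eqref{eq:interface_tax_bound} gives
\[
\tfrac{1}{T}I(Y^T;\hat X^T)\le\min\{n\Cch,\,m_{\mathrm{dec}}\Cgate,\,m_{\mathrm{task}}\Cgate\}.
\]
Concatenating the two inequalities yields \eqref{eq:hard_sep_supply_demand}.

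For \eqref{eq:hard_sep_supply_demand_m_over_2}, use the constraint $m_{\mathrm{dec}}+m_{\mathrm{task}}\le m$. It implies $\min\{m_{\mathrm{dec}},m_{\mathrm{task}}\}\le m/2$, so for \emph{every} admissible split the right-hand side of \eqref{eq:hard_sep_supply_demand} is at most $\min\{n\Cch,\tfrac{m}{2}\Cgate\}$. The value $m/2$ is attained by the equal split, which is why it is called optimal. I would phrase this step as a uniform bound over all splits rather than only at the equal split, so that the corollary holds for any hard-separation scheme regardless of how it allocates its budget.

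The only step needing care is the first: confirming that Lemma~\ref{lem:ird_block_lower_bound} applies even though the reconstruction passes through noisy primitives. Its proof needs only that $\hat X^T$ has some joint law with $(X^T,Y^T)$ in which each per-instance pair $(Y_t,\hat X_t)$ yields an admissible test channel, with $X_t$ conditionally independent of $\hat X_t$ given $Y_t$. I would check this via the Markov chain $X^T\to Y^T\to S^{nT}\to R^{nT}\to\hat B^T\to\hat X^T$ together with the i.i.d.\ assumption (S1), which gives $X_t\to Y_t\to(\hat X_t,Y^{t-1})$. This is the point I expect to be the main obstacle.
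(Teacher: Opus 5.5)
Your proposal is correct and follows the paper's proof: Lemma~\ref{lem:ird_block_lower_bound} gives $R_{X|Y}(D)\le \frac{1}{T}I(Y^T;\hat X^T)$, Theorem~\ref{thm:interface_tax} supplies the three-way upper bound, and the $m/2$ form follows from $\min\{m_{\mathrm{dec}},m_{\mathrm{task}}\}\le m/2$. Two details you add are sound and make explicit what the paper leaves implicit: the check that $X_t\to Y_t\to\hat X_t$ holds, which follows from $X^T\to Y^T\to\hat X^T$ and (S1), and the phrasing of the $m/2$ step as a bound that holds uniformly over all admissible splits.
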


\begin{proof}
Let the overall $T$-block hard-separation scheme induce the joint law of $(X^T,Y^T,\hat X^T)$.
By Lemma~\ref{lem:ird_block_lower_bound}, achieving average distortion at most $D$ implies
$R_{X|Y}(D)\le \frac{1}{T}I(Y^T;\hat X^T)$.
The mutual information upper bound in \eqref{eq:hard_sep_supply_demand} then follows from Theorem~\ref{thm:interface_tax}.
\end{proof}

\begin{corollary}[Strict gap: task-direct vs.\ hard-separation]
\label{cor:strict_sep_loss}
Assume the achievability conditions of Theorem~\ref{thm:achievability_general}.
Suppose we are in the \emph{\regcomp} (cf.~the Interpretation following Theorem~\ref{thm:supply_demand_converse}), i.e.,
\begin{equation}
  \condcomp.
\end{equation}
Then the following hold.
\begin{enumerate}
  \item \textbf{(Task-direct achievability)} Any distortion level $D$ satisfying
  $R_{X|Y}(D) < m\Cgate$ is achievable by task-direct processing (Architecture~(A)).
  \item \textbf{(Hard-separation achievability under the stronger closure)} Under Assumption~\ref{assump:achievability_benchmark} and the symmetric budget split
  $m_{\mathrm{dec}}=m_{\mathrm{task}}=\frac{m}{2}$, any distortion level $D$ satisfying
  $R_{X|Y}(D) < \frac{m}{2}\Cgate$ is achievable by the two-stage construction of Theorem~\ref{thm:interface_tax_achievability}.
  \item \textbf{(Strict gap region)} For any distortion level $D$ such that
  \begin{equation}
    \frac{m}{2}\Cgate \;<\; R_{X|Y}(D) \;<\; m\Cgate,
    \label{eq:strict_sep_gap_condition}
  \end{equation}
  the task-direct construction can achieve distortion $D$, while every hard-separation scheme (Architecture~(B)) is ruled out by the converse in Corollary~\ref{cor:hard_sep_supply_demand}. Thus the compute-limited regime contains a strict interval of demand levels that is feasible for task-direct processing but impossible under hard-separation, even though the serial achievability result attains every demand level below the symmetric threshold $\frac{m}{2}\Cgate$.
\end{enumerate}
\end{corollary}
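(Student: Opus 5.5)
The plan is to assemble the three items from results already established, using the compute-limited hypothesis $\condcomp$ only to simplify the minima. Under this hypothesis, $\Rsup=\min\{n\Cch,m\Cgate\}=m\Cgate$, and also $\frac{m}{2}\Cgate\le m\Cgate\le n\Cch$. Every minimum of the form $\min\{n\Cch,\cdot\}$ appearing below therefore collapses to its compute term.

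For item~1, I will take any $D$ with $R_{X|Y}(D)<m\Cgate=\Rsup$. This is exactly condition \eqref{eq:achievability_condition}, so Theorem~\ref{thm:achievability_general} produces a task-direct scheme attaining average distortion $D$ in the Shannon asymptotic regime under Assumption~\ref{assump:achievability_benchmark}.

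For item~2, I will fix the split $m_{\mathrm{dec}}=m_{\mathrm{task}}=\frac{m}{2}$. The right-hand side of \eqref{eq:interface_tax_achievability_condition} then equals $\min\{n\Cch,\frac{m}{2}\Cgate\}=\frac{m}{2}\Cgate$. Hence any $D$ with $R_{X|Y}(D)<\frac{m}{2}\Cgate$ satisfies that condition, and Theorem~\ref{thm:interface_tax_achievability} supplies the two-stage construction.

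For item~3, I will take $D$ satisfying \eqref{eq:strict_sep_gap_condition}. The upper inequality $R_{X|Y}(D)<m\Cgate$ gives task-direct achievability by item~1. For the hard-separation part, I will argue by contradiction: suppose some hard-separation scheme (Architecture~(B)) with any split $m_{\mathrm{dec}}+m_{\mathrm{task}}\le m$ attains $D$. Then Corollary~\ref{cor:hard_sep_supply_demand} gives
\[
R_{X|Y}(D)\le \min\{m_{\mathrm{dec}},m_{\mathrm{task}}\}\Cgate\le \tfrac{m}{2}\Cgate,
\]
since $\min\{m_{\mathrm{dec}},m_{\mathrm{task}}\}\le \frac{m_{\mathrm{dec}}+m_{\mathrm{task}}}{2}\le \frac{m}{2}$. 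This contradicts the lower inequality in \eqref{eq:strict_sep_gap_condition}.

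It remains to check that the gap interval is genuinely nonempty. It is, provided $m\Cgate>0$ and $R_{X|Y}(\cdot)$ takes values in $(\frac{m}{2}\Cgate,m\Cgate)$. This follows from the convexity and continuity of $R_{X|Y}(\cdot)$ on the interior of its domain whenever its range spans the interval. I would state this as a mild caveat rather than prove it in general.

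The main obstacle is not technical but a matter of interpretation. The converse in item~3 must hold for \emph{every} split, not just the symmetric one, and this is why I bound $\min\{m_{\mathrm{dec}},m_{\mathrm{task}}\}$ by $m/2$ explicitly. In addition, the converse lives in the committed/no-bypass closure while the achievability lives in the protected-support closure. I will phrase the conclusion accordingly: task-direct processing is achievable under Assumption~\ref{assump:achievability_benchmark}, and hard-separation is excluded by a converse that requires no protected-support assumptions. Since the converse needs only (A1)-style accounting, it also applies to schemes operating under Assumption~\ref{assump:achievability_benchmark}, so the gap is meaningful within the stronger closure as well.
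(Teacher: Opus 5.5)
Your proposal is correct and matches the paper's proof. Items~1 and~2 are direct applications of Theorems~\ref{thm:achievability_general} and~\ref{thm:interface_tax_achievability} once $m\Cgate\le n\Cch$ collapses the minima, and item~3 combines item~1 with Corollary~\ref{cor:hard_sep_supply_demand}, where your explicit bound $\min\{m_{\mathrm{dec}},m_{\mathrm{task}}\}\le m/2$ for an arbitrary split is a more careful version of the paper's optimal-split appeal that makes the converse exclude every hard-separation scheme.
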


\begin{proof}
Item~1 follows by applying Theorem~\ref{thm:achievability_general} with
$R_{X|Y}(D)<m\Cgate\le n\Cch$.
For Item~2, apply Theorem~\ref{thm:interface_tax_achievability} with the symmetric split
$m_{\mathrm{dec}}=m_{\mathrm{task}}=\frac{m}{2}$; in the compute-limited regime,
$R_{X|Y}(D)<\frac{m}{2}\Cgate\le n\Cch$, so the achievability condition
\eqref{eq:interface_tax_achievability_condition} is satisfied.
For Item~3, task-direct achievability follows from Item~1, whereas Corollary~\ref{cor:hard_sep_supply_demand}
(in particular, \eqref{eq:hard_sep_supply_demand_m_over_2}) implies that any hard-separation scheme must satisfy
$R_{X|Y}(D)\le \frac{m}{2}\Cgate$ under the optimal split. Hence no hard-separation scheme can attain any
distortion level whose demand lies in \eqref{eq:strict_sep_gap_condition}.
\end{proof}

\section{Alternative Noisy-Logic Closure}
\label{sec:noisy_logic_closure}
\indent This section studies the separate noisy-logic closure in which \emph{all} receiver computation, including control and bookkeeping, is noisy, and records a conservative depth-dependent converse for that closure.

For context only, we draw on Evans--Schulman's information-propagation converse~\cite{EvansSchulman99}; we do not attempt to turn this section into a full reliable-computation theory in the sense of von Neumann-style threshold constructions~\cite{vonNeumann56}.

\subsection{Noisy-Gate Propagation and a Closure-Specific Compute Cut}
\label{subsec:noisy_gate_connections}
Our baseline model (Definition~\ref{def:gpu_bitflip_model}) treats control as protected and exposes the first-order compute supply $m\Cgate$ under a committed/no-bypass receiver interface. Here we replace that baseline term by a depth-dependent noisy-logic supply for the separate closure of this section.

\paragraph{A canonical noisy-gate model}
Consider a Boolean circuit in which each gate computes a deterministic Boolean function of at most $K_{\mathrm{fan}}$ inputs, and then its output is flipped independently with probability $\delta\in(0,\tfrac12)$ (a BSC$(\delta)$ gate-noise model).
Let $X_1,\dots,X_N$ denote input bits and let $Y$ denote a circuit output random variable determined by the noisy gate outputs.

\begin{theorem}[Evans--Schulman signal-propagation bound (one form)]
\label{thm:evans_schulman_bound}
Consider the $K_{\mathrm{fan}}$-fan-in noisy-gate model above: each gate computes a deterministic Boolean function of at most $K_{\mathrm{fan}}$ inputs and then its output is flipped independently with probability $\delta\in(0,\tfrac12)$.
Let $X_1,\dots,X_N$ denote input bits and let $Y$ denote \emph{any} circuit output random variable that is a (measurable) function of the noisy gate outputs (thus $Y$ may be multi-bit or nonbinary).
For each input $X_i$, let $d_i$ denote the length of the shortest directed path from $X_i$ to any gate output on which $Y$ depends.
Then Evans and Schulman~\cite{EvansSchulman99} show that information propagates at most geometrically with depth:
\begin{equation}
  I(X_i;Y)\ \le\ \min\!\Big\{1,\ \big(K_{\mathrm{fan}}(1-2\delta)^2\big)^{d_i}\Big\}\,H(X_i).
  \label{eq:evans_schulman_decay}
\end{equation}
In particular, if $K_{\mathrm{fan}}(1-2\delta)^2<1$, then $I(X_i;Y)$ decays exponentially in $d_i$.
\end{theorem}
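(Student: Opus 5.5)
The plan is to reduce \eqref{eq:evans_schulman_decay} to two information-theoretic ingredients and then combine them by induction over a topological order of the circuit, in the same cut-based spirit as Lemma~\ref{lem:cut_interface_sep}. Since the statement is attributed to~\cite{EvansSchulman99}, a fully rigorous version may end up deferring the key subadditivity step to that reference. The factor $\min\{1,\cdot\}$ is immediate, because $I(X_i;Y)\le H(X_i)$ always. So the work is the geometric factor $\beta^{d_i}$ with $\beta\triangleq K_{\mathrm{fan}}(1-2\delta)^2$, and we may assume $\beta<1$.

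\emph{Ingredient 1 (strong data processing for one noisy gate).} If $V$ is a bit, $Z=V\oplus N$ with $N\sim\mathrm{Bern}(\delta)$ fresh, and $X\to V\to Z$, then
\[
  I(X;Z)\le (1-2\delta)^2\, I(X;V).
\]
This is the standard SDPI contraction coefficient of the BSC. It applies to a gate output $Z=g(W_1,\dots,W_{K_{\mathrm{fan}}})\oplus N$ with $V=g(W)$, so that $I(X;Z)\le(1-2\delta)^2 I(X;W_1,\dots,W_{K_{\mathrm{fan}}})$.

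\emph{Ingredient 2 (subadditivity across fan-in).} We want $I(X;W_1,\dots,W_K)\le\sum_j I(X;W_j)$, which is where the factor $K_{\mathrm{fan}}$ enters. This is false in general. It holds when the $W_j$ are conditionally independent given $X$, but reconvergent fan-out in a circuit destroys that property.

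The plan is to follow Evans--Schulman and work not with single wires but with \emph{cuts} of wires, that is, sets $\mathcal S$ of gate outputs separating $X_i$ from $Y$. Define the potential
\[
  \Phi(\mathcal S)\triangleq\sum_{w\in\mathcal S}(1-2\delta)^{2\,\mathrm{dist}(w)}
\]
summed along paths from $X_i$. The goal is to show $I(X_i;W_{\mathcal S})\le H(X_i)\,\Phi(\mathcal S)$, proved by advancing the cut one gate at a time. When a gate is moved across the cut, its noise is fresh and independent of everything upstream, conditional on the remaining cut variables. Conditioning on those remaining variables, Ingredient~1 applies, and the chain rule gives the additive bookkeeping.

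To finish, start from the cut consisting of $X_i$ alone, with potential $1$, and end at the cut feeding $Y$. Then count: the cut feeding $Y$ can be reached from $X_i$ only along paths of length at least $d_i$, and backward from $Y$ the number of depth-$\ell$ ancestors is at most $K_{\mathrm{fan}}^\ell$. Tracing the potential from the output side should therefore give a bound of at most $\beta^{d_i}$.

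The main obstacle is making Ingredient~2 rigorous with correlated fan-in, and getting $\beta^{d_i}$ rather than a geometric sum $\sum_{\ell\ge d_i}\beta^\ell=\beta^{d_i}/(1-\beta)$ over all path lengths. My first attempt will be to run the induction backward from $Y$, conditioning on all gate noises outside the current frontier. Each frontier step multiplies the frontier size by at most $K_{\mathrm{fan}}$ and contracts the information by $(1-2\delta)^2$. Only the first $d_i$ layers are guaranteed to be all-noisy, and the contraction is simply stopped once the frontier reaches $X_i$. If the exact constant proves elusive, I will invoke~\cite{EvansSchulman99} for this step, since the later converse only needs the geometric decay.
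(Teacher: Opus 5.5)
The paper gives no proof of this statement; it imports it from~\cite{EvansSchulman99}. Your plan cannot prove the statement as written, and the failure shows up in your counting step. The claim ``backward from $Y$, the number of depth-$\ell$ ancestors is at most $K_{\mathrm{fan}}^{\ell}$'' holds only when $Y$ is a function of a single gate output. The statement explicitly allows $Y$ to depend on many gate outputs, and then the bound is false.

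Here is a counterexample. Let $r$ separate gates each compute the identity of a uniform input $X_1$ and flip it with probability $\delta$, and let $Y=(Z_1,\dots,Z_r)$. Then $d_1=1$, and the claimed bound is $\min\{1,K_{\mathrm{fan}}(1-2\delta)^2\}$. This is strictly below $1$ once $\delta>\tfrac12-\tfrac{1}{2\sqrt{K_{\mathrm{fan}}}}$, yet $I(X_1;Y)\to 1$ as $r\to\infty$ by majority decoding.

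Your starting cut $\{X_i\}$ with potential $1$ also tacitly assumes $X_i$ is independent of the other inputs. Suppose $X_2=X_1$ feeds the output gate directly (with $g(a,b)=a$), while $X_1$ reaches that gate only through a chain of $L$ noisy gates. Then $d_1=L+1$, but $I(X_1;Y)=1-\hbin{\delta}$ for every $L$.

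What is actually provable, and what the Evans--Schulman path-sum bound gives, is the version in which $X_i$ is independent of $X_{-i}$ and $Y$ is a function of one gate output. If $Y$ depends on a set $S$ of gate outputs, you only get $\min\{1,|S|\beta^{d_i}\}H(X_i)$, or the sum over paths into $S$. The paper's later use in Lemma~\ref{lem:noisy_gate_compute_cut}, with multi-bit $\hat X^T$ and correlated $U_i$, falls outside this scope.

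Within the corrected scope, both obstacles you flag have standard fixes, and your idea of stopping after $d_i$ layers is the right one. First condition on $X_{-i}$; this is legitimate by independence, since $I(X_i;Y)\le I(X_i;Y\mid X_{-i})$. Then replace subadditivity by the chain rule in topological order. Take a node set $S$ and let $v$ be its topologically last element. The flip $N_v$ is independent of $(X_i,Z_{\mathrm{pa}(v)},Z_{S\setminus v})$, so the conditional SDPI gives $I(X_i;Z_v\mid Z_{S\setminus v})\le\eta\, I(X_i;Z_{\mathrm{pa}(v)}\mid Z_{S\setminus v})$ with $\eta=(1-2\delta)^2$. Hence
\[
  I(X_i;Z_S)\le(1-\eta)\,I(X_i;Z_{S\setminus v})+\eta\,I\big(X_i;Z_{(S\setminus v)\cup\mathrm{pa}(v)}\big).
\]

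This is exactly the recursion satisfied by $\pi(S)$, the probability that some node of $S$ is joined to $X_i$ by a directed path of open gates, each gate open independently with probability $\eta$. Induction then gives $I(X_i;Z_S)\le\pi(S)H(X_i)$, and Evans--Schulman's path sum is the union bound on $\pi$. This per-node bookkeeping is what is needed, not a joint contraction of a whole frontier, which fails (see the $r$-copies example).

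For a single output gate $s$, every open path from $X_i$ to $s$ ends in an all-open backward path of exactly $d_i$ gates from $s$. There are at most $K_{\mathrm{fan}}^{d_i}$ such paths, so $\pi(\{s\})\le\beta^{d_i}$ with no $1/(1-\beta)$ loss. Kraft's inequality on the prefix-free set of backward paths ending at $X_i$ gives the same result.

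Finally, drop the idea of conditioning on gate noises. Conditioning on the flip of any gate between $X_i$ and the frontier makes that gate deterministic and forfeits its $(1-2\delta)^2$ factor.
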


Viewed through our framework, bound~\eqref{eq:evans_schulman_decay} provides a closure-specific effective compute budget: in high-noise or large-depth regimes, maintaining $O(1)$ bits of task-relevant information at the output may require either bounded depth or exponential fan-out/redundancy, so a budget of $m$ noisy gates need not translate into $m\Cgate$ usable bits. We therefore interpret $m\Cgate$ in this section as a best-case first-order surrogate, with the propagation bound supplying the depth-dependent correction.

\paragraph{From single-input decay to a compute cut}
We next translate \eqref{eq:evans_schulman_decay} into an upper bound on the \emph{total} mutual information that can flow from the circuit input to its output.
This yields a closure-specific depth-dependent compute cut that can be inserted into the same supply--demand template as Theorem~\ref{thm:supply_demand_converse}.

\begin{lemma}[Closure-specific depth-dependent compute-information bound for noisy-gate circuits]
\label{lem:noisy_gate_compute_cut}
Let $U^q=(U_1,\dots,U_q)$ be a binary vector obtained from the channel output $R^{nT}$ (possibly using receiver-side private randomness).
Suppose the receiver feeds $U^q$ as the \emph{only} external input to a $K_{\mathrm{fan}}$-fan-in noisy-gate circuit with gate flip probability $\delta$, where all gate flips are independent of $(R^{nT},U^q)$, and outputs $\hat X^T$ as a measurable function of the noisy gate outputs.
For each input bit $U_i$, let $d_i$ denote the length of the shortest directed path (in number of noisy gates) from $U_i$ to any gate output on which $\hat X^T$ depends.
Then
\begin{align}
  I(R^{nT};\hat X^T)
  &\le I(U^q;\hat X^T) \label{eq:noisy_gate_cut_dp}\\
  &= \sum_{i=1}^q I(U_i;\hat X^T \mid U^{i-1}) \label{eq:noisy_gate_cut_sum}\\
  &\le \sum_{i=1}^q \min\!\Big\{1,\ \big(K_{\mathrm{fan}}(1-2\delta)^2\big)^{d_i}\Big\} H(U_i \mid U^{i-1})\\
  &\le \sum_{i=1}^q \min\!\Big\{1,\ \big(K_{\mathrm{fan}}(1-2\delta)^2\big)^{d_i}\Big\} H(U_i).
  \label{eq:noisy_gate_cut_final}
\end{align}
In particular, if $H(U_i)\le 1$ and $d_i\ge d_{\mathrm{logic}}$ for all $i$, then
\begin{equation}
  I(R^{nT};\hat X^T) \le q\,\min\!\Big\{1,\ \big(K_{\mathrm{fan}}(1-2\delta)^2\big)^{d_{\mathrm{logic}}}\Big\}.
  \label{eq:noisy_gate_cut_simplified}
\end{equation}
If moreover $q\le mT$ (e.g., each input bit must be latched/materialized by a noisy input register counted in the $mT$-gate budget), then
\begin{equation}
  \frac{1}{T}I(R^{nT};\hat X^T)\ \le\ m\,\min\!\Big\{1,\ \big(K_{\mathrm{fan}}(1-2\delta)^2\big)^{d_{\mathrm{logic}}}\Big\}.
  \label{eq:noisy_gate_cut_per_sample}
\end{equation}
\end{lemma}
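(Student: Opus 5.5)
The plan is to prove the chain \eqref{eq:noisy_gate_cut_dp}--\eqref{eq:noisy_gate_cut_final} line by line, and then read off the two specializations.

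\emph{Step 1 (data processing).} I would first establish the Markov chain $R^{nT}\to U^q\to \hat X^T$. By hypothesis, $U^q$ is the only external input to the circuit. The gate flips are independent of $(R^{nT},U^q)$, and any receiver-private randomness used to form $U^q$ enters only through $U^q$. So, conditioned on $U^q$, the output $\hat X^T$ is a function of $U^q$ and of the flip variables, which are independent of $R^{nT}$. Data processing then gives \eqref{eq:noisy_gate_cut_dp}, and \eqref{eq:noisy_gate_cut_sum} is the chain rule.

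\emph{Step 2 (per-term bound, the main obstacle).} For each $i$, I would bound $I(U_i;\hat X^T\mid U^{i-1})$ by fixing $U^{i-1}=u^{i-1}$. Under this fixing, the earlier inputs become hard-wired constants, so the circuit is a $K_{\mathrm{fan}}$-fan-in noisy-gate circuit in the remaining inputs, with the same independent BSC$(\delta)$ gate noise. Hard-wiring inputs removes only input vertices, not gates, so the shortest path from $U_i$ to any gate on which $\hat X^T$ depends still has length at least $d_i$. I would then apply Theorem~\ref{thm:evans_schulman_bound} to the conditional law $P_{U_i^q\mid U^{i-1}=u^{i-1}}$ with input $U_i$. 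This gives $I(U_i;\hat X^T\mid U^{i-1}=u^{i-1})\le \min\{1,\beta^{d_i}\}H(U_i\mid U^{i-1}=u^{i-1})$, where $\beta\triangleq K_{\mathrm{fan}}(1-2\delta)^2$. Averaging over $u^{i-1}$ yields the third line of the display.

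The delicate point is that the original Evans--Schulman argument is proved for independent input bits. Under the conditional law, however, $U_i$ may still be correlated with $U_{i+1}^q$. My first attempt would use the form of Theorem~\ref{thm:evans_schulman_bound} exactly as stated in the paper, which makes no independence assumption on the other inputs. If independence turns out to be needed, I would fall back on a sequential argument: bound $I(U_i;\hat X^T\mid U^{i-1})$ by $I(U_i;\hat X^T, U_{i+1}^q\mid U^{i-1})$, and track the extra correlation term separately. Alternatively, I would restrict the lemma to conditionally independent inputs. This is where I expect the real work, or a hidden hypothesis, to sit.

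\emph{Step 3 (simplifications).} The last inequality \eqref{eq:noisy_gate_cut_final} follows because conditioning reduces entropy, $H(U_i\mid U^{i-1})\le H(U_i)$, and the coefficient $\min\{1,\beta^{d_i}\}$ is nonnegative. For \eqref{eq:noisy_gate_cut_simplified}, I would use that $d\mapsto\min\{1,\beta^{d}\}$ is nonincreasing in $d$ when $\beta<1$ and identically $1$ when $\beta\ge 1$. Hence $d_i\ge d_{\mathrm{logic}}$ lets each coefficient be replaced by $\min\{1,\beta^{d_{\mathrm{logic}}}\}$, and $H(U_i)\le 1$ bounds each summand, giving $q\min\{1,\beta^{d_{\mathrm{logic}}}\}$. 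Finally, \eqref{eq:noisy_gate_cut_per_sample} follows by substituting $q\le mT$ and dividing by $T$.
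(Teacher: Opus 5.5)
Your Steps 1--3 reproduce the paper's proof essentially verbatim. The shared steps are: the Markov chain $R^{nT}\to U^q\to\hat X^T$ with data processing; the chain rule; conditioning on $U^{i-1}=u^{i-1}$ and invoking Theorem~\ref{thm:evans_schulman_bound} for the conditional law; averaging; and finally $H(U_i\mid U^{i-1})\le H(U_i)$ with monotonicity of $d\mapsto\min\{1,\beta^{d}\}$. The paper takes exactly your ``first attempt'' in Step~2: it applies the theorem as stated, with no independence hypothesis on the inputs, and never discusses the issue you raise.

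That issue is a genuine gap, and the paper's argument has it too. The Evans--Schulman bound concerns an input that is independent of the remaining inputs, so that those can be hard-wired by conditioning. Under the law conditioned on $U^{i-1}$, however, $U_i$ may still be correlated with $U_{i+1}^q$, and those bits may reach the output along much shorter paths.

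Here is a concrete counterexample. Take $q=2$, $K_{\mathrm{fan}}=1$, $U_1$ uniform and $U_2=U_1$. Let $U_1$ drive a chain of $L$ noisy identity gates ending in $G_L$, let $U_2$ drive a single noisy identity gate $G_A=U_2\oplus N_A$, and set $\hat X^T=(G_A,G_L)$, so $d_1=L$ and $d_2=1$. Then $I(U^2;\hat X^T)\ge I(U_2;G_A)=1-\hbin{\delta}>0$. But the third line of the display equals $\beta^{L}H(U_1)+\beta\,H(U_2\mid U_1)=\beta^{L}$ with $\beta=(1-2\delta)^2<1$, which is smaller once $L$ is large. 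So that inequality fails for correlated inputs, and so does Theorem~\ref{thm:evans_schulman_bound} itself when read without an independence assumption: apply it to $U_1$ in the same circuit.

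The clean repair is to assume $U_1,\dots,U_q$ are mutually independent. Then, given $U^{i-1}=u^{i-1}$, the remaining inputs are independent, your Step~2 goes through, and the third and fourth lines coincide.

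Your ``track the correlation term'' fallback does not recover the stated bound. Writing $U_{-i}\triangleq(U^{i-1},U_{i+1}^q)$, it yields $I(U^q;\hat X^T)\le\sum_{i}I(U_i;U_{i+1}^q\mid U^{i-1})+\sum_i\min\{1,\beta^{d_i}\}H(U_i\mid U_{-i})$. The first sum is the dual total correlation of $U^q$, and it does not decay with depth. In Theorem~\ref{thm:supply_demand_noisy_gate} the bits $U_i$ are computed from $R^{nT}$ and will typically be correlated, so the missing hypothesis is not cosmetic.
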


\begin{proof}
By construction, the receiver's noisy logic uses the channel output only through the interface bits $U^q$ (and independent gate-flip noise), so $R^{nT}\to U^q\to \hat X^T$ forms a Markov chain; the data-processing inequality yields \eqref{eq:noisy_gate_cut_dp}.
The chain rule gives \eqref{eq:noisy_gate_cut_sum}.
Fix $i\in\{1,\dots,q\}$ and condition on $U^{i-1}=u^{i-1}$.
With these upstream bits fixed, the induced conditional law still corresponds to a $K_{\mathrm{fan}}$-fan-in noisy-gate circuit in which the shortest directed path from $U_i$ to any gate output on which $\hat X^T$ depends has length $d_i$.
Applying Theorem~\ref{thm:evans_schulman_bound} under this conditional law gives
\begin{equation*}
  I(U_i;\hat X^T\mid U^{i-1}=u^{i-1})
  \le
  \min\!\Big\{1,\ \big(K_{\mathrm{fan}}(1-2\delta)^2\big)^{d_i}\Big\}\,H(U_i\mid U^{i-1}=u^{i-1}).
\end{equation*}
Averaging over $U^{i-1}$ yields the third line in the display, and the final line uses $H(U_i\mid U^{i-1})\le H(U_i)$.
The simplified bounds \eqref{eq:noisy_gate_cut_simplified} and \eqref{eq:noisy_gate_cut_per_sample} are immediate.
\end{proof}

\begin{theorem}[Alternative depth-dependent converse under the noisy-logic closure]
\label{thm:supply_demand_noisy_gate}
Consider the task-direct architecture \eqref{eq:task_direct_arch}, but suppose that the \emph{entire} receiver computation (including control logic) is implemented as a $K_{\mathrm{fan}}$-fan-in noisy-gate circuit with gate flip probability $\delta$.
Let $C_{\mathrm{gate}}(\delta)\triangleq 1-\hbin{\delta}$ denote the Shannon capacity (bits/use) of a $\mathrm{BSC}(\delta)$ noisy gate.
Assume that over a task block of length $T$, the receiver uses at most $mT$ noisy gates in total, and injects at most $q\le mT$ channel-dependent input bits into the noisy circuit, where each such bit must traverse at least $d_{\mathrm{logic}}$ noisy gates to influence the final reconstruction $\hat X^T$.
If a scheme achieves average distortion at most $D$, then
\begin{equation}
  R_{X|Y}(D)\ \le\ \min\Big\{\, n\Cch,\ m\,C_{\mathrm{gate}}(\delta),\ m\,\min\!\big\{1,\ \big(K_{\mathrm{fan}}(1-2\delta)^2\big)^{d_{\mathrm{logic}}}\big\}\Big\}.
  \label{eq:main_converse_noisy_gate}
\end{equation}
\end{theorem}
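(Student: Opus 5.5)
The plan follows the template of Theorem~\ref{thm:supply_demand_converse}. Lemma~\ref{lem:ird_block_lower_bound} gives the demand side, $R_{X|Y}(D)\le \frac{1}{T}I(Y^T;\hat X^T)$. It therefore suffices to upper bound $\frac1T I(Y^T;\hat X^T)$ by each of the three terms in \eqref{eq:main_converse_noisy_gate} and take the minimum. Throughout I will use the Markov chain $Y^T\to S^{nT}\to R^{nT}\to \hat X^T$. It holds because the receiver circuit sees $Y^T$ only through $R^{nT}$, and its gate flips and private randomness are independent of $(Y^T,R^{nT})$.

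\emph{Communication cut.} This step is unchanged: $I(Y^T;\hat X^T)\le I(S^{nT};R^{nT})\le nT\Cch$, by data processing and the memoryless-channel converse.

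\emph{Depth cut.} Let $U^q$ be the channel-dependent input bits injected into the circuit. By hypothesis these are the only route by which $R^{nT}$ enters the noisy logic, each is binary so $H(U_i)\le 1$, each satisfies $d_i\ge d_{\mathrm{logic}}$, and $q\le mT$. These are exactly the hypotheses of Lemma~\ref{lem:noisy_gate_compute_cut}, and \eqref{eq:noisy_gate_cut_per_sample} gives $\frac1T I(R^{nT};\hat X^T)\le m\min\{1,\beta^{d_{\mathrm{logic}}}\}$, with $\beta=K_{\mathrm{fan}}(1-2\delta)^2$. Combining this with $I(Y^T;\hat X^T)\le I(R^{nT};\hat X^T)$ yields the third term.

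\emph{Capacity cut.} For the middle term I will cast the circuit as an interactive noisy-primitive module and apply Lemma~\ref{lem:compute_info_bound}. Order the at most $mT$ gates topologically. Gate $i$ has intended (pre-flip) output $A_i$, a deterministic function of its inputs, and these inputs are either injected bits $U_j$ (functions of $R^{nT}$ and private randomness) or earlier noisy outputs $Z^{i-1}$. The noisy output is $Z_i=A_i\oplus N_i$ with $N_i$ independent of everything before it. So $A_i$ is a randomized function of $(R^{nT},Z^{i-1})$, and conditioned on any $Z^{i-1}=z^{i-1}$ the map $A_i\mapsto Z_i$ is a BSC$(\delta)$. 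Hence \eqref{eq:cond_capacity_bound} holds with capacity $C_{\mathrm{gate}}(\delta)$. Since $\hat X^T$ is a function of the noisy gate outputs, Lemma~\ref{lem:compute_info_bound} applied with $U=R^{nT}$ gives $I(R^{nT};\hat X^T)\le mT\,C_{\mathrm{gate}}(\delta)$.

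\emph{Main obstacle.} The delicate point is the no-bypass requirement in the capacity cut: the output $\hat X^T$ must depend on $R^{nT}$ only through noisy gate outputs. If it could also read the raw injected bits $U_j$ directly, this cut would fail. I will argue the requirement holds because in this closure the entire receiver, including output formation, is noisy logic, so $\hat X^T$ is a function of gate outputs only. Any raw input that is used is consumed through gates, and the lemma's depth hypothesis $d_i\ge d_{\mathrm{logic}}\ge 1$ already forces each injected bit to pass through at least one noisy gate. A second, minor point is that conditioning on $Z^{i-1}$ must not disturb the independence of $N_i$. This follows from the fresh-noise assumption, exactly as in the proof of Theorem~\ref{thm:dag_cutset}. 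Taking the minimum of the three cuts and combining with Lemma~\ref{lem:ird_block_lower_bound} gives \eqref{eq:main_converse_noisy_gate}.
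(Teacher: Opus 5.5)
Your proposal is correct and follows the paper's proof essentially step for step. It uses the same communication cut, the same capacity cut obtained by topologically ordering the $mT$ gates and running the chain-rule argument of Lemma~\ref{lem:compute_info_bound}, and the same depth cut from Lemma~\ref{lem:noisy_gate_compute_cut}, all combined with Lemma~\ref{lem:ird_block_lower_bound}; your explicit handling of the no-bypass requirement and of fresh-noise independence simply spells out details the paper leaves implicit.
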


\begin{remark}[Closure scope and interpretation]
Theorem~\ref{thm:supply_demand_noisy_gate} belongs to the separate noisy-logic closure; if the receiver has a protected control plane or other reliable islands, then the baseline cut $m\Cgate$ in Theorem~\ref{thm:supply_demand_converse} (or the heterogeneous/bypass plug-ins in Lemmas~\ref{lem:compute_info_bound_bypass} and~\ref{lem:compute_info_bound_hetero}) is the more appropriate model.
\end{remark}

\begin{proof}
The communication cut is identical to Theorem~\ref{thm:supply_demand_converse}.
For the compute cut, there are two independent upper bounds on $I(R^{nT};\hat X^T)$.
First, order the $mT$ noisy gates topologically and apply the same chain-rule argument as in Lemma~\ref{lem:compute_info_bound} to the $\mathrm{BSC}(\delta)$ gates to obtain
$I(R^{nT};\hat X^T)\le mT\,C_{\mathrm{gate}}(\delta)$.
Second, Lemma~\ref{lem:noisy_gate_compute_cut} yields
$I(R^{nT};\hat X^T)\le mT\,\min\{1,(K_{\mathrm{fan}}(1-2\delta)^2)^{d_{\mathrm{logic}}}\}$ under the stated depth assumption.
Combining these bounds with Lemma~\ref{lem:ird_block_lower_bound} gives \eqref{eq:main_converse_noisy_gate}.
\end{proof}

\paragraph{Interpretive note}
The converse \eqref{eq:main_converse_noisy_gate} is conservative rather than matched: fault-tolerant constructions in the spirit of von Neumann~\cite{vonNeumann56} can, below suitable noise thresholds, induce an architecture-dependent effective reliable supply, but a corresponding achievability characterization for this noisy-logic closure remains open.

\paragraph{Closure-specific hard-separation plug-in}
The previous theorem is task-direct. Under the same noisy-logic closure, imposing a committed hard-separation interface yields the analogous stagewise replacement on the two serial cuts.

\begin{corollary}[Closure-specific hard-separation plug-in under the noisy-logic closure]
Consider the $T$-block hard-separation architecture \eqref{eq:block_markov_hard_sep}.
Suppose there exist binary interface vectors $U_{\mathrm{dec}}^{q_{\mathrm{dec}}}$ and $U_{\mathrm{task}}^{q_{\mathrm{task}}}$ such that
\[
  R^{nT} \to U_{\mathrm{dec}}^{q_{\mathrm{dec}}} \to \hat B^T,
  \qquad
  \hat B^T \to U_{\mathrm{task}}^{q_{\mathrm{task}}} \to \hat X^T
\]
under the decoder stage and the downstream task stage, respectively.
Suppose further that these two stages are each implemented by $K_{\mathrm{fan}}$-fan-in noisy-gate circuits with gate flip probability $\delta$.
Assume that over the block the decoder uses at most $m_{\mathrm{dec}}T$ noisy gates, injects at most $q_{\mathrm{dec}}\le m_{\mathrm{dec}}T$ binary input bits into the noisy circuit, and has depth at least $d_{\mathrm{dec}}$ from those injected bits to $\hat B^T$.
Assume likewise that the task stage uses at most $m_{\mathrm{task}}T$ noisy gates, injects at most $q_{\mathrm{task}}\le m_{\mathrm{task}}T$ binary input bits into the noisy circuit, and has depth at least $d_{\mathrm{task}}$ from those injected bits to $\hat X^T$.
Then any scheme satisfies
\begin{equation}
  \frac{1}{T}I(Y^T;\hat X^T)
  \ \le\
  \min\big\{\, n\Cch,\ S_{\mathrm{dec}}^{(\mathrm{logic})},\ S_{\mathrm{task}}^{(\mathrm{logic})}\big\},
  \label{eq:interface_tax_noisy_gate}
\end{equation}
where $C_{\mathrm{gate}}(\delta)$ is as in Theorem~\ref{thm:supply_demand_noisy_gate}, and
\begin{align}
  S_{\mathrm{dec}}^{(\mathrm{logic})}
  &\triangleq
  \min\Big\{\, m_{\mathrm{dec}}\,C_{\mathrm{gate}}(\delta),\ m_{\mathrm{dec}}\,\min\!\big\{1,\big(K_{\mathrm{fan}}(1-2\delta)^2\big)^{d_{\mathrm{dec}}}\big\}\Big\},\\
  S_{\mathrm{task}}^{(\mathrm{logic})}
  &\triangleq
  \min\Big\{\, m_{\mathrm{task}}\,C_{\mathrm{gate}}(\delta),\ m_{\mathrm{task}}\,\min\!\big\{1,\big(K_{\mathrm{fan}}(1-2\delta)^2\big)^{d_{\mathrm{task}}}\big\}\Big\}.
\end{align}
\end{corollary}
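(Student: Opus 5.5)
The proof runs the two-cut template of Theorem~\ref{thm:interface_tax}, replacing each stage bound $m_{\mathrm{stage}}T\,\Cgate$ with the closure-specific noisy-logic supply. There are three separate upper bounds on $I(Y^T;\hat X^T)$, and we take their minimum.

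\emph{Communication cut.} Because $Y^T\to S^{nT}\to R^{nT}\to\hat X^T$, data processing and the memoryless-channel converse give $I(Y^T;\hat X^T)\le nT\Cch$, exactly as in Theorem~\ref{thm:supply_demand_converse}.

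\emph{Decoder cut.} The hard-separation chain \eqref{eq:block_markov_hard_sep} gives
\[
  I(Y^T;\hat X^T)\le I(Y^T;\hat B^T)\le I(R^{nT};\hat B^T).
\]
The decoder stage is a noisy-gate circuit whose only channel-dependent external input is $U_{\mathrm{dec}}^{q_{\mathrm{dec}}}$, with output $\hat B^T$. We apply the two bounds from the proof of Theorem~\ref{thm:supply_demand_noisy_gate} with $\hat B^T$ in place of $\hat X^T$.

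First, order the at most $m_{\mathrm{dec}}T$ gates topologically and run the chain-rule argument of Lemma~\ref{lem:compute_info_bound}. Each $\mathrm{BSC}(\delta)$ gate output passes at most $C_{\mathrm{gate}}(\delta)$ bits conditionally on its past. This gives
\[
  I(R^{nT};\hat B^T)\le m_{\mathrm{dec}}T\,C_{\mathrm{gate}}(\delta).
\]
Second, Lemma~\ref{lem:noisy_gate_compute_cut} applies with $q=q_{\mathrm{dec}}\le m_{\mathrm{dec}}T$, $d_{\mathrm{logic}}=d_{\mathrm{dec}}$, and binary inputs, so $H(U_i)\le 1$. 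By \eqref{eq:noisy_gate_cut_per_sample},
\[
  I(R^{nT};\hat B^T)\le m_{\mathrm{dec}}T\min\{1,(K_{\mathrm{fan}}(1-2\delta)^2)^{d_{\mathrm{dec}}}\}.
\]
Together these give $\tfrac1T I(Y^T;\hat X^T)\le S_{\mathrm{dec}}^{(\mathrm{logic})}$.

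\emph{Task cut.} Data processing gives $I(Y^T;\hat X^T)\le I(\hat B^T;\hat X^T)$. We repeat the same two bounds on the task-stage circuit, now with input $\hat B^T$, interface bits $U_{\mathrm{task}}^{q_{\mathrm{task}}}$, budget $m_{\mathrm{task}}T$, and depth $d_{\mathrm{task}}$. Lemma~\ref{lem:noisy_gate_compute_cut} is stated with $R^{nT}$ as the upstream variable, but its proof uses only the Markov chain $R^{nT}\to U^q\to\hat X^T$ and the independence of the gate flips. It therefore applies verbatim with $\hat B^T$ in that role, given the assumed chain $\hat B^T\to U_{\mathrm{task}}^{q_{\mathrm{task}}}\to\hat X^T$. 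This yields $\tfrac1T I(Y^T;\hat X^T)\le S_{\mathrm{task}}^{(\mathrm{logic})}$. Taking the minimum of the three cuts gives \eqref{eq:interface_tax_noisy_gate}.

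\emph{Main obstacle.} The delicate step is justifying the gate-level chain-rule bound $m_{\mathrm{stage}}T\,C_{\mathrm{gate}}(\delta)$ for a whole circuit, since a gate's input is a function of earlier noisy gate outputs together with the external input. The plan is to treat each gate as one primitive use in Lemma~\ref{lem:compute_info_bound}, whose intended input is a deterministic function of the external input and the earlier outputs $Z^{i-1}$. The conditional-capacity hypothesis \eqref{eq:cond_capacity_bound} then holds with $\Cgate=C_{\mathrm{gate}}(\delta)$ because the flips are fresh.

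The remaining issue is that the stage output must depend on the external input only through gate outputs, i.e.\ there is no wire carrying an external input bit directly to the output. This no-bypass property is implicit in the depth assumption $d_{\mathrm{stage}}\ge 1$. Under the committed-interface reading it is exactly what the theorem's proof already assumes, so the corollary inherits it.
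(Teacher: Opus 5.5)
Your proposal is correct and follows the paper's proof essentially step for step. The paper uses the same three ingredients: the communication cut; the two hard-separation data-processing cuts $I(Y^T;\hat X^T)\le I(R^{nT};\hat B^T)$ and $I(Y^T;\hat X^T)\le I(\hat B^T;\hat X^T)$; and, for each stage, the topologically ordered per-gate capacity bound together with Lemma~\ref{lem:noisy_gate_compute_cut} applied to that stage's interface bits, with $\hat B^T$ taking the upstream role for the task stage.

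One small correction: the requirement that each stage output be a function of the noisy gate outputs only is a standing hypothesis of the noisy-gate model, as in Lemma~\ref{lem:noisy_gate_compute_cut}. It is not implied by the depth assumption, which the corollary does not even require to be at least $1$.
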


\begin{proof}
As in Theorem~\ref{thm:interface_tax}, data processing under \eqref{eq:block_markov_hard_sep} yields
$I(Y^T;\hat X^T)\le I(R^{nT};\hat B^T)$ and $I(Y^T;\hat X^T)\le I(\hat B^T;\hat X^T)$, and the communication cut gives $I(Y^T;\hat X^T)\le nT\Cch$.
For each stage, order its noisy gates topologically to obtain the per-gate capacity bound, exactly as in the proof of Theorem~\ref{thm:supply_demand_noisy_gate}.
For the decoder stage, apply Lemma~\ref{lem:noisy_gate_compute_cut} with stage input $U_{\mathrm{dec}}^{q_{\mathrm{dec}}}$, $q=q_{\mathrm{dec}}$, and $d_{\mathrm{logic}}=d_{\mathrm{dec}}$.
For the task stage, apply Lemma~\ref{lem:noisy_gate_compute_cut} with stage input $U_{\mathrm{task}}^{q_{\mathrm{task}}}$, $q=q_{\mathrm{task}}$, and $d_{\mathrm{logic}}=d_{\mathrm{task}}$.
Dividing by $T$ and taking the minimum of the resulting bounds yields \eqref{eq:interface_tax_noisy_gate}.
\end{proof}

\subsection{Making Depth $d_{\mathrm{logic}}$ a Countable Resource: Latency, Iterations, and Size--Depth Tradeoffs}
\label{subsec:depth_mapping}
The depth-dependent compute cut in Theorem~\ref{thm:supply_demand_noisy_gate} introduces an additional parameter beyond the aggregate gate budget $m$: the minimum noisy-gate path length $d_{\mathrm{logic}}$ from channel-dependent input bits to the final task output.
To make this parameter operational, we record a compact reformulation and several common interpretations that map $d_{\mathrm{logic}}$ to familiar algorithmic or architectural resources.

\paragraph{An effective per-gate information supply under noisy logic}
Define the \emph{effective} per-gate information supply
\begin{equation}
  C_{\mathrm{logic}}(d_{\mathrm{logic}},\delta,K_{\mathrm{fan}})
  \triangleq
  \min\Big\{\, C_{\mathrm{gate}}(\delta),\ \min\!\big\{1,\ \big(K_{\mathrm{fan}}(1-2\delta)^2\big)^{d_{\mathrm{logic}}}\big\}\Big\}.
  \label{eq:C_logic_def}
\end{equation}
\paragraph*{Remark} In this section we reserve $d_{\mathrm{logic}}$ for circuit depth (number of noisy-gate layers), while $d(\cdot,\cdot)$ continues to denote the task distortion function.

\paragraph*{How to read $C_{\mathrm{logic}}$} Let $\beta\triangleq K_{\mathrm{fan}}(1-2\delta)^2$.
\begin{itemize}
\item \textbf{$\beta>1$:} information propagation does not decay with depth (it saturates), so $C_{\mathrm{logic}}(d_{\mathrm{logic}},\delta,K_{\mathrm{fan}})=C_{\mathrm{gate}}(\delta)$ and the per-gate capacity cut is typically the active constraint.
\item \textbf{$\beta<1$:} information propagation decays geometrically as $\beta^{d_{\mathrm{logic}}}$, so $C_{\mathrm{logic}}(d_{\mathrm{logic}},\delta,K_{\mathrm{fan}})=\min\{C_{\mathrm{gate}}(\delta),\beta^{d_{\mathrm{logic}}}\}$ and circuit depth becomes a key resource.
\end{itemize}

Then the noisy-logic converse \eqref{eq:main_converse_noisy_gate} can be written more compactly as
\begin{equation}
  R_{X|Y}(D)\ \le\ \min\big\{\, n\Cch,\ m\,C_{\mathrm{logic}}(d_{\mathrm{logic}},\delta,K_{\mathrm{fan}})\big\}.
  \label{eq:main_converse_noisy_gate_compact}
\end{equation}
In words: under noisy logic, the $m$-gate budget behaves as if each noisy gate contributed at most $C_{\mathrm{logic}}$ usable task bits (rather than $C_{\mathrm{gate}}(\delta)$).

\begin{corollary}[Closure-specific size--depth requirement under the noisy-logic closure]
\label{cor:size_depth_requirement}
Fix $r>0$ and consider any task-direct design under the noisy-gate model of Theorem~\ref{thm:supply_demand_noisy_gate}.
If achieving distortion $D$ requires (or is certified by) a task-information demand of at least $r$ bits/sample, i.e., if $R_{X|Y}(D)\ge r$, then necessarily
\begin{equation}
  m\ \ge\ \frac{r}{C_{\mathrm{logic}}(d_{\mathrm{logic}},\delta,K_{\mathrm{fan}})}.
  \label{eq:size_depth_requirement}
\end{equation}
In the information-propagation regime $\beta=K_{\mathrm{fan}}(1-2\delta)^2<1$ where the propagation term dominates,
$C_{\mathrm{logic}}(d_{\mathrm{logic}},\delta,K_{\mathrm{fan}})=\min\{C_{\mathrm{gate}}(\delta),\,\beta^{d_{\mathrm{logic}}}\}$ and sustaining a constant $r$ as $d_{\mathrm{logic}}$ grows requires
\begin{equation}
  m\ =\ \Omega(\beta^{-d_{\mathrm{logic}}}),
\end{equation}
i.e., an \emph{exponential} growth of the noisy-gate budget with depth.
\end{corollary}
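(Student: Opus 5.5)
The plan is to read the corollary directly off the compact noisy-logic converse \eqref{eq:main_converse_noisy_gate_compact}, which is Theorem~\ref{thm:supply_demand_noisy_gate} rewritten through the definition \eqref{eq:C_logic_def}. Any task-direct design under that noisy-gate model that attains distortion $D$ satisfies
\[
  R_{X|Y}(D)\le \min\{n\Cch,\ m\,C_{\mathrm{logic}}(d_{\mathrm{logic}},\delta,K_{\mathrm{fan}})\}\le m\,C_{\mathrm{logic}}(d_{\mathrm{logic}},\delta,K_{\mathrm{fan}}).
\]
Combining this with the hypothesis $R_{X|Y}(D)\ge r>0$ gives $r\le m\,C_{\mathrm{logic}}$.

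Next I will check that $C_{\mathrm{logic}}>0$ before dividing. Since $\delta\in(0,\tfrac12)$, we have $C_{\mathrm{gate}}(\delta)=1-\hbin{\delta}>0$. Also $\beta=K_{\mathrm{fan}}(1-2\delta)^2>0$, so $\beta^{d_{\mathrm{logic}}}>0$. Hence $C_{\mathrm{logic}}$ is a minimum of strictly positive quantities, and dividing yields \eqref{eq:size_depth_requirement}.

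For the asymptotic claim, I specialize to $\beta<1$. Then $\min\{1,\beta^{d_{\mathrm{logic}}}\}=\beta^{d_{\mathrm{logic}}}$, so $C_{\mathrm{logic}}=\min\{C_{\mathrm{gate}}(\delta),\beta^{d_{\mathrm{logic}}}\}\le\beta^{d_{\mathrm{logic}}}$. Substituting into \eqref{eq:size_depth_requirement} gives
\[
  m\ge r\,\beta^{-d_{\mathrm{logic}}}.
\]
With $r$, $\delta$ and $K_{\mathrm{fan}}$ held fixed, this is $m=\Omega(\beta^{-d_{\mathrm{logic}}})$ as $d_{\mathrm{logic}}\to\infty$, and $\beta^{-1}>1$ makes the growth exponential. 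The upper bound $C_{\mathrm{logic}}\le\beta^{d_{\mathrm{logic}}}$ holds whenever $\beta<1$, so the bound does not depend on the phrase ``where the propagation term dominates.''

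The only real obstacle is interpretive rather than technical. The bound is meaningful only as a requirement on designs whose minimum channel-to-output path length really is at least $d_{\mathrm{logic}}$, as assumed in Theorem~\ref{thm:supply_demand_noisy_gate} together with $q\le mT$. I will state explicitly that the corollary inherits these hypotheses. I will also note that ``requires'' means that any admissible design attaining $D$ must use at least that many gates, which is exactly the contrapositive of the converse.
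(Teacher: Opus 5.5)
Your proposal is correct and is essentially the paper's own argument: the paper gives no separate proof, treating the corollary as an immediate consequence of the compact converse \eqref{eq:main_converse_noisy_gate_compact}, and your chain $r\le R_{X|Y}(D)\le m\,C_{\mathrm{logic}}$, division by the strictly positive $C_{\mathrm{logic}}$, and the bound $C_{\mathrm{logic}}\le\beta^{d_{\mathrm{logic}}}$ for $\beta<1$ spell out exactly that step. Your positivity check is a correct addition, and so is your observation that $m\ge r\,\beta^{-d_{\mathrm{logic}}}$ does not need the ``propagation term dominates'' qualifier.
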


\paragraph{Latency and synchronous pipelines}
In a synchronous receiver pipeline, the computation admits a directed-acyclic-graph (DAG) interpretation unrolled over clock cycles.
If a channel-dependent bit is latched at cycle $0$ and must influence the final output only after $J$ sequential cycles of noisy combinational logic and registers, then its path length in noisy primitives is at least proportional to $J$.
Thus, in many implementations, the depth parameter $d_{\mathrm{logic}}$ in Theorem~\ref{thm:supply_demand_noisy_gate} can be interpreted as (or lower bounded by) a \emph{latency} measure.
This makes the depth-dependent term in \eqref{eq:main_converse_noisy_gate} particularly relevant for low-latency receivers: even when the total gate budget $m$ is large, deep pipelines can suffer a sharp decay in usable task information when $\beta<1$.

\paragraph{Iterative algorithms and unrolled computation graphs}
Many decoding and inference algorithms are iterative (e.g., belief propagation, turbo/LDPC decoding, iterative estimation, and unrolled neural networks).
Unrolling a $J$-iteration algorithm into a feedforward computation graph typically produces a depth proportional to $J$ from the initial observations to the final decision.
Accordingly, when the logic itself is noisy, Theorem~\ref{thm:supply_demand_noisy_gate} and Corollary~\ref{cor:size_depth_requirement} predict a strong tradeoff:
in the propagation-limited regime $\beta<1$, increasing the iteration count $J$ can require exponentially increasing redundancy (or reliable islands) to avoid an exponential decay of task-relevant mutual information.

\paragraph{Architectural implication}
The depth-dependent cut helps interpret why \emph{skip connections}, \emph{side-information interfaces}, and \emph{hybrid designs} (which provide later modules access to earlier representations) can be valuable on unreliable substrates:
such designs create shorter effective paths for critical task-relevant information, reducing the effective depth $d_{\mathrm{logic}}$ and mitigating propagation-induced losses (cf.~Proposition~\ref{prop:no_tax_soft_interface}).

\section{Examples}
\label{sec:examples}

\indent We close the main text with one discrete example and one scalar-Gaussian benchmark. The binary example shows that the committed-interface penalty is not a Gaussian artifact, while the scalar-Gaussian benchmark turns the same converse structure into closed-form distortion formulas. Fuller Fano-style classification bounds, additional Gaussian variants, noisy-logic/word-level plug-ins, uncoded analog baselines, and vector extensions are deferred to Appendices~\ref{app:section3_supp} and~\ref{app:gaussian_supp}.

\subsection{Binary Discrete Example}
\label{subsec:binary_example}
Consider a uniformly distributed $q$-bit label $J=(J_1,\dots,J_q)\in\{0,1\}^q$ to be recovered from an observation $Y$, and let $\hat J$ denote the receiver decision. Proposition~\ref{prop:fano_classification} in Appendix~\ref{app:section3_supp} gives the formal Fano-style statement; here we retain only the resulting supply implications:
\begin{equation}
  P_e \ \ge\ \max\!\left\{0,\ 1-\frac{\Rsup+1}{q}\right\}
  \label{eq:binary_example_taskdirect}
\end{equation}
under task-direct processing, while under hard-separation,
\begin{equation}
  P_e \ \ge\ \max\!\left\{0,\ 1-\frac{\min\{n\Cch,\ \frac{m}{2}\Cgate\}+1}{q}\right\}.
  \label{eq:binary_example_hardsep}
\end{equation}
In the compute-limited regime, the same target reliability therefore requires roughly twice as many vulnerable primitive uses once a committed intermediate interface is imposed. This discrete witness shows that the committed-interface penalty is not peculiar to Gaussian rate--distortion formulas.

\subsection{Scalar Gaussian Benchmark}
\label{subsec:scalar_gaussian_example}
We specialize to the scalar linear--Gaussian model
\begin{equation}
  X\sim \mathcal{N}(0,\sigma_x^2),\qquad Y=X+V,\quad V\sim\mathcal{N}(0,\sigma_v^2),
\end{equation}
i.i.d.\ across task instances (indexed by $t$), with quadratic distortion $d(x,\hat x)=(x-\hat x)^2$.
Define the conditional variance
\begin{equation}
  \sigma_{x|y}^2 \triangleq \mathrm{Var}(X\mid Y)=\frac{\sigma_x^2\sigma_v^2}{\sigma_x^2+\sigma_v^2}.
\end{equation}

The indirect rate--distortion function for this model is classical~\cite{BergerBook71,Witsenhausen80}.

\begin{corollary}[Closed-form distortion bound for scalar Gaussian]
\label{cor:gaussian_closed_form}
Under the scalar Gaussian model, any task-direct scheme (Architecture~(A)) must satisfy
\begin{equation}
  \boxed{
  D \ \ge\ \sigma_{x|y}^2 + \big(\sigma_x^2-\sigma_{x|y}^2\big)\,2^{-2\Rsup}.
  }
  \label{eq:gaussian_distortion_bound}
\end{equation}
where $\Rsup$ is defined in \eqref{eq:Rsupply_def}.
\end{corollary}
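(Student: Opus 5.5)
The plan is to combine the task-direct supply--demand converse of Theorem~\ref{thm:supply_demand_converse} with the classical closed form of the scalar Gaussian indirect rate--distortion function, and then invert.

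\textbf{Step 1: compute $R_{X|Y}(D)$.} Write $X=\mathbb{E}[X\mid Y]+(X-\mathbb{E}[X\mid Y])$. Here $\mathbb{E}[X\mid Y]=aY$ with $a=\sigma_x^2/(\sigma_x^2+\sigma_v^2)$, and the error is independent of $Y$ with variance $\sigma_{x|y}^2$. For any $\hat X$ that is conditionally independent of $X$ given $Y$, the induced distortion (Lemma~\ref{lem:induced_distortion_equivalence}) then splits as
\[
\mathbb{E}[(X-\hat X)^2]=\sigma_{x|y}^2+\mathbb{E}[(aY-\hat X)^2].
\]
So $R_{X|Y}(D)$ is the ordinary quadratic rate--distortion function of the Gaussian source $aY$. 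That source has variance $a^2(\sigma_x^2+\sigma_v^2)=\sigma_x^4/(\sigma_x^2+\sigma_v^2)=\sigma_x^2-\sigma_{x|y}^2$, and the target distortion is $D-\sigma_{x|y}^2$.

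Two reductions are needed here. First, since $aY$ is an invertible function of $Y$, we have $I(Y;\hat X)=I(aY;\hat X)$. Second, the Gaussian rate--distortion formula applies. Together these give
\[
R_{X|Y}(D)=\tfrac12\log_2^+\frac{\sigma_x^2-\sigma_{x|y}^2}{D-\sigma_{x|y}^2}
\quad\text{for } D>\sigma_{x|y}^2 ,
\]
with $R_{X|Y}(D)=\infty$ (infeasible) for $D<\sigma_{x|y}^2$. I would cite \cite{BergerBook71,Witsenhausen80} for this rather than rederive it.

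\textbf{Step 2: invoke the converse.} Theorem~\ref{thm:supply_demand_converse} gives $R_{X|Y}(D)\le\Rsup$ for any task-direct scheme attaining average distortion $D$. Squared error is unbounded, but this causes no problem: the converse uses only Lemma~\ref{lem:ird_block_lower_bound}, which needs only convexity and monotonicity of $R_{X|Y}$. Both hold here.

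\textbf{Step 3: invert.} I would handle the cases explicitly.
\begin{itemize}
\item If $D\ge\sigma_x^2$, the bound is trivial, since its right-hand side is at most $\sigma_x^2$.
\item If $D\le\sigma_{x|y}^2$, the inequality $R_{X|Y}(D)\le\Rsup<\infty$ fails, so no scheme exists; the only exception is the degenerate case $D=\sigma_{x|y}^2$ with $\sigma_x^2=\sigma_{x|y}^2$, where the bound holds with equality.
\item If $\sigma_{x|y}^2<D<\sigma_x^2$, then $\tfrac12\log_2\frac{\sigma_x^2-\sigma_{x|y}^2}{D-\sigma_{x|y}^2}\le\Rsup$. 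Exponentiating gives $D-\sigma_{x|y}^2\ge(\sigma_x^2-\sigma_{x|y}^2)2^{-2\Rsup}$, which is \eqref{eq:gaussian_distortion_bound}.
\end{itemize}

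The only step needing care is Step 1, which requires confirming that the orthogonal decomposition is valid under the Markov structure $X\to Y\to\hat X$. This holds because $X-aY$ is independent of $(Y,\hat X)$ by joint Gaussianity of $(X,Y)$ together with the conditional independence of $\hat X$ and $X$ given $Y$. Everything else is routine.
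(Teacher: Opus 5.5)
Your proposal is correct and follows the paper's proof. Both plug the classical scalar-Gaussian indirect rate--distortion formula into the task-direct converse of Theorem~\ref{thm:supply_demand_converse}, solve for $D$, and treat the cases $D\le\sigma_{x|y}^2$ (infeasible) and $D\ge\sigma_x^2$ (trivial) separately. The only difference is that you sketch the sufficient-statistic reduction to the Gaussian source $\mathbb{E}[X\mid Y]$ and note that unbounded squared error is harmless for the converse, whereas the paper simply cites the formula.
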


\begin{proof}
For $\sigma_{x|y}^2 < D < \sigma_x^2$, the indirect rate--distortion function equals
$R_{X|Y}(D)=\frac12\log_2\!\left(\frac{\sigma_x^2-\sigma_{x|y}^2}{D-\sigma_{x|y}^2}\right)$~\cite{Witsenhausen80}.
Combining this expression with Theorem~\ref{thm:supply_demand_converse} and solving for $D$ yields \eqref{eq:gaussian_distortion_bound}.
For $D\le \sigma_{x|y}^2$ the target is impossible, and for $D\ge \sigma_x^2$ it is trivially achievable.
\end{proof}

\paragraph{Committed-interface specialization}
Combining Theorem~\ref{thm:interface_tax} with the Gaussian indirect rate--distortion demand yields an explicit closed-form hard-separation converse.

\begin{corollary}[Gaussian distortion bound under hard-separation]
\label{cor:gaussian_sep}
For scalar Gaussian remote estimation under hard-separation,
\begin{equation}
  \boxed{
  D \ \ge\ \sigma_{x|y}^2 + \big(\sigma_x^2-\sigma_{x|y}^2\big)\,
  2^{-2\min\{n\Cch,\ \frac{m}{2}\Cgate\}}.
  }
  \label{eq:gaussian_sep_bound}
\end{equation}
\end{corollary}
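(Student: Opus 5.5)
The proof mirrors Corollary~\ref{cor:gaussian_closed_form}, replacing the task-direct supply $\Rsup$ by the hard-separation supply $\Rsuphard=\min\{n\Cch,\tfrac{m}{2}\Cgate\}$. The first step is Corollary~\ref{cor:hard_sep_supply_demand}: under the $T$-block hard-separation architecture, any scheme achieving average distortion at most $D$ must satisfy $R_{X|Y}(D)\le\min\{n\Cch,m_{\mathrm{dec}}\Cgate,m_{\mathrm{task}}\Cgate\}$. Since $\min\{m_{\mathrm{dec}},m_{\mathrm{task}}\}\le m/2$ whenever $m_{\mathrm{dec}}+m_{\mathrm{task}}\le m$, this bound is at most $\Rsuphard$ for every admissible split, not only the symmetric one. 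So $R_{X|Y}(D)\le\Rsuphard$ holds universally over hard-separation schemes.

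The second step is to substitute the classical scalar-Gaussian indirect rate--distortion function. For $\sigma_{x|y}^2<D<\sigma_x^2$ it is
\[
R_{X|Y}(D)=\tfrac12\log_2\!\bigl((\sigma_x^2-\sigma_{x|y}^2)/(D-\sigma_{x|y}^2)\bigr).
\]
Inserting this into $R_{X|Y}(D)\le\Rsuphard$ and exponentiating gives
\[
D-\sigma_{x|y}^2\ge(\sigma_x^2-\sigma_{x|y}^2)\,2^{-2\Rsuphard},
\]
which is \eqref{eq:gaussian_sep_bound}. This rearrangement is legitimate because $2^{-2r}$ is decreasing in $r$ and $R_{X|Y}$ is strictly decreasing on this interval.

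The third step handles the boundary cases so the inequality holds for every attainable $D$. If $D\le\sigma_{x|y}^2$, then no estimator, even one with access to $Y$ itself, can beat the MMSE $\sigma_{x|y}^2$, so such $D$ are not attainable and the statement is vacuous. If $D\ge\sigma_x^2$, the right-hand side of \eqref{eq:gaussian_sep_bound} is at most $\sigma_{x|y}^2+(\sigma_x^2-\sigma_{x|y}^2)=\sigma_x^2\le D$, so the bound holds trivially.

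The only point needing care is the first step: the claim is over all hard-separation schemes, not just the symmetric split. The reduction $\min\{m_{\mathrm{dec}},m_{\mathrm{task}}\}\Cgate\le\tfrac{m}{2}\Cgate$ settles this. The unbounded squared-error distortion causes no difficulty on the converse side, because Lemma~\ref{lem:ird_block_lower_bound} requires only a finite average distortion.
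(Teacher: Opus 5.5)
Your proof is correct and follows essentially the same route as the paper: invoke Corollary~\ref{cor:hard_sep_supply_demand}, substitute the scalar-Gaussian indirect rate--distortion formula, and solve for $D$. Your observation that $\min\{m_{\mathrm{dec}},m_{\mathrm{task}}\}\le m/2$ is what makes the $\tfrac{m}{2}\Cgate$ ceiling hold for every admissible split, which is exactly what the paper's ``optimal split'' phrasing means. One small tightening: ``no estimator can beat the MMSE'' does not exclude the endpoint $D=\sigma_{x|y}^2$, because $\mathbb{E}[X\mid Y]$ attains it; instead, note that meeting the MMSE forces $\hat X=\mathbb{E}[X\mid Y]$, so $R_{X|Y}(\sigma_{x|y}^2)=+\infty$ exceeds the finite supply.
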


\begin{proof}
Use the scalar Gaussian indirect rate--distortion formula from the proof of Corollary~\ref{cor:gaussian_closed_form} with Corollary~\ref{cor:hard_sep_supply_demand} (equivalently, Theorem~\ref{thm:interface_tax}), which gives $R_{X|Y}(D)\le \min\{n\Cch,\frac{m}{2}\Cgate\}$ under the optimal split. Solving for $D$ yields \eqref{eq:gaussian_sep_bound}.
\end{proof}

\begin{remark}[Square-root interpretation]
In the \emph{compute-limited} regime $m\Cgate \le n\Cch$, Corollary~\ref{cor:gaussian_closed_form} gives a normalized excess distortion above the MMSE floor scaling as $2^{-2m\Cgate}$ under task-direct processing, whereas Corollary~\ref{cor:gaussian_sep} yields $2^{-m\Cgate}$ under the symmetric hard-separation split, i.e., a square-root loss. Under Assumption~\ref{assump:achievability_benchmark}, the task-direct and hard-separation scalar-Gaussian curves are direct specializations of Theorems~\ref{thm:achievability_general} and~\ref{thm:interface_tax_achievability}, respectively; Appendix~\ref{app:gaussian_supp} records the strict-gap witness and higher-dimensional variants.
\end{remark}

\section{Conclusion}
\label{sec:conclusion}
We developed an information-theoretic framework for remote inference when communication is noisy and receiver-side computation is implemented through vulnerable primitives under a finite per-instance compute budget. Under the committed/no-bypass noisy-materialization closure, the first-order converse picture is governed by a supply--demand comparison between the task information required by indirect rate--distortion and the information jointly supplied by the physical channel and the vulnerable compute substrate. This yields the baseline single-bottleneck converse (Theorem~\ref{thm:supply_demand_converse}) for task-direct processing and, more generally, a receiver-internal compute min-cut converse (Theorem~\ref{thm:dag_cutset}) showing that committed intermediate interfaces can create additional first-order bottlenecks inside the receiver. In particular, the hard-separation penalty is closure-dependent rather than universal: it arises when task-relevant information must cross serial vulnerable interfaces, while downstream soft visibility to the raw channel output within that closure can reduce the converse to the single-bottleneck supply, and protected bypasses or reliable islands can mitigate or eliminate it.

For the separate fully noisy-logic closure, we obtained only a conservative depth-dependent converse. Developing a matched achievability characterization for that closure, extending the theory to correlated or architecture-dependent fault models, and tightening the finite-blocklength and implementation-level connections for low-latency remote inference remain natural directions for future work. More broadly, once receiver-internal materialization is vulnerable, architecture itself becomes a first-order information-theoretic object.

\appendices

\section{Supplementary Finite-Blocklength and Tail Benchmarks}
\label{sec:finite_blocklength}

\indent We refine the asymptotic converses with finite-blocklength (dispersion-aware) and excess-distortion benchmarks tailored to low-latency design. The first-order limits of Sections~\ref{sec:converse} and~\ref{sec:interface_tax_section} compare a task-information \emph{demand} against a communication--computation \emph{supply}. They are sharp in the large-blocklength regime in which coding across many task instances is allowed. In low-latency regimes (small $T$), however, one needs nonasymptotic guarantees on \emph{tail} performance metrics such as the excess-distortion probability~\cite{KostinaVerdu13_JSCC}. This appendix therefore formalizes the block excess-distortion criteria, states a finite-$T$ concatenation benchmark, and records normal-approximation and reliability-function refinements. When specifying a finite-$T$ $(D,\epsilon)$ operating point, we treat $\epsilon$ as a total tail-reliability budget split across the active source, channel, and compute-side failure events of the chosen architecture (cf.~\eqref{eq:epsilon_budget_union} and \eqref{eq:normal_interface_tax}); examples use symmetric splits unless stated otherwise.

\subsection{Tail Reliability Metric: Excess Distortion Probability}
\label{subsec:excess_distortion_metric}
Average distortion captures mean task performance but can hide rare catastrophic failures.
Motivated by the finite-blocklength lossy JSCC work (e.g.,~\cite{KostinaVerdu13_JSCC}), we track \emph{tail reliability} via excess-distortion probabilities.
We use both a block-averaged criterion,
\begin{equation}
  \epsilon_D^{\mathrm{blk}}(D)
  \triangleq
  \mathbb{P}\!\left(\frac{1}{T}\sum_{t=1}^T d(X_t,\hat X_t) > D\right),
  \label{eq:excess_dist_prob_def}
\end{equation}
and the corresponding per-instance tail probability,
\begin{equation}
  \epsilon_D^{(1)}(D)
  \triangleq
  \mathbb{P}\!\left(d(X_t,\hat X_t)>D\right).
  \label{eq:excess_dist_prob_one}
\end{equation}
The block criterion controls the empirical-average task loss over a block of length $T$, while \eqref{eq:excess_dist_prob_one} emphasizes single-step safety.

Since $\frac{1}{T}\sum_{t=1}^T d(X_t,\hat X_t)>D$ implies $\max_{1\le t\le T} d(X_t,\hat X_t)>D$, we always have the union bound
\begin{equation}
  \epsilon_D^{\mathrm{blk}}(D)
  \le
  \mathbb{P}\!\left(\max_{1\le t\le T} d(X_t,\hat X_t) > D\right)
  \le
  \sum_{t=1}^T \mathbb{P}\!\left(d(X_t,\hat X_t) > D\right)
  =
  T\,\epsilon_D^{(1)}(D),
  \label{eq:block_to_single_tail_bound}
\end{equation}
where the last equality uses stationarity (and does not require independence).
In the i.i.d.\ memoryless setting the two criteria are closely related (and can be connected via concentration),
but we keep \eqref{eq:excess_dist_prob_def} as our main block-level tail metric.

Under noisy decoding/computation, undetected corruption can disproportionately affect $\epsilon_D^{\mathrm{blk}}(D)$ and $\epsilon_D^{(1)}(D)$, while detection and erasure-aware fallbacks can reduce them (Appendix~\ref{sec:erasures}).

\begin{definition}[$(D,\epsilon)$-achievability at blocklength $T$]
Fix a task-blocklength $T$ (number of task instances) and a receiver architecture (task-direct or hard-separation) under per-instance budgets $(n,m)$.
A pair $(D,\epsilon)$ is said to be \emph{$T$-achievable} if there exists a scheme such that the block excess-distortion probability satisfies
$\epsilon_D^{\mathrm{blk}}(D)\le \epsilon$ in~\eqref{eq:excess_dist_prob_def}.
When emphasizing single-step safety, we say $(D,\epsilon)$ is \emph{one-step achievable} if $\epsilon_D^{(1)}(D)\le \epsilon$ in~\eqref{eq:excess_dist_prob_one}.
\end{definition}

\begin{discussion}{Average vs.\ tail metrics}
For any nonnegative random variable $Z$, $\mathbb{E}[Z]=\int_0^\infty \mathbb{P}(Z>u)\,du$.
Therefore, scanning the threshold $D$ in $\epsilon_D^{(1)}(D)$ (or $\epsilon_D^{\mathrm{blk}}(D)$) reveals the full loss distribution, while the mean distortion reports only one number.

If the distortion is bounded, say $0\le d(X,\hat X)\le d_{\max}$ almost surely, then a block excess-distortion constraint implies a mean-distortion bound:
\begin{equation}
  \frac{1}{T}\sum_{t=1}^T \mathbb{E}\big[d(X_t,\hat X_t)\big]
  \le D + (d_{\max}-D)\,\epsilon_D^{\mathrm{blk}}(D)
  \le D + d_{\max}\,\epsilon_D^{\mathrm{blk}}(D).
  \label{eq:tail_to_mean_bounded}
\end{equation}
For unbounded distortions (e.g., squared error), tail and mean can decouple dramatically, and rare undetected errors can dominate mean and tail; Appendix~\ref{sec:erasures} discusses erasures and clipping-based ``damage limiters'' that restore quantitative control.

In finite-blocklength regimes, the mapping between tail and mean can also be loose because codes optimized for one distortion threshold need not be optimal for another~\cite{KostinaVerdu13_JSCC}.
Accordingly, we keep average distortion as our main first-order metric while using $(D,\epsilon)$ tail operating points to articulate reliability/safety tradeoffs and low-latency behavior.
\end{discussion}

\subsection{Design-Oriented Benchmark: Normal Approximations for Information Supply}
Let $N_{\mathrm{ch}}\triangleq nT$ and $L\triangleq mT$ denote, respectively, the total numbers of channel uses and compute-primitive uses in a task block of length $T$.
Let $M_{\mathrm{ch}}^\star(N_{\mathrm{ch}},\epsilon_{\mathrm{ch}})$ and $M_{\mathrm{comp}}^\star(L,\epsilon_{\mathrm{comp}})$ denote, respectively, the maximum numbers of messages achievable over $N_{\mathrm{ch}}$ channel uses and $L$ uses of the compute primitive with block error probabilities at most $\epsilon_{\mathrm{ch}}$ and $\epsilon_{\mathrm{comp}}$.
For memoryless channels, both admit the \emph{normal approximations}~\cite{Polyanskiy10}:
\begin{align}
  \log_2 M_{\mathrm{ch}}^\star(N_{\mathrm{ch}},\epsilon_{\mathrm{ch}})
  &\approx N_{\mathrm{ch}}\,\Cch
  - \sqrt{N_{\mathrm{ch}}\,V_{\mathrm{ch}}}\;Q^{-1}(\epsilon_{\mathrm{ch}}),
  \label{eq:normal_supply_channel}\\
  \log_2 M_{\mathrm{comp}}^\star(L,\epsilon_{\mathrm{comp}})
  &\approx L\,\Cgate
  - \sqrt{L\,V_{\mathrm{gate}}}\;Q^{-1}(\epsilon_{\mathrm{comp}}),
  \label{eq:normal_supply_compute}
\end{align}
where $Q(\cdot)$ is the Gaussian tail function, $Q^{-1}(\cdot)$ its inverse, and $V_{\mathrm{ch}}$ and $V_{\mathrm{gate}}$ denote the per-use channel and compute dispersions:
\begin{align}
  V_{\mathrm{ch}} &\triangleq \mathrm{Var}\big[\imath(S;R)\big]\quad \text{under a capacity-achieving input,}\\
  V_{\mathrm{gate}} &\triangleq \mathrm{Var}\big[\imath(U;Z)\big]\quad \text{under a capacity-achieving input.}
\end{align}
Here $\imath(S;R)\triangleq \log_2\frac{P_{R|S}(R|S)}{P_R(R)}$ denotes the information density under a capacity-achieving input distribution; similarly for $\imath(U;Z)$. The approximations \eqref{eq:normal_supply_channel}--\eqref{eq:normal_supply_compute} are accurate up to $O(\log T)$ terms for a broad class of memoryless channels~\cite{Polyanskiy10}.

Two common cases admit closed forms. For a BSC$(\varepsilon)$, a capacity-achieving input is uniform and the dispersion is
\begin{equation}
  V_{\mathrm{BSC}}(\varepsilon)
  = \varepsilon(1-\varepsilon)\Big(\log_2\frac{1-\varepsilon}{\varepsilon}\Big)^{\!2}\quad \text{bits$^2$/use}.
  \label{eq:bsc_dispersion}
\end{equation}
For a real AWGN channel with signal-to-noise ratio $\rho$, $\Cch=\frac12\log_2(1+\rho)$ and the dispersion is
\begin{equation}
  V_{\mathrm{AWGN}}(\rho)
  = \frac{\rho(\rho+2)}{2(\rho+1)^2}\,(\log_2 e)^2\quad \text{bits$^2$/use}.
  \label{eq:awgn_dispersion}
\end{equation}
See~\cite{Polyanskiy10} for details and extensions.

\begin{remark}[Word-level MCUs: effective capacity and dispersion plug-in]
\label{rem:mcu_dispersion_plugin}
For word-level additive upsets, replace the per-bit compute capacity $\Cgate$ in \eqref{eq:normal_supply_compute} by the effective value $\Cgateeff=1-\frac{H(E)}{w}$ and replace the compute dispersion $V_{\mathrm{gate}}$ by the corresponding per-bit value
$V_{\mathrm{gate}}^{\mathrm{eff}} \triangleq \mathrm{Var}\!\big[\log_2 P_E(E)\big]/w$.
Because the compute budget counts \emph{bits}, these substitutions yield the same normal-approximation form with $L=mT$:
\begin{equation}
  \log_2 M_{\mathrm{comp}}^\star(L,\epsilon_{\mathrm{comp}})
  \approx
  L\,\Cgateeff
  - \sqrt{L\,V_{\mathrm{gate}}^{\mathrm{eff}}}\;Q^{-1}(\epsilon_{\mathrm{comp}}).
  \label{eq:normal_supply_compute_mcu}
\end{equation}
\end{remark}

The compute substrate model in Definition~\ref{def:gpu_bitflip_model} is interactive: the algorithm may choose each stored bit $U_i$ as a function of past retrieved bits $Z^{i-1}$, much like channel coding with feedback. For memoryless channels, this leaves the first-order compute bottleneck $m\Cgate$ unchanged, but feedback can improve second-order terms; accordingly, the dispersion formulas in \eqref{eq:normal_supply_compute} (and its multi-stage variants) provide engineering benchmarks rather than sharp converses for all interactive strategies.

\begin{discussion}{Finite-$T$ supply benchmarks under noisy receiver logic}
The depth-dependent propagation cut in Theorem~\ref{thm:supply_demand_noisy_gate} is \emph{nonasymptotic}: it holds for every blocklength $T$ and does not rely on coding arguments.
Accordingly, a conservative dispersion-aware benchmark under noisy receiver logic is obtained by taking the minimum of three cuts:
\begin{equation}
  R_{\mathrm{sup,NA}}^{\mathrm{logic}}(T)
  \triangleq
  \min\Big\{
    \frac{1}{T}\log_2 M_{\mathrm{ch}}^\star(N_{\mathrm{ch}},\epsilon_{\mathrm{ch}}),\;
    \frac{1}{T}\log_2 M_{\mathrm{comp}}^\star(L,\epsilon_{\mathrm{comp}}),\;
    m\,\min\!\big\{1,\ \beta^{d_{\mathrm{logic}}}\big\}
  \Big\},
  \qquad
  \beta\triangleq K_{\mathrm{fan}}(1-2\delta)^2,
  \label{eq:normal_supply_noisy_logic}
\end{equation}
where the first two terms admit the normal approximations \eqref{eq:normal_supply_channel}--\eqref{eq:normal_supply_compute} and the third term is the Evans--Schulman propagation bottleneck.
In other words, finite-$T$ backoff can be incorporated for the communication cut and for the per-gate capacity cut, while the propagation cut acts as a separate deterministic cap that persists even as $T$ grows.
\end{discussion}

\subsection{Design-Oriented Benchmark: Reliability-Function Refinements (Error Exponents)}
\label{subsec:error_exponents}
The normal approximations in \eqref{eq:normal_supply_channel}--\eqref{eq:normal_supply_compute} capture the $\sqrt{T}$ backoff of achievable rates
at a fixed target block error probability.
In ultra-reliable regimes (very small $\epsilon$), it is often more convenient to work with \emph{error exponents} (reliability functions),
which quantify how fast the block error probability decays exponentially with blocklength when operating below capacity.

\paragraph{Error exponents for a generic memoryless channel}
Consider a DMC $W$ with capacity $C$.
For any code rate $\tilde R<C$ in bits/use, there exist sequences of blocklength-$N$ codes whose block error probability obeys
\begin{equation}
  P_e(N,\tilde R) \ \lesssim\ 2^{-N\,E_r(\tilde R)},
  \label{eq:error_exponent_generic}
\end{equation}
where $E_r(\tilde R)$ is the random-coding error exponent.
One convenient expression is Gallager's bound~\cite{Gallager68,CoverThomas}:
\begin{equation}
  E_r(\tilde R)
  \;=\;
  \max_{0\le \rho\le 1}\;
  \max_{P_X}\;
  \Big(
    E_0(\rho,P_X) - \rho\,\tilde R
  \Big),
  \label{eq:gallager_random_coding_exponent}
\end{equation}
with
\begin{equation}
  E_0(\rho,P_X)
  \triangleq
  -\log_2\!\sum_{z}
  \Big(\sum_x P_X(x)\,W(z|x)^{\frac{1}{1+\rho}}\Big)^{\!1+\rho}.
  \label{eq:gallager_E0}
\end{equation}
For symmetric channels, the maximizing input distribution in \eqref{eq:gallager_random_coding_exponent} is uniform.
Conversely, for any $\tilde R>C$, the strong converse for DMCs implies that the optimal block error probability tends to $1$ as $N\to\infty$.

\paragraph{Implication for the dual-noise budgets}
In our setting, a per-sample rate $R$ (bits/sample) corresponds to code rates $\tilde R_{\mathrm{ch}}=R/n$ (bits/channel-use) on the physical channel
and $\tilde R_{\mathrm{gate}}=R/m$ (bits/primitive-use) on the compute primitive.
Thus, operating at any 
$
  R < \Rsup
$
not only is first-order feasible but also admits exponentially decaying stage error probabilities:
\begin{align}
  P_{e,\mathrm{ch}}
  &\lesssim 2^{-nT\,E_{r,\mathrm{ch}}(R/n)},\\
  P_{e,\mathrm{comp}}
  &\lesssim 2^{-mT\,E_{r,\mathrm{gate}}(R/m)},
\end{align}
for suitable channel- and primitive-specific exponents $E_{r,\mathrm{ch}}(\cdot)$ and $E_{r,\mathrm{gate}}(\cdot)$.
These exponent forms provide an alternative (and sometimes sharper, in the very-small-$\epsilon$ regime) refinement of the supply benchmark than the normal approximation.

\paragraph{Closed form for the BSC compute primitive}
For a BSC$(\varepsilon)$ compute primitive with uniform input, Gallager's $E_0$ function admits the closed form
\begin{equation}
  E_{0,\mathrm{BSC}}(\rho)
  =
  \rho
  -
  (1+\rho)\log_2\!\Big((1-\varepsilon)^{\frac{1}{1+\rho}}+\varepsilon^{\frac{1}{1+\rho}}\Big),
  \qquad 0\le \rho\le 1,
  \label{eq:E0_bsc}
\end{equation}
and the random-coding exponent is
$E_{r,\mathrm{BSC}}(\tilde R)=\max_{0\le \rho\le 1}\big(E_{0,\mathrm{BSC}}(\rho)-\rho \tilde R\big)$.
Consequently, storing (or transporting) $TR$ task bits across $mT$ BSC$(\varepsilon)$ materializations at per-sample rate $R$
admits the reliability benchmark
\begin{equation}
  P_{e,\mathrm{comp}}
  \ \lesssim\
  2^{-mT\,E_{r,\mathrm{BSC}}(R/m)}.
  \label{eq:comp_error_exponent_bsc}
\end{equation}

Dispersion and error-exponent refinements are complementary: the normal approximation is most useful for moderate error probabilities, whereas exponent bounds are more transparent in the ultra-reliable regime. Both may be fed into the throughput mapping in Appendix~\ref{sec:throughput} by replacing the first-order supplies $n\Cch$ and $m\Cgate$ with nonasymptotic achievable-message bounds.

\subsection{Design-Oriented Benchmark: Normal Approximation for Task Demand}
In the remote task setting, the indirect rate--distortion demand can be written as an \emph{ordinary} rate--distortion function of the observable process $Y$ under an induced (indirect) distortion measure.

\begin{lemma}[Indirect rate--distortion as a direct RD problem with induced distortion]
\label{lem:induced_distortion_equivalence}
Define the induced single-letter distortion
\begin{equation}
  \bar d(y,\hat x)\triangleq \mathbb{E}\!\big[d(X,\hat x)\mid Y=y\big].
  \label{eq:induced_distortion}
\end{equation}
Then the indirect (remote) rate--distortion function in~\eqref{eq:indirect_rd_def} admits the equivalent representation
\begin{equation}
  R_{X|Y}(D)
  =
  \inf_{P_{\hat X|Y}:\ \mathbb{E}[\bar d(Y,\hat X)]\le D}\ I(Y;\hat X),
  \label{eq:indirect_rd_induced}
\end{equation}
i.e., it is the (direct) rate--distortion function of source $Y$ with reproduction alphabet $\hat{\mathcal X}$ and distortion measure $\bar d$.
\end{lemma}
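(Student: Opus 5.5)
The plan is to prove Lemma~\ref{lem:induced_distortion_equivalence} by showing that the two optimization problems have the same objective and the same feasible set. The objective $I(Y;\hat X)$ depends only on the joint law $P_Y P_{\hat X|Y}$, so it suffices to check that the distortion constraints agree. The key structural point is how the test channel in \eqref{eq:indirect_rd_def} is coupled to $X$. In the indirect problem, $\hat X$ is generated from $Y$ alone through $P_{\hat X|Y}$. The induced joint law is therefore $P_{X,Y}P_{\hat X|Y}$, so $X\to Y\to\hat X$ is a Markov chain. I will state this convention explicitly, since it is what makes the definition meaningful.

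Fix any test channel $P_{\hat X|Y}$. By the tower property, conditioning on $(Y,\hat X)$,
\[
\mathbb{E}[d(X,\hat X)]=\mathbb{E}\big[\mathbb{E}[d(X,\hat X)\mid Y,\hat X]\big].
\]
The Markov chain gives $P_{X|Y,\hat X}=P_{X|Y}$. Hence the inner conditional expectation equals $\int d(x,\hat X)\,P_{X|Y}(dx\mid Y)=\bar d(Y,\hat X)$. It follows that $\mathbb{E}[d(X,\hat X)]=\mathbb{E}[\bar d(Y,\hat X)]$ for every admissible test channel. The constraint sets $\{P_{\hat X|Y}:\mathbb{E}[d(X,\hat X)]\le D\}$ and $\{P_{\hat X|Y}:\mathbb{E}[\bar d(Y,\hat X)]\le D\}$ therefore coincide, and the two infima in \eqref{eq:indirect_rd_def} and \eqref{eq:indirect_rd_induced} are taken over identical sets of identical values. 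This gives the claimed equality.

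The only genuine obstacle is measure-theoretic. We need $\bar d$ to be a well-defined jointly measurable function on $\mathcal Y\times\hat{\mathcal X}$, and the conditional-expectation manipulation must be justified. I would handle this with the standing regularity assumptions. Because the spaces are standard Borel, a regular conditional distribution $P_{X|Y}$ exists. Set $\bar d(y,\hat x)=\int d(x,\hat x)P_{X|Y}(dx\mid y)$; this is jointly measurable as a kernel integral of the nonnegative measurable function $d$. For nonnegative $d$, the identity then follows from Tonelli applied to the disintegration $P_Y(dy)P_{\hat X|Y}(d\hat x\mid y)P_{X|Y}(dx\mid y)$, with no integrability issues; values of $+\infty$ are handled consistently on both sides. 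Finally, $\bar d$ is the usual per-letter distortion for the memoryless source $Y$, which is exactly the form used in Proposition~\ref{prop:operational_indirect_rd}.
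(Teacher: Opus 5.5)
Your proposal is correct and follows the paper's proof. Both fix an arbitrary test channel and use iterated expectation together with the conditional independence of $X$ and $\hat X$ given $Y$ under $P_{X,Y}P_{\hat X|Y}$ to get $\mathbb{E}[d(X,\hat X)]=\mathbb{E}[\bar d(Y,\hat X)]$, which makes the two feasible sets coincide. Your measure-theoretic additions (regular conditional distribution, joint measurability of $\bar d$, Tonelli for nonnegative $d$) rigorously justify the step the paper writes loosely as $\mathbb{E}[\mathbb{E}[d(X,\hat X)\mid Y]]$, without changing the argument.
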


\begin{proof}
For any test channel $P_{\hat X|Y}$, by iterated expectation,
\[
  \mathbb{E}[d(X,\hat X)]
  =
  \mathbb{E}\big[\mathbb{E}[d(X,\hat X)\mid Y,\hat X]\big]
  =
  \mathbb{E}\big[\mathbb{E}[d(X,\hat X)\mid Y]\big]
  =
  \mathbb{E}[\bar d(Y,\hat X)],
\]
where the second equality uses that $\hat X$ is conditionally independent of $X$ given $Y$ under $P_{X,Y}P_{\hat X|Y}$.
Substituting this identity into~\eqref{eq:indirect_rd_def} yields~\eqref{eq:indirect_rd_induced}; see also classical treatments of indirect rate--distortion~\cite{BergerBook71,Witsenhausen80}.
\end{proof}

We now define a finite-blocklength demand quantity that aligns with the excess-distortion criterion \eqref{eq:excess_dist_prob_def}.
Let $R_{X|Y}^{(T)}(D,\epsilon)$ denote the minimum description rate (bits/sample) of a fixed-length encoder--decoder mapping $Y^T\mapsto \hat X^T$ such that
\begin{equation}
  \mathbb{P}\!\left(\frac{1}{T}\sum_{t=1}^T d(X_t,\hat X_t) > D\right)\le \epsilon.
  \label{eq:finiteT_remote_rate_def}
\end{equation}
By Lemma~\ref{lem:induced_distortion_equivalence}, this is precisely the fixed-length lossy compression problem for the memoryless source $Y^T$ under the separable distortion measure $\bar d$ in \eqref{eq:induced_distortion}.
Therefore, under the regularity assumptions in the finite-blocklength lossy-compression theory of Kostina and Verd{\'u}~\cite{KostinaVerdu12_LossyCompression},
$R_{X|Y}^{(T)}(D,\epsilon)$ admits the normal approximation
\begin{equation}
  R_{X|Y}^{(T)}(D,\epsilon)
  =
  R_{X|Y}(D)
  +
  \sqrt{\frac{V_{X|Y}(D)}{T}}\;Q^{-1}(\epsilon)
  +
  O\!\left(\frac{\log T}{T}\right),
  \label{eq:remote_normal_approx}
\end{equation}
where $V_{X|Y}(D)$ is the (induced-distortion) rate--dispersion function.

Concretely, if $P_{\hat X|Y}^\star$ attains the infimum in \eqref{eq:indirect_rd_induced} at distortion level $D$, let $P_{\hat X}^\star$ be the induced marginal of $\hat X$, and let $s^\star\ge 0$ be an associated Lagrange multiplier for the distortion constraint in \eqref{eq:indirect_rd_induced} (often denoted $\lambda^\star$ in rate--distortion theory; we reserve $\lambda$ for throughput). The induced-distortion $d$-tilted information is~\cite{KostinaVerdu12_LossyCompression,KostinaVerdu13_JSCC}
\begin{equation}
  \jmath_{X|Y}(y,D)
  \triangleq
  -\log_2 \mathbb{E}_{\hat X\sim P_{\hat X}^\star}\!\Big[\exp\big(s^\star(D-\bar d(y,\hat X))\big)\Big].
  \label{eq:remote_d_tilted}
\end{equation}
Here $\log_2$ is paired with the natural exponential; equivalently, one may replace $s^\star$ by $\tilde s^\star\triangleq s^\star/\ln 2$ and write the expectation with base-2 exponentials.

The corresponding task dispersion is
\begin{equation}
  V_{X|Y}(D)\triangleq \mathrm{Var}\big(\jmath_{X|Y}(Y,D)\big).
  \label{eq:task_dispersion_def}
\end{equation}
We use $V_{X|Y}(D)$ below as the concise dispersion summary of the task demand in \eqref{eq:remote_normal_approx}.

\subsection{A Dispersion-Aware Supply--Demand Benchmark}
We first state a clean finite-$T$ benchmark that composes (i) finite-blocklength remote lossy compression,
(ii) finite-blocklength channel coding, and (iii) finite-blocklength ``compute coding'' across the $L=mT$ noisy primitives.
We then specialize this benchmark via the normal approximations in \eqref{eq:normal_supply_channel}--\eqref{eq:remote_normal_approx}.

\begin{proposition}[A finite-$T$ benchmark via concatenation of finite-blocklength codes]
\label{prop:finiteT_concat_benchmark}
Consider Architecture~(A) over a block of $T$ task instances.
Fix a distortion threshold $D$ and error budgets $(\epsilon_{\mathrm{src}},\epsilon_{\mathrm{ch}},\epsilon_{\mathrm{comp}})$ such that
\begin{equation}
  \epsilon_{\mathrm{src}}+\epsilon_{\mathrm{ch}}+\epsilon_{\mathrm{comp}}\le \epsilon.
  \label{eq:epsilon_budget_union}
\end{equation}
If there exists a fixed-length encoder--decoder mapping $Y^T\mapsto \hat X^T$ whose block excess-distortion probability satisfies
\[
  \mathbb{P}\!\left(\frac{1}{T}\sum_{t=1}^T d(X_t,\hat X_t)>D\right)\le \epsilon_{\mathrm{src}},
\]
and whose description alphabet size $M$ satisfies
\[
  M \le M_{\mathrm{ch}}^\star(N_{\mathrm{ch}},\epsilon_{\mathrm{ch}})
  \qquad\text{and}\qquad
  M \le M_{\mathrm{comp}}^\star(L,\epsilon_{\mathrm{comp}}),
\]
then there exists a task-direct scheme whose block excess-distortion probability is at most $\epsilon$.
Equivalently, using the finite-blocklength remote rate in \eqref{eq:finiteT_remote_rate_def}, a sufficient condition is
\begin{equation}
  R_{X|Y}^{(T)}(D,\epsilon_{\mathrm{src}})
  \ \le\
  \frac{1}{T}\min\Big\{\log_2 M_{\mathrm{ch}}^\star(N_{\mathrm{ch}},\epsilon_{\mathrm{ch}}),\ \log_2 M_{\mathrm{comp}}^\star(L,\epsilon_{\mathrm{comp}})\Big\}.
  \label{eq:finiteT_supply_demand_exact}
\end{equation}
\end{proposition}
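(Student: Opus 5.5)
The plan is to build the task-direct scheme explicitly as a concatenation of three finite-blocklength codes, then bound the excess-distortion event by a union bound. This parallels the proof of Theorem~\ref{thm:achievability_general}, but with fixed $T$ and no limits.

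Let $(f,\psi)$ be the hypothesized source code, with index $W=f(Y^T)\in\{1,\dots,M\}$ and $\mathbb{P}(\frac1T\sum_t d(X_t,\psi(W)_t)>D)\le\epsilon_{\mathrm{src}}$. The construction has three steps.

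\emph{Channel code.} Since $M\le M_{\mathrm{ch}}^\star(N_{\mathrm{ch}},\epsilon_{\mathrm{ch}})$, there is a length-$N_{\mathrm{ch}}$ channel code with at least $M$ messages and block error at most $\epsilon_{\mathrm{ch}}$. Restricting it to $M$ codewords keeps the maximal error at or below $\epsilon_{\mathrm{ch}}$. If $M^\star$ is defined through average error, I would instead invoke the standard expurgation remark or argue directly with average error over uniform messages. The receiver forms $\tilde W$ with $\mathbb{P}(\tilde W\ne W)\le\epsilon_{\mathrm{ch}}$.

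\emph{Compute code.} Since $M\le M_{\mathrm{comp}}^\star(L,\epsilon_{\mathrm{comp}})$, there is an internal code $(\alpha,\beta)$ over the $L=mT$ primitive uses. The receiver materializes $U^L=\alpha(\tilde W)$ and decodes $\hat W=\beta(Z^L)$, with $\mathbb{P}(\hat W\ne\tilde W)\le\epsilon_{\mathrm{comp}}$.

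\emph{Output.} The receiver outputs $\hat X^T=\psi(\hat W)$. As in Theorem~\ref{thm:achievability_general}, this is a valid task-direct scheme: the output depends on $R^{nT}$ only through $Z^L$. Channel decoding and the bookkeeping for $\alpha,\beta,\psi$ run on the protected support of Assumption~\ref{assump:achievability_benchmark}-(A2).

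The key inclusion is
\[
\Big\{\tfrac1T\textstyle\sum_t d(X_t,\hat X_t)>D\Big\}\subseteq\Big\{\tfrac1T\textstyle\sum_t d(X_t,\psi(W)_t)>D\Big\}\cup\{\tilde W\ne W\}\cup\{\hat W\ne\tilde W\}.
\]
It holds because on the complement of the last two events $\hat W=W$, so $\hat X^T=\psi(W)$. The union bound together with \eqref{eq:epsilon_budget_union} then gives block excess-distortion probability at most $\epsilon$.

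The main subtlety is the error-probability bookkeeping. The index $W$ need not be uniform, so the channel error term must hold for this particular message distribution. I would handle this with maximal-error codes, restricting the message set to the $M$ indices. Alternatively, a random common permutation of indices, shared as independent randomness, makes the effective message uniform so that average-error codes suffice. The same issue applies to the compute code, where the input $\tilde W$ has a distribution that depends on the channel. Maximal error again makes the bound uniform over the conditioning on $\tilde W$: conditioning on $\tilde W=w$, the primitive noises are independent of everything upstream, so $\mathbb{P}(\hat W\ne w\mid\tilde W=w)\le\epsilon_{\mathrm{comp}}$.

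For the equivalent form \eqref{eq:finiteT_supply_demand_exact}, take the source code achieving $R^{(T)}_{X|Y}(D,\epsilon_{\mathrm{src}})$, so $\log_2 M=T R^{(T)}_{X|Y}$. Then \eqref{eq:finiteT_supply_demand_exact} is exactly the pair of alphabet-size conditions $M\le M_{\mathrm{ch}}^\star$ and $M\le M_{\mathrm{comp}}^\star$. If the infimum defining $R^{(T)}_{X|Y}$ is not attained, note that it ranges over integer $M$, so it is in fact attained.
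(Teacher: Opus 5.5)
Your proposal is correct and is essentially the paper's proof: concatenate the given source code with an $N_{\mathrm{ch}}$-use channel code and an $L$-use internal compute code, then union-bound the source excess-distortion, channel-decoding, and compute-decoding events against \eqref{eq:epsilon_budget_union}. Your extra care about the non-uniform index tightens a point the paper leaves implicit: reading $M_{\mathrm{ch}}^\star$ and $M_{\mathrm{comp}}^\star$ as maximal-error quantities makes the bounds hold conditionally on $W$ and on $\tilde W$, whereas your fallbacks are weaker, since expurgation costs a factor of two in $\epsilon$ (or in $M$) and the random-permutation trick needs encoder--receiver common randomness that the model does not explicitly provide.
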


\begin{proof}
Let $W\in\{1,\dots,M\}$ denote the (fixed-length) remote lossy description index produced from $Y^T$, and let $\hat X^T(W)$ denote the associated reproduction sequence.
By assumption, the remote mapping $Y^T\mapsto W\mapsto \hat X^T(W)$ satisfies
\[\mathbb{P}\!\left(\frac{1}{T}\sum_{t=1}^T d(X_t,\hat X_t(W))>D\right)\le \epsilon_{\mathrm{src}}.\]

Since $M\le M_{\mathrm{ch}}^\star(N_{\mathrm{ch}},\epsilon_{\mathrm{ch}})$, there exists a physical-layer channel code over $N_{\mathrm{ch}}=nT$ uses that communicates $W$ with block error probability at most $\epsilon_{\mathrm{ch}}$.
Since $M\le M_{\mathrm{comp}}^\star(L,\epsilon_{\mathrm{comp}})$, there exists an internal redundancy mechanism (e.g., an error-correcting code for the BSC$(\varepsilon)$ materializations) that stores/communicates $W$ across the $L=mT$ compute uses with block error probability at most $\epsilon_{\mathrm{comp}}$.

Let $\mathsf{E}_{\mathrm{src}}$ be the excess-distortion event under the correct index $W$, let $\mathsf{E}_{\mathrm{ch}}$ be the channel-decoding error event, and let $\mathsf{E}_{\mathrm{comp}}$ be the internal compute-decoding error event.
If none of these events occurs, then the receiver reconstructs $\hat X^T(W)$ and meets the distortion threshold.
Therefore,
\[
\epsilon_D^{\mathrm{blk}}(D)
\le \mathbb{P}(\mathsf{E}_{\mathrm{src}}\cup \mathsf{E}_{\mathrm{ch}}\cup \mathsf{E}_{\mathrm{comp}})
\le \mathbb{P}(\mathsf{E}_{\mathrm{src}})+\mathbb{P}(\mathsf{E}_{\mathrm{ch}})+\mathbb{P}(\mathsf{E}_{\mathrm{comp}})
\le \epsilon_{\mathrm{src}}+\epsilon_{\mathrm{ch}}+\epsilon_{\mathrm{comp}}
\le \epsilon,
\]
where the second inequality is a union bound and the third uses the respective design budgets.
\end{proof}

Proposition~\ref{prop:finiteT_concat_benchmark} composes three nonasymptotic primitives under the idealization that any additional bookkeeping needed to implement compute-side redundancy or channel decoding is already absorbed into the budgets $(n,m)$. Those circuit-level overheads matter for concrete implementations, but they are orthogonal to the information-flow limits emphasized here.

\subsection{A Computable Worked Example: Plugging $(p_{\mathrm{ue}},p_{\mathrm{er}})$ Into a Block Tail and Throughput}
\label{subsec:finiteT_tail_worked_example}
Proposition~\ref{prop:finiteT_concat_benchmark} expresses computation reliability through a block-error budget $\epsilon_{\mathrm{comp}}$ for the internal redundancy mechanism.
In safety-critical settings, it is often important to refine this budget by distinguishing \emph{undetected errors} (UE) from \emph{detected erasures} (ER) and to propagate these events into the excess-distortion tail.
Appendix~\ref{sec:erasures} introduces an OK/ER/UE abstraction with probabilities $(p_{\mathrm{ue}},p_{\mathrm{er}})$ and provides a general block-tail bound (Corollary~\ref{cor:tail_sandwich}).

For our word-level MCU template and an $r$-replica duplication-and-compare detector, the pair $(p_{\mathrm{ue}},p_{\mathrm{er}})$ admits the closed-form expression \eqref{eq:pue_r_replica_mcu} as a function of the MCU parameters and the replication factor $r$.
When these closed forms are inserted into the block tail bound for a length-$\Lif$ digital interface message (Subsection~\ref{subsec:tail_throughput_worked_example}), one obtains an explicit end-to-end constraint such as \eqref{eq:tail_block_interface_message_rreplica} and a sizing rule for $r$ (e.g., \eqref{eq:r_for_tail_target_interface_message_exact} in the $D_0$-safe regime).
Finally, the replication detection overhead reduces the effective compute budget and induces a throughput penalty quantified in \eqref{eq:lambda_replica_overhead}.

\begin{remark}[Normal-approximation form (design guideline)]
Under the regularity conditions of the finite-blocklength normal approximations
\eqref{eq:normal_supply_channel}--\eqref{eq:normal_supply_compute} and \eqref{eq:remote_normal_approx},
the sufficient condition \eqref{eq:finiteT_supply_demand_exact} can be written in the (per-sample) normal-approximation form
\begin{equation}
  R_{X|Y}(D)
  + \sqrt{\frac{V_{X|Y}(D)}{T}}\;Q^{-1}(\epsilon_{\mathrm{src}})
  \;\lesssim\;
  \min\Big\{
    n\Cch - \sqrt{\frac{nV_{\mathrm{ch}}}{T}}\;Q^{-1}(\epsilon_{\mathrm{ch}}),\;
    m\Cgate - \sqrt{\frac{mV_{\mathrm{gate}}}{T}}\;Q^{-1}(\epsilon_{\mathrm{comp}})
  \Big\},
  \label{eq:normal_supply_demand}
\end{equation}
where $\lesssim$ hides terms of order $O\!\big(\frac{\log T}{T}\big)$.
The key point is that the first-order converse bound $\Rsup$ in \eqref{eq:Rsupply_def} acquires a $T^{-1/2}$ backoff governed by dispersion on \emph{both} the communication and computation sides, as well as by the intrinsic task dispersion.
\end{remark}

\subsection{Dispersion-Aware Refinements for Hard-Separation and Serial Pipelines}
The first-order serial loss is already captured by Theorem~\ref{thm:interface_tax}; here we only add the finite-$T$ dispersion backoffs associated with those same cuts. Using the same normal-approximation reasoning as above, a finite-$T$ analog of Theorem~\ref{thm:interface_tax} is:
\begin{align}
  &R_{X|Y}(D) + \sqrt{\frac{V_{X|Y}(D)}{T}}\;Q^{-1}(\epsilon_{\mathrm{src}})\nonumber\\
  &\lesssim
  \min\Big\{
    n\Cch - \sqrt{\frac{nV_{\mathrm{ch}}}{T}}\;Q^{-1}(\epsilon_{\mathrm{ch}}),\;
    m_{\mathrm{dec}}\Cgate - \sqrt{\frac{m_{\mathrm{dec}}V_{\mathrm{gate}}}{T}}\;Q^{-1}(\epsilon_{\mathrm{dec}}),\;
    m_{\mathrm{task}}\Cgate - \sqrt{\frac{m_{\mathrm{task}}V_{\mathrm{gate}}}{T}}\;Q^{-1}(\epsilon_{\mathrm{task}})
  \Big\},
  \label{eq:normal_interface_tax}
\end{align}
with $\epsilon_{\mathrm{src}}+\epsilon_{\mathrm{ch}}+\epsilon_{\mathrm{dec}}+\epsilon_{\mathrm{task}}\le \epsilon$.
Equation~\eqref{eq:normal_interface_tax} is simply the normal-approximation counterpart of Theorem~\ref{thm:interface_tax}; see Remark~\ref{rem:interface_tax_not_sep} for the interpretation of the first-order serial loss.

\begin{remark}[Dispersion-aware compute splitting under hard-separation]
The equal-split rule from Theorem~\ref{thm:interface_tax} is refined in \eqref{eq:normal_interface_tax} by equalizing the \emph{effective} stage supplies
\begin{align}
  \tilde S_{\mathrm{dec}}(m_{\mathrm{dec}})
  &\triangleq
  m_{\mathrm{dec}}\Cgate
  - \sqrt{\frac{m_{\mathrm{dec}}V_{\mathrm{gate}}}{T}}\;Q^{-1}(\epsilon_{\mathrm{dec}}),
  \label{eq:eff_supply_dec}\\
  \tilde S_{\mathrm{task}}(m_{\mathrm{task}})
  &\triangleq
  m_{\mathrm{task}}\Cgate
  - \sqrt{\frac{m_{\mathrm{task}}V_{\mathrm{gate}}}{T}}\;Q^{-1}(\epsilon_{\mathrm{task}}),
  \label{eq:eff_supply_task}
\end{align}
subject to $m_{\mathrm{dec}}+m_{\mathrm{task}}=m$ (and with $[x]_+\triangleq \max\{x,0\}$ applied if the normal approximation yields a negative value).
When one sets symmetric reliability targets (e.g., $\epsilon_{\mathrm{dec}}=\epsilon_{\mathrm{task}}$) and both stages use identical primitives, symmetry implies that $m_{\mathrm{dec}}=m_{\mathrm{task}}=m/2$ is optimal (up to $O(T^{-1})$ terms).
When the stages have unequal reliability targets, the more stringent stage (smaller $\epsilon$, hence larger $Q^{-1}(\epsilon)$) typically requires a larger share of the compute budget to equalize \eqref{eq:eff_supply_dec} and \eqref{eq:eff_supply_task}.
\end{remark}

More generally, for the serial chain of Theorem~\ref{thm:k_stage_tax}, one replaces each stage supply $m_k\Cgate$ by its normal-approximation counterpart, so the same finite-$T$ refinement applies stagewise across the pipeline.

\subsection{SSCC vs.\ JSCC Under Committed Interfaces}
\label{subsec:sscc_jscc_interface_tax}
In the classical setting with reliable receiver computation, the source--channel separation principle incurs no first-order loss, but finite-blocklength performance can still depend strongly on whether one uses JSCC or separate source--channel coding (SSCC), i.e., concatenates a finite-blocklength lossy source code with a finite-blocklength channel code.
In particular, Kostina and Verd{\'u}~\cite{KostinaVerdu13_JSCC} showed that SSCC can be strictly suboptimal in the finite-blocklength regime under the excess-distortion probability criterion, and that joint design can deliver noticeable gains at short blocklengths.

Our dual-noise results add an architectural first-order effect to that classical finite-blocklength picture. The relevant loss is the serial hard-separation penalty already quantified by Theorem~\ref{thm:interface_tax} and Theorem~\ref{thm:k_stage_tax}; equation~\eqref{eq:normal_interface_tax} simply overlays the usual dispersion backoffs on those serial cuts. See Remark~\ref{rem:interface_tax_not_sep} for why this is not a classical separation failure.

Recent \emph{deep} JSCC schemes learn analog mappings from source samples to channel inputs and train a task decoder end-to-end, thereby avoiding an explicit decoded interface between ``channel decoding'' and downstream inference (e.g., wireless image transmission~\cite{BourtsoulatzeTCCN19_DeepJSCC} and semantic nonlinear transform coding~\cite{DaiJSAC22_NTSCC}). In our terminology these are soft-interface receivers in the sense of Definition~\ref{def:hard_separation} and Proposition~\ref{prop:no_tax_soft_interface}. By contrast, digital learned pipelines with an explicit decoded interface fall under the serial hard-separation abstractions of Theorem~\ref{thm:interface_tax} and Corollary~\ref{cor:hard_sep_supply_demand}. The relevant comparison under dual noise is therefore not ``analog versus digital'' per se, but whether the architecture preserves downstream access to the channel output or commits to a vulnerable digital interface; the practical mitigations are exactly the ones cataloged in Subsection~\ref{subsec:hetero_interface_tax}.

In low-latency dual-noise regimes, these dispersion-aware benchmarks serve as design guides and should be compared against explicit baselines such as uncoded/analog or hybrid task-direct schemes. In classical finite-blocklength JSCC, performance can depend sharply on code choice, and the nonasymptotic fundamental limit need not be monotone in blocklength~\cite{KostinaVerdu13_JSCC}. In our dual-noise setting, receiver architecture additionally matters because committed interfaces change the first-order supply.

\subsection{Gaussian Remote Estimation: An Explicit Dispersion Benchmark}
For scalar Gaussian remote estimation (Subsection~\ref{subsec:scalar_gaussian_example}) under an AWGN physical channel and BSC$(\varepsilon)$ compute primitives, the dispersions in \eqref{eq:normal_supply_demand} admit closed forms:
$V_{\mathrm{ch}}=V_{\mathrm{AWGN}}(\rho)$ in \eqref{eq:awgn_dispersion} and $V_{\mathrm{gate}}=V_{\mathrm{BSC}}(\varepsilon)$ in \eqref{eq:bsc_dispersion}.
Moreover, the indirect rate--distortion problem reduces to direct lossy coding of the Gaussian sufficient statistic $\tilde X=\mathbb{E}[X|Y]$ (Subsection~\ref{subsec:scalar_gaussian_example}), so the relevant source dispersion equals the Gaussian MSE rate--dispersion constant
\begin{equation}
  V_{X|Y}(D)=\frac{1}{2}(\log_2 e)^2 \quad \text{bits$^2$/sample},\qquad \sigma_{x|y}^2 < D < \sigma_x^2,
  \label{eq:gaussian_source_dispersion}
\end{equation}
independent of $D$ (cf.~\cite{KostinaVerdu13_JSCC}).
Substituting these quantities into \eqref{eq:normal_supply_demand} yields an explicit low-latency design benchmark for $D$ as a function of $(T,n,m,\epsilon)$, which can be written in closed form by inverting the Gaussian indirect rate--distortion function.

\begin{remark}[Normal-approximation distortion benchmark (scalar Gaussian remote estimation; design guideline)]
\label{rem:gaussian_normal_approx}
Consider scalar Gaussian remote estimation with quadratic distortion under an AWGN physical channel of SNR $\rho$ and BSC$(\varepsilon)$ compute primitives.
Fix a task-blocklength $T$ and error budgets $(\epsilon_{\mathrm{src}},\epsilon_{\mathrm{ch}},\epsilon_{\mathrm{comp}})$.
Define the normal-approximation supplies (bits/sample)
\begin{align}
  R_{\mathrm{ch,NA}}(T)
  &\triangleq
  n\Cch - \sqrt{\frac{nV_{\mathrm{AWGN}}(\rho)}{T}}\;Q^{-1}(\epsilon_{\mathrm{ch}}),
  \label{eq:Rch_NA}\\
  R_{\mathrm{comp,NA}}(T)
  &\triangleq
  m\Cgate - \sqrt{\frac{mV_{\mathrm{BSC}}(\varepsilon)}{T}}\;Q^{-1}(\epsilon_{\mathrm{comp}}),
  \label{eq:Rcomp_NA}\\
  R_{\mathrm{sup,NA}}(T)
  &\triangleq
  \min\{R_{\mathrm{ch,NA}}(T),\,R_{\mathrm{comp,NA}}(T)\},
  \label{eq:Rsup_NA}
\end{align}
and the effective available task rate after accounting for source dispersion
\begin{equation}
  R_{\mathrm{eff,NA}}(T)
  \triangleq
  \Big[R_{\mathrm{sup,NA}}(T) - \sqrt{\frac{V_{X|Y}(D)}{T}}\;Q^{-1}(\epsilon_{\mathrm{src}})\Big]_+,
  \label{eq:Reff_NA}
\end{equation}
where $[x]_+\triangleq \max\{x,0\}$.
The normal-approximation benchmark \eqref{eq:normal_supply_demand} predicts that operating points $(D,\epsilon)$ are feasible when
\begin{equation}
  D
  \ \gtrsim\
  \sigma_{x|y}^2 + (\sigma_x^2-\sigma_{x|y}^2)\,2^{-2 R_{\mathrm{eff,NA}}(T)},
  \label{eq:gaussian_D_normal_approx}
\end{equation}
with total excess-distortion probability budget $\epsilon \approx \epsilon_{\mathrm{src}}+\epsilon_{\mathrm{ch}}+\epsilon_{\mathrm{comp}}$.
For hard-separation, replace $R_{\mathrm{sup,NA}}(T)$ by the three-way minimum in \eqref{eq:normal_interface_tax} (with the corresponding stage error budgets).
We emphasize that \eqref{eq:gaussian_D_normal_approx} is a \emph{design-oriented} prediction based on normal approximations and is not claimed to furnish a matched characterization in general.
\end{remark}

\begin{remark}[Gaussian normal approximation under word-level MCUs]
If the compute substrate is dominated by word-level additive MCUs,
then the benchmark in Remark~\ref{rem:gaussian_normal_approx} continues to apply after replacing the compute terms $(\Cgate,V_{\mathrm{BSC}}(\varepsilon))$ in \eqref{eq:Rcomp_NA}
by the effective pair $(\Cgateeff,V_{\mathrm{gate}}^{\mathrm{eff}})$ from Remark~\ref{rem:mcu_dispersion_plugin}, i.e.,
\[
  R_{\mathrm{comp,NA}}(T)
  = m\Cgateeff - \sqrt{\frac{mV_{\mathrm{gate}}^{\mathrm{eff}}}{T}}\;Q^{-1}(\epsilon_{\mathrm{comp}}).
\]
\end{remark}

Figure~\ref{fig:gaussian_finiteT} plots the resulting normal-approximation distortion benchmark and contrasts task-direct and hard-separation finite-blocklength behavior against compute-reliable baselines.

\begin{figure}[t]
\centering
\scalebox{0.80}{%
\begin{tikzpicture}
\pgfmathsetmacro{\sigxy}{0.5}
\pgfmathsetmacro{\sigdelta}{0.5}
\pgfmathsetmacro{\nEx}{1.0}
\pgfmathsetmacro{\mEx}{2.0}
\pgfmathsetmacro{\mStage}{1.0} 
\pgfmathsetmacro{\CchEx}{2.0}
\pgfmathsetmacro{\VchEx}{1.03662}
\pgfmathsetmacro{\CgateEx}{0.53100}
\pgfmathsetmacro{\VgateEx}{0.90436}
\pgfmathsetmacro{\VsrcEx}{1.04068}
\pgfmathsetmacro{\Qdir}{2.71305}  
\pgfmathsetmacro{\Qsep}{2.80703}  
\pgfmathsetmacro{\Qjscc}{2.32635} 
\pgfmathsetmacro{\Qsscc}{2.57583} 

\begin{axis}[
  name=main,
  width=0.90\linewidth,
  height=0.58\linewidth,
  xlabel={Blocklength $T$ (task instances)},
  ylabel={$D_{\mathrm{NA}}(T)$},
  xmin=20,
  xmax=2000,
  ymin=0.5,
  ymax=1.0,
  xtick={20,500,1000,1500,2000},
  ytick={0.5,0.6,0.7,0.8,0.9,1.0},
  grid=major,
  major grid style={draw=black!15},
  tick align=outside,
  tick style={black},
  axis line style={black},
  every axis plot/.style={line width=1.8pt},
  legend columns=2,
  legend cell align=left,
  legend style={
    draw=none,
    fill=none,
    font=\footnotesize,
    at={(0.5,-0.135)},
    anchor=north,
    /tikz/every even column/.append style={column sep=0.9em}
  },
  clip=false,
]

\addplot[
  color=black,
  solid,
  mark=o,
  mark size=2.0pt,
  mark repeat=26,
  mark options={solid, fill=white},
  domain=20:2000,
  samples=320,
]
{\sigxy + \sigdelta*pow(2,-2*max(0,
  min(\nEx*\CchEx - \Qdir*sqrt(\nEx*\VchEx/x),
      \mEx*\CgateEx - \Qdir*sqrt(\mEx*\VgateEx/x))
  - \Qdir*sqrt(\VsrcEx/x)))};
\addlegendentry{Task-direct (vulnerable compute)}

\addplot[
  color=black,
  densely dashed,
  mark=square*,
  mark size=1.9pt,
  mark repeat=28,
  mark options={solid, fill=black, draw=black},
  domain=20:2000,
  samples=320,
]
{\sigxy + \sigdelta*pow(2,-2*max(0,
  min(\nEx*\CchEx - \Qsep*sqrt(\nEx*\VchEx/x),
      \mStage*\CgateEx - \Qsep*sqrt(\mStage*\VgateEx/x))
  - \Qsep*sqrt(\VsrcEx/x)))};
\addlegendentry{Hard-separation (equal split)}

\addplot[
  color=black,
  dash pattern=on 6pt off 1.4pt on 1.2pt off 1.4pt,
  mark=triangle*,
  mark size=2.3pt,
  mark repeat=30,
  mark options={solid, fill=black, draw=black},
  domain=20:2000,
  samples=320,
]
{\sigxy + \sigdelta*pow(2,-2*max(0,
  \nEx*\CchEx - \Qjscc*sqrt((\nEx*\VchEx + \VsrcEx)/x)))};
\addlegendentry{Reliable JSCC baseline}

\addplot[
  color=black,
  densely dotted,
  mark=diamond*,
  mark size=2.0pt,
  mark repeat=32,
  mark options={solid, fill=white, draw=black},
  domain=20:2000,
  samples=320,
]
{\sigxy + \sigdelta*pow(2,-2*max(0,
  \nEx*\CchEx - \Qsscc*sqrt(\nEx*\VchEx/x)
  - \Qsscc*sqrt(\VsrcEx/x)))};
\addlegendentry{Reliable SSCC baseline}

\draw[black, densely dashed, line width=0.6pt] (axis cs:450,0.5295) rectangle (axis cs:2000,0.5515);

\end{axis}

\begin{axis}[
  at={(main.north east)},
  anchor=north east,
  xshift=-2.0mm,
  yshift=-2.0mm,
  width=0.38\linewidth,
  height=0.24\linewidth,
  xmin=450,
  xmax=2000,
  ymin=0.5285,
  ymax=0.5482,
  xtick={500,1000,1500,2000},
  ytick={0.530,0.536,0.542,0.548},
  tick label style={font=\scriptsize},
  grid=major,
  major grid style={draw=black!12},
  axis background/.style={fill=white},
  tick align=outside,
  every axis plot/.style={line width=1.35pt},
]
\addplot[
  color=black,
  solid,
  mark=o,
  mark size=1.5pt,
  mark repeat=24,
  mark options={solid, fill=white},
  domain=450:2000,
  samples=240,
]
{\sigxy + \sigdelta*pow(2,-2*max(0,
  min(\nEx*\CchEx - \Qdir*sqrt(\nEx*\VchEx/x),
      \mEx*\CgateEx - \Qdir*sqrt(\mEx*\VgateEx/x))
  - \Qdir*sqrt(\VsrcEx/x)))};

\addplot[
  color=black,
  densely dashed,
  mark=square*,
  mark size=1.45pt,
  mark repeat=24,
  mark options={solid, fill=black, draw=black},
  domain=450:2000,
  samples=240,
]
{\sigxy + \sigdelta*pow(2,-2*max(0,
  min(\nEx*\CchEx - \Qsep*sqrt(\nEx*\VchEx/x),
      \mStage*\CgateEx - \Qsep*sqrt(\mStage*\VgateEx/x))
  - \Qsep*sqrt(\VsrcEx/x)))};

\addplot[
  color=black,
  dash pattern=on 6pt off 1.4pt on 1.2pt off 1.4pt,
  mark=triangle*,
  mark size=1.8pt,
  mark repeat=24,
  mark options={solid, fill=black, draw=black},
  domain=450:2000,
  samples=240,
]
{\sigxy + \sigdelta*pow(2,-2*max(0,
  \nEx*\CchEx - \Qjscc*sqrt((\nEx*\VchEx + \VsrcEx)/x)))};

\addplot[
  color=black,
  densely dotted,
  mark=diamond*,
  mark size=1.55pt,
  mark repeat=24,
  mark options={solid, fill=white, draw=black},
  domain=450:2000,
  samples=240,
]
{\sigxy + \sigdelta*pow(2,-2*max(0,
  \nEx*\CchEx - \Qsscc*sqrt(\nEx*\VchEx/x)
  - \Qsscc*sqrt(\VsrcEx/x)))};

\node[font=\scriptsize, fill=white, inner sep=1pt, anchor=west]
  at (axis cs:1150,0.5440) {Reliable SSCC};
\node[font=\scriptsize, fill=white, inner sep=1pt, anchor=west]
  at (axis cs:1150,0.5315) {Reliable JSCC};
\end{axis}
\end{tikzpicture}%
}
\vspace{-3mm}
\caption{Finite-blocklength benchmark for scalar Gaussian remote estimation under the normal approximation (Remark~\ref{rem:gaussian_normal_approx}). The vulnerable-compute task-direct and equal-split hard-separation architectures are compared with compute-reliable JSCC and SSCC baselines under the same communication parameters. The inset enlarges the large-$T$ tail, where the two compute-reliable baselines nearly overlap but remain visually distinct. Example parameters: $\sigma_x^2=\sigma_v^2=1$ (so $\sigma_{x|y}^2=1/2$), AWGN SNR $\rho=15$ (so $\Cch=2$ and $V_{\mathrm{AWGN}}(\rho)\approx 1.0366$), BSC compute primitive with $\varepsilon=0.1$ (so $\Cgate\approx 0.5310$ and $V_{\mathrm{BSC}}(\varepsilon)\approx 0.9044$), and budgets $(n,m)=(1,2)$. We fix the total excess-distortion probability $\epsilon=0.01$ for all curves. For task-direct (A), we use the equal three-way split $\epsilon_{\mathrm{src}}=\epsilon_{\mathrm{ch}}=\epsilon_{\mathrm{comp}}=\epsilon/3$ (so $Q^{-1}(\epsilon/3)\approx 2.713$). For hard-separation (B), we use an equal compute split $m_{\mathrm{dec}}=m_{\mathrm{task}}=m/2$ and an equal four-way error split $\epsilon_{\mathrm{src}}=\epsilon_{\mathrm{ch}}=\epsilon_{\mathrm{dec}}=\epsilon_{\mathrm{task}}=\epsilon/4$ (so $Q^{-1}(\epsilon/4)\approx 2.807$). We also plot the compute-reliable ($\varepsilon=0$) JSCC/SSCC normal-approximation baselines from~\cite{KostinaVerdu13_JSCC}, using total excess-distortion probability $\epsilon$ for JSCC and an equal two-way split $(\epsilon/2,\epsilon/2)$ for SSCC. All curves approach their first-order limits as $T$ grows. Hard-separation includes the additional two-stage serial cut and typically a larger finite-$T$ penalty.}
\label{fig:gaussian_finiteT}
\end{figure}
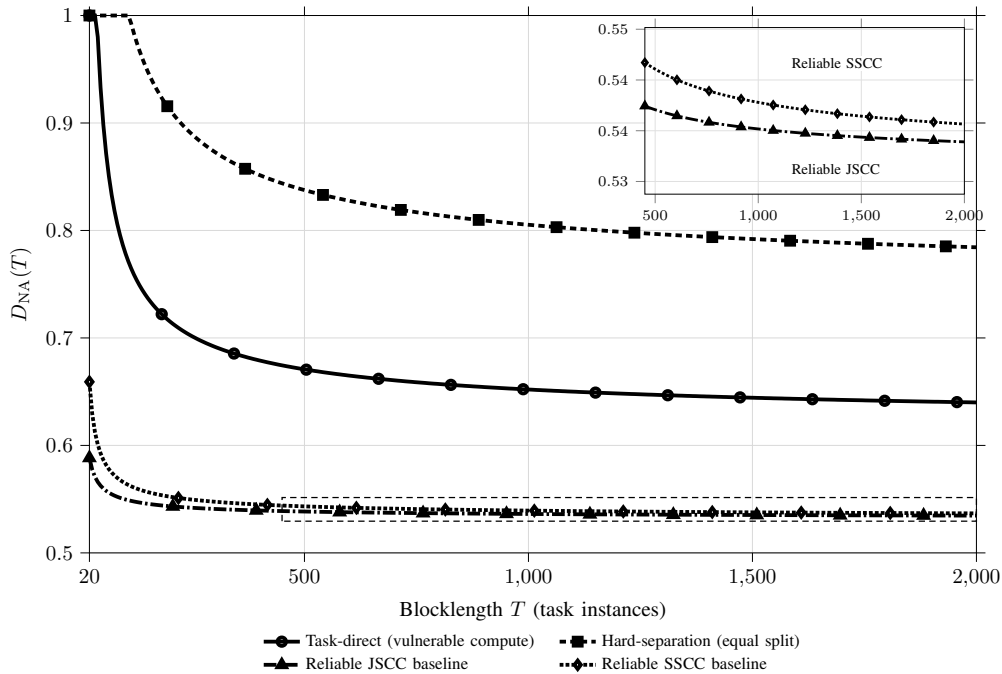

\section{Supplementary Gaussian Variants and Additional Examples}
\label{app:gaussian_supp}
\indent This appendix collects the scalar-Gaussian strict-gap witness, noisy-logic/word-level plug-ins, uncoded analog baselines, and vector Gaussian extensions omitted from the compact main-text examples section.

\subsection{Additional Scalar-Gaussian Benchmark Material}
\begin{corollary}[Explicit strict separation distortion gap (scalar Gaussian, compute-limited)]
Assume the achievability conditions of Theorem~\ref{thm:achievability_general} and that $\condcomp$.
Define
\begin{equation}
  D_{\mathrm{direct}}^\star \triangleq \sigma_{x|y}^2 + \big(\sigma_x^2-\sigma_{x|y}^2\big)\,2^{-2m\Cgate},
  \qquad
  D_{\mathrm{sep}}^\star \triangleq \sigma_{x|y}^2 + \big(\sigma_x^2-\sigma_{x|y}^2\big)\,2^{-m\Cgate}.
\end{equation}
Then any distortion level $D> D_{\mathrm{direct}}^\star$ is achievable by task-direct processing (Architecture~(A)),
whereas any hard-separation scheme (Architecture~(B)) satisfies $D\ge D_{\mathrm{sep}}^\star$ under the optimal split
$m_{\mathrm{dec}}=m_{\mathrm{task}}=\frac{m}{2}$.
Consequently, every distortion level $D\in( D_{\mathrm{direct}}^\star,\ D_{\mathrm{sep}}^\star )$ is achievable by task-direct processing but impossible under hard-separation.
\end{corollary}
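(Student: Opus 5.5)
The plan is to combine the scalar-Gaussian indirect rate--distortion formula with the task-direct achievability result (Theorem~\ref{thm:achievability_general}) and the hard-separation converse (Corollary~\ref{cor:gaussian_sep}, equivalently Corollary~\ref{cor:hard_sep_supply_demand}). The formula is $R_{X|Y}(D)=\tfrac12\log_2\!\big((\sigma_x^2-\sigma_{x|y}^2)/(D-\sigma_{x|y}^2)\big)$ on $(\sigma_{x|y}^2,\sigma_x^2)$ and $0$ for $D\ge\sigma_x^2$. It is continuous and strictly decreasing on $(\sigma_{x|y}^2,\sigma_x^2)$. Consequently, for any $r>0$, the level $D^\star(r)\triangleq\sigma_{x|y}^2+(\sigma_x^2-\sigma_{x|y}^2)2^{-2r}$ is exactly the point where $R_{X|Y}(D^\star(r))=r$. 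In this notation $D_{\mathrm{direct}}^\star=D^\star(m\Cgate)$ and $D_{\mathrm{sep}}^\star=D^\star(\tfrac m2\Cgate)$.

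\emph{Achievability.} Fix $D>D_{\mathrm{direct}}^\star$. If $D\ge\sigma_x^2$, then $R_{X|Y}(D)=0$ and the level is achievable trivially; alternatively, it is covered because $0<\Rsup$ whenever $m\Cgate>0$. Otherwise, strict monotonicity gives $R_{X|Y}(D)<m\Cgate$. The hypothesis $\condcomp$ gives $\Rsup=m\Cgate$, so condition \eqref{eq:achievability_condition} holds. Theorem~\ref{thm:achievability_general} then applies. Because squared error is unbounded, I will cite its MSE note: Lemma~\ref{lem:clip_closes_mse_achievability} together with $\mathbb E[X^2]=\sigma_x^2<\infty$ closes this case. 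This is Item~1 of Corollary~\ref{cor:strict_sep_loss} specialized to the Gaussian case.

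\emph{Converse.} Any hard-separation scheme with the optimal split satisfies $R_{X|Y}(D)\le\min\{n\Cch,\tfrac m2\Cgate\}=\tfrac m2\Cgate$, where the equality uses $\condcomp$. Inverting the decreasing map yields $D\ge D^\star(\tfrac m2\Cgate)=D_{\mathrm{sep}}^\star$, which is exactly \eqref{eq:gaussian_sep_bound}. The boundary cases need care. If $D\le\sigma_{x|y}^2$, the target is infeasible for any scheme. If $D\ge\sigma_x^2$, then $D\ge D_{\mathrm{sep}}^\star$ holds automatically. The same bound covers an arbitrary split, since $\min\{m_{\mathrm{dec}},m_{\mathrm{task}}\}\le m/2$; the ``optimal split'' phrasing is therefore just the most favorable case.

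\emph{Gap.} Since $\tfrac m2\Cgate<m\Cgate$ when $m\Cgate>0$, we have $2^{-2m\Cgate}<2^{-m\Cgate}$, so $D_{\mathrm{direct}}^\star<D_{\mathrm{sep}}^\star$ and the interval is nonempty. Every $D$ in it is achievable by the first part and excluded by the second. The only real obstacle is bookkeeping rather than substance: the statement leaves implicit the nondegeneracy conditions $\sigma_x^2>\sigma_{x|y}^2$ and $m\Cgate>0$, and the unbounded-distortion step in achievability. I will state these conditions explicitly and defer to the clipping lemma for the latter.
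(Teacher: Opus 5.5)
Your proposal is correct and follows essentially the same route as the paper's proof. Like the paper, you obtain task-direct achievability from Theorem~\ref{thm:achievability_general} (with $\Rsup=m\Cgate$ in the compute-limited regime), take the hard-separation converse from Corollary~\ref{cor:gaussian_sep}, and conclude from $D_{\mathrm{direct}}^\star<D_{\mathrm{sep}}^\star$ when $m\Cgate>0$. Your extra bookkeeping (the boundary cases $D\le\sigma_{x|y}^2$ and $D\ge\sigma_x^2$, the clipping lemma for unbounded MSE, and the observation that every split obeys the $\frac{m}{2}\Cgate$ bound) only makes explicit what the paper's short proof leaves implicit.
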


\begin{proof}
In the compute-limited regime $\condcomp$, Theorem~\ref{thm:achievability_general} together with the scalar Gaussian identity
\[
  R_{X|Y}(D)=\frac12\log_2\!\left(\frac{\sigma_x^2-\sigma_{x|y}^2}{D-\sigma_{x|y}^2}\right)
\]
shows that every $D>D_{\mathrm{direct}}^\star$ is task-direct achievable, while Corollary~\ref{cor:gaussian_sep} yields the converse bound $D\ge D_{\mathrm{sep}}^\star$ under the optimal split. Since $D_{\mathrm{direct}}^\star < D_{\mathrm{sep}}^\star$ whenever $m\Cgate>0$, the stated gap follows.
\end{proof}

This strict-gap witness also admits a simple visual interpretation in normalized excess distortion. In the compute-limited regime, the quantity
\[
  \frac{D-\sigma_{x|y}^2}{\sigma_x^2-\sigma_{x|y}^2}
\]
decays as $2^{-2R}$ under task-direct processing and as $2^{-R}$ under symmetric hard-separation, where $R\triangleq m\Cgate$. Figure~\ref{fig:interface_tax_gaussian} plots these two decay laws, quantifying one compute-limited slice of the architecture boundaries summarized in Figure~\ref{fig:arch_phase_supply_plane}.

\begin{figure}[t]
\centering
\includefigure[width=0.7\linewidth]{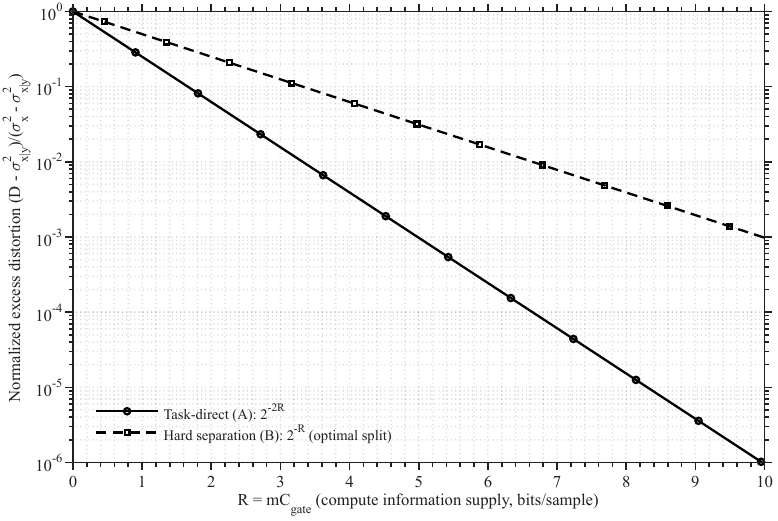}
\vspace{-3mm}
\caption{Square-root illustration in normalized excess distortion for scalar Gaussian remote estimation in the compute-limited regime. The task-direct curve decays as $2^{-2R}$, while the symmetric hard-separation curve decays as $2^{-R}$, where $R=m\Cgate$.}
\label{fig:interface_tax_gaussian}
\end{figure}

\subsection{Plug-In Variants: Word-Level Upsets and Noisy Receiver Logic}
\label{subsec:gaussian_plugins}
The closed-form expression \eqref{eq:gaussian_distortion_bound} extends immediately to other receiver-compute abstractions via the plug-in philosophy of this paper: replace the compute supply term by a valid per-instance information upper bound for the corresponding primitive model.

\begin{corollary}[Scalar Gaussian bound under word-level MCUs]
\label{cor:gaussian_mcu}
Under the scalar Gaussian model, suppose the compute substrate is a word-level additive upset primitive
$Z^{(w)}=U^{(w)}\oplus E$ with word size $w$ and error pattern $E\sim P_E$.
Assume the per-instance compute budget $m$ counts vulnerable \emph{bits}, so that $m$ bit materializations correspond to at most $m/w$ word uses.
Then any task-direct scheme must satisfy
\begin{equation}
  \boxed{
  D \ \ge\ \sigma_{x|y}^2 + \big(\sigma_x^2-\sigma_{x|y}^2\big)\,2^{-2\Rsup^{(\mathrm{MCU})}},
  \qquad
  \Rsup^{(\mathrm{MCU})}=\min\{n\Cch,\ m\Cgateeff\},
  }
  \label{eq:gaussian_distortion_bound_mcu}
\end{equation}
where $\Cgateeff=1-\frac{H(E)}{w}$ (cf.~Corollary~\ref{cor:mcu_plugin}).
\end{corollary}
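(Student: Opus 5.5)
The plan follows the template of Corollary~\ref{cor:gaussian_closed_form}: we bound the task demand by a supply term and then invert the scalar Gaussian indirect rate--distortion function. Only the compute cut changes. The first step establishes the information bound for the word-level primitive. We treat each word use $Z^{(w)}=U^{(w)}\oplus E$ as one use of a DMC on $\{0,1\}^w$, with $E$ independent of the input and of all past primitive noise. For any history $z^{i-1}$ and any input law,
\[
I(U^{(w)};Z^{(w)}\mid Z^{i-1}=z^{i-1}) = H(Z^{(w)}\mid z^{i-1}) - H(E) \le w - H(E),
\]
since the noise is additive and independent. The conditional-capacity hypothesis~\eqref{eq:cond_capacity_bound} of Lemma~\ref{lem:compute_info_bound} therefore holds with per-use capacity $w-H(E)=w\,\Cgateeff$. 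The word uses need not be binary, but the proof of Lemma~\ref{lem:compute_info_bound} never uses that.

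Next, since the budget counts bits, a block allows at most $mT/w$ word uses. Lemma~\ref{lem:compute_info_bound} applied with $U=R^{nT}$ under the committed/no-bypass structure of Definition~\ref{def:gpu_bitflip_model} then gives
\[
I(R^{nT};\hat X^T)\le I(R^{nT};Z^{L})\le \frac{mT}{w}\,w\,\Cgateeff = mT\,\Cgateeff.
\]
This is presumably the content of the cited Corollary~\ref{cor:mcu_plugin}, which we may invoke directly. Combining it with the communication cut $nT\Cch$, as in the proof of Theorem~\ref{thm:supply_demand_converse}, and with Lemma~\ref{lem:ird_block_lower_bound}, we obtain
\[
R_{X|Y}(D)\le \min\{n\Cch,\ m\Cgateeff\}=\Rsup^{(\mathrm{MCU})}.
\]

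Finally, we invert, splitting on $D$. If $D\ge\sigma_x^2$, the bound holds trivially, because its right-hand side is at most $\sigma_x^2$. The case $D\le\sigma_{x|y}^2$ is infeasible, or meets the bound only at equality in the limit, exactly as argued in Corollary~\ref{cor:gaussian_closed_form}. For $\sigma_{x|y}^2<D<\sigma_x^2$, we substitute $R_{X|Y}(D)=\tfrac12\log_2\frac{\sigma_x^2-\sigma_{x|y}^2}{D-\sigma_{x|y}^2}$. Since $R_{X|Y}$ is decreasing in $D$, solving $R_{X|Y}(D)\le\Rsup^{(\mathrm{MCU})}$ for $D$ gives~\eqref{eq:gaussian_distortion_bound_mcu}.

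The main obstacle is the first step. The bound $w-H(E)$ must hold uniformly over adaptive histories, and word uses must be counted correctly when $m/w$ is not an integer; a partially used word still counts as a whole word use, so its $w$ bits are charged to the budget and the count is $\lfloor mT/w\rfloor$, which only strengthens the bound. The independence of $E$ across uses and from $R^{nT}$ must also be stated explicitly.
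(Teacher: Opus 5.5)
Your proposal is correct and follows essentially the paper's route: the paper invokes Corollary~\ref{cor:mcu_plugin} to get $R_{X|Y}(D)\le\min\{n\Cch,\,m\Cgateeff\}$, and that corollary's proof is exactly your per-word capacity bound $w-H(E)$ times at most $m/w$ word uses, fed into Lemma~\ref{lem:compute_info_bound}; the paper then inverts the scalar Gaussian indirect rate--distortion function as in Corollary~\ref{cor:gaussian_closed_form}. Your explicit conditional-entropy derivation of the history-uniform capacity bound and your case split on $D$ just make those cited steps self-contained.
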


\begin{proof}
By Corollary~\ref{cor:mcu_plugin}, the scalar Gaussian task-direct design must satisfy
$R_{X|Y}(D)\le \min\{n\Cch,m\Cgateeff\}$.
Substituting the Gaussian expression for $R_{X|Y}(D)$ and solving for $D$ yields \eqref{eq:gaussian_distortion_bound_mcu}.
\end{proof}

\begin{corollary}[Closed-form scalar Gaussian bound under noisy receiver logic]
Under the scalar Gaussian model, suppose the receiver computation is implemented by a $K_{\mathrm{fan}}$-fan-in noisy-gate circuit with gate flip probability $\delta$ and depth at least $d_{\mathrm{logic}}$ as in Theorem~\ref{thm:supply_demand_noisy_gate}.
Then any task-direct scheme must satisfy
\begin{equation}
  \boxed{
  D \ \ge\ \sigma_{x|y}^2 + \big(\sigma_x^2-\sigma_{x|y}^2\big)\,2^{-2\Rsup^{(\mathrm{logic})}},
  }
  \label{eq:gaussian_distortion_bound_noisy_gate}
\end{equation}
where
\begin{equation}
  \Rsup^{(\mathrm{logic})}
  \triangleq
  \min\Big\{\, n\Cch,\ m\,C_{\mathrm{gate}}(\delta),\ m\,\min\!\big\{1,\ \big(K_{\mathrm{fan}}(1-2\delta)^2\big)^{d_{\mathrm{logic}}}\big\}\Big\}.
\end{equation}
\end{corollary}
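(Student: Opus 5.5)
This corollary is a direct plug-in: I would substitute the scalar Gaussian indirect rate--distortion formula into the converse of Theorem~\ref{thm:supply_demand_noisy_gate}. The argument follows the proofs of Corollaries~\ref{cor:gaussian_closed_form} and~\ref{cor:gaussian_mcu} verbatim, with only the compute-side supply term changed.

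\textbf{Step 1 (converse input).} Under the stated noisy-logic hypotheses, Theorem~\ref{thm:supply_demand_noisy_gate} applies to any task-direct scheme achieving average distortion $D$. It gives $R_{X|Y}(D)\le \Rsup^{(\mathrm{logic})}$, the three-way minimum in \eqref{eq:main_converse_noisy_gate}. This theorem already absorbs Lemma~\ref{lem:ird_block_lower_bound}, the communication cut, the per-gate capacity cut, and the Evans--Schulman propagation cut of Lemma~\ref{lem:noisy_gate_compute_cut}, so nothing new needs to be proved here.

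\textbf{Step 2 (invert the demand).} Split into cases according to $D$.
\begin{itemize}
\item For $\sigma_{x|y}^2<D<\sigma_x^2$, use the closed form $R_{X|Y}(D)=\tfrac12\log_2\!\big((\sigma_x^2-\sigma_{x|y}^2)/(D-\sigma_{x|y}^2)\big)$ quoted in the proof of Corollary~\ref{cor:gaussian_closed_form}. Since $x\mapsto 2^{-2x}$ is decreasing, the inequality $R_{X|Y}(D)\le \Rsup^{(\mathrm{logic})}$ is equivalent to $D-\sigma_{x|y}^2\ge (\sigma_x^2-\sigma_{x|y}^2)2^{-2\Rsup^{(\mathrm{logic})}}$, which is \eqref{eq:gaussian_distortion_bound_noisy_gate}.
\item For $D\ge\sigma_x^2$, the bound holds trivially because $\Rsup^{(\mathrm{logic})}\ge 0$, so the right-hand side is at most $\sigma_x^2$.
\item For $D\le\sigma_{x|y}^2$, the MMSE floor makes the target unattainable, since $R_{X|Y}(D)=\infty$ below it. No scheme exists, so the statement holds vacuously.
\end{itemize}

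\textbf{Main obstacle.} The only delicate point is bookkeeping, not mathematics. One must confirm that the hypotheses of Theorem~\ref{thm:supply_demand_noisy_gate} are exactly those assumed in the corollary: at most $q\le mT$ channel-dependent injected bits, each with noisy-gate depth at least $d_{\mathrm{logic}}$, and gate noise independent of $(R^{nT},U^q)$. Once this is checked, the monotone inversion in Step~2 finishes the proof.
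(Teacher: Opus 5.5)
Your proposal is correct and follows the paper's own route: the paper proves this corollary by repeating the argument of Corollary~\ref{cor:gaussian_closed_form} with $\Rsup$ replaced by the depth-dependent supply of Theorem~\ref{thm:supply_demand_noisy_gate}. That is, it substitutes the scalar Gaussian indirect rate--distortion formula into $R_{X|Y}(D)\le \Rsup^{(\mathrm{logic})}$ and inverts, and your explicit treatment of the regimes $D\ge\sigma_x^2$ and $D\le\sigma_{x|y}^2$ matches the case remarks in that earlier proof.
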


\begin{proof}
The proof is identical to Corollary~\ref{cor:gaussian_closed_form}, replacing the supply constraint from Theorem~\ref{thm:supply_demand_converse} by the depth-dependent supply in Theorem~\ref{thm:supply_demand_noisy_gate}.
\end{proof}
\subsection{To Code or Not to Code: A Symbol-by-Symbol Analog Baseline}
\label{subsec:to_code_or_not_gaussian}
For matched scalar Gaussian AWGN/MSE, zero-delay uncoded linear transmission is Shannon-optimal~\cite{Goblick65}. We include this subsection only as a baseline. It shows that a low-latency task-direct mapping can avoid a hard digital interface, while Subsection~\ref{subsec:vector_gaussian} gives the clearer high-dimensional interface-tax witness.

Specialize the physical channel to a real AWGN channel
\begin{equation}
  R = S + W,\qquad W\sim\mathcal{N}(0,\sigma_w^2),
\end{equation}
with average power constraint $\mathbb{E}[S^2]\le P$, so that $\Cch=\frac12\log_2(1+P/\sigma_w^2)$.
Let the encoder compute the Gaussian sufficient statistic
\begin{equation}
  \tilde X \triangleq \mathbb{E}[X|Y] = \frac{\sigma_x^2}{\sigma_x^2+\sigma_v^2}\,Y,
  \qquad
  \mathrm{Var}(\tilde X)=\sigma_x^2-\sigma_{x|y}^2,
\end{equation}
and transmit it \emph{uncoded} (symbol-by-symbol) via
\begin{equation}
  S = \sqrt{\frac{P}{\sigma_x^2-\sigma_{x|y}^2}}\;\tilde X.
\end{equation}
The receiver forms the MMSE estimate $\hat X=\mathbb{E}[X|R]$.
Since $X=\tilde X + (X-\tilde X)$ with $\tilde X\perp (X-\tilde X)$ in the jointly Gaussian model, the resulting MSE is
\begin{equation}
  D_{\mathrm{analog}}
  = \sigma_{x|y}^2
  + \mathrm{Var}(\tilde X\,|\,R)
  = \sigma_{x|y}^2 + (\sigma_x^2-\sigma_{x|y}^2)\frac{\sigma_w^2}{P+\sigma_w^2}
  = \sigma_{x|y}^2 + (\sigma_x^2-\sigma_{x|y}^2)2^{-2\Cch}.
  \label{eq:analog_gaussian_mse}
\end{equation}
Comparing \eqref{eq:analog_gaussian_mse} to \eqref{eq:gaussian_distortion_bound} shows that in the \emph{\regch} and for $n=1$, this uncoded task-direct scheme achieves the benchmark without coding across task instances ($T=1$).

As a coarse fixed-point proxy, if the receiver quantizes the linear estimate to $b$ output bits over a clipped range $[-A,A]$, then the added MSE is approximately
\begin{equation}
  \mathbb{E}\big[(\hat X-\hat X^{(b)})^2\big]
  \;\approx\;
  \frac{A^2}{3}\,2^{-2b},
\end{equation}
so achieving an additional penalty of about $\delta_q$ requires
$b \gtrsim \frac{1}{2}\log_2\!\big(\frac{A^2}{3\delta_q}\big)$ reliable output bits. Under our compute substrate model this corresponds roughly to $m\Cgate \gtrsim b$ in the first-order regime (or its finite-$T$ refinement in \eqref{eq:normal_supply_demand}), which is why uncoded/analog or otherwise shallow estimators can be attractive in low-latency compute-limited regimes.

\subsection{Vector Gaussian Remote Estimation (Extension)}
\label{subsec:vector_gaussian}
A more revealing Gaussian witness arises in vector regimes, where simple symbol-by-symbol linear mappings need not be optimal and digital compression is often beneficial.

Consider the vector Gaussian remote estimation model with $X\in\mathbb{R}^p$
\begin{equation}
  X \sim \mathcal{N}(0,\Sigma_X),\qquad Y = X + V,\quad V\sim \mathcal{N}(0,\Sigma_V),
\end{equation}
with $X$ and $V$ independent, and quadratic distortion $d(x,\hat x)=\|x-\hat x\|_2^2$.
Let $\Sigma_{X|Y}$ denote the conditional covariance of $X$ given $Y$ and let
\begin{equation}
  \Sigma_{\tilde X}\triangleq \mathrm{Cov}\big(\mathbb{E}[X|Y]\big)=\Sigma_X-\Sigma_{X|Y}.
\end{equation}
Let $\{\lambda_i\}_{i=1}^p$ be the eigenvalues of $\Sigma_{\tilde X}$.

The indirect rate--distortion function satisfies the classical decomposition~\cite{BergerBook71,Witsenhausen80}
\begin{equation}
  R_{X|Y}(D) = R_{\tilde X}\big(D-\mathrm{tr}(\Sigma_{X|Y})\big),
\end{equation}
where $R_{\tilde X}(\cdot)$ is the (direct) Gaussian rate--distortion function of $\tilde X=\mathbb{E}[X|Y]$.
Consequently, the distortion--rate function is given by water-filling:
\begin{equation}
  D(R) = \mathrm{tr}(\Sigma_{X|Y}) + \sum_{i=1}^p \min\{\nu,\lambda_i\},
  \qquad
  R = \frac{1}{2}\sum_{i=1}^p \Big[\log_2\!\Big(\frac{\lambda_i}{\nu}\Big)\Big]_+,
  \label{eq:vector_gaussian_waterfill}
\end{equation}
where $\nu\ge 0$ is chosen to meet the rate constraint.

\begin{corollary}[Vector Gaussian distortion bound under dual noise]
\label{cor:vector_gaussian_dual_noise}
For the vector Gaussian model, any task-direct scheme (Architecture~(A)) must satisfy
\begin{equation}
  D \ \ge\ D(\Rsup)
\end{equation}
where $D(R)$ is defined by \eqref{eq:vector_gaussian_waterfill} and $\Rsup$ is defined by \eqref{eq:Rsupply_def}.
Under hard-separation (Architecture~(B)), the same bound holds with $\Rsup$ replaced by $\min\{n\Cch,\frac{m}{2}\Cgate\}$.
\end{corollary}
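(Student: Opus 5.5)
The plan is to reduce Corollary~\ref{cor:vector_gaussian_dual_noise} to the scalar template of Corollary~\ref{cor:gaussian_closed_form}: combine the relevant supply--demand converse with the vector Gaussian indirect rate--distortion function, and then invert that function using its monotonicity. Concretely, for any task-direct scheme meeting average distortion $D$, Theorem~\ref{thm:supply_demand_converse} gives $R_{X|Y}(D)\le \Rsup$. For hard-separation, Corollary~\ref{cor:hard_sep_supply_demand} under the optimal split gives $R_{X|Y}(D)\le \min\{n\Cch,\frac{m}{2}\Cgate\}$. Both converses hold for vector alphabets, because Lemma~\ref{lem:ird_block_lower_bound} and Lemma~\ref{lem:compute_info_bound} use only chain rules and data processing, under the standing Borel-space regularity.

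Next I identify $R_{X|Y}$ explicitly. By Lemma~\ref{lem:induced_distortion_equivalence}, with squared error the induced distortion is $\bar d(y,\hat x)=\mathrm{tr}(\Sigma_{X|Y})+\|\tilde x(y)-\hat x\|_2^2$, where $\tilde x(y)=\mathbb{E}[X\mid Y=y]$; the cross term vanishes by orthogonality of the MMSE error. Because $\tilde X$ is an invertible linear function of $Y$ when $\Sigma_{\tilde X}$ is nonsingular, and a sufficient statistic in general, this yields $R_{X|Y}(D)=R_{\tilde X}(D-\mathrm{tr}(\Sigma_{X|Y}))$. Here $R_{\tilde X}$ is the classical reverse water-filling function over the eigenvalues $\lambda_i$, as quoted from~\cite{BergerBook71,Witsenhausen80}. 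I take this decomposition as given, since the paper states it.

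The remaining step is the inversion, which I expect to be the only point needing care. The water-filling parametrization \eqref{eq:vector_gaussian_waterfill} defines the distortion--rate function $D(R)$, which is continuous and nonincreasing in $R$. It is also the inverse of $R_{X|Y}$ on the interval $\mathrm{tr}(\Sigma_{X|Y})<D<\mathrm{tr}(\Sigma_X)$, where $R_{X|Y}$ is strictly decreasing.

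1. Suppose, for contradiction, that $D<D(\Rsup)$. If $D\le \mathrm{tr}(\Sigma_{X|Y})$, the target is infeasible for any estimator, since the MMSE floor cannot be beaten.
2. Otherwise $D$ lies in the strictly decreasing range, so $R_{X|Y}(D)>R_{X|Y}(D(\Rsup))=\Rsup$. This contradicts the converse.
3. At the boundary $D\ge \mathrm{tr}(\Sigma_X)$ the bound is trivial, because $D(R)\le \mathrm{tr}(\Sigma_X)$.

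For the hard-separation claim, the same argument applies verbatim with $\Rsup$ replaced by $\min\{n\Cch,\frac{m}{2}\Cgate\}$.
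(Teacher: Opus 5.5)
Your proposal is correct and follows the paper's route: combine Theorem~\ref{thm:supply_demand_converse} (or the hard-separation converse) with the water-filling indirect rate--distortion characterization and invert. Your monotonicity argument just spells out the inversion the paper leaves implicit. One small fix: at $D=\mathrm{tr}(\Sigma_{X|Y})$ the target is not infeasible for every estimator, since the MMSE estimator attains it; the contradiction still holds because $R_{X|Y}(D)=+\infty$ there whenever $\Sigma_{\tilde X}\neq 0$, and the assumption $D<D(\Rsup)$ forces $\Sigma_{\tilde X}\neq 0$.
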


\begin{proof}
Combine Theorem~\ref{thm:supply_demand_converse} (or Theorem~\ref{thm:interface_tax} for hard-separation) with the Gaussian indirect rate--distortion characterization in \eqref{eq:vector_gaussian_waterfill}.
\end{proof}

\begin{remark}[Diagonal/isotropic closed forms, the symmetric hard-separation specialization, and a one-line numerical instantiation]
For the additive observation model $Y=X+V$ above, the conditional covariance admits the closed form
\begin{equation}
  \Sigma_{X|Y} = \big(\Sigma_X^{-1}+\Sigma_V^{-1}\big)^{-1},
  \label{eq:vector_gaussian_cond_cov}
\end{equation}
(when $\Sigma_X$ and $\Sigma_V$ are invertible), and the sufficient statistic is linear:
$\tilde X=\mathbb{E}[X|Y]=\Sigma_X(\Sigma_X+\Sigma_V)^{-1}Y$, with covariance
\begin{equation}
  \Sigma_{\tilde X}=\Sigma_X-\Sigma_{X|Y}=\Sigma_X(\Sigma_X+\Sigma_V)^{-1}\Sigma_X.
  \label{eq:vector_gaussian_suff_cov}
\end{equation}
If $\Sigma_X$ and $\Sigma_V$ commute (e.g., are simultaneously diagonalizable), then writing their eigenvalues as
$\sigma_{x,i}^2$ and $\sigma_{v,i}^2$ yields
$\lambda_i=\frac{\sigma_{x,i}^4}{\sigma_{x,i}^2+\sigma_{v,i}^2}$ and
$\mathrm{tr}(\Sigma_{X|Y})=\sum_i\frac{\sigma_{x,i}^2\sigma_{v,i}^2}{\sigma_{x,i}^2+\sigma_{v,i}^2}$, making \eqref{eq:vector_gaussian_waterfill} fully explicit.

In the isotropic case $\Sigma_X=\sigma_x^2 I_p$ and $\Sigma_V=\sigma_v^2 I_p$, all modes are active and \eqref{eq:vector_gaussian_waterfill} reduces to
\begin{equation}
  D(R)
  = p\,\frac{\sigma_x^2\sigma_v^2}{\sigma_x^2+\sigma_v^2}
    + p\,\frac{\sigma_x^4}{\sigma_x^2+\sigma_v^2}\,2^{-2R/p}.
  \label{eq:vector_gaussian_isotropic_DR}
\end{equation}
Therefore, in a compute-limited regime where $\Rsup=m\Cgate$ dominates and hard-separation forces
$\Rsuphard=\frac{1}{2}\Rsup$,
the excess distortion above the MMSE floor obeys
\begin{equation}
  D\!\Big(\tfrac{1}{2}\Rsup\Big)-\mathrm{tr}(\Sigma_{X|Y})
  \;=\;
  \big(D(\Rsup)-\mathrm{tr}(\Sigma_{X|Y})\big)\,2^{\Rsup/p},
  \label{eq:vector_gaussian_isotropic_tax_ratio}
\end{equation}
making the first-order supply loss $\Rsup\mapsto\frac{1}{2}\Rsup$ quantitatively visible even in vector regimes where digital compression is usually advantageous.

As a minimal isotropic instantiation, take $(p,\sigma_x^2,\sigma_v^2,m\Cgate)=(16,1,1/4,48)$, so $\Rsup=48$ bits per length-$p$ vector (3 bits/dimension) whereas hard-separation yields $\Rsuphard=24$ bits. The MMSE floor is $16\cdot \frac{1\cdot (1/4)}{1+1/4}=3.2$, and \eqref{eq:vector_gaussian_isotropic_DR} gives $D(48)=3.4$ versus $D(24)=4.8$; equivalently, the excess distortion jumps from $0.2$ to $1.6$, matching the factor $2^{\Rsup/p}=2^{48/16}=2^3$ predicted by \eqref{eq:vector_gaussian_isotropic_tax_ratio}.
\end{remark}

\begin{remark}[Worked diagonal examples deferred to the appendix]
Appendix~\ref{app:vector_gaussian_worked_example} gives a hand-checkable $p=2$ example and a higher-dimensional plotted gap (Fig.~\ref{fig:vector_gap_uncoded_hard}), making explicit both the compute-limited MSE gap from $m\Cgate\mapsto \tfrac{1}{2}m\Cgate$ and the delay in channel-limited saturation under symmetric hard-separation.
\end{remark}

Corollary~\ref{cor:vector_gaussian_dual_noise} also serves as a tractable local linear-Gaussian proxy for high-dimensional learned inference: around an operating point, Jacobian- or Gauss--Newton-type linearizations often yield approximately linear-Gaussian error models, and \eqref{eq:vector_gaussian_waterfill} then predicts how the available supply $\min\{n\Cch,m\Cgate\}$ should be spent across principal modes. In per-second notation (Appendix~\ref{sec:throughput}), a task-arrival rate $\lambda$ maps this to $R=\Rsup(\lambda)=\frac{1}{\lambda}\min\{\mathcal{B}\Cch,\,\mathcal{G}\Cgate\}$ bits per length-$p$ vector, with hard-separation halving the compute term to $\tfrac{1}{2}\mathcal{G}\Cgate$.


\section{Supplementary Safer Interfaces: Undetected Errors Versus Erasures}
\label{sec:erasures}

\indent We discuss undetected-error versus erasure handling and show how erasure-aware designs can improve tail reliability under compute budgets.

\paragraph{From bit flips to $\mathsf{OK}/\mathsf{ER}/\mathsf{UE}$}
Under either i.i.d.\ single-bit flips or correlated word-level upsets, a receiver implementation typically includes lightweight invariants (e.g., CRC/parity, range checks for floating-point fields, consistency tests, or redundant execution) that can detect a subset of corruptions.
Detected corruptions can be surfaced as explicit failures (erasures), converting what would be silent data corruption into $\mathsf{ER}$ events, as in software systems such as Radshield~\cite{WangRadshieldASPLOS26} and SAVE~\cite{ZhengSAVEATC25}.
The remaining undetected corruptions correspond to $\mathsf{UE}$ events and can induce rare but catastrophic outliers, especially when faults affect control flow or high-significance floating-point fields (e.g.,~\cite{LiOneBitFlipUSENIX24}), motivating the erasure-aware fallback and clipping principles developed below.

\subsection{A Minimal Decoder Outcome Model}
A noisy decoder may output either a valid (possibly correct) payload, an undetected wrong payload, or declare a failure.
Let $\mathsf{OK}$, $\mathsf{UE}$, and $\mathsf{ER}$ denote the correct-output, undetected-error, and detected-erasure events, respectively, with probabilities
\begin{equation}
  \mathbb{P}[\mathsf{UE}] = p_{\mathrm{ue}},\qquad
  \mathbb{P}[\mathsf{ER}] = p_{\mathrm{er}},\qquad
  \mathbb{P}[\mathsf{OK}] = 1-p_{\mathrm{ue}}-p_{\mathrm{er}}.
\end{equation}

\subsection{Task-Level Impact: Three-Outcome Decomposition and Erasure-Aware Fallback}
\begin{proposition}[MSE decomposition under $\mathsf{OK}/\mathsf{ER}/\mathsf{UE}$]
\label{prop:erasure_fallback_mse}
Suppose the receiver outputs $\hat X^{(\mathsf{OK})}$ on $\mathsf{OK}$,
outputs a fallback estimate $\hat X^{(\mathsf{FB})}$ on $\mathsf{ER}$,
and outputs some (possibly corrupted) estimate $\hat X^{(\mathsf{UE})}$ on $\mathsf{UE}$.
Then the mean-squared error satisfies
\begin{equation}
  D
  = p_{\mathrm{ok}}\,D_{\mathsf{OK}} + p_{\mathrm{er}}\,D_{\mathsf{FB}} + p_{\mathrm{ue}}\,D_{\mathsf{UE}},
  \label{eq:mse_three_event}
\end{equation}
where $p_{\mathrm{ok}}\triangleq 1-p_{\mathrm{ue}}-p_{\mathrm{er}}$ and
\begin{align}
  D_{\mathsf{OK}} &\triangleq \mathbb{E}\!\big[(X-\hat X^{(\mathsf{OK})})^2 \mid \mathsf{OK}\big],\\
  D_{\mathsf{FB}} &\triangleq \mathbb{E}\!\big[(X-\hat X^{(\mathsf{FB})})^2 \mid \mathsf{ER}\big],\\
  D_{\mathsf{UE}} &\triangleq \mathbb{E}\!\big[(X-\hat X^{(\mathsf{UE})})^2 \mid \mathsf{UE}\big].
\end{align}
In particular, if $p_{\mathrm{ue}}=0$ (no undetected errors), then
$D=(1-p_{\mathrm{er}})D_{\mathsf{OK}}+p_{\mathrm{er}}D_{\mathsf{FB}}$.
\end{proposition}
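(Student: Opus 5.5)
The plan is a direct application of the law of total expectation over the partition $\{\mathsf{OK},\mathsf{ER},\mathsf{UE}\}$. First I would check that these three events are mutually exclusive and exhaustive. This follows from the decoder outcome model: on each realization the noisy decoder either returns the correct payload, returns an undetected wrong payload, or declares a failure. Their probabilities are $p_{\mathrm{ok}}$, $p_{\mathrm{er}}$, and $p_{\mathrm{ue}}$, which sum to one by definition of $p_{\mathrm{ok}}\triangleq 1-p_{\mathrm{ue}}-p_{\mathrm{er}}$.

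Next I would write the receiver output as a single random variable using indicators:
\[
  \hat X=\mathbf 1_{\mathsf{OK}}\hat X^{(\mathsf{OK})}+\mathbf 1_{\mathsf{ER}}\hat X^{(\mathsf{FB})}+\mathbf 1_{\mathsf{UE}}\hat X^{(\mathsf{UE})}.
\]
Because the events are disjoint, the squared error splits the same way:
\[
  (X-\hat X)^2=\sum_{\mathsf A}\mathbf 1_{\mathsf A}\,(X-\hat X^{(\mathsf A)})^2 .
\]
Here $\mathsf A$ ranges over the three events, with the fallback estimate used on $\mathsf{ER}$. Taking expectations and applying $\mathbb E[\mathbf 1_{\mathsf A}Z]=\mathbb P(\mathsf A)\,\mathbb E[Z\mid\mathsf A]$ to each term gives \eqref{eq:mse_three_event} immediately.

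The one point needing care is well-definedness. The conditional expectations $D_{\mathsf{OK}}$, $D_{\mathsf{FB}}$, $D_{\mathsf{UE}}$ are only defined when the conditioning event has positive probability. I would adopt the convention that a term with zero probability contributes zero, which is consistent with the indicator expression above. I would also allow values in $[0,\infty]$, so the identity holds even when $D_{\mathsf{UE}}$ is infinite; this matches the paper's later point that undetected errors can dominate an unbounded MSE. No integrability assumption beyond nonnegativity is then needed, since linearity of expectation for nonnegative variables holds in the extended reals.

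Finally, for the case $p_{\mathrm{ue}}=0$, the $\mathsf{UE}$ term vanishes under this convention and $p_{\mathrm{ok}}=1-p_{\mathrm{er}}$, which gives the stated special form. I do not expect a real obstacle anywhere; the only subtle step is the zero-probability and infinite-value convention.
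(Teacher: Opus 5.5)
Your proposal is correct and follows the same approach as the paper, which proves the identity in one line by the law of total expectation over the partition $\{\mathsf{OK},\mathsf{ER},\mathsf{UE}\}$. Your indicator decomposition and the convention for zero-probability events and extended-real values are a harmless refinement that makes explicit what the paper leaves implicit.
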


\begin{proof}
This follows immediately from the law of total expectation over the partition $\{\mathsf{OK},\mathsf{ER},\mathsf{UE}\}$.
\end{proof}

\paragraph*{Short consequence (block Markov bound)}
For nonnegative distortion and any threshold $D_0>0$,
\begin{equation}
  \boxed{
  \epsilon_D^{\mathrm{blk}}(D_0)
  \ \le\
  \frac{1}{D_0}\cdot \frac{1}{T}\sum_{t=1}^T \mathbb{E}\!\big[d(X_t,\hat X_t)\big].
  }
  \label{eq:block_markov_template}
\end{equation}
In the stationary setting this simplifies to $\epsilon_D^{\mathrm{blk}}(D_0)\le D/D_0$, where $D=\mathbb{E}[d(X,\hat X)]$ is the average distortion. Under the $\mathsf{OK}/\mathsf{ER}/\mathsf{UE}$ model of Proposition~\ref{prop:erasure_fallback_mse}, the same bound becomes
\begin{equation}
  \epsilon_D^{\mathrm{blk}}(D_0)
  \ \le\
  \frac{(1-p_{\mathrm{ue}}-p_{\mathrm{er}})\,D_{\mathsf{OK}} + p_{\mathrm{er}}\,D_{\mathsf{FB}} + p_{\mathrm{ue}}\,D_{\mathsf{UE}}}{D_0},
  \label{eq:block_markov_three_event}
\end{equation}
which turns any computable upper bound on $D_{\mathsf{UE}}$ (e.g., Proposition~\ref{prop:clipping_mse}) into a directly computable block-tail benchmark.

\begin{proposition}[Checksum/hashing can provably convert $\mathsf{UE}$ into $\mathsf{ER}$ using $h$ protected bits]
\label{prop:hash_detection}
Let $U\in\{0,1\}^L$ denote any task-critical intermediate digital state (e.g., an interface message $\hat B$, a control-plane state, or a cached tensor block) that may be corrupted into $U'$ by the noisy substrate.
Fix an integer $h\ge 1$ and choose a hash/check family $\{f_k:\{0,1\}^L\to\{0,1\}^h\}_{k\in\mathcal{K}}$ that is \emph{2-universal}, i.e., for any $u\neq u'$,
\[
  \Pr_{k\sim \mathrm{Unif}(\mathcal{K})}\big[f_k(u)=f_k(u')\big]\ \le\ 2^{-h}.
\]
Assume the hash key $k$ and tag $t=f_k(U)$ are stored (or computed) on a sufficiently protected/reliable region (e.g., a reliable memory island),
and that the corruption mechanism producing $U'$ is independent of $k$.\footnote{This ``independent key'' assumption is natural for random hardware faults; it is the standard setting for universal-hash error detection.}
Upon retrieving $U'$, declare a detected failure ($\mathsf{ER}$) unless $f_k(U')=t$.
Then the resulting undetected error probability satisfies
\begin{equation}
  p_{\mathrm{ue}}
  \ \le\
  2^{-h}.
  \label{eq:pue_hash_bound}
\end{equation}
Consequently, to enforce a block-level excess-distortion target via Corollary~\ref{cor:tail_sandwich},
it is sufficient to ensure
\[
  T\Big((1-p_{\mathrm{ue}}-p_{\mathrm{er}})\delta_{\mathsf{OK}}+p_{\mathrm{er}}\delta_{\mathsf{FB}}+p_{\mathrm{ue}}\Big)\le \epsilon.
\]
In the additional safe nominal / safe fallback regime discussed below, this reduces to the simpler design rule $p_{\mathrm{ue}}\lesssim \epsilon/T$.
Using \eqref{eq:pue_hash_bound}, it therefore suffices to allocate
\begin{equation}
  h\ \ge\ \left\lceil \log_2\!\frac{T}{\epsilon}\right\rceil
  \label{eq:h_for_target_pue}
\end{equation}
protected bits for hashing/checking.
\end{proposition}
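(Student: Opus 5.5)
The plan is to bound $p_{\mathrm{ue}}$ by conditioning on the corruption outcome and then averaging over the independent hash key; the two design rules then follow by substitution. Note that an undetected error means $U'\neq U$ while the check passes, $f_k(U')=t$, where $t=f_k(U)$ is stored reliably. Write
\[
  p_{\mathrm{ue}}=\Pr\big[U'\neq U,\ f_k(U')=f_k(U)\big]
  =\sum_{u\neq u'}\Pr[U=u,U'=u']\,\Pr\big[f_k(u')=f_k(u)\,\big|\,U=u,U'=u'\big].
\]

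First I would justify that the conditional law of $k$ given $(U,U')=(u,u')$ is still $\mathrm{Unif}(\mathcal K)$. This is where the standing assumption enters: $k$ is drawn independently of the state $U$, and the corruption mechanism producing $U'$ is independent of $k$. Hence $(U,U')\perp k$. The key and tag are protected, so $t$ is exactly $f_k(U)$ and $k$ is unchanged at check time.

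Next I would apply 2-universality to each term with $u\neq u'$, which gives a factor at most $2^{-h}$. Summing over pairs then yields
\[
  p_{\mathrm{ue}}\le 2^{-h}\Pr[U'\neq U]\le 2^{-h},
\]
which is \eqref{eq:pue_hash_bound}. Corruptions with $U'=U$ are counted as $\mathsf{OK}$, so they do not contribute to $p_{\mathrm{ue}}$. The event $U'\neq U$ with a failed check is $\mathsf{ER}$ and is not charged here.

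For the consequences I would invoke Corollary~\ref{cor:tail_sandwich} as stated later in the paper. Its sufficient condition is the displayed inequality with per-outcome excess probabilities $\delta_{\mathsf{OK}}$ and $\delta_{\mathsf{FB}}$, plus a worst-case factor of $1$ on $\mathsf{UE}$. In the safe nominal / safe fallback regime, $\delta_{\mathsf{OK}}=\delta_{\mathsf{FB}}=0$, so the condition reduces to $Tp_{\mathrm{ue}}\le\epsilon$. Combining this with $p_{\mathrm{ue}}\le 2^{-h}$, it suffices that $2^{-h}\le \epsilon/T$, i.e.\ $h\ge\log_2(T/\epsilon)$. Taking the ceiling, since $h$ is an integer, gives \eqref{eq:h_for_target_pue}.

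The only delicate step is the independence justification in the first step. If the fault process could depend on $k$, for example through a corrupted key, the uniform-key averaging would fail. The assumptions that $k$ and $t$ live on the protected region and that the corruption is independent of $k$ are exactly what exclude this. The remaining steps are routine.
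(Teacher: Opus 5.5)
Your proposal is correct and follows essentially the same route as the paper: condition on the corruption outcome, apply 2-universality to each pair $u\neq u'$ under the uniform key, average, and then read off the design rule $h\ge\lceil\log_2(T/\epsilon)\rceil$ from Corollary~\ref{cor:tail_sandwich} with $\delta_{\mathsf{OK}}=\delta_{\mathsf{FB}}=0$. Conditioning on the pair $(U,U')$ and justifying $(U,U')\perp k$ makes explicit what the paper's one-line argument uses silently, and it also yields the slightly sharper bound $p_{\mathrm{ue}}\le 2^{-h}\Pr[U'\neq U]$.
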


\begin{proof}
Condition on any realization of the corrupted state $U'=u'$.
If $u'=U$, then no error occurs.
If $u'\neq U$, then by 2-universality,
$\Pr_k[f_k(u')=f_k(U)]\le 2^{-h}$.
Averaging over $U'$ yields \eqref{eq:pue_hash_bound}.
\end{proof}

\begin{proposition}[Duplication-and-compare detection under additive faults]
\label{prop:dup_compare}
Let $U\in\{0,1\}^L$ denote any task-critical intermediate digital state (e.g., an interface message $\hat B$, a control-plane state, or a cached tensor block).
Consider a duplication-and-compare detector that stores/retrieves two replicas through additive fault patterns:
\begin{equation}
  Z^{(1)} = U\oplus E_1,\qquad Z^{(2)}=U\oplus E_2,
  \label{eq:dup_compare_model}
\end{equation}
where $E_1,E_2\in\{0,1\}^L$ are random error patterns.
The detector declares a detected failure ($\mathsf{ER}$) unless $Z^{(1)}=Z^{(2)}$, in which case it accepts $\hat U\triangleq Z^{(1)}$.

Assume the comparison and control logic of the detector are reliable (or protected via a reliable island).
Then the undetected-error event satisfies
\[
  \mathsf{UE}
  \;=\;
  \{Z^{(1)}=Z^{(2)}\neq U\}
  \;=\;
  \{E_1=E_2\neq 0^L\},
\]
and therefore
\begin{equation}
  p_{\mathrm{ue}} \ =\ \mathbb{P}(E_1=E_2,\ E_1\neq 0^L).
  \label{eq:pue_dup_general}
\end{equation}

In particular, if $E_1$ and $E_2$ are independent with a common pmf $P_E$, then
\begin{equation}
  p_{\mathrm{ue}}
  = \sum_{e\neq 0^L} P_E(e)^2,
  \qquad
  p_{\mathrm{ok}} = P_E(0^L)^2,
  \qquad
  p_{\mathrm{er}} = 1-\sum_{e}P_E(e)^2.
  \label{eq:pue_dup_iid}
\end{equation}
\emph{Extension to $r$ replicas.}
If the detector stores/retrieves $r\ge 2$ replicas through additive faults
$Z^{(j)}=U\oplus E_j$ for $j=1,\dots,r$ and declares $\mathsf{ER}$ unless \emph{all} retrieved replicas agree,
then the undetected-error event is $\{E_1=\cdots=E_r\neq 0^L\}$ and hence
\begin{equation}
  p_{\mathrm{ue}}
  = \mathbb{P}(E_1=\cdots=E_r,\ E_1\neq 0^L).
\end{equation}
Under i.i.d.\ faults $E_1,\dots,E_r\stackrel{\mathrm{i.i.d.}}{\sim}P_E$, this reduces to
\begin{equation}
  p_{\mathrm{ue}} = \sum_{e\neq 0^L} P_E(e)^r,
  \qquad
  p_{\mathrm{ok}} = P_E(0^L)^r,
  \qquad
  p_{\mathrm{er}} = 1-\sum_{e}P_E(e)^r.
  \label{eq:pue_r_replica}
\end{equation}

More generally, under a common-mode mixture model with parameter $\theta\in[0,1]$ in which
with probability $\theta$ the two replicas suffer the same fault ($E_1=E_2=E$ with $E\sim P_E$) and with probability $1-\theta$ they suffer independent faults $E_1,E_2\stackrel{\mathrm{i.i.d.}}{\sim}P_E$, one has
\begin{align}
  p_{\mathrm{ue}}
  &= \theta\big(1-P_E(0^L)\big) + (1-\theta)\sum_{e\neq 0^L} P_E(e)^2,
  \label{eq:pue_dup_common}\\
  p_{\mathrm{ok}}
  &= \theta P_E(0^L) + (1-\theta)P_E(0^L)^2,\nonumber\\
  p_{\mathrm{er}}
  &= 1-p_{\mathrm{ok}}-p_{\mathrm{ue}}.\nonumber
\end{align}
\end{proposition}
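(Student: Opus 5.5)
The plan is a direct event-level analysis of the detector's outcome, followed by probability bookkeeping for each fault model. Because the comparison and control logic are reliable, the detector's decision is a deterministic function of $(Z^{(1)},Z^{(2)})$. So I will first identify the acceptance event, then intersect it with the event that the accepted value is wrong.

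\textbf{Two-replica events.} From \eqref{eq:dup_compare_model}, $Z^{(1)}=Z^{(2)}$ is equivalent to $U\oplus E_1=U\oplus E_2$. XOR-cancellation of $U$ reduces this to $E_1=E_2$. On acceptance, $\hat U=Z^{(1)}=U\oplus E_1$, so $\hat U\neq U$ holds exactly when $E_1\neq 0^L$. Hence
\[
  \mathsf{UE}=\{E_1=E_2\neq 0^L\},\qquad
  \mathsf{OK}=\{E_1=E_2=0^L\},
\]
and $\mathsf{ER}$ is the complement $\{E_1\neq E_2\}$. This gives \eqref{eq:pue_dup_general} with no distributional assumption, and no independence between the faults and $U$ is needed.

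\textbf{i.i.d.\ two replicas.} With $E_1,E_2$ independent with common pmf $P_E$,
\[
  \mathbb{P}(E_1=E_2=e)=P_E(e)^2 .
\]
Summing over $e\neq 0^L$ gives $p_{\mathrm{ue}}$, and taking $e=0^L$ gives $p_{\mathrm{ok}}$. Then $p_{\mathrm{er}}=1-\sum_e P_E(e)^2$, which yields \eqref{eq:pue_dup_iid}.

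\textbf{$r$ replicas.} The same cancellation shows that all replicas agree exactly when $E_1=\cdots=E_r$. Under i.i.d.\ faults, independence gives
\[
  \mathbb{P}(E_1=\cdots=E_r=e)=P_E(e)^r ,
\]
and the same summations yield \eqref{eq:pue_r_replica}.

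\textbf{Common-mode mixture.} I will condition on the latent mode indicator and apply the law of total probability. In the common-mode branch, which has probability $\theta$, we have $E_1=E_2=E$, so agreement is automatic. Within that branch, $\mathsf{UE}$ occurs when $E\neq 0^L$, with probability $1-P_E(0^L)$, and $\mathsf{OK}$ occurs with probability $P_E(0^L)$. In the independent branch, the i.i.d.\ formulas above apply. Weighting the two branches by $\theta$ and $1-\theta$ gives \eqref{eq:pue_dup_common} and the accompanying expression for $p_{\mathrm{ok}}$, and $p_{\mathrm{er}}$ follows by complement.

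There is no real obstacle. The only point to state carefully is that the reliability of the comparator makes the decision rule exactly the event $\{Z^{(1)}=Z^{(2)}\}$. In the mixture model, one should also note that the mode indicator must be independent of the fault draws within each branch, which is implicit in the model's description.
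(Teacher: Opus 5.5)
Your proposal is correct and follows the paper's proof step for step. XOR-cancellation reduces acceptance to $E_1=E_2$ (or $E_1=\cdots=E_r$) and wrong acceptance to a nonzero common error; the i.i.d.\ formulas follow from $\mathbb{P}(E_1=\cdots=E_r=e)=P_E(e)^r$, and the mixture case follows by conditioning on the mode. Your explicit identification of $\mathsf{OK}$ and $\mathsf{ER}$ as events is simply a more careful write-up of the same argument.
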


\begin{proof}
The test accepts iff $Z^{(1)}=Z^{(2)}$.
Under the additive model \eqref{eq:dup_compare_model}, $Z^{(1)}=Z^{(2)}$ iff $E_1=E_2$.
Moreover, acceptance is incorrect iff $Z^{(1)}\neq U$, i.e., $E_1\neq 0^L$.
This proves \eqref{eq:pue_dup_general}.
If $E_1,E_2$ are i.i.d.\ with pmf $P_E$, then
$\mathbb{P}(E_1=E_2=e)=P_E(e)^2$, yielding \eqref{eq:pue_dup_iid}.
The $r$-replica extension follows by the same argument: the all-agree test accepts iff $E_1=\cdots=E_r$, and acceptance is incorrect iff this common error is nonzero. The common-mode mixture formulas follow by conditioning on the mixture component.
\end{proof}

\paragraph{Specialization to the three-direction multiplicity MCU template}
Specializing Proposition~\ref{prop:dup_compare} to a word-level additive MCU law $E\sim P_E$ on $\{0,1\}^w$ with $P_E(0^w)=1-\alpha$ and, conditional on $(K=k,Q=q)$, uniform placement over $N_{k,q}$ nonzero patterns, gives the iid two-replica formulas
\begin{equation}
  p_{\mathrm{ue}}
  = \alpha^2 \sum_{k,q} \frac{P_{K,Q}(k,q)^2}{N_{k,q}},
  \qquad
  p_{\mathrm{ok}}=(1-\alpha)^2,
  \qquad
  p_{\mathrm{er}} = 1-(1-\alpha)^2-\alpha^2 \sum_{k,q}\frac{P_{K,Q}(k,q)^2}{N_{k,q}}.
  \label{eq:pue_dup_mcu}
\end{equation}
For $r\ge 2$ independent replicas,
\begin{equation}
  p_{\mathrm{ue}}
  = \alpha^r \sum_{k,q} \frac{P_{K,Q}(k,q)^r}{N_{k,q}^{\,r-1}},
  \qquad
  p_{\mathrm{ok}}=(1-\alpha)^r,
  \qquad
  p_{\mathrm{er}} = 1-p_{\mathrm{ok}}-p_{\mathrm{ue}},
  \label{eq:pue_r_replica_mcu}
\end{equation}
and under the common-mode mixture model of \eqref{eq:pue_dup_common},
\begin{equation}
  p_{\mathrm{ue}}
  = \theta\alpha + (1-\theta)\alpha^2 \sum_{k,q} \frac{P_{K,Q}(k,q)^2}{N_{k,q}}.
  \label{eq:pue_dup_mcu_common}
\end{equation}
Each display follows by substituting the MCU pmf into \eqref{eq:pue_dup_iid}, \eqref{eq:pue_r_replica}, and \eqref{eq:pue_dup_common}.

\begin{discussion}{Replication-based detection and fault domains}
Proposition~\ref{prop:dup_compare} quantifies a key systems lesson (e.g., Radshield~\cite{WangRadshieldASPLOS26}):
duplication-and-compare converts many silent corruptions into erasures, but its residual undetected-error probability depends critically on \emph{fault correlation}.
Under independent replicas, $p_{\mathrm{ue}}$ equals the \emph{collision probability} of the error-pattern law, $\sum_{e\neq 0}P_E(e)^2$, which can be much smaller than $p_{\mathrm{fail}}^2$ when the upset patterns are diverse.
Under common-mode upsets (shared fault domains), $p_{\mathrm{ue}}$ can instead be on the order of the single-replica failure probability $p_{\mathrm{fail}}=1-P_E(0)$; the mixture parameter $\theta$ in \eqref{eq:pue_dup_common} captures this inflation.
Thus, placing replicas in distinct fault domains (different cache slices/banks/SMs, etc.) aims to drive $\theta$ down.
Finally, checksum/hashing (Proposition~\ref{prop:hash_detection}) provides a complementary, designable route to control $p_{\mathrm{ue}}$ by allocating $h$ protected bits for tags/keys.
\end{discussion}

\begin{proposition}[Excess-distortion probability decomposition under $\mathsf{OK}/\mathsf{ER}/\mathsf{UE}$]
\label{prop:excess_distortion_decomp}
Fix any distortion threshold $D_0>0$ and recall the \emph{per-instance} excess-distortion probability in~\eqref{eq:excess_dist_prob_one},
\begin{equation}
  \epsilon_D^{(1)}(D_0)\triangleq \mathbb{P}\big(d(X,\hat X)>D_0\big).
\end{equation}
Under the same three-outcome model as Proposition~\ref{prop:erasure_fallback_mse},
\begin{equation}
  \epsilon_D^{(1)}(D_0)
  = p_{\mathrm{ok}}\,\epsilon^{(1)}_{\mathsf{OK}}(D_0)
  + p_{\mathrm{er}}\,\epsilon^{(1)}_{\mathsf{FB}}(D_0)
  + p_{\mathrm{ue}}\,\epsilon^{(1)}_{\mathsf{UE}}(D_0),
  \label{eq:tail_three_event}
\end{equation}
where
$\epsilon^{(1)}_{\mathsf{OK}}(D_0)\triangleq \mathbb{P}(d(X,\hat X^{(\mathsf{OK})})>D_0\mid \mathsf{OK})$
and similarly for $\epsilon^{(1)}_{\mathsf{FB}}(D_0)$ and $\epsilon^{(1)}_{\mathsf{UE}}(D_0)$.
\end{proposition}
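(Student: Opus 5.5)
The plan is to mirror the proof of Proposition~\ref{prop:erasure_fallback_mse}, replacing the squared-error random variable by the indicator of excess distortion.

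First I define the actual receiver output $\hat X$ casewise. On $\mathsf{OK}$ it equals $\hat X^{(\mathsf{OK})}$, on $\mathsf{ER}$ it equals $\hat X^{(\mathsf{FB})}$, and on $\mathsf{UE}$ it equals $\hat X^{(\mathsf{UE})}$. I then write
\[
  \mathbf 1\{d(X,\hat X)>D_0\}=\sum_{\mathsf E\in\{\mathsf{OK},\mathsf{ER},\mathsf{UE}\}}\mathbf 1_{\mathsf E}\,\mathbf 1\{d(X,\hat X^{(\mathsf E)})>D_0\},
\]
with the convention that $\hat X^{(\mathsf{ER})}\equiv\hat X^{(\mathsf{FB})}$. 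This identity holds pointwise because $\{\mathsf{OK},\mathsf{ER},\mathsf{UE}\}$ is a partition of the sample space: exactly one indicator $\mathbf 1_{\mathsf E}$ equals one, and on that event $\hat X$ coincides with the corresponding branch output.

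Next I take expectations, which gives $\epsilon_D^{(1)}(D_0)=\sum_{\mathsf E}\mathbb P(\mathsf E)\,\mathbb P(d(X,\hat X^{(\mathsf E)})>D_0\mid \mathsf E)$. This is the law of total probability over the partition. Substituting $\mathbb P(\mathsf{OK})=p_{\mathrm{ok}}$, $\mathbb P(\mathsf{ER})=p_{\mathrm{er}}$, and $\mathbb P(\mathsf{UE})=p_{\mathrm{ue}}$, together with the definitions of the conditional tails, yields \eqref{eq:tail_three_event}.

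The only point needing care is an event of zero probability, for which the conditional tail is undefined. I will adopt the convention that such a term is taken as zero. This is harmless because its coefficient vanishes. Measurability of the indicators follows from measurability of $d$ under the standing regularity assumptions. There is no substantive obstacle.
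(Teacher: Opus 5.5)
Your proposal is correct and matches the paper's proof, which is a one-line appeal to the law of total probability over the partition $\{\mathsf{OK},\mathsf{ER},\mathsf{UE}\}$. Your pointwise indicator identity, together with the convention for zero-probability branches, simply spells out that step.
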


\begin{proof}
This follows from the law of total probability over the partition $\{\mathsf{OK},\mathsf{ER},\mathsf{UE}\}$.
\end{proof}

\subsection{A Computable Tail Benchmark Template}
\label{subsec:tail_benchmark_template}

The decomposition in Proposition~\ref{prop:excess_distortion_decomp} is exact but still involves the conditional tail probabilities
$\epsilon^{(1)}_{\mathsf{OK}}(D_0)$, $\epsilon^{(1)}_{\mathsf{FB}}(D_0)$, and $\epsilon^{(1)}_{\mathsf{UE}}(D_0)$.
For system design, it is often useful to convert these into \emph{computable} upper/lower envelopes in terms of
the event probabilities $(p_{\mathrm{ue}},p_{\mathrm{er}})$ and coarse performance summaries of the $\mathsf{OK}$ and $\mathsf{ER}$ branches.

\begin{corollary}[A tail-probability sandwich in terms of $(p_{\mathrm{ue}},p_{\mathrm{er}})$]
\label{cor:tail_sandwich}
Fix any threshold $D_0>0$.
Suppose there exist numbers $\delta_{\mathsf{OK}},\delta_{\mathsf{FB}}\in[0,1]$ and $\beta_{\mathsf{UE}}\in[0,1]$ such that
\begin{equation}
  \epsilon^{(1)}_{\mathsf{OK}}(D_0)\le \delta_{\mathsf{OK}},
  \qquad
  \epsilon^{(1)}_{\mathsf{FB}}(D_0)\le \delta_{\mathsf{FB}},
  \qquad
  \epsilon^{(1)}_{\mathsf{UE}}(D_0)\ge \beta_{\mathsf{UE}}.
  \label{eq:tail_sandwich_assump}
\end{equation}
Then the per-instance excess-distortion probability satisfies
\begin{equation}
  \boxed{
  \beta_{\mathsf{UE}}\,p_{\mathrm{ue}}
  \ \le\
  \epsilon_D^{(1)}(D_0)
  \ \le\
  (1-p_{\mathrm{ue}}-p_{\mathrm{er}})\,\delta_{\mathsf{OK}}
  + p_{\mathrm{er}}\,\delta_{\mathsf{FB}}
  + p_{\mathrm{ue}}.
  }
  \label{eq:tail_sandwich}
\end{equation}
Consequently, the block excess-distortion probability obeys the computable upper bound
\begin{equation}
  \epsilon_D^{\mathrm{blk}}(D_0)\ \le\ T\,\epsilon_D^{(1)}(D_0)
  \ \le\
  T\Big((1-p_{\mathrm{ue}}-p_{\mathrm{er}})\,\delta_{\mathsf{OK}}
  + p_{\mathrm{er}}\,\delta_{\mathsf{FB}}
  + p_{\mathrm{ue}}\Big).
  \label{eq:tail_block_sandwich}
\end{equation}
\end{corollary}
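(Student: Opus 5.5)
The plan is to start from the exact three-outcome decomposition in Proposition~\ref{prop:excess_distortion_decomp}. That result gives
\[
\epsilon_D^{(1)}(D_0)=p_{\mathrm{ok}}\,\epsilon^{(1)}_{\mathsf{OK}}(D_0)+p_{\mathrm{er}}\,\epsilon^{(1)}_{\mathsf{FB}}(D_0)+p_{\mathrm{ue}}\,\epsilon^{(1)}_{\mathsf{UE}}(D_0),
\]
where $p_{\mathrm{ok}}=1-p_{\mathrm{ue}}-p_{\mathrm{er}}$. Every term is a product of a nonnegative event probability and a conditional probability in $[0,1]$. Both sides of the sandwich \eqref{eq:tail_sandwich} should therefore follow by bounding each conditional tail term by term, using the hypotheses \eqref{eq:tail_sandwich_assump}.

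For the upper bound, I will replace $\epsilon^{(1)}_{\mathsf{OK}}(D_0)$ by $\delta_{\mathsf{OK}}$ and $\epsilon^{(1)}_{\mathsf{FB}}(D_0)$ by $\delta_{\mathsf{FB}}$. The $\mathsf{UE}$ tail will be bounded trivially by $\epsilon^{(1)}_{\mathsf{UE}}(D_0)\le 1$. Multiplying by the nonnegative weights $p_{\mathrm{ok}},p_{\mathrm{er}},p_{\mathrm{ue}}$ preserves the inequalities, which gives the right-hand side of \eqref{eq:tail_sandwich}.

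For the lower bound, I will drop the $\mathsf{OK}$ and $\mathsf{FB}$ terms, since they are nonnegative. I then keep $p_{\mathrm{ue}}\epsilon^{(1)}_{\mathsf{UE}}(D_0)\ge p_{\mathrm{ue}}\beta_{\mathsf{UE}}$. One point needs care here: the conditional tails are only defined when the conditioning event has positive probability. I will adopt the convention that a term with a zero-probability event contributes zero, so the decomposition and both bounds hold trivially in that case.

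The block statement \eqref{eq:tail_block_sandwich} then follows by chaining with the union bound \eqref{eq:block_to_single_tail_bound}, which gives $\epsilon_D^{\mathrm{blk}}(D_0)\le T\,\epsilon_D^{(1)}(D_0)$. That bound uses only stationarity of the per-instance marginals. I expect the main obstacle to be this step. The per-instance quantities $p_{\mathrm{ue}},p_{\mathrm{er}},\delta_{\mathsf{OK}},\delta_{\mathsf{FB}}$ need to be the same for every $t$, or else be read as uniform bounds over $t$. If they are not identical across instances, I will instead apply the per-instance upper bound for each $t$ with $t$-independent worst-case constants and sum over $t$. This yields the same expression without invoking stationarity.
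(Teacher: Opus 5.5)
Your proposal is correct and follows the paper's proof: substitute the hypotheses into the exact three-outcome decomposition of Proposition~\ref{prop:excess_distortion_decomp}, bounding the $\mathsf{UE}$ tail by $1$ for the upper bound and dropping the nonnegative $\mathsf{OK}$/$\mathsf{FB}$ terms for the lower bound, and then chain with the union bound \eqref{eq:block_to_single_tail_bound}. Your remarks on zero-probability conditioning events, and on reading the constants as uniform-in-$t$ bounds when stationarity is not available, are sensible extra care that the paper leaves implicit.
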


\begin{proof}
The bounds follow by substituting \eqref{eq:tail_sandwich_assump} into the exact decomposition \eqref{eq:tail_three_event}
and using the union bound \eqref{eq:block_to_single_tail_bound}.
\end{proof}

\begin{corollary}[Worked example: an end-to-end computable excess-distortion tail under MCUs and $r$-replica detection]
\label{cor:tail_mcu_rreplica}
Fix any threshold $D_0>0$ and blocklength $T$.
Assume the hypotheses of Corollary~\ref{cor:tail_sandwich} hold with parameters $(\delta_{\mathsf{OK}},\delta_{\mathsf{FB}},\beta_{\mathsf{UE}})$.
Suppose further that a task-critical intermediate \emph{word} $U$ is protected by an $r$-replica duplication-and-compare detector operating at the word level under the MCU specialization summarized in \eqref{eq:pue_r_replica_mcu}, so that $(p_{\mathrm{ue}},p_{\mathrm{er}},p_{\mathrm{ok}})$ are given by that display.
Then the block excess-distortion probability admits the explicit upper bound
\begin{align}
  \epsilon_D^{\mathrm{blk}}(D_0)
  &\le
  T\Big(p_{\mathrm{ok}}\,\delta_{\mathsf{OK}}+p_{\mathrm{er}}\,\delta_{\mathsf{FB}}+p_{\mathrm{ue}}\Big)\nonumber\\
  &=
  T\Big(
    (1-\alpha)^r\delta_{\mathsf{OK}}
    + \big[1-(1-\alpha)^r-p_{\mathrm{ue}}\big]\delta_{\mathsf{FB}}
    + p_{\mathrm{ue}}
  \Big)\nonumber\\
  &=
  T\Big(
    (1-\alpha)^r\delta_{\mathsf{OK}}
    + \big[1-(1-\alpha)^r\big]\delta_{\mathsf{FB}}
    + (1-\delta_{\mathsf{FB}})\,\alpha^r \sum_{k,q}\frac{P_{K,Q}(k,q)^r}{N_{k,q}^{\,r-1}}
  \Big),
  \label{eq:tail_block_mcu_rreplica}
\end{align}
where the last line substitutes $p_{\mathrm{ue}}$ from \eqref{eq:pue_r_replica_mcu}.
In particular, if the nominal and fallback branches are $D_0$-safe in the sense that
$\epsilon^{(1)}_{\mathsf{OK}}(D_0)=\epsilon^{(1)}_{\mathsf{FB}}(D_0)=0$ (hence $\delta_{\mathsf{OK}}=\delta_{\mathsf{FB}}=0$),
then
\begin{equation}
  \epsilon_D^{\mathrm{blk}}(D_0)
  \ \le\
  T\,\alpha^r \sum_{k,q}\frac{P_{K,Q}(k,q)^r}{N_{k,q}^{\,r-1}}.
  \label{eq:tail_block_mcu_rreplica_safe}
\end{equation}
Consequently, to enforce a target block tail budget $\epsilon_D^{\mathrm{blk}}(D_0)\le \epsilon$, it suffices to choose $r$ such that
\begin{equation}
  \alpha^r \sum_{k,q}\frac{P_{K,Q}(k,q)^r}{N_{k,q}^{\,r-1}}
  \ \le\ \frac{\epsilon}{T}.
  \label{eq:r_for_tail_target}
\end{equation}
\end{corollary}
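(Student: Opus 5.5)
The plan is to obtain Corollary~\ref{cor:tail_mcu_rreplica} by direct substitution into Corollary~\ref{cor:tail_sandwich}. No new probabilistic argument is needed; the only care is in the algebra and in checking that the hypotheses line up.

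\textbf{Step 1: the general tail bound.} Since the hypotheses of Corollary~\ref{cor:tail_sandwich} are assumed with parameters $(\delta_{\mathsf{OK}},\delta_{\mathsf{FB}},\beta_{\mathsf{UE}})$, I will invoke \eqref{eq:tail_block_sandwich} directly. This gives $\epsilon_D^{\mathrm{blk}}(D_0)\le T(p_{\mathrm{ok}}\delta_{\mathsf{OK}}+p_{\mathrm{er}}\delta_{\mathsf{FB}}+p_{\mathrm{ue}})$, where $p_{\mathrm{ok}}=1-p_{\mathrm{ue}}-p_{\mathrm{er}}$. This is the first line of \eqref{eq:tail_block_mcu_rreplica}.

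\textbf{Step 2: substitute the $r$-replica MCU probabilities.} From \eqref{eq:pue_r_replica_mcu}, which is itself Proposition~\ref{prop:dup_compare} specialized to the MCU pmf, I will use $p_{\mathrm{ok}}=(1-\alpha)^r$ and $p_{\mathrm{er}}=1-(1-\alpha)^r-p_{\mathrm{ue}}$. Substituting these yields the second line. For the third line, I regroup the $p_{\mathrm{ue}}$ terms as $p_{\mathrm{ue}}-p_{\mathrm{ue}}\delta_{\mathsf{FB}}=(1-\delta_{\mathsf{FB}})p_{\mathrm{ue}}$. I then insert $p_{\mathrm{ue}}=\alpha^r\sum_{k,q}P_{K,Q}(k,q)^r/N_{k,q}^{\,r-1}$.

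\textbf{Step 3: the $D_0$-safe specialization.} If the nominal and fallback branches are $D_0$-safe, I may take $\delta_{\mathsf{OK}}=\delta_{\mathsf{FB}}=0$. These choices are admissible because the conditional tail probabilities are then exactly zero. The third line then collapses to \eqref{eq:tail_block_mcu_rreplica_safe}. Imposing that this right-hand side is at most $\epsilon$ and dividing by $T$ gives the sizing condition \eqref{eq:r_for_tail_target}.

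\textbf{Main obstacle.} The step that needs care is verifying that the word-level detector's event model matches the $\mathsf{OK}/\mathsf{ER}/\mathsf{UE}$ partition used in Corollary~\ref{cor:tail_sandwich}. Specifically, the $r$-replica detector must have reliable comparison logic and i.i.d.\ replica faults, so that \eqref{eq:pue_r_replica_mcu} applies. Under those conditions, the per-instance probabilities $(p_{\mathrm{ue}},p_{\mathrm{er}})$ are exactly those entering \eqref{eq:tail_three_event}.

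Two smaller points also need checking. First, the union bound \eqref{eq:block_to_single_tail_bound} requires only stationarity across $t$. Second, no nonnegativity issue arises, because $1-\delta_{\mathsf{FB}}\ge0$ since $\delta_{\mathsf{FB}}\in[0,1]$.
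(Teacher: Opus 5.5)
Your proposal is correct and matches the paper's proof, which likewise just substitutes the $r$-replica MCU probabilities from \eqref{eq:pue_r_replica_mcu} into the block bound \eqref{eq:tail_block_sandwich} of Corollary~\ref{cor:tail_sandwich}. Your regrouping into $(1-\delta_{\mathsf{FB}})p_{\mathrm{ue}}$, the $D_0$-safe specialization $\delta_{\mathsf{OK}}=\delta_{\mathsf{FB}}=0$, and the resulting sizing rule $p_{\mathrm{ue}}\le \epsilon/T$ are all correct.
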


\begin{proof}
Combine Corollary~\ref{cor:tail_sandwich} with the $r$-replica MCU detection probabilities in \eqref{eq:pue_r_replica_mcu}.
\end{proof}

\paragraph{Markov-type computable upper bound (nonnegative distortion)}
Assume the distortion is nonnegative, $d(X,\hat X)\ge 0$.
Define the conditional mean distortions on the $\mathsf{OK}$ and erasure-fallback branches:
\begin{equation}
  \bar d_{\mathsf{OK}}\triangleq \mathbb{E}\!\big[d(X,\hat X^{(\mathsf{OK})})\mid \mathsf{OK}\big],
  \qquad
  \bar d_{\mathsf{FB}}\triangleq \mathbb{E}\!\big[d(X,\hat X^{(\mathsf{FB})})\mid \mathsf{ER}\big].
  \label{eq:conditional_mean_dist_ok_fb}
\end{equation}
Then for any threshold $D_0>0$,
\begin{equation}
  \epsilon_D^{(1)}(D_0)
  \ \le\
  p_{\mathrm{ue}}
  + \frac{(1-p_{\mathrm{ue}}-p_{\mathrm{er}})\,\bar d_{\mathsf{OK}} + p_{\mathrm{er}}\,\bar d_{\mathsf{FB}}}{D_0},
  \label{eq:tail_markov_template}
\end{equation}
and, if one also has a computable upper bound $\bar d_{\mathsf{UE}}^{\mathrm{ub}}$ on the conditional mean distortion under $\mathsf{UE}$,
\begin{equation}
  \epsilon_D^{(1)}(D_0)
  \ \le\
  \frac{(1-p_{\mathrm{ue}}-p_{\mathrm{er}})\,\bar d_{\mathsf{OK}} + p_{\mathrm{er}}\,\bar d_{\mathsf{FB}} + p_{\mathrm{ue}}\,\bar d_{\mathsf{UE}}^{\mathrm{ub}}}{D_0}.
  \label{eq:tail_markov_all_branches}
\end{equation}
Both displays follow by applying Markov's inequality to $d(X,\hat X)\mathbf{1}\{\mathsf{OK}\cup\mathsf{ER}\}$, or to $d(X,\hat X)$ directly in the second case. For the MSE distortion, Proposition~\ref{prop:clipping_mse} supplies $\bar d_{\mathsf{UE}}^{\mathrm{ub}}$ via clipping.

\begin{discussion}{Detection primarily acts by reducing $p_{\mathrm{ue}}$}
For MSE-type tasks, \eqref{eq:mse_three_event} shows that converting silent corruptions into erasures helps whenever the fallback branch is substantially safer than the undetected branch. For tail metrics, \eqref{eq:tail_three_event} and \eqref{eq:tail_markov_template} show that when $\epsilon^{(1)}_{\mathsf{UE}}(D_0)\approx 1$ over the safety-relevant range of $D_0$, the excess-distortion tail is often dominated by $p_{\mathrm{ue}}$.
In the common ``safe nominal / safe fallback'' regime where $\epsilon^{(1)}_{\mathsf{OK}}(D_0)\approx \epsilon^{(1)}_{\mathsf{FB}}(D_0)\approx 0$, one has $\epsilon_D^{(1)}(D_0)\approx p_{\mathrm{ue}}$ and therefore $\epsilon_D^{\mathrm{blk}}(D_0)\lesssim T\,p_{\mathrm{ue}}$. A useful design rule is thus $p_{\mathrm{ue}}\lesssim \epsilon/T$ for a target block tail budget $\epsilon$.
\end{discussion}

\begin{proposition}[Clipping bounds the conditional MSE on undetected-error events]
\label{prop:clipping_mse}
Assume $\mathbb{E}[X^2]<\infty$.
Let $\tilde X$ denote the (possibly corrupted) numerical output produced by the receiver, and define a \emph{clipped} output
\begin{equation}
  \hat X \triangleq \mathrm{clip}(\tilde X;A)\ \in[-A,A],
\end{equation}
where $\mathrm{clip}(\cdot;A)$ saturates its argument to $[-A,A]$.
Then, on the event $\mathsf{UE}$,
\begin{equation}
  D_{\mathsf{UE}}
  = \mathbb{E}\!\big[(X-\hat X)^2\mid \mathsf{UE}\big]
  \le 2\,\mathbb{E}[X^2\mid \mathsf{UE}] + 2A^2.
  \label{eq:clipping_bound_conditional}
\end{equation}
Consequently, the overall MSE satisfies
\begin{equation}
  D
  \le p_{\mathrm{ok}}\,D_{\mathsf{OK}} + p_{\mathrm{er}}\,D_{\mathsf{FB}}
  + p_{\mathrm{ue}}\big(2\,\mathbb{E}[X^2\mid \mathsf{UE}] + 2A^2\big).
  \label{eq:clipping_bound}
\end{equation}
\end{proposition}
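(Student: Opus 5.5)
The plan is a pointwise bound on the squared error on the event $\mathsf{UE}$, followed by conditional expectations and the three-outcome decomposition. The clipping map guarantees $|\hat X|\le A$ almost surely, whatever value the corrupted numerical output $\tilde X$ takes. For real $a,b$ we have $(a-b)^2\le 2a^2+2b^2$, which follows from $(a+b)^2\ge 0$ or equivalently from convexity of $x\mapsto x^2$. Taking $a=X$ and $b=\hat X$ therefore gives, pointwise on the whole sample space,
\[
  (X-\hat X)^2 \le 2X^2 + 2\hat X^2 \le 2X^2 + 2A^2 .
\]

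Next I condition on $\mathsf{UE}$. Conditional expectation is monotone and linear, so $\mathbb{E}[(X-\hat X)^2\mid \mathsf{UE}]\le 2\,\mathbb{E}[X^2\mid\mathsf{UE}]+2A^2$, which is exactly \eqref{eq:clipping_bound_conditional}. One technical point needs care: the conditional moment must be finite and well defined. If $p_{\mathrm{ue}}>0$, then $\mathbb{E}[X^2\mid\mathsf{UE}]=\mathbb{E}[X^2\mathbf 1_{\mathsf{UE}}]/p_{\mathrm{ue}}\le \mathbb{E}[X^2]/p_{\mathrm{ue}}<\infty$ by the hypothesis $\mathbb{E}[X^2]<\infty$. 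If $p_{\mathrm{ue}}=0$, the $\mathsf{UE}$ term carries zero weight and I use the convention that it contributes nothing. The nonnegative integrand keeps everything within Tonelli-type manipulations, so no integrability issue arises beyond this.

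Finally, for \eqref{eq:clipping_bound} I would invoke Proposition~\ref{prop:erasure_fallback_mse}. Its decomposition \eqref{eq:mse_three_event} is an exact identity over the partition $\{\mathsf{OK},\mathsf{ER},\mathsf{UE}\}$. The receiver output on each branch is the clipped output, and on $\mathsf{UE}$ it is $\hat X^{(\mathsf{UE})}=\mathrm{clip}(\tilde X;A)$. Substituting the bound from the previous step for $D_{\mathsf{UE}}$ and multiplying by $p_{\mathrm{ue}}\ge 0$ preserves the inequality and yields the stated upper bound on $D$. I could also apply clipping on the $\mathsf{OK}$ and $\mathsf{ER}$ branches, but that is unnecessary because $D_{\mathsf{OK}}$ and $D_{\mathsf{FB}}$ are simply carried through unchanged.

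None of these steps is a real obstacle. The only place needing attention is stating which output is clipped on which branch, so that the decomposition applies verbatim. Since only the $\mathsf{UE}$ branch is bounded, the cleanest reading is that $\hat X^{(\mathsf{UE})}$ is the clipped corrupted output, with the other branches left as they are.
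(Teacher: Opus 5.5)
Your proof is correct and follows the paper's approach. Both arguments bound the squared error pointwise on $\mathsf{UE}$ using only $|\hat X|\le A$: the paper writes $(X-\hat X)^2\le(|X|+A)^2\le 2X^2+2A^2$, while you use $(a-b)^2\le 2a^2+2b^2$. Both then take conditional expectations given $\mathsf{UE}$ and substitute into the three-outcome decomposition \eqref{eq:mse_three_event}.
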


\begin{proof}
On $\mathsf{UE}$, we only use the fact that $|\hat X|\le A$, so
$(X-\hat X)^2\le (|X|+A)^2\le 2X^2+2A^2$.
Taking conditional expectations given $\mathsf{UE}$ yields \eqref{eq:clipping_bound_conditional}, and substituting this into
\eqref{eq:mse_three_event} yields \eqref{eq:clipping_bound}.
\end{proof}

\begin{remark}[De-conditioning the UE second moment]
The conditional second moment in \eqref{eq:clipping_bound} can be removed without any independence assumption, since
$p_{\mathrm{ue}}\,\mathbb{E}[X^2\mid \mathsf{UE}]
= \mathbb{E}\!\big[X^2\mathbf{1}\{\mathsf{UE}\}\big]
\le \mathbb{E}[X^2]$.
More generally, for any $a>0$,
$\mathbb{E}[X^2\mathbf{1}\{\mathsf{UE}\}] \le \mathbb{E}[X^2\mathbf{1}\{|X|>a\}] + a^2 p_{\mathrm{ue}}$.
\end{remark}

\begin{corollary}[Clipping controls the excess-distortion tail on $\mathsf{UE}$ (MSE case)]
Under the squared-error distortion $d(x,\hat x)=(x-\hat x)^2$ and the assumptions of Proposition~\ref{prop:clipping_mse}, for any threshold $D_0>0$,
\begin{equation}
  \epsilon^{(1)}_{\mathsf{UE}}(D_0)
  =
  \mathbb{P}\!\big((X-\hat X)^2>D_0\mid \mathsf{UE}\big)
  \le \frac{D_{\mathsf{UE}}}{D_0}
  \le \frac{2\,\mathbb{E}[X^2\mid \mathsf{UE}] + 2A^2}{D_0},
  \label{eq:clipping_tail_bound}
\end{equation}
where the first inequality is Markov's inequality.
\end{corollary}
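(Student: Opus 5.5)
The first inequality in \eqref{eq:clipping_tail_bound} is Markov's inequality applied under the conditional law given $\mathsf{UE}$, and the second is the bound of Proposition~\ref{prop:clipping_mse}. I would carry this out in three short steps.

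\textbf{Step 1 (conditioning).} Assume $p_{\mathrm{ue}}=\mathbb{P}(\mathsf{UE})>0$; otherwise the conditional quantities are vacuous and there is nothing to prove. Under the probability measure $\mathbb{P}(\cdot\mid\mathsf{UE})$, the variable $Z\triangleq(X-\hat X)^2$ is nonnegative. Its conditional mean $D_{\mathsf{UE}}$ is finite. This follows from Proposition~\ref{prop:clipping_mse}, since $\mathbb{E}[X^2\mid\mathsf{UE}]\le \mathbb{E}[X^2]/p_{\mathrm{ue}}<\infty$ and $|\hat X|\le A$.

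\textbf{Step 2 (Markov).} Apply Markov's inequality under the conditional measure:
\[
\mathbb{P}(Z>D_0\mid\mathsf{UE})\le \mathbb{P}(Z\ge D_0\mid\mathsf{UE})\le \frac{\mathbb{E}[Z\mid\mathsf{UE}]}{D_0}=\frac{D_{\mathsf{UE}}}{D_0}.
\]
The left-hand side is $\epsilon^{(1)}_{\mathsf{UE}}(D_0)$ by its definition in Proposition~\ref{prop:excess_distortion_decomp}, with $\hat X^{(\mathsf{UE})}$ here being the clipped output.

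\textbf{Step 3 (clipping bound).} Substitute \eqref{eq:clipping_bound_conditional}, namely $D_{\mathsf{UE}}\le 2\,\mathbb{E}[X^2\mid\mathsf{UE}]+2A^2$, and divide by $D_0>0$.

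None of these steps is a real obstacle. The only point needing care is the degenerate case $p_{\mathrm{ue}}=0$, where the conditional expectations are undefined, and Step 1 disposes of it explicitly.
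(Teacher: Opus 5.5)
Your proposal is correct and follows the paper's argument exactly: Markov's inequality under the conditional law given $\mathsf{UE}$ gives the first bound, and substituting \eqref{eq:clipping_bound_conditional} from Proposition~\ref{prop:clipping_mse} gives the second. Your separate treatment of the case $p_{\mathrm{ue}}=0$ and of the finiteness of $D_{\mathsf{UE}}$ is a harmless addition; Markov's inequality would hold even if the conditional mean were infinite.
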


\begin{remark}[A computable special case (modelling suggestion)]
If $\mathsf{UE}$ is (approximately) independent of $X$ (e.g., corruption arises from post-processing faults largely independent
of the source realization), then $\mathbb{E}[X^2\mid \mathsf{UE}]=\mathbb{E}[X^2]$ and the $\mathsf{UE}$ contribution in
\eqref{eq:clipping_bound} becomes $p_{\mathrm{ue}}(2\mathbb{E}[X^2]+2A^2)$.
If $X$ is almost surely bounded, $|X|\le B$, then $D_{\mathsf{UE}}\le (B+A)^2$ and the $\mathsf{UE}$ term is at most $p_{\mathrm{ue}}(B+A)^2$.
\end{remark}

Proposition~\ref{prop:clipping_mse} shows that clipping acts as a damage limiter: even if undetected errors cannot be fully eliminated, bounding the numerical range of the output (or key intermediates) prevents rare fault-induced outliers, such as sign/exponent corruption in floating-point representations, from dominating the MSE. In addition, even under hard-separation, a practical hybrid rule is to fall back to a low-complexity task-direct estimate from $R_t^n$ whenever the decoder declares failure, thereby preserving some task-relevant information without requiring the full digital interface to be trusted.


\section{Supplementary Throughput Limits and Redundancy Allocation}
\label{sec:throughput}

\indent This section translates per-second budgets into the per-instance parameters of the earlier theory.
Let $\mathcal{B}$ denote the available channel uses per second and $\mathcal{G}$ the available noisy primitive uses per second.
At throughput $\lambda$ samples per second, the per-sample resources are $n(\lambda)=\mathcal{B}/\lambda$ and $m(\lambda)=\mathcal{G}/\lambda$, and the per-second supplies (bits/sec) are
$S_{\mathrm{ch}}^{(\mathrm{sec})}\triangleq \mathcal{B}\Cch$ and $S_{\mathrm{comp}}^{(\mathrm{sec})}\triangleq \mathcal{G}\Cgate$.

\subsection{Closed-Form Maximum Throughput for a Target Distortion (Gaussian Benchmark)}
For scalar Gaussian remote estimation, Corollary~\ref{cor:gaussian_closed_form} (and Corollary~\ref{cor:gaussian_mcu} under word-level MCUs) gives
\begin{equation}
  D(\lambda)\ \ge\ \sigma_{x|y}^2 + \big(\sigma_x^2-\sigma_{x|y}^2\big)\,
  2^{-\frac{2}{\lambda}\min\{S_{\mathrm{ch}}^{(\mathrm{sec})},\,S_{\mathrm{comp}}^{(\mathrm{sec})}\}}.
  \label{eq:D_lambda}
\end{equation}

\begin{corollary}[Closed-form $\lambda_{\max}(D)$ for estimation]
\label{cor:lmax_est}
Fix a target distortion $D$ with $\sigma_{x|y}^2 < D < \sigma_x^2$.
Any task-direct design achieving $D$ must satisfy
\begin{equation}
  \boxed{
  \lambda\ \le\ \lambda_{\max}(D)
  \triangleq
  \frac{2\,\min\{S_{\mathrm{ch}}^{(\mathrm{sec})},\,S_{\mathrm{comp}}^{(\mathrm{sec})}\}}
  {\log_2\!\left(\frac{\sigma_x^2-\sigma_{x|y}^2}{D-\sigma_{x|y}^2}\right)}.
  }
\end{equation}
\end{corollary}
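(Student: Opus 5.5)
The plan is to combine the per-second Gaussian distortion bound \eqref{eq:D_lambda} with the target constraint $D(\lambda)\le D$ and solve for $\lambda$. Fix a task-direct design that operates at throughput $\lambda$ and achieves average distortion $D$ with $\sigma_{x|y}^2<D<\sigma_x^2$.

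First, I will restate the per-instance budgets $n(\lambda)=\mathcal{B}/\lambda$ and $m(\lambda)=\mathcal{G}/\lambda$. Substituting them into Theorem~\ref{thm:supply_demand_converse} gives
\[
  R_{X|Y}(D)\le \min\{n(\lambda)\Cch,\ m(\lambda)\Cgate\}=\tfrac{1}{\lambda}\min\{S_{\mathrm{ch}}^{(\mathrm{sec})},S_{\mathrm{comp}}^{(\mathrm{sec})}\}.
\]
The reduction to per-instance budgets is only valid if $n(\lambda)$ and $m(\lambda)$ are admissible per-instance resources. I will treat possibly non-integer values the same way the paper treats $m/2$ elsewhere: the converse only uses $nT$ and $mT$ as counts over a block. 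If integrality is needed, I can pass to blocks of length $T$ for which $n(\lambda)T$ and $m(\lambda)T$ are integers, or take floors, which only strengthens the bound.

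Second, I will insert the scalar Gaussian indirect rate--distortion formula used in the proof of Corollary~\ref{cor:gaussian_closed_form}:
\[
  R_{X|Y}(D)=\tfrac12\log_2\!\big((\sigma_x^2-\sigma_{x|y}^2)/(D-\sigma_{x|y}^2)\big).
\]
This quantity is strictly positive because $D<\sigma_x^2$. Combining it with the previous display gives
\[
  \tfrac12\log_2\!\Big(\frac{\sigma_x^2-\sigma_{x|y}^2}{D-\sigma_{x|y}^2}\Big)\le \frac{1}{\lambda}\min\{S_{\mathrm{ch}}^{(\mathrm{sec})},S_{\mathrm{comp}}^{(\mathrm{sec})}\}.
\]
Since $\lambda>0$ and the logarithm is positive, multiplying through by $\lambda$ and dividing by the logarithm term yields exactly $\lambda\le\lambda_{\max}(D)$.

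There is no real obstacle here; the argument is an algebraic inversion. The one point needing care is to check that dividing by the log term preserves the inequality direction, which is why the hypothesis $\sigma_{x|y}^2<D<\sigma_x^2$ matters. At the boundaries, $D\ge\sigma_x^2$ makes the constraint vacuous, and $D\le\sigma_{x|y}^2$ is infeasible at any throughput. Under word-level MCUs, the same steps go through with $\Cgate$ replaced by $\Cgateeff$, using Corollary~\ref{cor:gaussian_mcu}.
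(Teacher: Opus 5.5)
Your proposal is correct and is essentially the paper's argument. The paper obtains the bound by substituting $n=\mathcal{B}/\lambda$ and $m=\mathcal{G}/\lambda$ into Corollary~\ref{cor:gaussian_closed_form} to get \eqref{eq:D_lambda} and then solving for $\lambda$, whereas you invert at the rate level $R_{X|Y}(D)\le \frac{1}{\lambda}\min\{S_{\mathrm{ch}}^{(\mathrm{sec})},S_{\mathrm{comp}}^{(\mathrm{sec})}\}$; these are equivalent, because $D\mapsto R_{X|Y}(D)$ is strictly monotone on $(\sigma_{x|y}^2,\sigma_x^2)$.
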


Under hard-separation, replace $S_{\mathrm{comp}}^{(\mathrm{sec})}$ by $S_{\mathrm{comp}}^{(\mathrm{sec})}/2$ in Corollary~\ref{cor:lmax_est} due to Theorem~\ref{thm:interface_tax}, yielding the familiar factor-of-two penalty in the symmetric two-stage compute-limited regime.

\subsection{A Tail-Aware Safe-Throughput Worked Example: Protecting a Digital Interface With $r$ Replicas Under MCUs}
\label{subsec:tail_throughput_worked_example}
The preceding throughput and allocation results focus on mean distortion benchmarks.
In many safety-critical settings, however, one must also control \emph{tail} events (Appendix~\ref{sec:erasures}).
We now connect the MCU+$r$-replica detection formulas for $(p_{\mathrm{ue}},p_{\mathrm{er}})$ to an \emph{end-to-end computable} tail benchmark and make explicit the (inevitable) throughput--reliability tradeoff induced by replication overhead.

\paragraph{Protecting a length-$\Lif$ interface message by word-level replication}
Suppose each task instance must materialize a task-critical digital interface message (or cached state) $B\in\{0,1\}^{\Lif}$ bits.
Partition $B$ into $M_{\mathrm{if}}\triangleq \Lif/w$ words of $w$ bits (assume $w\mid \Lif$ for simplicity).
Protect each word by an $r$-replica ``all-agree'' duplication-and-compare detector (Proposition~\ref{prop:dup_compare}), where each replica experiences an independent word-level MCU under the specialization summarized in \eqref{eq:pue_r_replica_mcu}.

Let $(p_{\mathrm{ok}}^{(w)},p_{\mathrm{ue}}^{(w)},p_{\mathrm{er}}^{(w)})$ denote the \emph{per-word} outcome probabilities.
Under the same word-level MCU specialization,
\begin{equation}
  p_{\mathrm{ok}}^{(w)}=(1-\alpha)^r,
  \qquad
  p_{\mathrm{ue}}^{(w)}=\alpha^r\sum_{k,q}\frac{P_{K,Q}(k,q)^r}{N_{k,q}^{\,r-1}},
  \qquad
  p_{\mathrm{er}}^{(w)}=1-p_{\mathrm{ok}}^{(w)}-p_{\mathrm{ue}}^{(w)}.
  \label{eq:per_word_ok_ue_er_mcu_rrep}
\end{equation}

Assume the $M_{\mathrm{if}}$ words experience independent MCU patterns (across words) and that the interface is accepted iff \emph{all} words are accepted by their local all-agree tests.
Then the \emph{message-level} outcome probabilities $(p_{\mathrm{ok}},p_{\mathrm{ue}},p_{\mathrm{er}})$ are explicit:
\begin{align}
  p_{\mathrm{ok}}
  &= \big(p_{\mathrm{ok}}^{(w)}\big)^{M_{\mathrm{if}}},
  \label{eq:pok_message_from_word}\\
  p_{\mathrm{ue}}
  &= \big(p_{\mathrm{ok}}^{(w)}+p_{\mathrm{ue}}^{(w)}\big)^{M_{\mathrm{if}}} - \big(p_{\mathrm{ok}}^{(w)}\big)^{M_{\mathrm{if}}},
  \label{eq:pue_message_from_word}\\
  p_{\mathrm{er}}
  &= 1-\big(p_{\mathrm{ok}}^{(w)}+p_{\mathrm{ue}}^{(w)}\big)^{M_{\mathrm{if}}}.
  \label{eq:per_message_er_from_word}
\end{align}

\paragraph{A computable block tail bound}
Plugging \eqref{eq:pok_message_from_word}--\eqref{eq:per_message_er_from_word} into Corollary~\ref{cor:tail_sandwich} yields an explicit upper bound on the block excess-distortion probability $\epsilon_D^{\mathrm{blk}}(D_0)$.
In the particularly transparent $D_0$-safe regime (cf.~Corollary~\ref{cor:tail_mcu_rreplica}), where the nominal and fallback branches satisfy $\epsilon^{(1)}_{\mathsf{OK}}(D_0)=\epsilon^{(1)}_{\mathsf{FB}}(D_0)=0$,
\begin{equation}
  \epsilon_D^{\mathrm{blk}}(D_0)
  \ \le\
  T\,p_{\mathrm{ue}}
  =
  T\Big[\big(p_{\mathrm{ok}}^{(w)}+p_{\mathrm{ue}}^{(w)}\big)^{M_{\mathrm{if}}} - \big(p_{\mathrm{ok}}^{(w)}\big)^{M_{\mathrm{if}}}\Big],
  \label{eq:tail_block_interface_message_rreplica}
\end{equation}
where $p_{\mathrm{ok}}^{(w)}$ and $p_{\mathrm{ue}}^{(w)}$ are given explicitly by \eqref{eq:per_word_ok_ue_er_mcu_rrep}.
Thus the tail constraint $\epsilon_D^{\mathrm{blk}}(D_0)\le \epsilon$ is enforceable by choosing $r$ such that
\begin{equation}
  \big(p_{\mathrm{ok}}^{(w)}+p_{\mathrm{ue}}^{(w)}\big)^{M_{\mathrm{if}}} - \big(p_{\mathrm{ok}}^{(w)}\big)^{M_{\mathrm{if}}}
  \ \le\ \frac{\epsilon}{T}.
  \label{eq:r_for_tail_target_interface_message_exact}
\end{equation}
If one prefers a simpler conservative sizing rule, the union bound gives $p_{\mathrm{ue}}\le M_{\mathrm{if}}p_{\mathrm{ue}}^{(w)}$, reducing \eqref{eq:r_for_tail_target_interface_message_exact} to the per-word condition in \eqref{eq:r_for_tail_target} with $\epsilon$ replaced by $\epsilon/M_{\mathrm{if}}$.

\paragraph{Replication overhead induces an explicit throughput penalty}
Storing $r$ replicas of the length-$\Lif$ interface costs $r\Lif$ vulnerable bit materializations per task instance.
Relative to storing the interface once, the additional overhead is $(r-1)\Lif$ materializations per instance.
Under a per-second compute budget $\mathcal{G}$ and throughput $\lambda$, a conservative accounting is therefore
\begin{equation}
  m_{\mathrm{eff}}(\lambda,r)
  \triangleq
  \frac{\mathcal{G}}{\lambda} - (r-1)\Lif
\end{equation}
available for the remaining receiver computation (decoding and/or task inference).
Applying the first-order supply--demand benchmark with $m$ replaced by $m_{\mathrm{eff}}$ yields the design constraint
\begin{equation}
  R_{X|Y}(D)
  \ \le\
  \min\left\{\frac{\mathcal{B}}{\lambda}\Cch,\ \Big(\frac{\mathcal{G}}{\lambda}-(r-1)\Lif\Big)\Cgate\right\}.
  \label{eq:supply_demand_with_replica_overhead}
\end{equation}
In the compute-limited regime, \eqref{eq:supply_demand_with_replica_overhead} is equivalent to the explicit throughput bound
\begin{equation}
  \lambda
  \ \le\
  \frac{\mathcal{G}}{(r-1)\Lif + R_{X|Y}(D)/\Cgate},
  \label{eq:lambda_replica_overhead}
\end{equation}
which shows explicitly how ``more replicas'' (larger $r$) improve tail reliability (smaller $p_{\mathrm{ue}}$) but reduce the maximum sustainable throughput.
For Gaussian remote estimation, substitute the closed-form $R_{X|Y}(D)$ from Corollary~\ref{cor:gaussian_closed_form}.


\section{Supplementary Discussion and Extensions}
\label{sec:discussion}

\subsection{Compute-Supply Plug-Ins Beyond the BSC}
\label{subsec:compute_supply_plugins}
The BSC$(\varepsilon)$ bit-flip primitive in Definition~\ref{def:gpu_bitflip_model} provides a tractable benchmark that makes the compute bottleneck explicit, but the converse chain only uses a plug-in rule: each vulnerable materialization contributes at most some bounded conditional mutual information $C_i$, so Lemmas~\ref{lem:compute_info_bound} and~\ref{lem:compute_info_bound_hetero} give $I(R^{nT};\hat X^T)\le \sum_i C_i$. Consequently, all supply--demand and interface-tax converses remain valid after replacing $m\Cgate$ by the effective compute supply $S_{\mathrm{comp}}\triangleq \sum_i C_i$ (or a conservative upper bound), covering word-level additive-noise primitives and heterogeneous reliable-island designs as examples. A more conservative fully noisy-logic extension, in which the receiver logic itself is unreliable, is developed in Subsection~\ref{subsec:noisy_gate_connections}.

\subsection{Toward Learned Inference on Noisy Accelerators}
\label{subsec:nn_discussion}
Beyond squared-error estimation, the same supply--demand viewpoint extends to decision and learned-inference tasks once the task demand and effective compute supply are replaced by the appropriate problem-specific quantities. Proposition~\ref{prop:fano_classification} in Appendix~\ref{app:section3_supp} records a simple Fano-style classification benchmark, while Appendix~\ref{app:learned_inference_supp} records supplementary derivations on model size/precision, MSE decomposition, and reliable parameter storage for learned inference on noisy accelerators. The same structural lesson remains: hard digital interfaces introduce additional mandatory bottlenecks, whereas heterogeneous protection or reliable islands enlarge the effective compute supply.

\section{Supplementary Plug-Ins, Task Variants, and Benchmark Coding Ingredients}
\label{app:section3_supp}
\indent This appendix collects a few supporting plug-ins and task variants for the baseline converse, outside the paper's main architecture theorem chain.

\subsection{Heterogeneous Compute-Supply Plug-In}
\label{subsec:hetero_compute_plugin}
\begin{lemma}[Compute information bound with heterogeneous primitives]
\label{lem:compute_info_bound_hetero}
Suppose the receiver interacts with $J$ classes of unreliable primitives.
Class $j\in\{1,\dots,J\}$ is modeled as a DMC $W^{(j)}_{Z|U}$ with Shannon capacity $\Cgate^{(j)}$ (bits/use).
Over a block of length $T$, suppose the receiver uses at most $m_jT$ primitives of class $j$ (in any adaptive order), where $m_j\ge 0$.
Then the final output $\hat X^T$ satisfies
\begin{equation}
  \frac{1}{T}I(R^{nT};\hat X^T)
  \le \sum_{j=1}^J m_j\,\Cgate^{(j)}.
  \label{eq:compute_info_bound_hetero}
\end{equation}
\end{lemma}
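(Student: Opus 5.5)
The plan is to repeat the chain-rule argument of Lemma~\ref{lem:compute_info_bound}, with the per-use capacity made class-dependent. Let $L\triangleq\sum_j m_jT$ be the total number of primitive uses over the block, indexed in the order the receiver actually issues them. Write $Z^L$ for the retrieved outputs. Let $J_i\in\{1,\dots,J\}$ be the class of use $i$, and allow $J_i$ to be chosen adaptively as a function of $(R^{nT},Z^{i-1})$ and private randomness.

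To keep the class label from carrying side information about $R^{nT}$, I will augment the transcript. Define $\tilde Z_i\triangleq(J_i,Z_i)$ whenever the class choice can depend on $R^{nT}$. Under the committed/no-bypass closure, I will assume the scheduling of classes is part of the noisy-primitive interaction, as in Definition~\ref{def:gpu_bitflip_model}. The cleanest way to secure this is to assume the class sequence is determined by $Z^{i-1}$ and receiver randomness $\zeta$ that is independent of $R^{nT}$. This is the natural reading, since any other route would be an unmodeled bypass, and I will state it explicitly. Under that assumption, $\hat X^T=g(Z^L,\zeta)$. Data processing together with $\zeta\perp R^{nT}$ then gives
\[
I(R^{nT};\hat X^T)\le I(R^{nT};Z^L\mid\zeta)=\sum_{i=1}^L I(R^{nT};Z_i\mid Z^{i-1},\zeta).
\]

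Next I bound each term. Fix $(z^{i-1},\zeta)$. The class $j=J_i$ is then fixed, and $R^{nT}\to U_i\to Z_i$ is Markov through the DMC $W^{(j)}$ with fresh noise. Hence the term is at most $I(U_i;Z_i\mid z^{i-1},\zeta)\le\Cgate^{(j)}$, exactly as in Lemma~\ref{lem:compute_info_bound}. Averaging gives $I(R^{nT};Z_i\mid Z^{i-1},\zeta)\le\mathbb{E}[\Cgate^{(J_i)}]$. Summing over $i$,
\[
\sum_i\mathbb{E}\big[\Cgate^{(J_i)}\big]=\mathbb{E}\Big[\sum_j N_j\Cgate^{(j)}\Big],
\]
where $N_j$ is the realized number of class-$j$ uses. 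Since $N_j\le m_jT$ almost surely, this is at most $T\sum_j m_j\Cgate^{(j)}$. Dividing by $T$ yields \eqref{eq:compute_info_bound_hetero}.

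The main obstacle is the adaptive class ordering. If $J_i$ may depend directly on $R^{nT}$, then the identity of which class is used, or even when the process stops, can leak information outside the primitive outputs. To handle this, I would first try to justify from the closure that scheduling is protected control driven only by the transcript and independent randomness. If instead the paper intends $J_i$ to depend on $R^{nT}$, I would include $J_i$ in the transcript. I would then bound $I(R^{nT};J_i,Z_i\mid\cdot)=I(R^{nT};J_i\mid\cdot)+I(R^{nT};Z_i\mid J_i,\cdot)$ and note that the first term must be charged separately, at most $\log_2$ of the number of feasible schedules, which is sublinear only under a fixed schedule. A fixed or transcript-determined schedule is therefore the assumption I would adopt, and I would flag it in the proof.
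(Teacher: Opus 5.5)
Your proposal is correct and is essentially the paper's argument: data processing onto the primitive transcript, the chain rule, the per-use Markov chain $R^{nT}\to U_i\to Z_i$ bounding each term by the capacity of the class in use, and grouping the terms by class. The paper's proof silently treats the class schedule $j(i)$ as fixed given $Z^{i-1}$, so your explicit assumption (classes chosen from the transcript and $R^{nT}$-independent randomness, with an average over realized counts $N_j\le m_jT$) only makes that implicit hypothesis precise; your caution is also justified, since if $J_i$ could depend on $R^{nT}$, two zero-capacity classes with different constant outputs and budgets $m_1=m_2=\tfrac12$ would let the class pattern alone carry up to about $T$ bits, so ``any adaptive order'' cannot be read literally.
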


\begin{proof}
Index the total number of primitive uses over the block by $i\in\{1,\dots,M\}$, where $M\le (\sum_{j=1}^J m_j)T$, and let $j(i)$ denote the class used at the $i$-th primitive use.
Let $Z_i$ denote the primitive output at use $i$, and $Z^M$ the collection of all outputs.
By data processing, $I(R^{nT};\hat X^T)\le I(R^{nT};Z^M)$.
Applying the chain rule,
\begin{equation}
  I(R^{nT};Z^M)=\sum_{i=1}^M I(R^{nT};Z_i\mid Z^{i-1}).
\end{equation}
Conditioned on $Z^{i-1}$, the intended primitive input bit/word $U_i$ is a function of $(R^{nT},Z^{i-1})$, and the primitive implements the DMC $W^{(j(i))}_{Z|U}$.
Thus, conditioned on $Z^{i-1}$, we have the Markov chain $R^{nT}\to U_i\to Z_i$, so
\begin{equation}
  I(R^{nT};Z_i\mid Z^{i-1})\le I(U_i;Z_i\mid Z^{i-1})\le \Cgate^{(j(i))}.
\end{equation}
Summing over $i$ and grouping terms by class yields
$I(R^{nT};Z^M)\le \sum_{j=1}^J (m_jT)\,\Cgate^{(j)}$, which proves \eqref{eq:compute_info_bound_hetero}.
\end{proof}

Lemma~\ref{lem:compute_info_bound_hetero} shows that the heterogeneous compute bottleneck is captured by the scalar
\begin{equation}
  S_{\mathrm{comp}} \triangleq \sum_{j=1}^J m_j\,\Cgate^{(j)}\quad \text{bits/sample},
\end{equation}
which reduces to $m\Cgate$ in the homogeneous BSC$(\varepsilon)$ model. All supply--demand converses based on the memoryless compute-primitive abstraction therefore remain valid with $m\Cgate$ replaced by $S_{\mathrm{comp}}$, and the same plug-in also covers reliable islands or unequal-protection strategies in which task-critical bits are preferentially assigned to higher-capacity primitives~\cite{ZhengSAVEATC25}.

\subsection{Word-Level Compute Plug-In}
\label{subsec:word_level_compute_plugin}
\begin{corollary}[Supply--demand and hard-separation converses under word-level MCU primitives]
\label{cor:mcu_plugin}
Consider a word-level additive compute primitive
$Z^{(w)}=U^{(w)}\oplus E$
with $E\sim P_E$ on $\{0,1\}^w$.
Assume the receiver compute budget counts vulnerable \emph{bits}, so that at most $m$ bits are materialized per task instance (equivalently at most $m/w$ word uses).
Then the conclusions of Theorem~\ref{thm:supply_demand_converse} and Theorem~\ref{thm:interface_tax} remain valid after replacing $\Cgate$ by the per-bit effective capacity
\begin{equation}
  \Cgateeff \triangleq \frac{1}{w}\max_{P_{U^{(w)}}}I(U^{(w)};Z^{(w)})
  \;=\;
  1-\frac{H(E)}{w}.
  \label{eq:Ceff_mcu}
\end{equation}
In particular, any task-direct scheme achieving distortion $D$ must satisfy
\begin{equation}
  R_{X|Y}(D)\ \le\ \min\{n\Cch,\ m\Cgateeff\},
\end{equation}
and any hard-separation scheme (Section~\ref{sec:interface_tax_section}) must satisfy
\begin{equation}
  \frac{1}{T}I(Y^T;\hat X^T)\ \le\ \min\{n\Cch,\ m_{\mathrm{dec}}\Cgateeff,\ m_{\mathrm{task}}\Cgateeff\}.
\end{equation}

\end{corollary}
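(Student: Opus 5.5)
The plan is to treat Corollary~\ref{cor:mcu_plugin} as an instance of the plug-in principle. All converses in the paper touch the compute substrate only through the conditional-capacity hypothesis \eqref{eq:cond_capacity_bound} of Lemma~\ref{lem:compute_info_bound}. It therefore suffices to do two things: compute the per-word capacity of the additive primitive, and then rerun the converse chain with words as the unit of primitive use.

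\textbf{Step 1: per-word capacity.} For $Z^{(w)}=U^{(w)}\oplus E$ with $E$ independent of $U^{(w)}$, we have
\[
I(U^{(w)};Z^{(w)})=H(Z^{(w)})-H(E)\le w-H(E),
\]
since $H(Z^{(w)}\mid U^{(w)})=H(E)$. Equality holds for uniform $U^{(w)}$, because a uniform input XORed with an independent $E$ gives a uniform output. Hence the word capacity is $w-H(E)$, which is \eqref{eq:Ceff_mcu} after dividing by $w$.

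The same calculation works conditionally on any history $z^{i-1}$, provided the fresh pattern $E_i$ is independent of $(R^{nT},Z^{i-1})$ and of the chosen input. This is exactly the word-level analogue of Definition~\ref{def:gpu_bitflip_model}, and I would state it explicitly as the modeling assumption. It gives the conditional bound \eqref{eq:cond_capacity_bound} with per-use capacity $w\Cgateeff$.

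\textbf{Step 2: rerun the converse chain.} Apply Lemma~\ref{lem:compute_info_bound}, or Lemma~\ref{lem:compute_info_bound_hetero} with a single class, to the at most $mT/w$ word uses in a block. This yields
\[
I(R^{nT};Z^{mT/w})\le \frac{mT}{w}\,w\Cgateeff = mT\Cgateeff.
\]
Feeding this into the computation cut of Theorem~\ref{thm:supply_demand_converse}, together with the unchanged communication cut and Lemma~\ref{lem:ird_block_lower_bound}, gives the task-direct bound. For hard-separation, apply the same bound separately to the decoder stage ($m_{\mathrm{dec}}T/w$ word uses) and the task stage ($m_{\mathrm{task}}T/w$ word uses) inside Lemma~\ref{lem:two_stage_bounds}. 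The data-processing argument of Theorem~\ref{thm:interface_tax} then produces the three-way minimum.

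\textbf{Main obstacle: bookkeeping when $m$ does not fit into whole words.} If $mT$ is not a multiple of $w$, or if a stage materializes only part of a word, the word-use count must be handled with care. I would bound the number of word uses by $\lceil mT/w\rceil$ and absorb the resulting $O(1)$ slack after dividing by $T$, since it vanishes as $T\to\infty$. The alternative is to adopt the stated convention that $m$ bits correspond to at most $m/w$ word uses, so that any partially used word is charged fully; this keeps the bound exact.

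A second point to note is that allowing partially filled words (fewer than $w$ intended bits) does not help the receiver. The per-use capacity bound $w-H(E)$ holds for every input distribution, so partial use can only lower the mutual information transferred per word, and the accounting above remains valid.
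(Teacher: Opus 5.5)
Your proposal is correct and follows the paper's proof. You compute the per-word capacity $w-H(E)$ (uniform input is optimal for additive noise), count at most $mT/w$ word uses so that Lemma~\ref{lem:compute_info_bound} gives $mT\Cgateeff$, and rerun the cut-set steps of Theorems~\ref{thm:supply_demand_converse} and~\ref{thm:interface_tax} stagewise.

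On bookkeeping, rely on your second option. The hypothesis charges every word use a full $w$ bits, so there are at most $\lfloor mT/w\rfloor\le mT/w$ word uses and the bound is exact at every $T$. The $\lceil mT/w\rceil$-plus-vanishing-slack route would only give an asymptotic version of what is a fixed-$T$ converse.
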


\begin{proof}
Treat each $w$-bit word materialization as one DMC use with capacity $C_{\mathrm{gate}}^{(w)}=w-H(E)$.
(For additive noise, the uniform input is capacity-achieving, so $\max_{P_{U^{(w)}}}I(U^{(w)};Z^{(w)})=w-H(E)$.)
Since one word consumes $w$ units of the bit-budget, $m$ bit materializations correspond to at most $m/w$ word uses, yielding a total compute supply
$\frac{m}{w}C_{\mathrm{gate}}^{(w)}=m\Cgateeff$.
Applying Lemma~\ref{lem:compute_info_bound} (or Lemma~\ref{lem:compute_info_bound_hetero}) with this per-use capacity bound and then repeating the cut-set steps in the proofs of Theorems~\ref{thm:supply_demand_converse} and~\ref{thm:interface_tax} yields the stated replacement.
\end{proof}

\subsection{Practical Computation-Noise Primitives and Calibration}
\label{subsec:practical_noise_models}
For the converse chain, a practical primitive enters only through a per-use information-supply bound $C_i$, so the homogeneous term $m\Cgate$ is replaced by the plug-in supply $\sum_i C_i$ (equivalently $\sum_j m_j\Cgate^{(j)}$ under Lemma~\ref{lem:compute_info_bound_hetero}); correlated or state-dependent fault mechanisms are therefore covered whenever each vulnerable materialization admits such an upper bound, and word-level MCUs reduce to the effective per-bit quantity $\Cgateeff$ in Corollary~\ref{cor:mcu_plugin}. As a calibration convention, $(m,\Cgate)$ are operating-point abstractions: $\Cgate$ summarizes the effective information supply of each vulnerable commit--retrieve event after low-level protection, while $m$ is obtained by counting the vulnerable bits or words written and later used per task instance, including intermediate activations and any per-inference weight re-materializations not placed on a reliable island.
For the BSC baseline, $\varepsilon$ may be estimated by measurement, fault injection, or a soft-error-rate mapping; for example, a Poisson upset model with per-bit rate $\lambda_{\mathrm{SEU}}$ and dwell time $\tau$ gives $\varepsilon\approx 1-e^{-\lambda_{\mathrm{SEU}}\tau}\approx \lambda_{\mathrm{SEU}}\tau$ when $\lambda_{\mathrm{SEU}}\tau\ll 1$.
Table~\ref{tab:bm_accounting} summarizes representative accounting conventions for common architecture-level artifacts.

\begin{table}[t]
\centering
\caption{Representative accounting conventions for common artifact classes. The table is schematic; the surrounding text and Appendix~\ref{app:learned_inference_supp} provide architecture-specific refinements.}
\label{tab:bm_accounting}
\small
\begin{tabular}{p{0.30\linewidth}p{0.08\linewidth}p{0.53\linewidth}}
\toprule
Artifact class & Charge & Interpretation \\
\midrule
Protected/raw-output side information retained to the final task output & $b$ & Reliable bypass or other protected side path; counts toward the explicit protected side-information budget. \\
Vulnerable intermediate activations or decoded interface state that is later reused & $m$ & Each vulnerable commit--retrieve event contributes to the per-instance materialization budget. \\
Per-inference weight reloads through vulnerable memory & $m$ & Re-materialized parameters are charged like other vulnerable writes that are later used downstream. \\
Reliably stored / island-protected parameters or state & not in $m$ & No vulnerable per-instance rematerialization is charged once these artifacts are kept on protected storage. \\
Cached soft statistics and non-negligible support overhead for a soft path & $m_{\mathrm{int}}$ & Reserved interface-support budget covering stored soft state and any routing / buffering / forwarding support that should not be treated as free. \\
\bottomrule
\end{tabular}
\end{table}

\subsection{A Classification Benchmark via Fano's Inequality}
\label{subsec:classification_benchmark}

\begin{proposition}[A Fano-type classification lower bound under dual noise]
\label{prop:fano_classification}
Let $J\in\{1,\dots,M\}$ be a label with $H(J)=\log_2 M$ (e.g., $J$ uniform), and let the receiver output $\hat J$.
For the task-direct architecture (A), the classification error probability $P_e\triangleq \mathbb{P}[\hat J\neq J]$ satisfies
\begin{equation}
  P_e \ \ge\ \max\!\left\{0,\ 1-\frac{\Rsup+1}{\log_2 M}\right\}.
  \label{eq:fano_bound_dual_noise}
\end{equation}
Under hard-separation (B), the same bound holds with $\Rsup$ replaced by $\min\{n\Cch,\ \frac{m}{2}\Cgate\}$.
\end{proposition}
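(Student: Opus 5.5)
The plan is to combine Fano's inequality with the mutual-information cuts already established. The label $J$ plays the role of the latent task variable, and the observation $Y$ is correlated with it. The receiver decision $\hat J$ is produced by the same encoder--channel--noisy-receiver pipeline. Throughout we work in the single-shot ($T=1$) setting, since the bound is per task instance. The other cases follow verbatim from the block versions of the cuts in Theorems~\ref{thm:supply_demand_converse} and~\ref{thm:interface_tax}.

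First, I would record the Markov chain $J\to Y\to S^n\to R^n\to \hat J$. For hard-separation this extends to $J\to Y\to S^n\to R^n\to\hat B\to\hat J$. Fano's inequality then gives
\[
H(J\mid \hat J)\le 1+P_e\log_2 M ,
\]
where the binary entropy $\hbin{P_e}$ is bounded by $1$ and $\log_2(M-1)\le\log_2 M$. Since $H(J)=\log_2 M$, this rearranges to
\[
P_e\ge 1-\frac{I(J;\hat J)+1}{\log_2 M}.
\]
Together with the trivial bound $P_e\ge 0$, this yields the $\max\{0,\cdot\}$ form.

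Second, I would bound $I(J;\hat J)$ by the supply. Data processing along the chain gives $I(J;\hat J)\le I(Y;\hat J)$. The two cuts in the proof of Theorem~\ref{thm:supply_demand_converse} then bound $I(Y;\hat J)$. The communication cut gives $I(Y;\hat J)\le I(S^n;R^n)\le n\Cch$. The computation cut follows from the committed/no-bypass structure of Definition~\ref{def:gpu_bitflip_model}: $\hat J$ is a function of $Z^m$, so Lemma~\ref{lem:compute_info_bound} gives $I(R^n;\hat J)\le m\Cgate$. Combining the two, $I(J;\hat J)\le\Rsup$, and substituting into the Fano bound gives \eqref{eq:fano_bound_dual_noise}.

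For hard-separation I would instead invoke Theorem~\ref{thm:interface_tax}, which holds equally with $\hat X$ replaced by an arbitrary output variable. It gives $I(Y;\hat J)\le\min\{n\Cch,m_{\mathrm{dec}}\Cgate,m_{\mathrm{task}}\Cgate\}$. For any split, $\min\{m_{\mathrm{dec}},m_{\mathrm{task}}\}\le m/2$, so $I(J;\hat J)\le\min\{n\Cch,\frac m2\Cgate\}$ holds for every admissible split, and the same substitution applies.

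The main point requiring care is that the earlier theorems were stated with $\hat X$ and a distortion demand. I expect this to be the only real obstacle, and it is resolved by noting that their cut steps only use data processing and Lemma~\ref{lem:compute_info_bound}. Neither step depends on the output alphabet or on the distortion criterion, so they apply unchanged to $\hat J$.
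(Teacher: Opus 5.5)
Your proposal is correct and follows the paper's proof essentially step for step: Fano's inequality with $H(J)=\log_2 M$, data processing $I(J;\hat J)\le I(Y;\hat J)$, and then the communication and compute cuts of Theorem~\ref{thm:supply_demand_converse} (respectively Theorem~\ref{thm:interface_tax}) with $\hat J$ in place of $\hat X$. Your observation that $\min\{m_{\mathrm{dec}},m_{\mathrm{task}}\}\le m/2$ for every admissible split is a slightly more careful justification of what the paper states as the optimal-split bound.
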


\begin{proof}
By Fano's inequality~\cite{CoverThomas}, $P_e \ge 1-\frac{I(J;\hat J)+1}{\log_2 M}$.
Since $J\to Y\to \hat J$ forms a Markov chain, $I(J;\hat J)\le I(Y;\hat J)$.
For architecture (A), data processing and Lemma~\ref{lem:compute_info_bound} imply
$I(Y;\hat J)\le \Rsup$ by the same cut-set argument as in Theorem~\ref{thm:supply_demand_converse}.
For architecture (B), replace this by Theorem~\ref{thm:interface_tax} (or Corollary~\ref{cor:hard_sep_supply_demand}).
Substituting into Fano's inequality and clipping at zero yields \eqref{eq:fano_bound_dual_noise}.
\end{proof}

\begin{corollary}[Depth-induced classification lower bound under a $K$-stage serial pipeline]
Let $J\in\{1,\dots,M\}$ be a discrete label with $H(J)=\log_2 M$, and let $\hat J$ denote the receiver decision.
Suppose the receiver can be abstracted as a $K$-stage serial pipeline as in Theorem~\ref{thm:k_stage_tax}, with per-instance stage budgets $(m_1,\dots,m_K)$ and total budget $\sum_{k=1}^K m_k\le m$.
Then the classification error probability $P_e\triangleq \mathbb{P}[\hat J\neq J]$ satisfies
\begin{equation}
  P_e \ \ge\ \max\!\left\{0,\ 1 - \frac{\min\{\,n\Cch,\ \min_{k=1,\dots,K} m_k\Cgate\,\}+1}{\log_2 M}\right\}.
  \label{eq:depth_classification_bound}
\end{equation}
In particular, under equal splitting $m_k=m/K$,
\begin{equation}
  P_e \ \ge\ \max\!\left\{0,\ 1 - \frac{\min\{\,n\Cch,\ \frac{m}{K}\Cgate\,\}+1}{\log_2 M}\right\}.
\end{equation}
\end{corollary}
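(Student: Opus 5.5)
The argument follows the proof of Proposition~\ref{prop:fano_classification} closely, with the two-stage hard-separation cut replaced by the serial $K$-stage cut of Theorem~\ref{thm:k_stage_tax}.

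\emph{Step 1 (Fano).} Start from Fano's inequality for the label. Writing $H(J)=\log_2 M$, it gives
\[
P_e\ \ge\ 1-\frac{I(J;\hat J)+1}{\log_2 M}.
\]
It then suffices to upper bound $I(J;\hat J)$ per task instance.

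\emph{Step 2 (data processing).} Since the label reaches the receiver only through the observation, we have the Markov chain
\[
J\to Y\to S^n\to R^n\to U_1\to\cdots\to U_{K-1}\to \hat J .
\]
Data processing then yields $I(J;\hat J)\le I(Y;\hat J)$.

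\emph{Step 3 (serial compute cut).} Apply Theorem~\ref{thm:k_stage_tax} with $T=1$, or equivalently in its block form with the per-instance normalization. This gives
\[
I(Y;\hat J)\le \min\Big\{n\Cch,\ \min_{k} m_k\Cgate\Big\},
\]
using the communication cut together with one stagewise application of Lemma~\ref{lem:compute_info_bound} per stage.

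\emph{Step 4 (substitute and clip).} Plug this bound into Fano's inequality. Since $P_e\ge 0$ holds trivially, clip the result at zero to obtain \eqref{eq:depth_classification_bound}.

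\emph{Step 5 (equal splitting).} Under the constraint $\sum_k m_k\le m$, we have $\min_k m_k\le m/K$, with equality when $m_k=m/K$. Substituting $m_k=m/K$ into \eqref{eq:depth_classification_bound} gives the displayed specialization.

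\emph{Main point requiring care.} The only delicate issue is normalization. If the label is a block quantity, or $T>1$ instances are coded jointly, the block form of Theorem~\ref{thm:k_stage_tax} bounds $\frac1T I(Y^T;\hat X^T)$ rather than a per-instance quantity. I would therefore state the argument per instance, with $T=1$, exactly as Proposition~\ref{prop:fano_classification} does. All remaining steps are routine.
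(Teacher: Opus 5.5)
Your proposal is correct and matches the paper's proof essentially step for step: data processing gives $I(J;\hat J)\le I(Y;\hat J)$, Theorem~\ref{thm:k_stage_tax} is specialized to $T=1$ with final output $\hat J$, and then Fano's inequality is applied as in Proposition~\ref{prop:fano_classification} and clipped at zero. Your per-instance normalization via $T=1$ is exactly the choice the paper makes.
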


\begin{proof}
By the Markov chain $J\to Y \to \hat J$, data processing gives $I(J;\hat J)\le I(Y;\hat J)$.
Specializing Theorem~\ref{thm:k_stage_tax} to a single instance ($T=1$) with final output $\hat J$ yields
$I(Y;\hat J)\le \min\{\,n\Cch,\ \min_k m_k\Cgate\,\}$.
Applying the same Fano bound as in Proposition~\ref{prop:fano_classification} gives
$P_e \ge 1 - \frac{I(J;\hat J)+1}{\log_2 M}$, and clipping at zero gives the claim.
\end{proof}

\subsection{A Shannon-Style Internal Coding Ingredient for the Stronger Protected-Support Closure}
\label{subsec:benchmark_internal_coding}
\begin{proposition}[Capacity-achieving internal redundancy for a memoryless compute primitive]
\label{prop:compute_code_bsc}
Consider a task block of length $T$ and let $L\triangleq mT$ denote the total number of unreliable primitive uses available to the receiver in the block.
Suppose each primitive use is a (possibly nonbinary) discrete memoryless channel (DMC) $W_{Z|U}$ with Shannon capacity $\Cgate$,
and that the $L$ primitive uses are independent across time given their inputs (as in Definition~\ref{def:gpu_bitflip_model} for the BSC$(\varepsilon)$ case).
Then, for any \emph{per-sample} rate $R<m\Cgate$, there exists a sequence of internal codes that enables reliable storage/communication of
$\lfloor TR\rfloor$ bits across the $L=mT$ primitive uses with block error probability tending to zero as $T\to\infty$.
For the BSC$(\varepsilon)$ bit-flip primitive, $\Cgate=1-\hbin{\varepsilon}$.
\end{proposition}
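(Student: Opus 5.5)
The plan is to reduce the statement to Shannon's noisy channel coding theorem for the DMC $W_{Z|U}$, applied over the $L=mT$ primitive uses. The only work is matching the per-sample rate normalization to the per-use code rate and checking that the receiver-side operations are allowed under (A2).

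\textbf{Rate normalization.} Fix $R<m\Cgate$ and set $\tilde R\triangleq R/m<\Cgate$ bits per primitive use; the case $m=0$ is vacuous since then $R<0$. The number of messages is $M_T=2^{\lfloor TR\rfloor}$, so the code rate over $L$ uses is
\[
\frac{\log_2 M_T}{L}\le \frac{TR}{mT}=\tilde R .
\]
The blocklength $L=mT\to\infty$ as $T\to\infty$; if $mT$ is not an integer, I will use $\lfloor mT\rfloor$ uses, which changes the rate only by $O(1/T)$ and keeps it strictly below $\Cgate$ for large $T$.

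\textbf{Invoking the channel coding theorem.} Choose a capacity-achieving input law $P_U^\star$ attaining $\Cgate=\max_{P_U}I(U;Z)$. By random coding with i.i.d.\ $P_U^\star$ codewords and joint-typicality or ML decoding, the average error probability over the ensemble vanishes whenever $\tilde R<I(P_U^\star,W)=\Cgate$. Hence there exists a deterministic code $(\alpha_T,\beta_T)$ with vanishing average error, and expurgation gives vanishing maximal error. Here the encoder writes $U^L=\alpha_T(w)$, and the conditional independence of the $L$ uses given their inputs, as in (A1) and Definition~\ref{def:gpu_bitflip_model}, makes the $L$-fold channel memoryless. The decoder $\beta_T(Z^L)$ runs on protected support, so no extra vulnerable uses are charged to it. For the BSC$(\varepsilon)$ case, the uniform input gives $\Cgate=1-\hbin{\varepsilon}$, and linear codes would also suffice.

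\textbf{Main obstacle.} The only subtle point is that the theorem is used non-adaptively. The internal code fixes all $L$ inputs as a function of the message alone, ignoring $Z^{i-1}$, which is a legitimate special case of the interactive model, so the memoryless DMC coding theorem applies directly. I also need maximal rather than average error, since $\tilde W$ in Theorem~\ref{thm:achievability_general} need not be uniform; expurgation handles this. Finally, if $W_{Z|U}$ has infinite alphabets, I would restrict to finite alphabets or cite the general DMC theorem under the standing regularity conditions.
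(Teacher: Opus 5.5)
Your proposal is correct and follows the paper's proof. Both arguments normalize the per-sample rate to the per-use rate $R/m<\Cgate$ over the $L=mT$ primitive uses, invoke Shannon's channel coding theorem for the DMC $W_{Z|U}$, and observe that the interactive (feedback-like) access of Definition~\ref{def:gpu_bitflip_model} is unnecessary, so a non-adaptive code suffices. Your extra details are sound refinements that the paper leaves implicit: rounding $mT$ to an integer, and expurgating to obtain vanishing maximal error, which is the appropriate notion in Theorem~\ref{thm:achievability_general} since the stored index $\tilde W$ need not be uniform.
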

\begin{proof}
Let the internal message set have size $M=2^{\lfloor TR\rfloor}$.
Using the $L=mT$ primitive uses, the induced \emph{per-primitive} code rate equals
$\frac{\log_2 M}{L}\approx \frac{TR}{mT}=\frac{R}{m}$ bits/primitive.
The condition $R<m\Cgate$ is equivalent to $\frac{R}{m}<\Cgate$, so Shannon's channel coding theorem for the DMC $W_{Z|U}$ yields a sequence of length-$L$ codes
with vanishing block error probability as $L\to\infty$ (equivalently, as $T\to\infty$); see~\cite{Shannon48,CoverThomas}.
The primitive-interactivity allowed by Definition~\ref{def:gpu_bitflip_model} corresponds to channel coding with feedback, which does not increase capacity for memoryless channels and is unnecessary for this first-order statement.
\end{proof}

\begin{remark}[Instantiation of internal redundancy mechanisms]
Proposition~\ref{prop:compute_code_bsc} is the first-order existence statement used in the achievability theorems under Assumption~\ref{assump:achievability_benchmark}. Appendix~\ref{app:internal_redundancy_instantiation} records one explicit LDPC/stable-memory instantiation, but the main converses do not depend on any particular internal redundancy mechanism.
\end{remark}

Some physical implementation models couple reliability to \emph{energy} rather than to an abstract primitive budget. This does not conflict with our benchmark abstraction: we treat $m$ (and the primitive noise level entering $\Cgate$) as operating-point parameters that already absorb any energy/latency/area constraints. If a physical model yields a relationship such as $m=m(E_{\mathrm{comp}})$ or $\varepsilon=\varepsilon(E_{\mathrm{comp}})$, then substituting $m(E_{\mathrm{comp}})\Cgate(E_{\mathrm{comp}})$ into our converses gives the corresponding energy--task tradeoffs.

\section{Worked Example: Vector Gaussian Uncoded Baseline Versus Hard-Separation}
\label{app:vector_gaussian_worked_example}
\indent This appendix records explicit diagonal examples that contrast a simple task-direct \emph{uncoded} linear baseline with the water-filling indirect rate--distortion converse under task-direct and hard-separation organizations. We first give a fully closed-form $p=2$ example that can be checked by hand, and then a higher-dimensional ($p=8$) illustration with a plotted gap (Fig.~\ref{fig:vector_gap_uncoded_hard}).

\paragraph{A simple uncoded linear baseline}
Assume $\Sigma_X$ and $\Sigma_V$ are diagonal so that the sufficient statistic $\tilde X=\mathbb{E}[X|Y]$ has independent coordinates with variances $\{\lambda_i\}_{i=1}^p$ (Subsection~\ref{subsec:vector_gaussian}).
Consider a bandwidth-matched $p$-use real AWGN channel with independent noise $W_i\sim\mathcal{N}(0,\sigma_w^2)$ and per-use power constraint $P$, so each use has capacity $\Cch=\frac12\log_2(1+P/\sigma_w^2)$.
A coordinatewise uncoded linear (analog) strategy transmits $\tilde X$ with matched scaling $S_i=\sqrt{P/\lambda_i}\,\tilde X_i$ and the receiver forms the coordinatewise MMSE estimate from $Z_i=S_i+W_i$.
In the jointly Gaussian model, $X=\tilde X+(X-\tilde X)$ with $(X-\tilde X)\perp \tilde X$, yielding the end-to-end MSE
\begin{equation}
  D_{\mathrm{uncoded}}
  \;=\;
  \mathrm{tr}(\Sigma_{X|Y}) + \frac{\sigma_w^2}{P+\sigma_w^2}\,\mathrm{tr}(\Sigma_{\tilde X})
  \;=\;
  \mathrm{tr}(\Sigma_{X|Y}) + \mathrm{tr}(\Sigma_{\tilde X})\,2^{-2\Cch}.
  \label{eq:vector_uncoded_mse}
\end{equation}
We do not claim this simple symbol-by-symbol \emph{linear} baseline is optimal among zero-delay analog mappings; its purpose is to provide a transparent achievable point for comparison with the hard-separation converse.

\subsection{A Fully Closed-Form $p=2$ Diagonal Example}
Take $p=2$ with diagonal $\Sigma_X=\mathrm{diag}(\sigma_{x,1}^2,\sigma_{x,2}^2)$ and $\Sigma_V=\mathrm{diag}(\sigma_{v,1}^2,\sigma_{v,2}^2)$.
Let $\lambda_1\ge \lambda_2$ denote the eigenvalues of $\Sigma_{\tilde X}$, so $\lambda_i=\frac{\sigma_{x,i}^4}{\sigma_{x,i}^2+\sigma_{v,i}^2}$ for $i=1,2$, and define the activation threshold $R_0 \triangleq \frac12\log_2\!\big(\frac{\lambda_1}{\lambda_2}\big)$.
Then the water-filling characterization \eqref{eq:vector_gaussian_waterfill} reduces to the explicit piecewise form
\[
  D(R)=\mathrm{tr}(\Sigma_{X|Y})+\begin{cases}
    \lambda_2 + \lambda_1\,2^{-2R}, & 0\le R\le R_0,\\[2pt]
    2\sqrt{\lambda_1\lambda_2}\,2^{-R}, & R\ge R_0,
  \end{cases}
\]
The first branch allocates rate only to the stronger mode; the second activates both modes.

As a concrete instantiation, take $(\sigma_{x,1}^2,\sigma_{x,2}^2)=(4,1)$ and $(\sigma_{v,1}^2,\sigma_{v,2}^2)=(1,1)$, so
$\lambda_1=\frac{16}{5}$, $\lambda_2=\frac12$, and $\mathrm{tr}(\Sigma_{X|Y})=\frac{4}{5}+\frac12=1.3$, with $R_0\approx 1.34$ bits/vector.
Let the physical channel be bandwidth-matched ($n=p=2$) with per-use capacity $\Cch=1$ bit/use, hence $n\Cch=2$ bits/vector, and take a compute supply $R_{\mathrm{comp}}=m\Cgate=2$ bits/vector.
Then
\[
  R_{\mathrm{eff}}=\min\{R_{\mathrm{comp}},n\Cch\}=2,
  \qquad
  R_{\mathrm{eff}}^{(\mathrm{hard})}=\min\{\tfrac12 R_{\mathrm{comp}},n\Cch\}=1,
\]
so Corollary~\ref{cor:vector_gaussian_dual_noise} yields the no-tax converse $D\ge D(2)\approx 1.93$ and the hard-separation converse $D\ge D(1)=2.6$.
Meanwhile, the uncoded linear baseline \eqref{eq:vector_uncoded_mse} gives
\[
  D_{\mathrm{uncoded}}
  = 1.3 + (\lambda_1+\lambda_2)\,2^{-2\Cch}
  = 1.3 + 3.7\cdot 2^{-2}
  \approx 2.23.
\]
Thus $D_{\mathrm{uncoded}}<D(1)$ yields a concrete MSE gap that \emph{cannot} be closed by any hard-separation pipeline under the same $(n\Cch,m\Cgate)$ supplies, while still leaving room relative to the no-tax benchmark $D(2)$.
Finally, this example also shifts the transition to channel-limited behavior: the no-tax bound saturates once $R_{\mathrm{comp}}\ge n\Cch=2$, whereas under hard-separation saturation requires $R_{\mathrm{comp}}\ge 2n\Cch=4$.

\subsection{An Illustrative Diagonal $p=8$ Example and a Plotted Gap}
To visualize the same mechanism in a higher-dimensional diagonal spectrum, consider the $p=8$ example shown in Fig.~\ref{fig:vector_gap_uncoded_hard}, with $\Sigma_X=\mathrm{diag}(8,4,2,1,\tfrac12,\tfrac14,\tfrac18,\tfrac1{16})$, $\Sigma_V=I$, and $\Cch=2$ bits/use, so with $n=p$ we have $n\Cch=16$ bits per vector.
Let $R_{\mathrm{comp}}=m\Cgate$ denote the \emph{compute} supply (bits/vector).
Then the effective task-direct information supply is $R_{\mathrm{eff}}=\min\{R_{\mathrm{comp}},n\Cch\}$, whereas under hard-separation it is
$R_{\mathrm{eff}}^{(\mathrm{hard})}=\min\{\tfrac12 R_{\mathrm{comp}},n\Cch\}$.
Accordingly, Corollary~\ref{cor:vector_gaussian_dual_noise} yields the task-direct converse $D\ge D(R_{\mathrm{eff}})$, whereas hard-separation yields $D\ge D(R_{\mathrm{eff}}^{(\mathrm{hard})})$.
For these parameters, \eqref{eq:vector_uncoded_mse} gives $D_{\mathrm{uncoded}}\approx 4.33$.
At $R_{\mathrm{comp}}=8$ bits/vector (compute-limited for both), the hard-separation converse gives $D\ge D(4)\approx 5.77$, leaving an explicit $\approx 1.43$ MSE gap in a vector setting ($p>1$).
Moreover, once $R_{\mathrm{comp}}$ exceeds the channel limit $n\Cch=16$, the no-tax converse saturates at $D(n\Cch)$ (channel-limited), whereas hard-separation continues to improve until $R_{\mathrm{comp}}=2n\Cch$, visibly shifting the compute-limited/channel-limited transition.

\begin{figure}[!tbp]
\centering
\includefigure[width=0.66\linewidth]{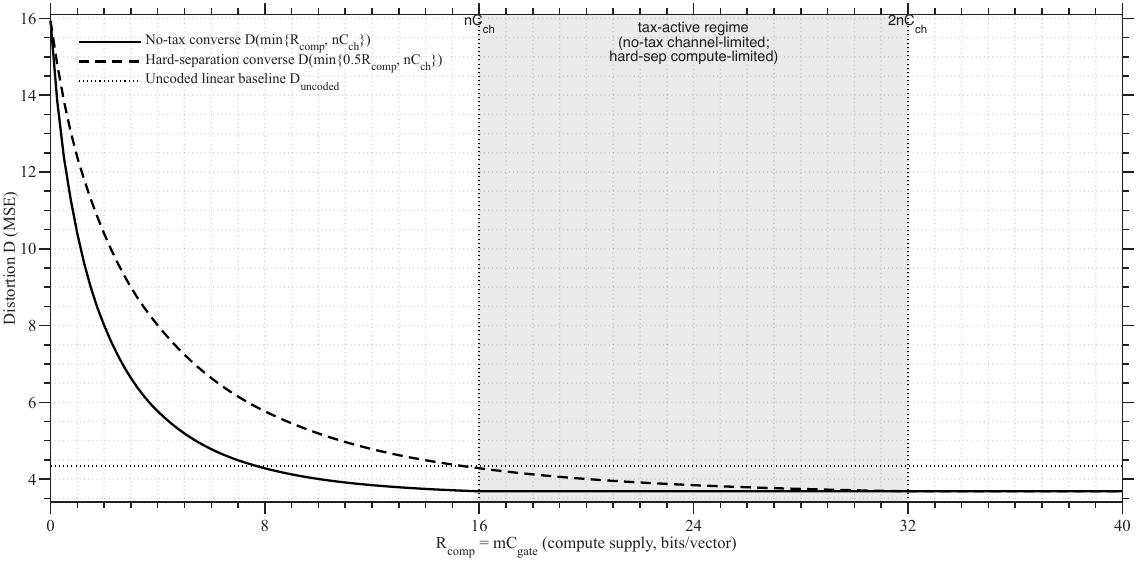}
\vspace{-3mm}
\caption{Diagonal vector Gaussian example for the $p=8$ spectrum described in the text, comparing the explicit uncoded linear baseline \eqref{eq:vector_uncoded_mse} with the water-filling indirect-RD converses under task-direct and hard-separation information supplies. Vertical markers at $R_{\mathrm{comp}}=n\Cch$ and $R_{\mathrm{comp}}=2n\Cch$ indicate the channel-limited transitions for the no-tax and hard-separation curves, respectively. The shaded region highlights the regime in which the no-tax system is already channel-limited while hard-separation remains compute-limited.}
\label{fig:vector_gap_uncoded_hard}
\end{figure}

\FloatBarrier
\section{Instantiation of Internal Redundancy Mechanisms on Noisy Hardware}
\label{app:internal_redundancy_instantiation}
\indent This appendix gives one concrete instantiation of the internal coding primitive assumed under Assumption~\ref{assump:achievability_benchmark} and formalized in Proposition~\ref{prop:compute_code_bsc}, using Varshney's region-to-use/stable-memory framework as an explicit example.

\begin{definition}[Region-to-use decoder (computable robustness region)]
Fix an iterative decoding/correcting network architecture $\mathsf{Dec}$ (e.g., a message-passing LDPC decoder, or an unrolled neural decoder) whose update rules are implemented on noisy components with parameter vector $\theta$ (e.g., register flips, gate flips, or word-level upsets).
For a target reliability level $\eta\in(0,1/2)$, the \emph{region-to-use decoder} $\mathcal{R}_\eta(\mathsf{Dec})$ is the set of component-noise parameters $\theta$ for which the associated asymptotic density-evolution/state-evolution recursion has an attracting fixed point whose bit error probability is at most $\eta$ when initialized from the operating point of interest.
Equivalently, inside $\mathcal{R}_\eta(\mathsf{Dec})$ the noisy decoder acts as a \emph{contraction} on error probability down to (at most) $\eta$.
\end{definition}

\begin{proposition}[Region-to-use $\Rightarrow \eta$-reliability $\Rightarrow$ stable memory and positive storage rate (Varshney)]
\label{prop:region_to_use_stable_memory}
Consider the noisy-register memory architecture with a noisy Gallager-A LDPC correcting network studied by Varshney~\cite{VarshneyTIT11}.
If the component-noise parameters lie in the region-to-use decoder (or its hypograph) for some $\eta$ that is below the decoding threshold of a noiseless ``silver decoder,''
then the correcting network achieves $\eta$-reliability and yields a \emph{stable} sequence of memories in the sense of~\cite[Defs.~10--11]{VarshneyTIT11}.
Moreover, for a Gallager-A correcting network based on an $(\lambda,\rho)$ LDPC ensemble with edge-perspective degree distributions $\lambda,\rho$, Varshney's Theorem~5 gives an explicit lower bound on the resulting storage capacity:
\begin{equation}
  \mathcal{C}_{\mathrm{store}}
  \ \ge\
  \frac{1-\lambda'(1)/\rho'(1)}{\lambda'(1)\rho'(1)-1},
  \label{eq:varshney_storage_capacity_lb}
\end{equation}
which reduces to $(1-(d_v-1)/(d_c-1))/((d_v-1)(d_c-1)-1)$ for a regular $(d_v,d_c)$ code.
In particular, $\mathcal{C}_{\mathrm{store}}>0$ whenever $1-\lambda'(1)/\rho'(1)>0$ (equivalently, $d_v<d_c$ for regular ensembles), implying that a positive rate of reliably retrievable bits can be maintained despite noisy storage and noisy iterative correction.
\end{proposition}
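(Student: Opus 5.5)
The proposition is essentially a transcription of results from~\cite{VarshneyTIT11}. My plan is therefore to verify that the hypotheses stated here match Varshney's hypotheses, invoke his theorems in sequence, and then carry out the regular-ensemble specialization and the positivity check myself. I would proceed in three steps.

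\emph{Step 1: from region-to-use to $\eta$-reliability.} By the definition of $\mathcal{R}_\eta(\mathsf{Dec})$, if the noise parameters $\theta$ lie in the region-to-use decoder (or its hypograph), then the density-evolution recursion of the noisy Gallager-A correcting network has an attracting fixed point with bit error probability at most $\eta$. Two concentration facts are standard for Gallager-A message passing and are used by Varshney:
\begin{itemize}
\item the cycle-free (tree-like) neighborhood property for large blocklength;
\item concentration of the empirical bit error rate around its density-evolution prediction.
\end{itemize}
With these, the physical correcting network keeps the stored codeword's bit error fraction at or below $\eta$ (up to vanishing slack) after every correction round. This is exactly Varshney's notion of $\eta$-reliability. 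I would verify that monotonicity of the recursion in $\theta$ extends the conclusion from the region to its hypograph.

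\emph{Step 2: from $\eta$-reliability to stable memory.} Because $\eta$ is below the threshold of the noiseless silver decoder, a single noiseless Gallager-A decoder applied at read-out time succeeds with vanishing block error probability from any state with error fraction at most $\eta$. Combined with Step 1, this holds uniformly over storage time, so the sequence of memories is stable in the sense of~\cite[Defs.~10--11]{VarshneyTIT11}. I expect this step to be the main obstacle. The difficulty is making the uniform-in-time statement rigorous: error events accumulated across correction rounds must not compound. My first attempt would be to cite Varshney's argument directly, which uses a union bound over rounds together with exponential concentration per round, rather than re-deriving it.

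\emph{Step 3: the capacity bound and its specialization.} For the general ensemble, I would invoke Varshney's Theorem~5 as stated. It combines the ensemble code rate with a count of the noisy registers and wires per stored bit in the Gallager-A correcting network, and yields \eqref{eq:varshney_storage_capacity_lb}. For a regular $(d_v,d_c)$ ensemble, the edge-perspective degree distributions are $\lambda(x)=x^{d_v-1}$ and $\rho(x)=x^{d_c-1}$, so
\[
\lambda'(1)=d_v-1, \qquad \rho'(1)=d_c-1.
\]
Substituting these into \eqref{eq:varshney_storage_capacity_lb} gives the stated closed form.

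For positivity, first note that the denominator satisfies $(d_v-1)(d_c-1)-1>0$ for any nontrivial ensemble with $d_v\ge 2$ and $d_c\ge 3$. Hence the sign of the bound is determined by the numerator $1-(d_v-1)/(d_c-1)$, which is positive if and only if $d_v<d_c$. This yields $\mathcal{C}_{\mathrm{store}}>0$ under the stated condition.
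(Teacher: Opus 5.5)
Your proposal is correct and follows the same route as the paper, whose proof simply attributes the region-to-use $\Rightarrow$ $\eta$-reliability step to Varshney's density-evolution analysis and the stability and capacity bound to his Theorem~5. Your explicit substitution $\lambda'(1)=d_v-1$, $\rho'(1)=d_c-1$ and the sign check are correct additions the paper leaves implicit, but they cover only regular ensembles: for the general ``in particular'' claim you also need $\lambda'(1)\rho'(1)>1$, which follows from $\rho'(1)>\lambda'(1)\ge 1$ whenever there are no degree-one variable nodes.
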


\begin{proof}
The existence of a computable region-to-use decoder and the implication ``region-to-use $\Rightarrow$ $\eta$-reliability'' follow from the density-evolution analysis in~\cite{VarshneyTIT11}.
Stability and the storage-capacity lower bound \eqref{eq:varshney_storage_capacity_lb} are stated explicitly in~\cite[Theorem~5]{VarshneyTIT11}.
\end{proof}

\begin{remark}[A concrete internal-redundancy example: unrolled (neural) correcting networks]
Proposition~\ref{prop:compute_code_bsc} establishes the existence of capacity-achieving internal codes under the memoryless-primitive abstraction.
An explicit \emph{computational} realization, closer to GPU implementations of channel decoding and learned inference, is to instantiate the internal redundancy by a \emph{noisy correcting network} that is itself an iterative message-passing computation graph.

A standard instance is an LDPC correcting network based on one round of a message-passing decoder (e.g., Gallager-A / belief propagation) applied periodically to a bank of noisy registers.
Such correcting networks admit a sparse deep-neural-network interpretation: unrolling $J$ iterations of message passing yields a depth-$J$ feedforward graph with local nonlinear updates, and learned variants simply replace fixed update rules by trainable weights.
Varshney~\cite{VarshneyTIT11} analyzes LDPC ensembles under \emph{faulty} iterative decoding and exhibits a computable region-to-use decoder; inside this region (and its hypograph) the noisy correcting network is $\eta$-reliable and induces a stable memory with positive storage capacity (Proposition~\ref{prop:region_to_use_stable_memory}); see also~\cite{HuangLiDolecek15}. We emphasize that the $\eta$-reliability/stability guarantees are currently proved for specific message-passing (graph-based) architectures; extending comparably sharp ``region-to-use'' characterizations to fully general black-box neural decoders remains an interesting direction.
In our framework, this provides a concrete example of an ``internal redundancy mechanism'' that protects task-critical digital state across repeated vulnerable materializations.

Viewed in the language of Section~\ref{sec:interface_tax_section}, a \emph{neural channel decoder} followed by a \emph{neural task network} is a hard-separation pipeline, so Theorem~\ref{thm:interface_tax} applies; an end-to-end network that retains channel-output access is instead task-direct or soft-interface, so Proposition~\ref{prop:no_tax_soft_interface} is the relevant baseline.
\end{remark}

\section{Supplementary Results for Learned Inference on Noisy Accelerators}
\label{app:learned_inference_supp}
\indent This appendix records a coarse proxy from network size/precision and data-movement overhead to the bit-materialization budget $m$, together with an MSE decomposition and a simple model-storage feasibility bound; under heterogeneous primitives or reliable islands, the relevant compute supply is the plug-in quantity from Lemma~\ref{lem:compute_info_bound_hetero}.
For a feedforward network with $P$ parameters evaluated at $b$-bit precision, let $b_{\mathrm{route}}$ denote a coarse bit-equivalent count of additional transport/routing/retiming/buffering events associated with making raw or soft side information available downstream, not already absorbed into the first two terms below.
A crude per-inference proxy is
\begin{equation}
  m \ \approx\  \kappa_{\mathrm{w}}\,P\,b \;+\; \kappa_{\mathrm{a}}\,b_{\mathrm{act}} \;+\; \kappa_{\mathrm{r}}\,b_{\mathrm{route}},
  \label{eq:m_mapping_nn}
\end{equation}
with $\kappa_{\mathrm{w}},\kappa_{\mathrm{a}},\kappa_{\mathrm{r}}$ absorbing reuse/caching, multicast/fan-out, and other implementation details.
In the language of Proposition~\ref{prop:soft_interface_tradeoff}, the last term provides a coarse implementation-level way to absorb non-negligible support overhead for forwarding raw or high-precision channel outputs into the reserved interface-support budget $m_{\mathrm{int}}$. The shorthand ``small interface cost'' therefore corresponds to a low-storage idealization rather than to physically free forwarding.

\paragraph{A decomposition: information-limited versus approximation-limited MSE}
\begin{proposition}[Inference MSE decomposition given an intermediate representation]
\label{prop:mse_decomp_representation}
Let $X$ be the task variable and let $U_{\mathrm{rep}}$ be \emph{any} intermediate representation available to the inference module (e.g., $U_{\mathrm{rep}}$ could be a decoded bitstream $\hat B$, a soft-output vector, or a learned latent).
For any (measurable) estimator $\hat X=g(U_{\mathrm{rep}})$ under squared loss,
\begin{equation}
  \mathbb{E}\big[(X-\hat X)^2\big]
  \;=\;
  \mathrm{mmse}(X\mid U_{\mathrm{rep}})
  \;+\;
  \mathbb{E}\Big[\big(\mathbb{E}[X\mid U_{\mathrm{rep}}]-g(U_{\mathrm{rep}})\big)^2\Big],
  \label{eq:mse_decomp_rep}
\end{equation}
where $\mathrm{mmse}(X\mid U_{\mathrm{rep}})\triangleq \mathbb{E}\big[(X-\mathbb{E}[X\mid U_{\mathrm{rep}}])^2\big]$.
\end{proposition}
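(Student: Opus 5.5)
The statement is the standard orthogonality (bias--variance) decomposition of squared error. The plan is to insert and subtract the conditional mean $\mu(U_{\mathrm{rep}})\triangleq \mathbb{E}[X\mid U_{\mathrm{rep}}]$ inside the square. We assume $\mathbb{E}[X^2]<\infty$, and, for the nontrivial case, $\mathbb{E}[g(U_{\mathrm{rep}})^2]<\infty$; if the latter fails, both sides of \eqref{eq:mse_decomp_rep} are $+\infty$ and there is nothing to prove. Write
\[
  X-g(U_{\mathrm{rep}})=\big(X-\mu(U_{\mathrm{rep}})\big)+\big(\mu(U_{\mathrm{rep}})-g(U_{\mathrm{rep}})\big)
\]
and expand the square into three terms. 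The first squared term has expectation $\mathrm{mmse}(X\mid U_{\mathrm{rep}})$ by definition, and the second squared term is exactly the approximation penalty in \eqref{eq:mse_decomp_rep}.

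The only real step is to show that the cross term
\[
  2\,\mathbb{E}\big[(X-\mu(U_{\mathrm{rep}}))(\mu(U_{\mathrm{rep}})-g(U_{\mathrm{rep}}))\big]
\]
vanishes. We will condition on $U_{\mathrm{rep}}$ using the tower property. The factor $h(U_{\mathrm{rep}})\triangleq\mu(U_{\mathrm{rep}})-g(U_{\mathrm{rep}})$ is $\sigma(U_{\mathrm{rep}})$-measurable, so it can be pulled out of the inner conditional expectation. What remains inside is $\mathbb{E}[X-\mu(U_{\mathrm{rep}})\mid U_{\mathrm{rep}}]=0$ almost surely, so the cross term is zero.

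The main technical point is integrability, which justifies pulling out the factor and splitting expectations. Conditional Jensen gives $\mathbb{E}[\mu(U_{\mathrm{rep}})^2]\le \mathbb{E}[X^2]<\infty$, and together with the assumed square-integrability of $g(U_{\mathrm{rep}})$ this makes $h\in L^2$. Cauchy--Schwarz then shows the product $(X-\mu)h$ is integrable. Equivalently, the whole argument is the statement that $X-\mu$ is orthogonal in $L^2$ to every square-integrable $\sigma(U_{\mathrm{rep}})$-measurable function, i.e., the Pythagorean identity for the projection of $X$ onto $L^2(\sigma(U_{\mathrm{rep}}))$. Summing the three terms then gives \eqref{eq:mse_decomp_rep}.
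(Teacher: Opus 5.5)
Your proposal is correct and follows essentially the same route as the paper: add and subtract $\mathbb{E}[X\mid U_{\mathrm{rep}}]$, expand the square, and kill the cross term via $\mathbb{E}\big[X-\mathbb{E}[X\mid U_{\mathrm{rep}}]\mid U_{\mathrm{rep}}\big]=0$. Your added integrability bookkeeping is a harmless refinement. It covers the conditional-Jensen bound, the Cauchy--Schwarz step for the cross term, and the observation that both sides are $+\infty$ when $g(U_{\mathrm{rep}})\notin L^2$. The paper leaves these points implicit.
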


\begin{proof}
Write $X-g(U_{\mathrm{rep}}) = (X-\mathbb{E}[X\mid U_{\mathrm{rep}}]) + (\mathbb{E}[X\mid U_{\mathrm{rep}}]-g(U_{\mathrm{rep}}))$ and expand the square.
The cross term has zero expectation because $\mathbb{E}[X-\mathbb{E}[X\mid U_{\mathrm{rep}}]\mid U_{\mathrm{rep}}]=0$.
\end{proof}

Proposition~\ref{prop:mse_decomp_representation} separates an information-limited term $\mathrm{mmse}(X\mid U_{\mathrm{rep}})$ from an approximation/implementation term. Combined with approximation-theoretic bit-complexity bounds~\cite{ElbrachterTIT21}, it yields the coarse co-design rule
\begin{equation}
  \text{target accuracy }\delta_{\mathrm{acc}}
  \quad\Longrightarrow\quad
  m\Cgate \ \gtrsim\  \mathrm{bits}(\delta_{\mathrm{acc}}),
  \label{eq:nn_accuracy_to_mCgate}
\end{equation}
where $\mathrm{bits}(\delta_{\mathrm{acc}})$ denotes the implementation-dependent number of vulnerable bit materializations needed to evaluate a network that achieves error $\delta_{\mathrm{acc}}$.

\paragraph{A reliability-limited model size bound (parameter integrity)}
Deep approximation theory characterizes the number of \emph{bits} needed to describe an accurate inference rule.
On a faulty accelerator, those bits must themselves be stored and retrieved reliably.
The following simple bound makes this requirement explicit.

\begin{proposition}[A reliability-limited model size bound on a faulty memory substrate]
\label{prop:model_storage_bound}
Let $\Theta\in\{1,\dots,2^{B_\Theta}\}$ be a uniformly distributed model index (e.g., encoding network topology and quantized weights), so $H(\Theta)=B_\Theta$ bits.
Suppose the model is stored as $M$ physical bits $U^M(\Theta)$ and later retrieved through independent bit flips
$Z_i = U_i\oplus N_i$, $N_i\stackrel{\text{i.i.d.}}{\sim}\mathrm{Bern}(\varepsilon)$.
Let $\hat\Theta=\phi(Z^M)$ denote the reconstructed model index.
Then the probability of model corruption $P_{\mathrm{err}}\triangleq \mathbb{P}[\hat\Theta\neq \Theta]$ satisfies
\begin{equation}
  P_{\mathrm{err}} \ \ge\ \max\!\left\{0,\ 1-\frac{M\Cgate+1}{B_\Theta}\right\}.
  \label{eq:model_storage_fano}
\end{equation}
In particular, achieving $P_{\mathrm{err}}\to 0$ as $B_\Theta\to\infty$ requires
$\frac{B_\Theta}{M}\le \Cgate+o(1)$; equivalently, any asymptotic storage rate strictly above $\Cgate$ is impossible.
\end{proposition}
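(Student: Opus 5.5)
The argument is Fano's inequality combined with the storage analogue of the compute cut in Lemma~\ref{lem:compute_info_bound}. First, Fano's inequality for $\Theta$ uniform on $2^{B_\Theta}$ values gives
\[
  H(\Theta\mid\hat\Theta)\le 1+P_{\mathrm{err}}\log_2(2^{B_\Theta}-1)\le 1+P_{\mathrm{err}}B_\Theta .
\]
Since $H(\Theta)=B_\Theta$, this yields $B_\Theta-I(\Theta;\hat\Theta)\le 1+P_{\mathrm{err}}B_\Theta$. Rearranging,
\[
  P_{\mathrm{err}}\ge 1-\frac{I(\Theta;\hat\Theta)+1}{B_\Theta}.
\]

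Second, I would upper bound $I(\Theta;\hat\Theta)$. The chain $\Theta\to U^M\to Z^M\to\hat\Theta$ is Markov because $\hat\Theta=\phi(Z^M)$, so data processing gives $I(\Theta;\hat\Theta)\le I(\Theta;Z^M)$.

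Third, I bound $I(\Theta;Z^M)$. Here the storage map $U^M(\Theta)$ is non-adaptive, so it is a special case of Lemma~\ref{lem:compute_info_bound} with input $U=\Theta$ and $U_i$ a function of $\Theta$ alone. Alternatively, I can argue directly. Since $Z_i$ depends on $(\Theta,Z^{i-1})$ only through $U_i$ (the noises $N_i$ are i.i.d.\ and independent of $\Theta$), we have
\[
  I(\Theta;Z^M)\le \sum_i I(U_i;Z_i)\le M(1-\hbin{\varepsilon})=M\Cgate .
\]
Substituting into the Fano bound and clipping at zero gives \eqref{eq:model_storage_fano}.

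For the asymptotic statement, suppose $P_{\mathrm{err}}\to0$. Then the displayed bound forces $1-(M\Cgate+1)/B_\Theta\le o(1)$, that is, $B_\Theta\le (M\Cgate+1)/(1-o(1))$. Dividing by $M$ gives $B_\Theta/M\le \Cgate+o(1)$, provided $M\to\infty$ along with $B_\Theta$. That proviso is justified because $B_\Theta\le M\Cgate+1+o(B_\Theta)$ already forces $M\to\infty$ when $B_\Theta\to\infty$.

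Every step is routine. The only point that needs care is checking that the conditional-capacity hypothesis \eqref{eq:cond_capacity_bound} holds for fixed histories under the BSC model. It does, because $N_i$ is independent of $(\Theta,Z^{i-1})$.
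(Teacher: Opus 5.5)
Your proof is correct and follows the paper's argument: Fano's inequality, data processing along $\Theta\to U^M\to Z^M\to\hat\Theta$, and the per-use bound $\Cgate=1-\hbin{\varepsilon}$ summed over the $M$ memoryless BSC uses. The paper bounds $I(U^M;Z^M)$ directly rather than going through $I(\Theta;Z^M)$ and the chain rule, which makes no difference, and your explicit treatment of the asymptotic claim, including why $M\to\infty$ is forced, fills in a step the paper leaves implicit.
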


\begin{proof}
Since $\Theta\to U^M\to Z^M\to \hat\Theta$, data processing gives
$I(\Theta;\hat\Theta)\le I(U^M;Z^M)$.
Since $Z^M$ is the output of $M$ independent BSC$(\varepsilon)$ uses with inputs $U^M$, we have
$I(U^M;Z^M)\le M\Cgate$.
Applying Fano's inequality gives
$P_{\mathrm{err}}\ge 1-\frac{I(\Theta;\hat\Theta)+1}{H(\Theta)}\ge 1-\frac{M\Cgate+1}{B_\Theta}$, and clipping at zero proves \eqref{eq:model_storage_fano}.
\end{proof}

Combining Proposition~\ref{prop:model_storage_bound} with approximation-theoretic bit-complexity bounds yields the same coarse feasibility requirement as in \eqref{eq:nn_accuracy_to_mCgate}.

\begin{remark}[Reliable-island plug-in]
Reliable islands are a direct plug-in specialization of Lemma~\ref{lem:compute_info_bound_hetero}: with a reliable fraction $\rho_{\mathrm{rel}}$ and an unreliable fraction $1-\rho_{\mathrm{rel}}$, replace $m\Cgate$ by
\begin{equation}
  S_{\mathrm{comp}}(\rho_{\mathrm{rel}})
  \triangleq
  \rho_{\mathrm{rel}}\,m + (1-\rho_{\mathrm{rel}})\,m\,\Cgate
  \quad
  (\text{or }\rho_{\mathrm{rel}}\,m + (1-\rho_{\mathrm{rel}})\,m\,\Cgateeff)
  \label{eq:Scomp_rho_rel}
\end{equation}
and then apply Theorem~\ref{thm:supply_demand_converse} for task-direct architectures or the same equal-splitting argument as in Theorem~\ref{thm:interface_tax} for hard-separation, yielding the corresponding task-direct and hard-separation plug-in bounds.
\end{remark}

\section{Auxiliary Lemma: Clipping Closes Mean-MSE Achievability}
\label{app:clipping_achievability_lemma}
\indent This appendix provides a short self-contained inequality used to close the achievability proof of Theorem~\ref{thm:achievability_general} under squared-error (unbounded) distortion.

\begin{lemma}[Clipping closes the mean-MSE achievability under unbounded distortion]
\label{lem:clip_closes_mse_achievability}
Assume squared-error distortion $d(x,\hat x)=(x-\hat x)^2$ and $\mathbb{E}[X^2]<\infty$ under the standing i.i.d.\ task model.
Fix a blocklength $T$ and an error event $\mathsf{E}_T$.
Let $\hat X_{\mathrm{RD}}^T$ denote any ``nominal'' reconstruction and let $\tilde X^T$ denote a (possibly corrupted) preliminary output satisfying
$\tilde X^T=\hat X_{\mathrm{RD}}^T$ on $\mathsf{E}_T^c$.
For $A>0$, define the coordinatewise clipped output $\hat X_t\triangleq \mathrm{clip}(\tilde X_t;A)\in[-A,A]$.
Then
\begin{equation}
\frac{1}{T}\sum_{t=1}^T \mathbb{E}\big[(X_t-\hat X_t)^2\big]
\le
\frac{1}{T}\sum_{t=1}^T \mathbb{E}\big[(X_t-\hat X_{\mathrm{RD},t})^2\big]
+ 8\,\mathbb{E}\!\big[X^2\mathbf{1}\{|X|>A\}\big]
+ 4A^2\,\mathbb{P}(\mathsf{E}_T).
\label{eq:clip_closes_mse_bound}
\end{equation}
In particular, if $\frac{1}{T}\sum_{t=1}^T \mathbb{E}[(X_t-\hat X_{\mathrm{RD},t})^2]\le D+\xi_T$ with $\xi_T\to 0$
and one can choose $A=A_T\to\infty$ such that $A_T^2\mathbb{P}(\mathsf{E}_T)\to 0$, then the clipped scheme achieves mean MSE $D$.
\end{lemma}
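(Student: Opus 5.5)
The plan is to work instance by instance and split each $\mathbb{E}[(X_t-\hat X_t)^2]$ over the partition $\{\mathsf{E}_T^c,\mathsf{E}_T\}$. The clipping nonlinearity will be handled by two elementary facts.

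The first fact is that clipping is a projection onto the convex set $[-A,A]$. Hence, whenever $|x|\le A$, we have $|x-\mathrm{clip}(y;A)|\le|x-y|$. The second fact is the crude bound $(x-\hat x)^2\le 2x^2+2\hat x^2$.

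On $\mathsf{E}_T^c$ we have $\tilde X_t=\hat X_{\mathrm{RD},t}$, so $\hat X_t=\mathrm{clip}(\hat X_{\mathrm{RD},t};A)$. I would split this event further according to whether $|X_t|\le A$ or $|X_t|>A$.

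\textbf{On $\{|X_t|\le A\}\cap\mathsf{E}_T^c$.} Contractivity gives $(X_t-\hat X_t)^2\le (X_t-\hat X_{\mathrm{RD},t})^2$. This piece is therefore bounded by the nominal distortion term.

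\textbf{On $\{|X_t|>A\}\cap\mathsf{E}_T^c$.} I use $|\hat X_t|\le A<|X_t|$, so $(X_t-\hat X_t)^2\le (|X_t|+A)^2\le 4X_t^2$. This contributes at most $4\,\mathbb{E}[X^2\mathbf{1}\{|X|>A\}]$, using that the $X_t$ are identically distributed with the law of $X$.

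\textbf{On $\mathsf{E}_T$.} We only know $|\hat X_t|\le A$, so $(X_t-\hat X_t)^2\le 2X_t^2+2A^2$. I then split $X_t^2\le X_t^2\mathbf{1}\{|X_t|>A\}+A^2\mathbf{1}\{|X_t|\le A\}$. This gives the bound
\[
2\,\mathbb{E}[X_t^2\mathbf{1}\{|X_t|>A\}]+2A^2\mathbb{P}(\mathsf{E}_T)+2A^2\mathbb{P}(\mathsf{E}_T),
\]
that is, a tail term plus $4A^2\mathbb{P}(\mathsf{E}_T)$. A sharper route is $(|X_t|+A)^2\le 4X_t^2$ when $|X_t|>A$ and $\le 4A^2$ otherwise, which yields the same form.

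Summing the three pieces gives a tail coefficient of at most $4+4=8$, which matches the constant claimed in \eqref{eq:clip_closes_mse_bound}. Averaging over $t$ then yields \eqref{eq:clip_closes_mse_bound}.

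For the ``in particular'' clause, I would choose $A_T\to\infty$ with $A_T^2\mathbb{P}(\mathsf{E}_T)\to0$. Then the last term vanishes by assumption. The tail term $\mathbb{E}[X^2\mathbf{1}\{|X|>A_T\}]\to0$ by dominated convergence, since $\mathbb{E}[X^2]<\infty$. The nominal term is at most $D+\xi_T\to D$. Taking $\limsup$ gives mean MSE at most $D$.

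The only mildly delicate step is the bookkeeping of constants on $\mathsf{E}_T$. There one must avoid bounding $\mathbb{E}[X_t^2\mathbf{1}_{\mathsf{E}_T}]$ by $\mathbb{E}[X^2]\mathbb{P}(\mathsf{E}_T)$, because $\mathsf{E}_T$ may depend on $X$. The split at level $A$ is what sidesteps this dependence.

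A related point is that on $\mathsf{E}_T^c$ the nominal term is bounded only on $\{|X_t|\le A\}$. That is harmless, since we simply upper-bound it by the full nominal expectation.
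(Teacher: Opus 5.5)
Your proposal is correct and follows the paper's proof step for step: the same split over $\mathsf{E}_T^c$ and $\mathsf{E}_T$, contraction of the projection onto $[-A,A]$ on $\{|X_t|\le A\}\cap\mathsf{E}_T^c$, the bound $(|X_t|+A)^2\le 4X_t^2$ on the tail, and the level-$A$ split on $\mathsf{E}_T$ that handles the dependence between $\mathsf{E}_T$ and $X_t$. One small correction: your ``sharper route'' on $\mathsf{E}_T$ is exactly the paper's, but it is the looser of your two options, since the $2X_t^2+2A^2$ route gives a tail coefficient of $2$ on $\mathsf{E}_T$ (total $6\le 8$); either one suffices.
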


\begin{proof}
Decompose
\[
\frac{1}{T}\sum_{t=1}^T \mathbb{E}\big[(X_t-\hat X_t)^2\big]
=
\frac{1}{T}\sum_{t=1}^T \mathbb{E}\big[(X_t-\hat X_t)^2\mathbf{1}\{\mathsf{E}_T^c\}\big]
+
\frac{1}{T}\sum_{t=1}^T \mathbb{E}\big[(X_t-\hat X_t)^2\mathbf{1}\{\mathsf{E}_T\}\big].
\]

\emph{Success term.}
On $\mathsf{E}_T^c$, $\tilde X_t=\hat X_{\mathrm{RD},t}$.
Moreover, if $|X_t|\le A$, clipping cannot increase squared error:
$(X_t-\mathrm{clip}(u;A))^2\le (X_t-u)^2$ for any $u\in\mathbb{R}$.
Therefore,
\[
(X_t-\hat X_t)^2\mathbf{1}\{\mathsf{E}_T^c\}
\le
(X_t-\hat X_{\mathrm{RD},t})^2 + (|X_t|+A)^2\mathbf{1}\{|X_t|>A\}
\le
(X_t-\hat X_{\mathrm{RD},t})^2 + 4X_t^2\mathbf{1}\{|X_t|>A\},
\]
where the last step uses $|X_t|>A\Rightarrow (|X_t|+A)^2\le 4X_t^2$.

\emph{Error term.}
On $\mathsf{E}_T$, we only use $|\hat X_t|\le A$, so
\[
(X_t-\hat X_t)^2
\le (|X_t|+A)^2
\le 4A^2\,\mathbf{1}\{|X_t|\le A\}+4X_t^2\,\mathbf{1}\{|X_t|>A\}.
\]
Multiplying by $\mathbf{1}\{\mathsf{E}_T\}$ and taking expectations yields
$\mathbb{E}[(X_t-\hat X_t)^2\mathbf{1}\{\mathsf{E}_T\}]
\le 4A^2\mathbb{P}(\mathsf{E}_T)+4\mathbb{E}[X_t^2\mathbf{1}\{|X_t|>A\}]$.

Combining the two terms, averaging over $t$, and using the standing i.i.d.\ assumption gives \eqref{eq:clip_closes_mse_bound}.
Finally, $\mathbb{E}[X^2]<\infty$ implies $\mathbb{E}[X^2\mathbf{1}\{|X|>A_T\}]\to 0$ as $A_T\to\infty$, proving the stated consequence.
\end{proof}

\bibliographystyle{IEEEtran}
\bibliography{ref}

\end{document}